\newcommand{\action}[2]{\langle #1 \mid #2 \rangle}
\newcommand{\actiontemp}[4]{(\langle #1 \mid #2 \rangle, \langle #3, #4\rangle)}
\newcommand{\tuple}[3]{(#1, #2_{#3})}
\newcommand{\LVars}{LVars}
\newcommand{\sem}[1]{\left\llbracket #1 \right\rrbracket} 
\newcommand{\triple}[4]{#1\; {#2}\; {#3}_{#4}}
\declaretheorem[name=Theorem,numberwithin=section]{theorem}
\declaretheorem[name=Lemma,numberwithin=section]{lemma}
\declaretheorem[name=Definition,numberwithin=section]{definition}
\newif\iflong
\newif\iffull
\begin{document}


\title{A Hoare Logic for Symmetry Properties}

\author{Vaibhav Mehta}
\orcid{0000-0003-2357-3023}
\affiliation{%
  \institution{Cornell University}
  \city{Ithaca}
  \country{USA}
}
\email{vm353@cornell.edu}

\author{Justin Hsu}
\orcid{0000-0002-8953-7060}
\affiliation{%
  \institution{Cornell University}
  \city{Ithaca}
  \country{USA}
}
\email{email@justinhsu.net}

\begin{abstract}
  Many natural program correctness properties can be stated in terms of
  symmetries, but existing formal methods have little support for reasoning
  about such properties. We consider how to formally verify a broad class of
  symmetry properties expressed in terms of group actions. To specify these
  properties, we design a syntax for group actions, supporting standard
  constructions and a natural notion of entailment. Then, we develop a
  Hoare-style logic for verifying symmetry properties of imperative programs,
  where group actions take the place of the typical pre- and post-condition
  assertions. Finally, we develop a prototype tool $\mathsf{SymVerif}$, and use
  it to verify symmetry properties on a series of handcrafted benchmarks. Our
  tool uncovered an error in a model of a dynamical system described by \citet{McLachlan_Quispel_2002}.
\end{abstract}
\begin{CCSXML}
<ccs2012>
   <concept>
       <concept_id>10003752.10003790.10011741</concept_id>
       <concept_desc>Theory of computation~Hoare logic</concept_desc>
       <concept_significance>500</concept_significance>
       </concept>
   <concept>
       <concept_id>10003752.10003790.10002990</concept_id>
       <concept_desc>Theory of computation~Logic and verification</concept_desc>
       <concept_significance>300</concept_significance>
       </concept>
 </ccs2012>
\end{CCSXML}

\ccsdesc[500]{Theory of computation~Hoare logic}
\ccsdesc[300]{Theory of computation~Logic and verification}
\keywords{Symmetries, Hoare Logic, Verification, Group Actions}
\maketitle

\section{Introduction}
\label{sec:intro}

Symmetry is a universal concept across mathematics and physics. In mathematics,
symmetries are transformations that preserve aspects of mathematical structure.
In physics, symmetries correspond to conserved quantities. In this work, we
develop methods to verify \emph{symmetry properties of programs}.

While this combination may seem strange, programs have mathematical and physical
aspects. For instance, programs can be given semantics as a mathematical object,
e.g., a relation or a trace. Programs can also describe a physical process of
computation, or model some physical process in the world. Furthermore, many
target properties can be naturally phrased in terms of some kind of symmetry.
For example, a program modeling a physical system, like the dynamics of a car,
often satisfies a safety property that is invariant under coordinate
transformations: shifting the entire system by a fixed offset should not change
whether the property holds or not. Likewise, we may be interested in showing
that a program's behavior does not depend on certain aspects of its input; for
instance, the winner computed by a voting mechanism should not depend on the
order of the votes.

\paragraph{Prior work.}
We are not the first to consider symmetries of programs. On the semantic side,
nominal logic~\citep{pitts2003nominallogic} reasons about objects that are
symmetric under renaming (e.g., program syntax with variable binding).  This
line of work considers symmetries induced by a single, fixed group action
capturing renaming. While this setup is natural for reasoning about variable
binding, it is too restrictive for general symmetry properties, which may
involve multiple kinds of symmetries.

On the algorithmic side, work on symmetry reduction has been highly successful
in automated reasoning. These methods apply to finite models of computation,
like formulas and transition systems, where symmetries can be found using
algorithmic techniques. However, it is unclear how to extend these methods to general
programs, where data may range over unbounded domains.

\paragraph{Our work.}
We propose a general class of symmetry properties phrased in terms of group
actions, a concept from group theory that can encode transformations of an
object. Our target properties require that a group action on the input of a
program leads to a (possibly different) group action on the output. Our
properties generalize prior symmetry properties, like equivariance, and
encompass symmetry properties in a variety of applications. 

Next, we consider how to verify symmetry properties for imperative programs.
There are several challenges to overcome. First, in order to specify our
properties, we need a compact yet expressive syntax for encoding group actions, which might be infinite.
Second, we would like to verify symmetry properties for larger programs by
establishing symmetry properties for smaller subcomponents. We also want to take
advantage of compositional structure in group actions: for actions that are
built out of simpler actions, we should be able to establish symmetry properties
for the simpler actions separately before combining our results. Finally,
verifying symmetry properties can involve a great deal of complex,
group-theoretic reasoning that we hope to simplify.

To overcome these challenges, we design a novel, Hoare-style logic for reasoning
about group-theoretic symmetries. In a nutshell, our logic captures symmetry
properties by replacing the usual pre- and post-conditions of Hoare logic with
\emph{group actions}. The proof rules of our logic allow combining symmetry
properties of simpler programs and group actions into more complex properties.
Finally, to simplify reasoning, we develop automated methods to ease
verification.

\paragraph*{Outline.}
After a technical overview in \Cref{sec:overview}, we present our contributions.
\begin{enumerate}
  \item We develop a compact syntax of groups and group actions, supporting
    standard constructions in group theory. We provide semantics for our
    assertions and a notion of entailment, validating typical logical properties
    (\Cref{sec:assertions}).
  \item We develop a Hoare-style program logic for an imperative programming
    language, using our group-action assertions as the pre- and post-conditions.
    We provide a set of proof rules and prove soundness. We also develop
    weakest-precondition and strongest-postcondition transformers for
    assignments (\Cref{sec:logic}).
  \item To demonstrate the expressiveness of our logic, we verify symmetry
    properties from a series of examples drawn from different application
    domains (\Cref{sec:examples}).
  \item We implement tools for verifying symmetry triples for programs and
    synthesizing pre-conditions. We evaluate our tools on a handcrafted set of
    benchmark programs (\Cref{sec: impl}). Our tool helped uncover a subtle
    error in a model dynamical system (the \emph{AAC flow}) described by
    \citet{McLachlan_Quispel_2002}, where the claimed symmetry property fails to
    hold.
\end{enumerate}
We finish by surveying related work (\Cref{sec:rw}) and concluding
(\Cref{sec:conc}).

\section{Technical Overview}
\label{sec:overview}
We begin with an overview of symmetry properties and our symmetry program logic.

\subsection{Groups and Group Actions}
We will need some basic concepts from group theory for our development (see,
e.g.,~\citep{rotman2012introduction}). To begin, we introduce
\emph{groups} and maps between groups.

\begin{definition}
  \label{def: group}
  A \emph{group} is a nonempty set equipped with an associative
  operation~$\cdot$ and an element $e$ such that (1) $e \cdot g = g = g \cdot
  e$, for all $g \in G$, and (2) for every $g \in G$, there is an element $h \in
  G$, called the \emph{inverse}, such that $g \cdot h = e = h \cdot g$.  We often elide
  the group operation, writing $gh$ for $g\cdot h$.

  Given $(G, \cdot_G)$ and $(H, \cdot_H)$ groups, a function $f: G \rightarrow
  H$ is a \emph{homomorphism} if, for all $a,b \in G$, $f(g_1 \cdot_G g_2) =
  f(g_1)\circ_H f(g_2)$ and $f(e_G) = e_H$. A bijective homomorphism is an
  \emph{isomorphism}.
\end{definition}

Groups can encode symmetries by their \emph{action} on a target set.
Informally, a group action associates each group element with a transformation
of some set $X$, representing a particular symmetry of $X$.

\begin{definition}
  Let $G$ be a group and $X$ be a set. An \emph{action of $G$ on $X$} is a
  function $a: G \times X \rightarrow X$ such that
  (1) $a_e(x) = x$ for all $x \in X$, and
  (2) $a_g(a_h(x)) = a_{gh}(x)$ for all $x \in X$.
\end{definition}

Group actions are intimately related to a group called the \emph{symmetric group}.
\begin{definition}
  The \emph{symmetric group} over a set $X$, denoted by $Sym(X)$ is the set of
  all bijections from $X \rightarrow X$, with function composition $\circ$ as
  the group operation, the identity function as the identity, and the inverse of
  a function as its inverse element. 
\end{definition}

\subsection{Symmetry Properties}
Our symmetry properties describe how the output states of a program are
transformed, given a transformation of the input state of a program. More
formally, a program satisfies a symmetry property if the transformations of the
input under a \emph{pre-group action} $(G, a)$ lead to transformations of the
output under a \emph{post-group action} $(H, b)$.
For instance, when $(H, b)$ is the trivial group $E$ with a single element that
acts by preserving the program state, the symmetry property ensures
\emph{invariance}: acting on the input with $(G, a)$ leaves the output
unchanged.

\paragraph*{A Warm-up Example}
Figure \ref{fig: car-trans} simulates a simple model of a car from time step $t
= 0$ up to $t = T$, where $dt$ is some small time increment.  The variables $x$
and $y$ represent the car coordinates updated at every time step depending on
the velocity $v$, which depends on the acceleration $a$. The variable
$\phi$ is the steering angle, and the turning rate is $u$. The angle $\theta$
the car makes with the x-axis changes at every time step based on $\phi$, the
velocity $v$, and a constant $L$.

Now, suppose we change the starting coordinates from $(x,y)$ to $(x+c, y+c)$.
Intuitively, after the transformation, the path taken by the car will be
identical to the original path shifted by $c$ in each coordinate.
In particular, if the starting position is translated by $c$, then the final
position of the car should also be translated by $c$.
In the rest of this section, we will frame this property as a symmetry property,
and describe how to verify this property in our system.

\subsection{Proving Symmetry Properties, Formally}
We propose a novel Hoare logic where assertions are group actions and judgments
are of the form
\[\triple{(G,a)}{C}{(H,b)}{\varphi}\]
In this triple, $(G,a)$ is a group action on the input, $C$ is a program, $\varphi: G \rightarrow H$, is a homomorphism and
$(H,b)$ is the group action on the output. The judgment
$\triple{(G,a)}{C}{(H,b)}{\varphi}$ captures a symmetry property: whenever the inputs to
the program $C$ are transformed by $(G,a)$, the outputs are related by the
action $(H,b)$. The homomorphism $\varphi: G \rightarrow H$ connects these two
actions: transforming the program input using the action of $g \in G$ must
transform the program output using the action of $\varphi(g) \in H$.
\begin{figure}
  \begin{center}
    \begin{tabular}{c}
\begin{lstlisting}
1. $\mathbf{for}(t \coloneq 0;t < T; t \coloneq t + dt)${
2. $\quad \quad x \coloneq v\cdot \cos{(\theta)}\cdot dt + x;$
3. $\quad \quad y \coloneq v\cdot \sin{(\theta)}\cdot dt + y;$
4. $\quad \quad  v \coloneq a \cdot dt + v;$
5. $\quad \quad \phi \coloneq u \cdot dt + \phi;$
6. $\quad \quad \theta \coloneq \frac{v}{L} \tan{(\phi)} \cdot dt + \theta$
7. }
\end{lstlisting}
    \end{tabular}
  \end{center}
  \vspace{-4.0mm}
  \caption{A program representing a car system}  \label{fig: car-trans}
\end{figure}

\paragraph*{Expressing Symmetry Properties as Assertions}
To verify the example of the car we just saw, we need first to define the
assertions $(G,a)$ and $(H,b)$ that model the translation of the car. For the
pre-group action, we select a group action that acts on the program state by
translating the coordinates $x$ and $y$. The set of integers $\mathbb{Z}$ forms
a group under addition and we want each integer $c$ to act on $(x, y)$ by
mapping to $(x + c, y + c)$. To complete the triple, we also need to specify a
homomorphism that captures the relation from the input to the output. In our
case, we would like that if the starting position of the car is translated by
$c$, then the final position of the car will also be
translated by $c$. This can be expressed as the identity homomorphism
$\varphi(c) = c$. We will denote this homomorphism as $\mathbf{eq}$.
At this point, we know intuitively what the specification looks like, but it is
unclear how to describe group actions and homomorphisms explicitly; for
instance, the group $\mathbb{Z}$ and set of program states $\Sigma$ are
infinite.

\paragraph*{A Finite Syntax}
To address this problem, we develop a finite syntax for expressing actions using
the idea of \emph{finitely presented groups}. In a nutshell, groups can often
be expressed using finitely many \emph{generator} elements and equations
(\emph{relations}).  For example, the group of integers $\mathbb{Z}$ has one
generator $g$, and no equations. Every element is a word consisting of $g$ or
its inverse $g^{-1}$. For example, $g$ corresponds to $1$, $gg$ to $2$, and so
on. The group $\mathbb{Z}$ can be expressed (\emph{presented}) as
\(
  \langle g \mid - \rangle
\).

The action of the group can also be defined in terms of
these elements, which enables a finite description of the group action.
For instance, $\mathbb{Z}$ acts on program state by letting
$c \in \mathbb{Z}$ map $(x,y$) to $(x + c, y + c)$.  In our syntax, we define
this action $a: \mathbb{Z} \times \Sigma \rightarrow \Sigma$ via:
\[
  (\mathbb{Z}, a_{\{x,y\}})  \triangleq \actiontemp{g}{-}{\{x,y\}}{ g \cdot (\{x \rightarrow \alpha_x, y \rightarrow \alpha_y\}) = \{x \rightarrow \alpha_x + 1 , y \rightarrow \alpha_y + 1\}}
\]
To read this definition, $\langle g \mid - \rangle$ defines
the group: the integers. The $\{x,y\}$ annotation
specifies which program variables the action is transforming, and the final
equation $g(\{x \rightarrow \alpha_x, y \rightarrow \alpha_y\}) = \{x
\rightarrow \alpha_x + 1 , y \rightarrow \alpha_y + 1\}$ describes how the
generator transforms these variables: the generator $g$ acts on the variables
$(x, y)$ by adding $1$ to each variable. Finally, to extend this definition to
act on whole program states, we simply require that the action preserves the
value of all variables other than $x$ and $y$; we call this resulting group action
$(\mathbb{Z}, a_{\mathit{Vars}})$, where $Vars$ indicates that $a$ acts on
all variables. \Cref{sec:assertions} introduces the syntax for groups, group
actions, and homomorphisms and considers how to build complex groups and group actions out of
simpler parts. 

Returning to our example, let $C_{car}$ denote the program in Figure \ref{fig: car-trans}.
The symmetry property of the car system can now be expressed as the judgment:
\begin{equation}\label{eq:car-triple}
  \triple{(\mathbb{Z}, a_{\mathit{Vars}})}{C_{car}}{(\mathbb{Z}, a_{\mathit{Vars}})}{\mathbf{eq}}
\end{equation}
In words, whenever the initial coordinates $x$ and $y$ are shifted by an
integer, the output coordinates are also transformed by the same integer and all
other variables are unchanged. The homomorphism $\mathbf{eq}$ specifies that the output coordinates are also transformed by the same integer as the input.

\paragraph*{Proving Judgements}
We can now formally specify symmetry properties, but proving them is
challenging. To make this task easier, we develop a Hoare-style proof system for
our logic (\Cref{sec:logic}). The rules for program commands are similar to
rules in standard Hoare logic, while the structural rules use constructions from
group theory to derive more complicated assertions from simpler ones.  Together,
these rules allow us to prove judgments \emph{compositionally}.
For instance, the assignment rule can be applied to derive the judgment
\[
\triple{(\mathbb{Z}, a_{\mathit{Vars}})}{ x \coloneq v\cdot \cos{(\theta)}\cdot dt + x}{(\mathbb{Z}, a_{Vars})}{\mathbf{eq}}
\]
This judgment says that whenever the input program state has $x$ and $y$ shifted
by a number $c$, after the assignment statement, $x$ and $y$ are still shifted
by $c$ and the other variables remain unchanged. By applying this rule for every
assignment in Fig. \ref{fig: car-trans} and applying our sequencing and loop rules, we can conclude the Hoare judgment \eqref{eq:car-triple}, establishing our
target symmetry property.

\subsection{Symmetry Properties for Existing Proofs}
While symmetry properties themselves are interesting targets for verification,
they can also be highly useful for simplifying proofs of general properties that
have nothing to do with symmetry.
For example, in cyberphysical systems, we may assume a canonical coordinate
system—e.g., placing the origin at a convenient location, or aligning the axes
in a standard way (see \cite{kheterpal2022automatinggeometricproofscollision,
DBLP:journals/jais/GhorbalJZPGC14}). In such treatments, one typically argues
that because the system is invariant under certain transformations
(translations, rotations, etc.), it suffices to check correctness in fewer
representative configurations. While this kind of argument can greatly simplify
verification, it is often carried out informally (e.g., ``without loss of
generality''). Our proposed symmetry properties and verification methods can
help formalize this approach.

For instance, consider a simple model of a car with horizontal position \({x}\)
and vertical position \({y}\). Suppose the car’s control code increments \({x}\)
by 100 units and \({y}\) by 50 units every time step. In standard Hoare logic,
to prove \(\{\mathit{true}\}\,C\,\{Q\}\), where \(C \coloneq {x'} =
{x}+100;\;{y'} = {y}+50\) and \(Q\) asserts that final positions \({x'},{y'}\)
equal \({x}+100,{y}+50\), we would show directly that for any state \(\sigma\),
if \(\sigma\) satisfies the precondition, then \(C(\sigma)\) satisfies \(Q\).

By contrast, if we capture translation by integers as a group action on the
program state, we can decompose the verification task into two stages. First,
we prove the  symmetry triple
\[
      \triple{(\mathbb{Z}, a_{\mathit{Vars}})}{C}{(\mathbb{Z}, a_{\mathit{Vars}})}{\mathbf{eq}}
\]
saying translating the input by $c$ yields a consistent translation in the
output \(\langle {x'}, {y'}\rangle\).
Once established, we note that the Hoare logic assertions \(P\equiv \mathit{true}\) and 
\[
Q(\sigma)\;\equiv\;\bigl(\sigma({x'}) = \sigma({x}) + 100\bigr)\,\land\,\bigl(\sigma({y'}) = \sigma({y}) + 50\bigr)
\]
are closed under these translations: shifting \({x},{y}\) by \(c\) preserves
\(P\) and \(Q\).

Then, we can use the symmetry property: In this example, rather than verify the Hoare triple
\(\{\mathit{true}\}\,C\,\{Q\}\) for every input \(\langle {x},{y}\rangle\), we
need only check a single canonical input, like \(\langle 0,0\rangle\), so it
suffices to verify \(\{x = 0 \land y = 0\}\,C\,\{Q\}\).
Our symmetry property ensures that this triple holds for all inputs. In this
way, our work can be used as a complement to existing formal verification
methods for more general properties.

\section{Assertion Language for Group Actions}
\label{sec:assertions}
In order to specify group actions precisely, we develop syntax for group
actions. These will serve as assertions in our program logic.

\subsection{Syntax of Groups}
Now, we develop the syntax for defining groups. Groups can be defined by a
\textit{presentation} consisting of a set $S$ of generators and a set $R$
equations between certain group elements. More formally:
\begin{definition}
  Let $S$ be a set of symbols, $S^{-1}$ be a disjoint copy of $S$, and $R$ be a
  set of equations between words in $(S \cup S^{-1})^*$. We call the pair
  $\langle S \mid R \rangle$ a \emph{group presentation}.
\end{definition}
Intuitively, $G = \langle S \mid R \rangle$ is the smallest group generated by
$S$ that obeys the equations in $R$. As an example, the group presentation
$\langle x \mid x^2 = e \rangle$
describes the \emph{symmetric group} on two elements, usually denoted $S_2$.
This group has two elements: the identity $e$ and the generator $x$; the
relation $x \cdot x = e$ ensures that $x$ is its own inverse.
For a more complex example, the group presentation
\[
  \langle x,y \mid x^4 = e, y^2 = e, xy = yx^{-1} \rangle
\]
describes the \emph{dihedral group} on four elements, usually denoted $D_4$
\cite{rotman2012introduction}. This
group describes the symmetries of a square: the generator $x$ represents a
rotation by $90$ degrees, while the generator $y$ represents a reflection. The
first relation $x^4 = e$ states that rotating four times returns to the original
state; the relation $y^2 = e$ states that reflecting twice returns to the
original state; and the last relation $xy = yx^{-1}$ states that rotating and
then reflecting is the same as reflecting and then rotating in the
\emph{opposite} direction.

A group $G = \langle S \mid R \rangle$ is said to be \emph{finitely presented}
when $S$ and $R$ are both finite sets; note that a finitely presented group may
still have infinitely many elements. In the remainder of this paper, we only
consider finitely presented groups.

\subsection{Syntax of Group Actions}
We will be concerned with groups $G$ acting on program states $\sigma \in
\Sigma$, where $\Sigma$ is the set of maps from $\mathit{Vars} \rightarrow
\mathbb{Z}$ and $\mathit{Vars}$ is a finite set of program variables.
Often, we will consider group actions that transform just a few select
variables $V \subseteq \mathit{Vars}$, rather than all variables.
Let the set $\Sigma_V$ denote the set of program states with domain $V$:
\[
  \Sigma_V = \{\sigma \mid \sigma: V \rightarrow \mathbb{Z} \}
\]
Given this information, a group action consists of a function $a$ from $G \times
\Sigma_V \rightarrow \Sigma_V$. When we have a finite presentation $G = \langle
S \mid R \rangle$, we can define $a$ by specifying the action of each generator
$g \in S$:
\[
  \actiontemp{S}{R}{V}{g \cdot (v_1 \rightarrow \alpha_1, \dots, v_k \rightarrow \alpha_k) = (v_1 \rightarrow e_1, \dots, v_k \rightarrow e_k)
    \text{ for each } g \in S, v_i \in V \rangle}
\]
where each equation
\[
  g \cdot (v_1 \rightarrow \alpha_1, \dots, v_k \rightarrow \alpha_k) = (v_1 \rightarrow e_1, \dots, v_k \rightarrow e_k)
\]
describes how the generator $g \in S$ acts on the variables $v_i \in V$ in a program
state $\sigma$. In more detail, $\alpha_i$ is a logical variable representing
$\sigma(v_i)$, and $g$ maps $\sigma$ to a program state where variable $v_i$
contains the value corresponding to a symbolic expression $e_i$ drawn from the
following grammar:
\begin{align*}
  \mathcal{E} \ni e \coloneqq \alpha \in \LVars \mid n \in \mathbb{Z} \mid  f(e_1, \dots, e_n) , f \in Ops
\end{align*}
We fix a set of operations $Ops$; an operation $f \in Ops$ with arity $n \in
\mathbb{N}$ can be applied to $n$ arguments $e_1, \dots, e_n$. For example,
addition is a $2$-ary operation $f$, where $f(e_1, e_2)$ represents $e_1 + e_2$.

We will write $(G,a_V)$ for an action of $G$ on $\Sigma_V$, and $a_g(\cdot) :
\Sigma_V \to \Sigma_V$ for the action of $g \in G$.
\begin{example}
  \label{ex:s2-ex}
  The group $S_2$ can act on a pair of variables $\{v_1, v_2\}$ by swapping
  their values. In our syntax, this action can be expressed as:
  \[
    a \triangleq \actiontemp{x}{x^2 = e}{\{v_1, v_2\}}{x \cdot (v_1 \rightarrow \alpha_1, v_2 \rightarrow \alpha_2) = (v_1 \rightarrow \alpha_2, v_2 \rightarrow \alpha_1)}
  \]
  We will refer to this action as $(S_2,a_{\{v_1,v_2\}})$.
  The group element $x$ acts by changing the value of $v_1$ to $\alpha_2$ and the
  values of $v_2$ to $\alpha_1$, i.e., it \emph{swaps} the contents of variable
  $v_1$ with variable $v_2$.
\end{example}

Note that maps $a$ defined in our syntax are not guaranteed to be valid group
actions. To verify that $a$ is a group action, it suffices to check that every
generator acts in a way that obeys the equations in $R$.

\begin{restatable}{theorem}{FiniteGrpThm}
  Let $G$ be a finitely generated group presented as $\action{S_G}{R}$, and let
  $a: S_G \rightarrow Sym(\Sigma)$ be a map from the set of generators to the
  group of bijections over $\Sigma$. If for every equation $x = y \in R$ and
  every program state $\sigma \in \Sigma$ we have
  $a_x(\sigma) = a_y(\sigma)$, then $a$ is a group action.
\end{restatable}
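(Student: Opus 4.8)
The plan is to invoke the universal property of presented groups, reducing the claim to the standard fact that a group action of $G$ on $\Sigma$ is exactly the same data as a group homomorphism $G \to Sym(\Sigma)$. Concretely, the two defining conditions of a group action, $a_e = \mathrm{id}$ and $a_g \circ a_h = a_{gh}$, say precisely that $g \mapsto a_g$ is a homomorphism into $Sym(\Sigma)$. So it suffices to extend the given map $a \colon S_G \to Sym(\Sigma)$, defined only on generators, to a homomorphism on all of $G$.

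First I would extend $a$ from generators to words. Writing $F(S_G)$ for the free group on the generators $S_G$, the universal property of the free group yields a unique homomorphism $\tilde a \colon F(S_G) \to Sym(\Sigma)$ extending $a$; explicitly, $\tilde a$ sends a word to the composite of the corresponding bijections, with $a_{g^{-1}} = (a_g)^{-1}$. This step is purely formal and uses only that $Sym(\Sigma)$ is a group under composition, and in particular that each $a_g$ is invertible.

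Next I would push $\tilde a$ down to $G$. Recall that $G = \action{S_G}{R}$ is by definition the quotient $F(S_G)/N$, where $N$ is the normal closure of the relators $\{\, x y^{-1} \mid (x = y) \in R \,\}$. For each relation $x = y \in R$, the hypothesis $a_x(\sigma) = a_y(\sigma)$ for all $\sigma$ says exactly that $\tilde a(x) = \tilde a(y)$ as bijections, hence $\tilde a(x y^{-1}) = \mathrm{id}$ and each relator lies in $\ker \tilde a$. Since $\ker \tilde a$ is a normal subgroup of $F(S_G)$ containing every relator, it must contain their normal closure $N$. Therefore $\tilde a$ factors through the quotient, giving a well-defined homomorphism $\bar a \colon G \to Sym(\Sigma)$ that agrees with $a$ on generators. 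Unwinding the action/homomorphism correspondence above, $\bar a$ is the desired group action.

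The only genuinely delicate step is the third one: the hypothesis checks the relations only on the images of the generators, yet we need the entire normal subgroup $N$ to be annihilated. The point to get right is that ``relator lies in the kernel'' propagates to ``normal closure lies in the kernel'' precisely because kernels are normal subgroups closed under products and conjugation, so no separate verification is required for conjugates or products of relators. I expect the remaining ingredients — the free-group extension and the passage through the quotient — to be routine applications of the relevant universal properties.
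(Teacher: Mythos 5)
Your proof is correct and takes essentially the same route as the paper: both reduce the claim to the correspondence between group actions on $\Sigma$ and homomorphisms into $Sym(\Sigma)$, combined with the fact that a map defined on generators extends to a homomorphism exactly when it respects the relations. The only difference is that the paper cites this extension fact as a known result (from Johnson, \emph{Presentations of Groups}), whereas you prove it inline via the universal property of the free group and the factorization through the quotient $F(S_G)/N$ — the standard von Dyck argument, carried out correctly, including the key observation that the kernel being normal is what promotes ``relators die'' to ``the whole normal closure dies.''
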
 

For example, the group action of $S_2$ defined in Example \ref{ex:s2-ex} is
valid because swapping the values $v_1$ and $v_2$ twice results in the starting
configuration: for all program states $\sigma \in \Sigma_{V}$, we have
$a_{x^2}(\sigma) = a_{e}(\sigma) = \sigma$.

For every group action $(G, a_V)$ defined on $\Sigma_{V}$, we can lift it to an
action $(G, \overline{a}_{\mathit{Vars}})$ on whole program states $\Sigma$ as
follows: for the variables in $V$, $\overline{a}$ acts like $a$, but for all
other variables, $\overline{a}$ acts like the identity. Formally, for a set of
variables $V = \{ v_1, \dots, v_k \}$, the lifting of $a$ is defined as:
\begin{align*}
     \actiontemp{S}{R}{\mathit{Vars}}{\forall g_i \in S, g_i\cdot &(v_1 \rightarrow \alpha_1, \dots, v_k \rightarrow \alpha_k, \dots, v_n \rightarrow \alpha_n) = \\& v_1 \rightarrow \alpha_1', \dots, v_k \rightarrow \alpha'_k, \dots, v_n \rightarrow \alpha_n}
\end{align*}

For example, the lifting of the group action $\tuple{S_2}{a}{\{v_1, v_2\}}$ is
defined by:
\[
  \actiontemp{x}{x^2=1}{\mathit{Vars}}{x\cdot(v_1 \rightarrow \alpha_1,  v_2 \rightarrow \alpha_2, \dots, v_n \rightarrow \alpha_n) =  (v_1 \rightarrow \alpha_2, v_2 \rightarrow \alpha_1, \dots, v_n \rightarrow \alpha_n) }
\]
The lifting of any group action remains a group action.

\begin{restatable}{theorem}{LiftingThm}
  If $(G, a_V)$ is a group action then $( G, \overline{a}_{\mathit{Vars}})$ is also a group action.
\end{restatable}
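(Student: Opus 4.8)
The plan is to realize the lifted action $\overline{a}$ as a \emph{product} of the given action $(G, a_V)$ with the trivial action of $G$, and then to check the two group-action axioms componentwise. Writing $W = \mathit{Vars} \setminus V$ for the untouched variables, restriction gives a bijection $\Sigma \cong \Sigma_V \times \Sigma_W$ sending a state $\sigma$ to the pair $(\sigma|_V, \sigma|_W)$; this is a bijection precisely because $V$ and $W$ partition $\mathit{Vars}$. Under this identification, the definition of the lifting says exactly that $\overline{a}_g(\sigma)$ corresponds to the pair $(a_g(\sigma|_V), \sigma|_W)$, i.e.\ $\overline{a}_g$ acts as $a_g$ on the first component and as the identity on the second.

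With this reformulation in hand, the two axioms are immediate. For the identity axiom, $\overline{a}_e(\sigma)$ corresponds to $(a_e(\sigma|_V), \sigma|_W) = (\sigma|_V, \sigma|_W)$, using that $a$ is a group action on $\Sigma_V$, so $\overline{a}_e(\sigma) = \sigma$. For compatibility, I compute both components of $\overline{a}_g(\overline{a}_h(\sigma))$: on the first component I get $a_g(a_h(\sigma|_V)) = a_{gh}(\sigma|_V)$, again because $(G, a_V)$ is a group action, and on the second component both the inner and outer applications leave $\sigma|_W$ fixed. Hence $\overline{a}_g(\overline{a}_h(\sigma))$ and $\overline{a}_{gh}(\sigma)$ agree on both components, so they are equal.

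The one point that requires care --- and the step I expect to be the main obstacle --- is that the lifting is specified only on the generators $S$, so I must first confirm it extends to a genuinely well-defined action of the whole group $G = \action{S}{R}$ rather than merely a map on generators. I would discharge this by appealing to the preceding criterion for group actions: each $\overline{a}_g$ is a bijection on $\Sigma$ (being $a_g$, a bijection, on $V$ and the identity on $W$), and for every relation $x = y \in R$ I need $\overline{a}_x = \overline{a}_y$ as maps on $\Sigma$. Since a word $x = g_1 \cdots g_n$ lifts to the composition $\overline{a}_{g_1} \circ \cdots \circ \overline{a}_{g_n}$, which acts as $a_{g_1} \circ \cdots \circ a_{g_n} = a_x$ on $V$ and as the identity on $W$, the relation $a_x = a_y$ --- valid because $(G, a_V)$ is already a group action and $x, y$ denote the same element of $G$ --- lifts to $\overline{a}_x = \overline{a}_y$. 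This confirms that the relations are respected and establishes that $\overline{a}$ is a well-defined group action on $\Sigma$.
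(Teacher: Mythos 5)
Your proof is correct and takes essentially the same approach as the paper's own (one-sentence) proof: the lifting acts as $a$ on $V$ and as the identity on $\mathit{Vars} \setminus V$, so both group-action axioms hold componentwise. Your extra step checking that the relations in $R$ are respected (via the generator criterion) is a level of rigor the paper elides, but it does not change the underlying argument.
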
  
\iflong
  \begin{proof}
    This follows from the fact the $(G, a_V)$ is a group action, and the lifting leaves everything outside of $V$ unchanged.
  \end{proof}
\fi
When clear from the context, we will denote the lifting of $(G, a_V)$ as simply $(G, a_{\mathit{Vars}})$.
\subsection{Group Constructions}
So far, we've seen how to define group constructions by writing down a
presentation. This can become tedious for more complicated groups.  In this
section, we show a simpler, more compositional way of defining group actions
using group constructions. Starting from smaller groups, we can define more
complex groups using group constructions. Our logic will use these constructions
to derive more complex assertions (i.e., group actions). In this section, we
define some of these constructions and show how to express them in our syntax.

\paragraph*{Direct product}
One basic way to combine two groups is via the \emph{direct product}.
\begin{restatable}{definition}{DirProd}
  If $G$ and $H$ are both groups, their \emph{direct product}, denoted by $G
    \times H$, is the group with elements pairs $(g,h)$ where $g \in G$ and $h \in
    H$, with group operation $
    (g,h)\cdot (g',h') = (gg', hh')$.
\end{restatable}
Suppose we have two groups $G$ and $H$ presented as $
G = \langle S_G \mid R_G \rangle$ and $H =\langle S_H \mid R_H \rangle$. Then
a presentation of the direct  product is:
\[
  G \times H = \langle S_G , S_H \; \mid \;  R_G , R_H , R_{P}\rangle\
\]
where $R_P$ is the set of relations that say that each element of $S_G$ commutes with $S_H$.
For example, the direct product of $S_2$ with itself can be presented as
\[
  S_2 \times S_2 = \langle x, y\; \mid \; x^2 = y^2 = e, xy = yx \rangle
\]

\paragraph*{Free Product.}
Our syntax also supports a group construction called the free product, which is shown in \iffull \Cref{app: fp-cons}. \else the appendix, which can be found in the supplementary materials. \fi
\subsection{Group Actions of Constructions}
Just like we can compose groups together, we can also compose group actions together.
\subsubsection*{Direct Product} Let $I = G \times H$ be a direct product group, where the actions of $G$ and $H$ are $\tuple{G}{a}{V_1}$ and $\tuple{H}{b}{V_2}$ respectively. We would like the new action of $I$ to be compatible with the actions $\tuple{G}{a}{V_1}$ and $\tuple{H}{b}{V_2}$.  However, an action compatible with $\tuple{G}{a}{V_1}$ and $\tuple{H}{b}{V_2}$ does not always extend to $I$. When it does extend to $I$, it
is defined as follows:  each generator of $G$ acts as $a_g$ for each variable in $V_1$, and fixes every variable in $V_2\setminus V_1$, whereas each generator of $H$ acts as $b_h$ for every variable in $V_2$, and fixes every variable in $V_1 \setminus V_2$.

\begin{restatable}{definition}{DirProdAction}
  \label{def: dir-prod-action}
  Suppose we have two groups $G$ and $H$ presented as $ G = \langle S_G \mid R_G
  \rangle$ and $H =\langle S_H \mid R_H \rangle$ with group actions $(G,
  a_{V_1})$ and $(H,b_{V_2})$.  Let $V_1 = \{v_1, \dots, v_k\}$, $V_2 = \{v'_1,
  \dots, v'_n\}$ and let $a_g$ map each integer $\alpha_i$ to $\alpha'_i$, and
  $b_h$ map each integer $\beta_j$ to $\beta_j'$.  Then the action of the direct
  product of $G$ and $H$ on $V_1 \cup V_2$ is defined to be:
  \begin{align*}  \actiontemp{S_G, H_G}{R_S, R_H, R_P}{V_1 \cup V_2}{ &
      \\ \forall g \in S_G ,{} &g\cdot(v_1 \rightarrow \alpha_1, \dots, v_k \rightarrow \alpha_k, v'_1 \rightarrow \beta_1, \dots, v'_n \rightarrow \beta_n) = \\
                                                                    & v_1 \rightarrow \alpha_1', \dots, v_k \rightarrow \alpha'_k, v'_1 \rightarrow \beta_1, \dots, v'_n \rightarrow \beta_n           \\
      \forall h \in S_H ,{}                                 & h\cdot(v_1 \rightarrow \alpha_1, \dots, v_k \rightarrow \alpha_k, v'_1 \rightarrow \beta_1, \dots, v'_n \rightarrow \beta_n) = \\
                                                                    & v_1 \rightarrow \alpha_1, \dots, v_k \rightarrow \alpha_k, v'_1 \rightarrow \beta'_1, \dots, v'_n \rightarrow \beta'_n}
  \end{align*}
\end{restatable}

We will often use the shorthand $(G \times H, a \times b_{V_1 \cup V_2})$ to denote the action of the direct product.
The condition for when $(G \times H, a \times b_{V_1 \cup V_2})$ is a valid action is characterized by the following lemma:
\begin{restatable}{lemma}{LemDirProdCond}
\label{lem:dir-prod-cond}
  Let $I = G \times H$ be a direct product of two groups G and H with actions $\tuple{G}{a}{V_1}$, and  $\tuple{H}{b}{V_2}$. 
  Then $(G \times H, a \times b_{V_1 \cup V_2})$ is an action of $I$ on
  $\Sigma_{V_1 \cup V_2}$ if and only if for all program states $\sigma \in
  \Sigma_{V_1 \cup V_2}$, we have $\overline{a}_g(\overline{b}_h(\sigma)) =
  \overline{b}_h(\overline{a}_g(\sigma))$.
\end{restatable}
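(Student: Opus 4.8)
The plan is to prove the biconditional characterizing when the candidate direct-product action $(G \times H, a \times b_{V_1 \cup V_2})$ is a genuine group action. The natural strategy is to invoke the earlier finite-generation criterion (the theorem stating that it suffices to check the defining relations of the presentation on every program state): since $I = G \times H$ is presented as $\action{S_G, S_H}{R_G, R_H, R_P}$, the candidate map extends to a valid action if and only if each generator acts as a bijection and all three families of relations $R_G$, $R_H$, and $R_P$ are respected on every $\sigma \in \Sigma_{V_1 \cup V_2}$. The relations in $R_G$ and $R_H$ hold automatically because they already held for the component actions $a$ and $b$, and the lifted generators act exactly like $a$ (resp.\ $b$) on their own variables while fixing the others; so the only genuine content is the commutation relations $R_P$, which say precisely that every generator of $G$ commutes with every generator of $H$.

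Concretely, I would first spell out the two directions. For the ``only if'' direction: if the candidate is a group action, then in particular the relation $gh = hg$ holds for each $g \in S_G$, $h \in S_H$, which forces $\overline{a}_g(\overline{b}_h(\sigma)) = \overline{b}_h(\overline{a}_g(\sigma))$ for all $\sigma$, and this extends from generators to arbitrary $g \in G$, $h \in H$ by writing them as words and using associativity of the action together with the fact that each element commutes with each generator. For the ``if'' direction: assuming the stated commutation condition, I would verify the hypotheses of the finite-generation theorem. The relations $R_G$ and $R_H$ reduce to the corresponding relations for $a$ and $b$ because the lifted generators agree with the originals on the relevant variables (I would use the Lifting theorem here, which guarantees the lifted actions $\overline{a}$ and $\overline{b}$ remain valid actions). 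The relations $R_P$ are exactly the hypothesis. Hence every defining relation holds on every state, so by the finite-generation criterion the map is a valid action of $I$.

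The main bookkeeping obstacle is managing the variable sets $V_1$ and $V_2$ and their overlap $V_1 \cap V_2$. When the two sets intersect, a generator $g \in S_G$ may move a variable that $h \in S_H$ also moves, so the commutation of $\overline{a}_g$ and $\overline{b}_h$ on the shared variables is not automatic --- this is exactly why the condition is nontrivial and why the lemma is stated as a genuine characterization rather than an unconditional construction. I would therefore be careful to work with the lifted actions $\overline{a}_{\mathit{Vars}}$ and $\overline{b}_{\mathit{Vars}}$ on the common domain $\Sigma_{V_1 \cup V_2}$ (or all of $\Sigma$), so that both actions are defined on the same state space and composition is meaningful; on variables outside the relevant set each lifted generator is the identity, which makes the disjoint case trivial but keeps the overlapping case honest.

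I expect the harder direction to be the reduction in the ``if'' direction showing that $R_G$ and $R_H$ are automatically satisfied: one must argue that a relation $w = w'$ over $S_G$, when interpreted via the lifted generators on $\Sigma_{V_1 \cup V_2}$, holds iff it holds for the original action $a$ on $\Sigma_{V_1}$. This follows because the $V_2 \setminus V_1$ coordinates are untouched by every $\overline{a}_g$, so the action factors as the original $a$ on $V_1$ together with the identity elsewhere; I would state this factoring explicitly and then the relation-checking collapses to the component action, where it holds by hypothesis. Once this is in place, the finite-generation theorem closes the argument cleanly.
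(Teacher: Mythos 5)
Your proof is correct, but it takes a genuinely different route from the paper's. The paper argues directly from the group-action axioms for arbitrary elements: for the forward direction it instantiates the compatibility axiom $(a\times b)_{(g_1,h_1)}\circ (a\times b)_{(g_2,h_2)} = (a\times b)_{(g_1g_2,\,h_1h_2)}$ and unfolds the definition $(a\times b)_{(g,h)} = \overline{a}_g\circ\overline{b}_h$ to extract commutativity, and for the converse it uses the assumed commutativity to rearrange $\overline{a}_{g_1}\overline{b}_{h_1}\overline{a}_{g_2}\overline{b}_{h_2}(\sigma)$ into $\overline{a}_{g_1g_2}\overline{b}_{h_1h_2}(\sigma)$, verifying that same axiom. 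You instead route the converse through the presentation-based criterion (the theorem that checking the defining relations on every state suffices), observing that $R_G$ and $R_H$ hold automatically for the lifted generators and that $R_P$ is exactly the hypothesis; your forward direction is essentially the paper's argument specialized to generators and then extended to words. Both arguments are sound. What the paper's approach buys: it is short, self-contained, requires no finiteness assumptions, and does not depend on the earlier claim that $\langle S_G, S_H \mid R_G, R_H, R_P\rangle$ presents $G\times H$. What yours buys: the semantic check collapses to finitely many generator-level conditions, which matches the paper's finite-syntax philosophy and is what an implementation would actually verify; the price is the dependence on that presentation fact and on the generation theorem, which, being only a sufficient condition for being an action, cannot give the forward direction --- you correctly supply that separately. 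One point worth making explicit in your write-up: the action produced by the generation criterion must coincide with the candidate $(a\times b)_{(g,h)} = \overline{a}_g\circ\overline{b}_h$ on non-generator elements; this holds because writing $(g,h)$ as a word over $S_G$ followed by $S_H$ composes to exactly $\overline{a}_g\circ\overline{b}_h$, but it deserves a sentence rather than being left implicit.
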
 
\iflong 
\else
  All omitted proofs can be found in the full paper.
\fi
When $V_1$ and $V_2$ are disjoint, then the condition in Lemma
\ref{lem:dir-prod-cond} is automatically satisfied. In this case, we will denote
the product group action as $\tuple{G \times H}{a \times b}{V_1  \uplus  V_2}$.

\subsubsection{Free Product}
We refer the reader to \iffull \Cref{app:fp-action} \else the full version of the paper \fi for the details of the free product action. 
\subsection{Syntax of Homomorphisms}
Recall from \Cref{sec:overview} that our Hoare triples were of the form
\(\triple{(G,a_{V_1})}{C}{(H,b_{V_2})}{\varphi} \). Now that we know how to
specify the group actions $(G, a)$ and $(H, b)$, we focus on the homomorphism
$\varphi$.
Much like for our group actions,
we provide a syntax for defining homomorphism. Let $\varphi : G \rightarrow H$ be a homomorphism where $G$ and $H$ are finitely presented groups, with generators $S_G$ and $S_H$, respectively. Then we define $\varphi$ in our syntax as: 
\[
\varphi \triangleq  \Big\langle \forall g \in S_G, \varphi(g) = h_i \cdots h_k, \text{ where } h_j \in S_H  \Big\rangle
\]
This syntax describes how $\varphi$ maps each generator of $G$ into some element
of $H$. For example, if we wanted to describe the homomorphism $\varphi: G
\rightarrow E$ that maps every element of $G$ to the identity in $E$, we would
define it as: 
\[
\varphi \triangleq \bigl\langle \forall g \in S_G, \varphi(g) = e   \bigr\rangle
\]
Since we will use the homomorphism from any group into the identity group $E$ often, we will denote it by $\mathbf{e*}$.
Another example is the homomorphism that maps every element to itself. For any group with generating set \(S_G\), we can define this map in our syntax as follows:
\[
\varphi \triangleq \bigl\langle \forall g \in S_G, \varphi(g) = g \bigr\rangle
\]
We also refer to this homomorphism extensively, so we will denote it as
\(\mathbf{eq}\), for \emph{equivariant}.

As with group actions, to verify if a map on generators is a homomorphism, it suffices to check if the map respects the relations: 
\begin{restatable}{lemma}{HomomorphismCheck}
\label{lem:johnson1}
\cite{johnson1997presentations}
Given $G=\langle S_G \mid R_G\rangle$ and $f:S_G\to H$, $f$ extends to a
homomorphism iff $f(r_j)=e_H$ for each $r_j\in R_G$.
\end{restatable}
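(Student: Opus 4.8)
The plan is to prove this as a direct instance of the universal properties of free and quotient groups, i.e., as von Dyck's theorem. First I would normalize the data: each equation $u = w \in R_G$ between words in $(S_G \cup S_G^{-1})^*$ can be rewritten as a single relator $r = u w^{-1}$, so that, after extending $f$ multiplicatively to words, the stated condition ``$f(r_j) = e_H$ for each $r_j \in R_G$'' is equivalent to requiring $f(u) = f(w)$ for every original equation. This lets me treat $R_G$ as a set of relators, each of which the map must send to $e_H$.

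Next I would set up the two structural facts underlying a presentation. Let $F(S_G)$ be the free group on $S_G$. By the universal property of free groups, the set map $f : S_G \to H$ extends uniquely to a homomorphism $\tilde{f} : F(S_G) \to H$. By definition of a presentation, $G = F(S_G)/N$ where $N = \langle\langle R_G \rangle\rangle$ is the normal closure of the relators, and there is a canonical quotient homomorphism $\pi : F(S_G) \to G$ sending each generator to its image in $G$. Crucially, ``$f$ extends to a homomorphism'' means precisely that there exists a homomorphism $\bar{f} : G \to H$ with $\bar{f}(\pi(s)) = f(s)$ for all $s \in S_G$.

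For the forward direction, I would assume such a $\bar{f}$ exists. Then $\bar{f} \circ \pi$ and $\tilde{f}$ are homomorphisms out of $F(S_G)$ agreeing on the generating set $S_G$, so by uniqueness in the free-group universal property they coincide: $\bar{f} \circ \pi = \tilde{f}$. Since each relator $r_j$ is trivial in $G$, i.e.\ $\pi(r_j) = e_G$, I get $f(r_j) = \tilde{f}(r_j) = \bar{f}(\pi(r_j)) = \bar{f}(e_G) = e_H$, which is the claimed condition. For the backward direction, I would assume $\tilde{f}(r_j) = e_H$ for every $r_j$, so that $R_G \subseteq \ker \tilde{f}$. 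Because $\ker \tilde{f}$ is a normal subgroup of $F(S_G)$ containing $R_G$, and $N$ is the smallest such normal subgroup, it follows that $N \subseteq \ker \tilde{f}$; the universal property of the quotient $F(S_G)/N$ then yields a homomorphism $\bar{f} : G \to H$ with $\bar{f} \circ \pi = \tilde{f}$, which restricts to $f$ on generators.

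The one step that carries real content — and that makes the statement an \emph{iff} rather than a mere implication — is the passage from $R_G \subseteq \ker \tilde{f}$ to $N \subseteq \ker \tilde{f}$, which relies on $\ker \tilde{f}$ being normal and hence forced to contain the entire normal closure of $R_G$. Everything else is bookkeeping with the two universal properties, so I expect no genuine obstacle beyond stating them precisely and being careful about the relation-versus-relator bookkeeping from the first step.
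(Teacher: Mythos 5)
Your proof is correct. The paper itself gives no proof of this lemma at all --- it is stated with a citation to \citet{johnson1997presentations} and treated as a known result (it is von Dyck's theorem, or the ``substitution test'' in Johnson's terminology), so there is no in-paper argument to compare against. Your argument --- normalizing equations $u = w$ to relators $uw^{-1}$, extending $f$ to $\tilde{f} : F(S_G) \to H$ by the universal property of the free group, deducing the forward direction from uniqueness of that extension together with $\pi(r_j) = e_G$, and deducing the backward direction from $R_G \subseteq \ker\tilde{f}$ forcing $N = \langle\langle R_G \rangle\rangle \subseteq \ker\tilde{f}$ and then factoring through the quotient --- is exactly the standard proof the citation points to, and it is complete: you correctly identify that the only substantive step is that a normal subgroup containing $R_G$ must contain its normal closure, which is what makes the converse direction (and hence the ``iff'') go through.
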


\subsubsection{Direct Product}
To extend our syntax to capture a homomorphism from \(G\) to the direct product \(H \times I\), we can combine the data of the two homomorphisms, \(\varphi: G \to H\) and \(\psi: G \to I\) into one mapping that sends each generator \(g \in S_G\) to the pair \((\varphi(g), \psi(g))\). In our syntax, we can express it as follows:
\[
(\varphi\times \psi) \triangleq \Big\langle\forall g\in S_G,\, (\varphi\times \psi)(g)= \varphi(g)\,\psi(g)  \Big\rangle.
\]
Here, \( \varphi(g) \) is a word in the alphabet \( S_H \), \( \psi(g) \) is a word in the alphabet \( S_I \) and their concatenation \( \varphi(g)\,\psi(g) \) is a word in the combined alphabet \( S_H\cup S_I \). 
As expected, \(\phi \times \psi\) is a homomorphism. 
\begin{restatable}{lemma}{DirProdHomo}
        \(\phi \times \psi\) extends uniquely to a homomorphism from \(G\) to \( H 
    \times I\). 
\end{restatable}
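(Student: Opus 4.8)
The plan is to invoke \Cref{lem:johnson1}, which reduces the problem of checking that a map on generators extends to a homomorphism to verifying that each relation of the source group maps to the identity. Concretely, I would first observe that $(\varphi \times \psi)$ is defined as a map on generators $S_G \to H \times I$, sending each $g \in S_G$ to $(\varphi(g), \psi(g))$, where I identify the pair with the word $\varphi(g)\,\psi(g)$ in the combined alphabet $S_H \cup S_I$. By \Cref{lem:johnson1}, it suffices to show that for every relation $r \in R_G$, the image $(\varphi \times \psi)(r)$ equals the identity $(e_H, e_I)$ in $H \times I$.

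The key step is to exploit the componentwise structure of the direct product. For a word $r = g_{i_1} \cdots g_{i_m}$ over $S_G$, its image under $(\varphi \times \psi)$ is $(\varphi(g_{i_1}), \psi(g_{i_1})) \cdots (\varphi(g_{i_m}), \psi(g_{i_m}))$. Since the group operation in $G \times H$ (here, $H \times I$) multiplies componentwise, this product factors as $\bigl(\varphi(g_{i_1}) \cdots \varphi(g_{i_m}),\ \psi(g_{i_1}) \cdots \psi(g_{i_m})\bigr) = (\varphi(r), \psi(r))$. Now, because $\varphi : G \to H$ and $\psi : G \to I$ are already homomorphisms, they each respect the relations of $G$; applying \Cref{lem:johnson1} in the reverse direction to the hypotheses on $\varphi$ and $\psi$ gives $\varphi(r) = e_H$ and $\psi(r) = e_I$ for every $r \in R_G$. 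Therefore $(\varphi \times \psi)(r) = (e_H, e_I)$, which is the identity of $H \times I$, so by \Cref{lem:johnson1} the map extends to a homomorphism.

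For uniqueness, I would note that any homomorphism out of $G = \langle S_G \mid R_G \rangle$ is completely determined by its values on the generating set $S_G$, since every element of $G$ is a product of generators and their inverses and a homomorphism must preserve products and inverses. Since $(\varphi \times \psi)$ fixes the images of all generators, the extension is unique.

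The main obstacle, such as it is, is purely bookkeeping: making the identification between the pair $(\varphi(g), \psi(g)) \in H \times I$ and the concatenated word $\varphi(g)\,\psi(g)$ in $S_H \cup S_I$ precise, and checking that this identification is compatible with the direct product operation (i.e., that the interleaving of the two alphabets reduces correctly under the commuting relations $R_P$ from the direct product presentation). Once one confirms that multiplying such concatenated words componentwise agrees with the direct product multiplication — which follows from the fact that generators of $H$ commute with generators of $I$ in the presentation of $H \times I$ — the rest of the argument is a direct application of \Cref{lem:johnson1}, with no genuine group-theoretic difficulty.
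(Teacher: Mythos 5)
Your proof is correct, but it takes a genuinely different route from the paper's. The paper never invokes \Cref{lem:johnson1}: it works directly with the map $g \mapsto \varphi(g)\,\psi(g)$ viewed as defined on \emph{all} of $G$, and verifies the two homomorphism axioms by hand --- operation preservation (computing $(\varphi\times\psi)(g_1g_2)$, then using the fact that letters from $S_H$ commute with letters from $S_I$ in $H\times I$ to rearrange $\varphi(g_1)\varphi(g_2)\psi(g_1)\psi(g_2)$ into $\varphi(g_1)\psi(g_1)\varphi(g_2)\psi(g_2)$) and identity preservation. You instead treat $\varphi\times\psi$ as a map on generators only and apply the presentation criterion of \Cref{lem:johnson1}: every relation $r\in R_G$ maps to $(\varphi(r),\psi(r))=(e_H,e_I)$ because $\varphi$ and $\psi$, being homomorphisms, kill the relators of $G$. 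Both arguments hinge on the same commutation fact (your ``bookkeeping'' step that the concatenated word factors componentwise is exactly the paper's rearrangement step), so neither is deeper than the other; but your version is more faithful to the lemma's actual phrasing, since the syntactic object $\varphi\times\psi$ is only defined on $S_G$ and the claim is precisely that it \emph{extends}, and you also supply the uniqueness argument (a homomorphism out of $\langle S_G \mid R_G\rangle$ is determined by its values on $S_G$), which the paper's proof omits entirely. What the paper's route buys in exchange is independence from the relator-form convention in \Cref{lem:johnson1}: it establishes directly that the concrete map $g\mapsto\varphi(g)\psi(g)$ is a homomorphism without needing the relations of $G$ to be rewritten as relators.
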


We remark that similar constructions can be defined for homomorphisms from $G_1 \times G_2 \rightarrow H_1 \times H_2$, given the homomorphism $\varphi_1: G_1 \rightarrow H_2$, and $\varphi_2: G_2 \rightarrow H_2$.
\subsubsection{Free Product} We refer the reader \iflong to \Cref{app: fp-hom} \else to the  full paper \fi
for the free product.
\subsubsection{Composition of Homomorphisms}
Given two homomorphisms $\varphi: G \rightarrow H$, and $\psi: H \rightarrow I$, we define their composition, denoted by $\psi \circ \varphi : G \rightarrow I$ as a map that first applies $\varphi$ and then applies $\psi$. We define the syntax formally in \iflong \Cref{app: compose}. \else the full version of the paper. \fi
\subsection{Entailment of Group Actions}

When designing our Hoare logic, it will be useful to consider a notion of
weakening or entailment for group actions, analogous to the usual logical
implication. Intuitively, we think of one group action $\tuple{G}{a}{V}$ as
implying another group action $\tuple{H}{b}{V}$ if every transformation of
programs states in $\tuple{G}{a}{V}$ also exists in $\tuple{H}{b}{V}$. This
means that for every group element $g \in G$, there exists an element in $h$
such that they have the same action on every program state.  More formally:
\begin{definition}
  \label{def: implication}
  We say that $(G, a_{V_1}) \overset{\varphi}{\rightarrow} (H, b_{V_2})$ if and
  only if, $V_2 \subseteq V_1$ and $\varphi: G \rightarrow H$ is a homomorphism
  such that for all group elements $g \in G$ and program states $\sigma \in
  \Sigma$, we have
  \[
    \forall v_2 \in V_2, \overline{a_g}(\sigma)(v_1) = \overline{b_{\varphi(g)}}(\sigma)(v_2) .
  \]
\end{definition}
Intuitively, $(G, a_{V_1}) \overset{\varphi}{\rightarrow} (H, b_{V_2})$ holds
when $\tuple{H}{b}{V_2}$ contains more transformations of program state than
$\tuple{G}{a}{V_1}$ does.

\begin{example}
  Consider the group of integers, $\mathbb{Z}$,  which acts on each variable by integer addition. Let ${\tuple{\mathbb{Z}}{a}{\{x\}}}$ denote this action where for every $g \in \mathbb{Z}$, $x$ is transformed to $x+g$.

  We can also consider the group $2\mathbb{Z}$, which is the additive group of even integers. We define another action of this group:
  $ {\tuple{2\mathbb{Z}}{b}{\{x\}}}$ such that for every $g \in 2\mathbb{Z}$, $x$ is transformed to $x+g$.

  Consider the map $\varphi: 2\mathbb{Z} \rightarrow \mathbb{Z}$ such that
  $\varphi(g \in 2\mathbb{Z}) = g \in\mathbb{Z}$. The map $\varphi$ is a homomorphism, and
  \[
    \forall g \in 2\mathbb{Z}\; \forall \sigma \in \Sigma,
    \overline{a}_g(\sigma)(x) = b_{\varphi(g)}(\sigma)(x) ,
  \]
  so $\tuple{2\mathbb{Z}}{b}{\{x\}} ) \overset{\varphi}{\rightarrow} \tuple{\mathbb{Z}}{a}{\{x\}}$.

\end{example}
This definition of implication satisfies several convenient properties resembling
properties of logical implication. For instance, if we view the direct product
as logical conjunction, we would expect $\tuple{G \times H}{a \times
    b}{V_1\uplus V_2} \overset{\varphi}{\rightarrow} \tuple{G}{a}{V_1}$, since $A \land B \implies A$.
Indeed, we have:

\begin{restatable}{lemma}{ImpProp}
  Let $\tuple{G}{a}{V_1}$ and $\tuple{H}{b}{V_2}$ be any group actions. Then the following properties hold:
  \begin{enumerate}
    \item $\tuple{G \times H}{a \times b}{V_1\uplus V_2} \overset{\mathbf{proj_1}}{\rightarrow} \tuple{G}{a}{V_1}$
    \item $\tuple{G \times H}{a \times b}{V_1\uplus V_2} \overset{\mathbf{proj_2}}{\rightarrow} \tuple{H}{b}{V_2}$
    \item  $\tuple{E}{e}{V_1} \overset{\mathbf{e^-}}{\rightarrow} \tuple{G}{a}{V_1}$
    \item  $  \tuple{G}{\overline{a}}{\mathit{Vars}} \overset{\mathbf{eq}}{\rightarrow} \tuple{G}{a}{V_1}$
  \end{enumerate}
where \(\mathbf{proj_1} \), and \(\mathbf{proj_2}\) are the first and second projections, respectively, and $\mathbf{e^-}$ is the map $\mathbf{e^-}(e) = e_G$.
\end{restatable}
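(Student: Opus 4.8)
The plan is to unfold \Cref{def: implication} in each of the four cases and check that the required equality $\overline{a}_g(\sigma)(v) = \overline{b}_{\varphi(g)}(\sigma)(v)$ holds for the stated homomorphism. For each case, I would proceed in two stages: first verify that the named map is in fact a homomorphism (using \Cref{lem:johnson1}), and second verify the pointwise action condition on all variables in the smaller variable set.

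For the projection cases (1) and (2), I would take $\varphi = \mathbf{proj_1}$, the map sending each generator $g \in S_G$ to itself and each generator $h \in S_H$ to $e$, and dually for $\mathbf{proj_2}$. That these are homomorphisms follows from \Cref{lem:johnson1}, since the relations $R_G$, $R_H$, and the commuting relations $R_P$ all map to valid equalities in the target. For the action condition, I would recall from \Cref{def: dir-prod-action} that in the product action $a \times b$ the generators of $G$ act exactly as $a$ on $V_1$ and fix $V_2 \setminus V_1$, while the generators of $H$ fix $V_1 \setminus V_2$; since cases (1) and (2) use the disjoint-union notation $V_1 \uplus V_2$, the restriction of the product action to $V_1$ coincides exactly with $\overline{a}$, and $\mathbf{proj_1}(g)$ acts by $a$ on $V_1$ as required. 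Case (2) is symmetric. Case (3) takes $\varphi = \mathbf{e^-}$, sending the unique generator-free element $e$ of the trivial group $E$ to $e_G$; here $V_1 = V_1$ so the containment is trivial, and since $\overline{e}$ is the identity action while $\overline{a}_{e_G}$ is also the identity (by the group action axiom $a_{e} = \mathrm{id}$), both sides equal $\sigma(v)$ for every $v \in V_1$. Case (4) takes $\varphi = \mathbf{eq}$ and uses that the lifted action $\overline{a}$ on $\mathit{Vars}$ agrees with $a$ on $V_1$ by the very definition of lifting, so the two actions coincide on $V_1$.

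I expect the cases themselves to be routine once the definitions are unfolded, so the genuine content is mostly bookkeeping: matching the variable-set containment requirement $V_2 \subseteq V_1$ of \Cref{def: implication} against the $\uplus$ and $\mathit{Vars}$ annotations, and correctly reading off the product action from \Cref{def: dir-prod-action}. The main obstacle, such as it is, will be case (1)/(2): I must be careful that the equivariance condition is checked on \emph{all} of $V_1$ (respectively $V_2$), including variables where the $H$-generators (respectively $G$-generators) act trivially, and confirm that the product action restricted to those variables really does reduce to $\overline{a}$ (respectively $\overline{b}$) rather than some entangled action. The disjointness of $V_1$ and $V_2$ in the $\uplus$ notation is what makes this clean, and I would flag explicitly that this is where that hypothesis is used.
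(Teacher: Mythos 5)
Your proposal is correct and follows essentially the same route as the paper's proof: unfold \Cref{def: implication}, observe that the product action restricted to $V_1$ (resp.\ $V_2$) coincides with $\overline{a}$ (resp.\ $\overline{b}$) thanks to disjointness, and that the lifting agrees with $a$ on $V_1$; the only cosmetic difference is that you verify $\mathbf{proj_1}$ is a homomorphism generator-by-generator via \Cref{lem:johnson1}, whereas the paper defines it directly on pairs $(g,h) \mapsto g$ and asserts the homomorphism property. The paper also only writes out cases (1) and (4), declaring the rest routine, so your treatment of (2) and (3) is a slight elaboration rather than a departure.
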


We will use these properties extensively when reasoning about group actions.

\section{Symmetry Program Logic}
\label{sec:logic}
With the syntax for group actions and homomorphisms set, we are now ready to
present our Hoare-style logic for symmetry properties.

\subsection{Programming Model}
We start by defining a simple imperative programming language with no side
effects: The syntax of commands in the language is: 
\begin{align*}
  \mathcal{COM}  \ni C \coloneqq{}& \textbf{skip} \mid x \coloneqq e  \mid C_1; C_2 \mid \textbf{if }x\;\textbf{then}\;C_1\;\textbf{else}\;C_2\\& \mid \textbf{for } {(t \coloneq 0; t < \alpha_1; t \coloneq t + \alpha_2)} \textbf{ do } C  \mid \textbf{while } x \textbf{ do } C
\end{align*}
Our syntax is standard. The language supports simple commands like assignment,
sequencing, branching, and iteration through for-loops and while loops, however,
we require all control flow to branch on boolean \emph{variables}; this can be
achieved by assigning the test to an auxiliary variable. The expressions are
standard boolean and arithmetic expressions with operations from a fixed set
$Ops$. Loops are assumed to be terminating.  We freely use logical variables,
such as $\alpha_1$ and $\alpha_2$ in the loop constructs, to refer to parameters
that remain unchanged throughout the program.

The full syntax and (entirely standard) semantics can be found in \iffull
\Cref{app:sem,app:syntax,}. \else the full version. \fi

\subsection{Symmetry Triples}
The formulas in our logic are Hoare-style triples
consisting of a group action $\tuple{G}{a}{V}$ defined on the input states, a
program $C$, and a group action $\tuple{H}{b}{V}$ on the output program states.
This triple is valid when for all input states acted on by $\tuple{G}{a}{V}$,
the output states after executing $C$ are related by the action of $\tuple{H}{b}{V}$. More formally:
\begin{definition}
  \label{def: valid}
  The triple $\triple{(G, a_{V_1})}{C}{(H, b_{V_2})}{\varphi}$ is valid, written
  $\vDash \triple{\tuple{G}{a}{V_1}}{C}{\tuple{H}{b}{V_2}}{\varphi}$,
  if and only if for all $g \in G$, we have
  \[ \forall \sigma \in \Sigma,\forall v \in
    V_2\;\overline{b}_{\varphi(g)}(\sem{C}_{\sigma})(v)=
    \sem{C}_{\overline{a}_{g}(\sigma)}(v) .
  \]
\end{definition}
Intuitively, the homomorphism $\varphi$ witnesses the symmetry property:
for every group element $g \in G$, transforming the input state $\sigma$ by $a_g$ and
running the program leads to the same program state as running the program on
$\sigma$ and transforming by $b_{\varphi(g)}$.

\begin{example}
  \label{ex:triple-int}
  Consider the program $x \coloneqq x + 5$. Suppose before executing this
  statement, the inputs were transformed by the group of integers acting on
  each variable by integer addition, i.e., for any integer $c$, $x$ is
  transformed to $x + c$. This action is defined in our syntax as follows:
  \[
    {\tuple{\mathbb{Z}}{a}{\{x\}}} \triangleq
    \actiontemp{g}{-}{\{x\}}{g(x \rightarrow \alpha) = x \rightarrow \alpha + 1 }
  \]
  Then, after executing the program, the value of $x$ without the group action will be $x+5$, and with the group action for any $g
    \in \mathbb{Z}$ will be $x + 5 + g$. Therefore, for any $g$, all the outputs
  are related by integer addition. Thus, the triple $\triple{\tuple{\mathbb{Z}}{a}{\{x\}}}{x \coloneqq x + 5}{\tuple{\mathbb{Z}}{a}{\{x\}}}{\mathbf{eq}}$ is valid. 
\end{example}

\subsection{Proof Rules}
Next, we consider the proof rules for our logic. As usual, we write $\vdash
\triple{\tuple{G}{a}{V_1}}{C}{\tuple{H}{b}{V_2}}{\varphi}$ when the triple
$\triple{\tuple{G}{a}{V_1}}{C}{\tuple{H}{b}{V_2}}{\varphi}$ is derivable from
the set of inference rules.

\subsubsection{Program Statement Rules}
We first present the basic rules for each construct in our programming language, which are summarized in Figure \ref{fig:rules-stmt}.
All rules implicitly require that assertions are valid group actions.
The $\textsc{SKIP}$ rule says that if a group action transforms the variables in $V$, then after a $\mathbf{skip}$ statement, the variables are still related by the same group action.

To understand the $\textsc{ASSGN}$ rule, we first introduce a predicate
transformer $\mathsf{POST}$ that takes in a group action $(G, a)$, and function
$f$ and returns a group action $\mathsf{POST}((G, a), f)$. This group action
will be based on the pre-condition group $G$, but its action will be different
from $a$.

Let us think about what this group action should do. Let $C$ be the assignment
command $v_1 \coloneq f(v_1, \dots , v_k)$; assigning to other variables is
entirely similar. For any group element $g \in G$, and
starting program state $\sigma$, we want the post-condition group action at $g$ to map
the output program state $\sem{C}_{\sigma}$ to the output program state
$\sem{C}_{a_g(\sigma)}$. To define this post-condition group action, we first
note that the program $C$ is invertible because $f$ is assumed to be injective
with respect to $v_1$, i.e.,
\[
  \forall v_1, v'_1 \dots, v_k. f(v_1, \dots, v_k) = f(v_1', \dots, v_k) \implies v_1 = v'_1 .
\]
Thus at a high level, the group action can be defined by inverting $C$, applying
the original action $(G, a)$, and then applying $C$. In pictures, we want:
\[\begin{tikzcd}
    \Sigma & {} & \bullet && \bullet && \Sigma
    \arrow["{C^{-1}}", from=1-1, to=1-3]
    \arrow["{a_g}", from=1-3, to=1-5]
    \arrow["C", from=1-5, to=1-7]
    \arrow["{POST((G, a), f)_g}"', curve={height=18pt}, from=1-1, to=1-7]
  \end{tikzcd}\]
Before we define $\mathsf{POST}$ formally, we introduce some notation.
Let $\vec{z}$ denote $v_2, \dots, v_k$.

Let $n$ be $|V|$, and $m$ be $|\mathit{Vars}|$.
Let $\Sigma_f$ be the set of all states reachable after $C$.
We can define an inverse map $C^{-1} : \Sigma_f \to \Sigma$ via:
\begin{multline*}
  C^{-1}(v_1 \rightarrow \alpha_1, \dots, v_n \rightarrow \alpha_n, \dots, v_m \rightarrow \alpha_m)  \\
  \triangleq
  (v_1 \rightarrow \hat{f}(\alpha_1, \alpha_2, \dots, \alpha_k) , \dots, \alpha_n \rightarrow \alpha_n, \dots, v_m \rightarrow \alpha_m)
\end{multline*}
where $\hat{f} : \mathbb{Z}^k \to \mathbb{Z}$ is the function satisfying $\hat{f}(f(\alpha_1, \dots, \alpha_k), \alpha_2, \dots, \alpha_k) = \alpha_1$ for
all $\alpha_i \in \mathbb{Z}$.

Now, we can define the action returned by $\mathsf{POST}$ as the action that inverts the program state, applies the action $\tuple{G}{a}{V}$, and then executes the assignment:

\begin{restatable}{definition}{POSTDef}
  \label{def:post}
  Let $(G,a_V)$ be any group action with \(|V| = n\), $f$ be any injective function, and $\sigma$
  be any program state $(v_1 \rightarrow \alpha_1, \dots, v_n
    \rightarrow \alpha_n, \dots, v_m \rightarrow \alpha_n)$. Then $\mathsf{POST}((G,a_V), f)$ is defined as:
    \begin{align}
      \mathsf{POST}((G,a_V), f) &\triangleq \actiontemp{S_G}{R_G} {\mathit{Vars}}
      { && \forall g \in S_G , g \cdot \sigma =
        \notag \\
        & && v_1 \rightarrow f(\overline{a}_g(C^{-1}(\sigma))(v_1), \dots,
        \overline{a}_g(C^{-1}(\sigma))(v_k)),
        \notag \\
        & && \dots, v_n \rightarrow \overline{a}_g(C^{-1}(\sigma))(v_n), \dots, v_m
      \rightarrow \alpha_m}
      \tag{if $\sigma \in \Sigma_f$} \\
        \mathsf{POST}((G,a_V), f) &\triangleq \actiontemp{S_G}{R_G}
        {\mathit{Vars}} {&&\forall g \in S_G , g \cdot \sigma = \sigma}
      \tag{if $\sigma \notin \Sigma_f$}
      \end{align}
\end{restatable}

The action first computes the starting state and then updates $v_1$ with the result of applying $f$ to the relevant variables in the start state, which gives the program state. In the case that the program state does not belong to $\Sigma_f$, the action returned by $\mathsf{POST}$ behaves like the identity.
We can show that the map produced by the $\mathsf{POST}$ transformer is a group action.

\begin{restatable}{theorem}{PostGpActionThm}
  \label{thm:post-is-grp-action}
  If $(G,a_V)$ is a group action and $f$ is injective, then $\mathsf{POST}((G,a_V),f)$ is a group action.
\end{restatable}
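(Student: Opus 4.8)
The plan is to reduce the claim to the earlier theorem stating that a map from the generators $S_G$ into $Sym(\Sigma)$ extends to a group action as soon as it respects every relation in $R_G$. So I would verify two things: that each generator acts as a bijection of $\Sigma$, and that every relation of $R_G$ is respected. The single structural fact that drives both checks, which I would establish first, is that on the reachable set $\Sigma_f$ the map defined by $\mathsf{POST}$ is precisely conjugation of the lifted action $\overline{a}$ by the assignment map $C$. Unfolding \Cref{def:post}, for $\sigma \in \Sigma_f$ the output applies $f$ to the $v_1,\dots,v_k$ coordinates of $\overline{a}_g(C^{-1}(\sigma))$ and copies the remaining coordinates of $\overline{a}_g(C^{-1}(\sigma))$ unchanged; that is exactly running $C$ on $\overline{a}_g(C^{-1}(\sigma))$, so $\mathsf{POST}((G,a_V),f)_g(\sigma) = C(\overline{a}_g(C^{-1}(\sigma)))$. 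Here $C : \Sigma \to \Sigma_f$ is a bijection: it is surjective by the definition of $\Sigma_f$ as the image of $C$, and injective because injectivity of $f$ in its first argument forces the inputs to agree, so $C^{-1}$ is a genuine two-sided inverse on $\Sigma_f$. I would also recall from the lifting theorem that each $\overline{a}_g$ is a bijection of $\Sigma$ (it is invertible via $\overline{a}_{g^{-1}}$).

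Given this, bijectivity of each generator is immediate: on $\Sigma_f$ the map is the composite $C \circ \overline{a}_g \circ C^{-1}$ of bijections $\Sigma_f \to \Sigma \to \Sigma \to \Sigma_f$, hence a bijection of $\Sigma_f$ onto itself; on the complement $\Sigma \setminus \Sigma_f$ it is the identity. Since $\Sigma_f$ and its complement partition $\Sigma$ and each piece is mapped to itself, the generator's action is a bijection of $\Sigma$. For the relations, I would show that the same conjugation formula lifts from generators to arbitrary words: $\mathsf{POST}_w = C \circ \overline{a}_w \circ C^{-1}$ on $\Sigma_f$. The point is the telescoping $\mathsf{POST}_{g}\circ \mathsf{POST}_{g'} = C \circ \overline{a}_g \circ (C^{-1}\circ C)\circ \overline{a}_{g'} \circ C^{-1} = C \circ \overline{a}_{gg'}\circ C^{-1}$, which uses $C^{-1}\circ C = \mathrm{id}_\Sigma$ together with the fact that $\overline{a}$ is a homomorphism into $Sym(\Sigma)$. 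Then for a relation $x = y \in R_G$, since $\overline{a}$ is a group action we have $\overline{a}_x = \overline{a}_y$, so $C \circ \overline{a}_x \circ C^{-1} = C \circ \overline{a}_y \circ C^{-1}$ on $\Sigma_f$, while both maps are the identity off $\Sigma_f$; hence $\mathsf{POST}_x = \mathsf{POST}_y$ on all of $\Sigma$. The finite-generation theorem then delivers the claim.

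The main obstacle I anticipate is the bookkeeping in the telescoping step, specifically confirming that the ``$\sigma \in \Sigma_f$'' branch of \Cref{def:post} is the one that fires at \emph{every} intermediate stage of a word's action. This amounts to checking that $\Sigma_f$ is invariant under each $\mathsf{POST}_g$, which holds because $\mathsf{POST}_g(\sigma) = C(\overline{a}_g(C^{-1}(\sigma)))$ lands in the image $\Sigma_f$ of $C$; without this invariance the clean conjugation identity would not propagate from generators to composite words. The complement case requires no work, since every $\mathsf{POST}_g$ acts as the identity there and identities compose to identities and trivially respect all relations. Once $\Sigma_f$-invariance is established, everything else is a routine consequence of $C$ being a bijection onto $\Sigma_f$ and $\overline{a}$ being a group action.
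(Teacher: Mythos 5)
Your proof is correct, and it reaches the conclusion by a different decomposition than the paper. The paper proves \Cref{thm:post-is-grp-action} by directly verifying the two group-action axioms for the map $a^* = \mathsf{POST}((G,a_V),f)$: after noting (as you do) that $\Sigma_f$ is invariant under $a^*$, it unfolds the definition in coordinates and checks $a^*_h(a^*_g(\sigma)) = a^*_{hg}(\sigma)$ and $a^*_e(\sigma) = \sigma$, splitting into the cases $\sigma \in \Sigma_f$ and $\sigma \notin \Sigma_f$; the conjugation structure $a^*_g = C \circ \overline{a}_g \circ C^{-1}$ is present only implicitly, in the step where applying $C^{-1}$ to $a^*_g(\sigma)$ recovers $\overline{a}_g(C^{-1}(\sigma))$ via $\hat{f}$. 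You instead make the conjugation identity the explicit centerpiece, prove that $C$ is a bijection of $\Sigma$ onto $\Sigma_f$ and that each generator therefore acts bijectively, show the relations are respected by telescoping, and close by invoking the earlier theorem that a generator map into $Sym(\Sigma)$ respecting the relations extends to a group action. The computational core is identical in both arguments---$\Sigma_f$-invariance, the cancellation $C^{-1} \circ C = \mathrm{id}$, and the homomorphism property of $\overline{a}$---but the packaging differs in two ways worth noting. First, your route is more faithful to \Cref{def:post} as written, which only specifies the action of each \emph{generator}; the paper's direct verification implicitly reads the formula as defining the action of every group element. Second, the price of your route is the explicit bijectivity check (the relation-checking theorem requires a map into $Sym(\Sigma)$), whereas in the paper's direct approach bijectivity comes for free once the axioms are verified, since $a^*_g \circ a^*_{g^{-1}} = a^*_e = \mathrm{id}$. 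Your identification of $\Sigma_f$-invariance as the one nontrivial bookkeeping obstacle matches exactly the observation the paper places at the start of its proof.
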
  

\begin{example}
  Using the $\textsc{ASSGN}$ rule, we show how to derive the post-condition in Example \ref{ex:triple-int} systematically.  The statement $x \coloneq x+5$ is injective, and its inverse is $\hat{f}(y) = y - 5$.  Now we can write this action in the syntax as:
  \begin{align*}
    \actiontemp{g}{-} {\{x\}}{ g \cdot (x \rightarrow \alpha) = x \rightarrow f(g(x \rightarrow \alpha -5)(x))} ,
  \end{align*}
  which simplifies to the action we derived in Example \ref{ex:triple-int}:
  $ \actiontemp{g}{-} {\{x\}}{ g \cdot (x \rightarrow \alpha) = x \rightarrow \alpha + 1}$
\end{example}

\begin{figure}[t]
  \begin{mathpar}
    \inferrule*[left={$\textsc{SKIP}$}, right={}]{
      \;
    }
    {\triple{(G, a_{V})}{\mathbf{skip}}{(G, a_{V})}{\mathbf{eq}}}
    \and
    \inferrule*[left={$\textsc{ASSGN}$}, right={}]{f \text{  is injective}}
    {\triple{(G, a_{V})}{ v_i \coloneq f(v_1, \dots , v_k)}{ \mathsf{POST}(\tuple{G}{a}{V}, f)}{\mathbf{eq}}}
    \and
    \inferrule*[left={$\textsc{SEM-ASSGN}$}, right={}]{\;\;\vDash \triple{(G, a_{V_1})}{  v_1 \coloneq f(v_1, \dots , v_k) }{ \tuple{H}{b}{V_2}}{\varphi}\;\;}
    {\triple{(G, a_{V_1})}{  v_1 \coloneq f(v_1, \dots , v_k)  }{\tuple{H}{b}{V_2}}{\varphi}}
    \and
    \inferrule*[left={$\textsc{SEQ}$}, right={\;}]{\triple{(G, a_{V_1})}{ C_1
    }{(H, b_{\mathit{Vars}}) }{\varphi} \\\\ \triple{(H, b_{\mathit{Vars}})}{ C_2}{(I, c_{V_2})}{\tau}}
    {(\triple{G, a_{V_1})}{C_1;C_2}{(I, c_{V_2})}{\tau \circ \varphi}}
    \and
\inferrule*[left={IF}, right={}]{ \triple{(G, a_{V_1})}{C_1}{(H, b_{V_2})}{\varphi} \quad 
     \triple{(G, a_{V_1})}{C_2}{(H, b_{V_2})}{\varphi} {(G, a_{V_1})} \quad (G, a_{V_1}) \overset{\mathbf{e*}}{\rightarrow}{(E, e_{\{x\}})}}{\triple{(G, a_{V_1})}{\textbf{if }x\;\textbf{then}\;C_1\;\textbf{else}\;C_2}{(H, b_{V_2})}{\varphi}}
     \and
    \inferrule*[left={$\textsc{FOR}$},]{ \triple{(G, a_{V})}{C }{(G,
    a_{V})}{\varphi} \quad  \;\; \mathsf{VARS}(C) \subseteq V \quad \text{Loop
Terminates in $n$ steps}}{\triple{(G, a_{V})}{ \textbf{for } {(t \coloneq 0; t <
\alpha_1; t \coloneq t + \alpha_2)} \textbf{ do } C }{(G, a_{V})}{\varphi^{n}}}
    \and
    \inferrule*[left={WHILE}, right={}]{ \triple{(G, a_{V})}{C}{(G, a_{V})}{\mathbf{eq}} \quad {(G, a_{V})} \overset{\mathbf{e*}}{\rightarrow}{(E, e_{\{x\} })}  \quad \{b\} \subseteq  V \quad \mathsf{VARS(C)} \subseteq V}{\triple{(G, a_{V})}{\textbf{while }x\;\textbf{do }C} {(G, a_{V})}{\textbf{eq}}}
  \end{mathpar}
  \caption{Basic Inference Rules}
  \label{fig:rules-stmt}
\end{figure}

The $\textsc{ASSGN}$ rule only applies to injective assignment statements. For
non-injective assignments, the situation is more challenging: we will show later
in this section that there are some pre-conditions that have \emph{no}
valid post-condition. For these assignment statements, the best we can do is
fall back on the general rule $\textsc{SEM-ASSGN}$, which has a semantic side
condition. In \Cref{sec: impl}, we will consider how to check this semantic
condition automatically.

The $\textsc{SEQ}$ rule is similar to the standard Hoare logic rule. It allows us to compose triples when the post-condition and pre-condition match. However, the intermediate assertion (denoted by $\tuple{H}{b}{\mathit{Vars}}$) must be about the set $\mathit{Vars}$, and not any arbitrary set of variables.

The $\textsc{IF}$ rule requires that both branches have the same symmetry property. The condition $x$ must be a boolean variable which is preserved by the group action as ensured by the premise ${(G, a_{V_1})} \overset{\mathbf{e*}}{\rightarrow}{(E, e_{\{x\}})}$.

The iteration rule $\textsc{FOR}$ requires that the body of the loop, $C$, must have the same pre- and post-condition which means that if the variables are acted on by a group action $\tuple{G}{a}{V}$, then after running $C$, the variables in $V$ must still be related by the same group action. This is reminiscent of a loop invariant in Hoare logic. The side condition states that ${\mathsf{VARS}(C)}$, the set of variables modified by $C$ is a subset of $V$. Since our language does not have side effects, this set can be easily computed. When all these conditions are met,  once the loop terminates,  the variables in $V$ are still related by the group action on the inputs. Additionally, the homomorphism is computed by composing $\varphi$ times with itself $n$ times.  

 The $\textsc{WHILE}$ rule requires the loop body to preserve the input group action with the homomorphism $\mathbf{eq}$. This is to ensure we can construct the same homomorphism for a different number of iterations. Similar to the $\textsc{IF}$ rule, the guard must be a variable that is preserved by the group action, and similar to the $\textsc{FOR}$ rule, ${\mathsf{VARS}(C)}$, the set of variables modified by $C$ must be a subset of $V$.
\begin{figure}[t]
  \begin{mathpar}
    \inferrule*[left={$\mathsf{CONST}$}, right={}]{\mathit{VARS}(C) \cap V = \emptyset }{\triple{(G, a_{V})}{ C }{(G, a_{V})}{\mathbf{eq}}}  \quad
    \inferrule*[left={$\textsc{LIFT}$}]{\triple{\tuple{G}{a}{V_1}}{C}{\tuple{H}{b}{V_2}}{\varphi}}{\triple{\tuple{G}{\overline{a}}{\mathit{Vars}}}{C}{\tuple{H}{b}{V_2}}{\varphi}}   \quad
    \inferrule*[left={$\mathsf{ID}$}, right={}]{\;}{\triple{(G, a_{V})}{C}{ (E, e_{\mathit{Vars} \setminus (V \cup \mathsf{Vars}(C))})}{\mathbf{e*}}} \\
    \inferrule*[left={$\textsc{DIR-PROD}$}, right={}]{\triple{(G, a_{V_1})} {C_1}{(H,b_{V_2}) }{\varphi_1}\;\;\;\;\ \triple{(G, a_{V_1})}{C_1}{(I, c_{V_3})}{\varphi_2} \quad V_2 \cap V_3 = \emptyset}
    {\triple{(G, a_{V_1})}{ C_1 }{(H \times I, b \times c_{V_2 \uplus V_3})}{(\varphi_1 \times \varphi_2)}} \\
    \inferrule*[left={$\textsc{CONS-1}$},right={}]{\triple{\tuple{G}{a}{V_0}}{C}{\tuple{H}{b}{V_1}}{\varphi}
      \quad  \tuple{H}{b}{V_1} \overset{\tau}{\rightarrow} \tuple{I}{c}{V_2}}{
      \triple{\tuple{G}{a}{V_0}}{C}{\tuple{I}{c}{V_2}}{\tau \circ \varphi}}   \and
    \inferrule*[left={$\textsc{CONS-2}$},right={}]{\triple{\tuple{G}{a}{V_1}}{C}{\tuple{H}{b}{V_2}}{\varphi}
      \quad  \tuple{I}{c}{V_1} \overset{\tau}{\rightarrow} \tuple{G}{a}{V_1}}{
      \triple{\tuple{I}{c}{V_1}}{C}{\tuple{H}{b}{V_2}}{\varphi \circ \tau}}
  \end{mathpar}
  \caption{Structural Rules}
  \label{fig:rules-struct}
\end{figure}
\subsubsection{Structural Rules}
We now present the structural rules of our logic in Figure \ref{fig:rules-struct}. The rule of constancy, $\textsc{CONST}$ says if the variables in $V$ are not modified by the command $C$, then, after executing the command, the variables in $V$ will still be related by $\tuple{G}{a}{V}$.  The $\textsc{LIFT}$ rule allows the pre-condition to be replaced by its lifting. Since many rules require the set $\mathit{Vars}$ in the pre-condition, this rule is useful in ensuring the assertions are of the correct form. The $\mathsf{ID}$ rule says that any variable that is unchanged by the command $C$, and the group action $(G,a_{V})$, can be mapped by the trivial group action.

The product rules allow us to build more complicated group actions in
the post-condition. The \textsc{DIR-PROD} rule allows us to combine two
post-condition group actions, so long as the set of variables they act on are
disjoint. The disjointness also guarantees that the direct product group action
is indeed a group action (\Cref{lem:dir-prod-cond}).  This rule is useful for reasoning about different parts of the program memory. 
\iffull We also have a free product rule presented in \Cref{app:rules}. \else We also have a free product rule presented in the full version.  \fi

Finally, we consider the rules of consequence. The rule $\textsc{CONS-1}$ says
that if the group action relating the output states implies another group
action, then we can replace it with the action it entails. $\textsc{CONS-2}$ is
similar but strengthens the pre-condition: if a triple holds for any transformation by the group action $\tuple{G}{a}{V_1}$, then the triple holds for any smaller set of transformations. This rule requires that the set of variables being acted on by the pre-condition group action remain the same.

As expected, the rules presented are sound.

\begin{restatable}{theorem}{Soundness} (Soundness)
  \label{thm:soundness}
  If  \;$\vdash \triple{\tuple{G}{a}{V}}{C}{\tuple{H}{b}{V}}{\varphi}$ then $
    \vDash  \triple{\tuple{G}{a}{V}}{C}{\tuple{H}{b}{V}}{\varphi}.$
\end{restatable}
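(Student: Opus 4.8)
The plan is to prove soundness by induction on the structure of the derivation $\vdash \triple{\tuple{G}{a}{V}}{C}{\tuple{H}{b}{V}}{\varphi}$, with one case for each inference rule of \Cref{fig:rules-stmt} and \Cref{fig:rules-struct}. In every case I assume the triples in the premises are semantically valid (the induction hypothesis) and derive validity of the conclusion, where validity unfolds to the defining equation of \Cref{def: valid}: for all $g \in G$, all $\sigma \in \Sigma$, and all $v \in V_2$, one has $\overline{b}_{\varphi(g)}(\sem{C}_\sigma)(v) = \sem{C}_{\overline{a}_g(\sigma)}(v)$. The axiom cases $\textsc{SKIP}$, $\textsc{ASSGN}$, and $\mathsf{ID}$ carry no premises and are discharged by computing both sides directly. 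For $\textsc{SKIP}$ this is immediate, since $\sem{\mathbf{skip}}_\sigma = \sigma$ and $\varphi = \mathbf{eq}$; for $\mathsf{ID}$ every relevant variable lies outside both $V$ and $\mathsf{Vars}(C)$, so both sides reduce to $\sigma(v)$ and the trivial action on the right collapses. The $\textsc{SEM-ASSGN}$ rule is immediate, as its premise is already the semantic statement it concludes.

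The $\textsc{ASSGN}$ case is the first substantive one. Here I unfold the definition of $\mathsf{POST}$ (\Cref{def:post}) and check that the chosen post-condition action makes the commuting square hold: since $\mathsf{POST}$ first applies $C^{-1}$, then $\overline{a}_g$, then re-runs the assignment, the defining equation holds essentially by construction. Injectivity of $f$ is exactly what makes $C^{-1}$ well defined on $\Sigma_f$, and \Cref{thm:post-is-grp-action} guarantees the result is a genuine group action. For the structural rules I transport validity along the hypotheses: $\textsc{CONS-1}$ and $\textsc{CONS-2}$ chain the equivariance equation with the entailment of \Cref{def: implication}, rewriting along $\tuple{H}{b}{V_1} \overset{\tau}{\rightarrow} \tuple{I}{c}{V_2}$ (respectively the pre-condition entailment) and composing homomorphisms; $\textsc{DIR-PROD}$ combines two validities coordinatewise, using $V_2 \cap V_3 = \emptyset$ so that the product action of \Cref{def: dir-prod-action} acts independently on each block; $\textsc{CONST}$ follows because $C$ leaves $V$ untouched, so $\overline{a}_g$ and $\sem{C}$ commute on $V$; and $\textsc{LIFT}$ uses that the lifting acts as the identity outside $V$.

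For $\textsc{SEQ}$ I compose the two equivariance equations: the hypothesis for $C_1$ rewrites $\sem{C_1}_{\overline{a}_g(\sigma)}$ as $\overline{b}_{\varphi(g)}(\sem{C_1}_\sigma)$, and the hypothesis for $C_2$ then rewrites the result into the final form, producing the composite $\tau \circ \varphi$; this is where the intermediate assertion must range over all of $\mathit{Vars}$, so that it serves as a valid pre-condition for $C_2$. The $\textsc{IF}$ case hinges on the premise $(G, a_{V_1}) \overset{\mathbf{e*}}{\rightarrow} (E, e_{\{x\}})$, which forces $\overline{a}_g$ to fix the guard $x$; hence $\sigma$ and $\overline{a}_g(\sigma)$ take the same branch and I conclude from the corresponding branch's hypothesis.

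I expect the loop rules $\textsc{FOR}$ and $\textsc{WHILE}$ to be the main obstacle, since they require a nested induction on the iteration count $n$ layered on top of the outer structural induction. For $\textsc{FOR}$, each pass through the body preserves the invariant action $\tuple{G}{a}{V}$ while composing one more copy of $\varphi$; an inner induction on $n$ shows that $n$ iterations realize $\varphi^{n}$. Because the bounds $\alpha_1, \alpha_2$ are logical variables that the program does not modify, $\sigma$ and $\overline{a}_g(\sigma)$ run for the same $n$ steps, so a single homomorphism suffices for both runs, while $\mathsf{VARS}(C) \subseteq V$ keeps all modified state inside $V$ so the invariant genuinely applies. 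The $\textsc{WHILE}$ case is analogous but must additionally argue that $\sigma$ and $\overline{a}_g(\sigma)$ execute the \emph{same} number of iterations; this again follows from the guard-preservation premise $(G, a_V) \overset{\mathbf{e*}}{\rightarrow} (E, e_{\{x\}})$, since the action fixes $x$ at every step, and the choice of $\mathbf{eq}$ is essential because $\mathbf{eq}^{n} = \mathbf{eq}$ makes the conclusion independent of the iteration count.
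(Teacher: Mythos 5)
Your proposal is correct and takes essentially the same route as the paper's proof: structural induction on the derivation, with direct semantic computation for the axiom cases, composition of the two equivariance equations for \textsc{SEQ}, guard-preservation forcing identical branches/iteration counts for \textsc{IF} and \textsc{WHILE}, and loop unrolling together with $\mathbf{eq}^n = \mathbf{eq}$ (resp.\ $\varphi^n$) for the iteration rules. The only omission is the appendix rule \textsc{FREE-PROD}, which the paper's proof also discharges via a composition lemma, but that rule lies outside the two figures you explicitly scoped your case analysis to.
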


\subsection{Weakest Pre-condition and Strongest Post-Condition}
In standard Hoare logic, the weakest pre-condition and strongest
post-condition transformers are powerful tools that aid verification. 
In our setting, the weakest pre-condition and strongest post-condition transform
\emph{group actions}, and it is not clear how to define these notions. In this
section, we show how to compute these actions for assignment commands.

\subsubsection{Weakest Pre-Condition}
Intuitively, one group action is weaker than another if it contains all the
transformations in the other group action. In other words, the group action
$\tuple{H}{b}{V}$ is weaker than $\tuple{G}{a}{V}$ if $ \tuple{G}{a}{V}
\overset{\varphi}{\rightarrow} \tuple{H}{b}{V}$. Therefore, the weakest
pre-condition is the ``largest'' group action for which a post-condition is valid. The trivial group action that leaves the program state unchanged is always a valid pre-condition. Our discussion of the weakest pre-condition focuses on a class of assertions called \emph{faithful} group actions.


\begin{definition}
  \label{def: faithful}
  A group action $\tuple{G}{a}{V}$ is \emph{faithful} if
  $a_g(\sigma) = \sigma$ for all $\sigma$ implies that $g = e_G$.
  Equivalently, no two elements of $G$ act the same
  way on every element of the set $\Sigma_V$.
\end{definition}
Most natural group actions are faithful. Further, any group action can be converted to an equivalent faithful group action. When the post-condition action is faithful, then there is a unique homomorphism that makes a triple valid. We will use this property in our weakest pre-condition, and our automated verifier.
\begin{restatable}{theorem}{FaithfulUnique}
\label{thm: faithful-unique}
    If $\triple{(G,a_{Vars})}{C}{(H,b_{Vars})}{\varphi}$, and $(H, b_{Vars})$ is faithful, then $\varphi$ is unique.
\end{restatable}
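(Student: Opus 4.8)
The plan is to prove uniqueness directly: assume two homomorphisms $\varphi,\psi : G \to H$ both validate the triple, i.e.
$\vDash \triple{(G,a_{\mathit{Vars}})}{C}{(H,b_{\mathit{Vars}})}{\varphi}$ and
$\vDash \triple{(G,a_{\mathit{Vars}})}{C}{(H,b_{\mathit{Vars}})}{\psi}$, and show $\varphi=\psi$. Since two homomorphisms are equal exactly when they agree on every group element, it suffices to establish $\varphi(g)=\psi(g)$ for each $g\in G$.

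First I would unfold the validity condition (\Cref{def: valid}) for both homomorphisms at a fixed $g$. Because here $V_2=\mathit{Vars}$, the lifted action $\overline b$ coincides with $b$ and the condition quantifies over every variable. Thus for every program state $\sigma\in\Sigma$ we obtain
\[
  b_{\varphi(g)}(\sem{C}_{\sigma}) \;=\; \sem{C}_{\overline{a}_{g}(\sigma)} \;=\; b_{\psi(g)}(\sem{C}_{\sigma}),
\]
so $b_{\varphi(g)}$ and $b_{\psi(g)}$ agree on every output state of the form $\sem{C}_{\sigma}$. Using the group-action axiom $b_u\circ b_w = b_{uw}$ to rearrange, the element $\psi(g)^{-1}\varphi(g)$ fixes each such state, i.e. $b_{\psi(g)^{-1}\varphi(g)}(\sem{C}_{\sigma})=\sem{C}_{\sigma}$.

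Next I would invoke faithfulness of $(H,b_{\mathit{Vars}})$. Recall from \Cref{def: faithful} that faithfulness says the assignment $h\mapsto b_h$ is injective into $Sym(\Sigma)$, equivalently that only $e_H$ acts as the identity on all of $\Sigma$. Once $b_{\psi(g)^{-1}\varphi(g)}$ is known to be the identity on $\Sigma$, faithfulness forces $\psi(g)^{-1}\varphi(g)=e_H$, hence $\varphi(g)=\psi(g)$; as $g$ was arbitrary, $\varphi=\psi$, giving uniqueness. The remaining group-theoretic manipulations are routine.

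The hard part will be the gap between \emph{agrees on every reachable state} $\sem{C}_{\sigma}$ and \emph{agrees on all of} $\Sigma$, which is precisely what faithfulness must bridge. For a general command $C$ the image of $\sem{C}$ need not be all of $\Sigma$, so $\psi(g)^{-1}\varphi(g)$ is only guaranteed to fix the reachable set $\Sigma_f$, and faithfulness as stated concerns all of $\Sigma$. I would therefore either argue that the relevant states are reachable—relating this to the $\mathsf{POST}$ construction, whose action is designed to range over exactly $\Sigma_f$—or restrict the faithfulness requirement to the reachable states. Pinning down this reachability condition, and confirming it in the settings where the theorem is actually applied (notably for the faithful post-conditions produced by $\mathsf{POST}$), is the crux of the argument.
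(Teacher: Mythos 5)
Your proposal is correct and its core argument is the same as the paper's: faithfulness of the post-condition action pins down, for each $g \in G$, at most one $h \in H$ satisfying $b_h(\sem{C}_{\sigma}) = \sem{C}_{\overline{a}_g(\sigma)}$ for all $\sigma$, so any two validating homomorphisms agree pointwise. The paper's own proof is essentially this observation compressed into two sentences.

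Where you go beyond the paper is the reachability caveat, and you are right that it is the crux. Validity only forces $b_{\varphi(g)}$ and $b_{\psi(g)}$ to agree on the image of $\sem{C}$ (the reachable states $\Sigma_C$), whereas faithfulness as stated in \Cref{def: faithful} quantifies over all of $\Sigma$; injectivity of $h \mapsto b_h$ on the whole state space does not by itself rule out two distinct elements agreeing on the subset $\Sigma_C$. The paper's proof silently elides this step. The hypothesis actually needed is faithfulness on $\Sigma_C$, and this is exactly the assumption the paper makes explicitly in the surrounding development: the weakest faithful pre-condition (\Cref{def: wp-post-faithful}) assumes the post-condition acts faithfully on the set of reachable states, as does the corollary supporting the SMT encoding in the appendix. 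So your proposed resolution---restricting the faithfulness requirement to the reachable states---is precisely the reading under which the theorem, and all of the paper's uses of it, are sound; your write-up is, if anything, more careful than the paper's own proof on this point.
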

We consider two versions of the weakest pre-condition. We focus on one variant called the weakest faithful pre-condition here, presenting the other one in \iffull \Cref{app: wp}. \else the appendix of the full version. \fi
In this variant of the weakest pre-condition, we assume \emph{the post-condition group acts faithfully on the set of reachable states}. Let $\Sigma_C$ be the set of all states reachable by a command $C$. We assume $\tuple{H}{b}{V}$ is a faithful group action on the set $\Sigma_C$. This means if  for all  $\sigma \in \Sigma_C, {\overline{b}}_h(\sigma) = \overline{b}_{h'}(\sigma)$, then $h' = h$.
\begin{definition}
  \label{def: wp-post-faithful}
  The \emph{weakest faithful pre-condition} for a command $C$ and a post-condition $\tuple{H}{b}{\mathit{Vars}}$ that acts \emph{faithfully} on the set $\Sigma_C$ is a group action, denoted by $\mathsf{fwp}(\tuple{H}{b}{\mathit{Vars}}, C)$, such that\\ $\triple{(\mathsf{fwp}(\tuple{H}{b}{\mathit{Vars}}, C))}{C}{\tuple{H}{b}{\mathit{Vars}}}{\varphi}$ is valid for some homomorphism $\varphi$ and
  for any group action $\tuple{I}{c}{\mathit{Vars}}$ and homomorphism $\tau$, if $\triple{\tuple{I}{c}{\mathit{Vars}}}{C}{\tuple{H}{b}{\mathit{Vars}}}{\tau}$ is valid then there exists a homomorphism $\eta$, such that $\tuple{I}{c}{\mathit{Vars}} \overset{\eta}{\rightarrow} \mathsf{fwp}(\tuple{H}{b}{\mathit{Vars}}, C)$ and $\tau = \varphi \circ \eta$.
\end{definition}
This definition is analogous to the traditional definition of weakest pre-condition in the sense that any valid pre-condition entails the weakest pre-condition. 

We will construct a group action that will serve as the weakest faithful
pre-condition. We begin a group $\mathsf{G^*}(\tuple{H}{b}{\mathit{Vars}}, C)$
by assembling all valid transformations of the input that ensure the
post-condition. More formally:  
\begin{restatable}{definition}{GStar}
  \label{def :gstar}
  Let $C$ be a command and $\tuple{H}{b}{\mathit{Vars}}$ be any faithful action on the set
  $\Sigma_C$. Let:
  \begin{align*}
    \mathsf{G^*}(\tuple{H}{b}{\mathit{Vars}}, C) \triangleq
    \{ & g \mid g \text{ is a bijection from } \Sigma \rightarrow \Sigma  \text{
    such that } \\& \qquad\exists h \in H, \forall \sigma \in \Sigma, b_h(\sem{C}_{\sigma}) = \sem{C}_{g(\sigma)}\}
  \end{align*}
\end{restatable}

Next, we show that this is indeed a group.
\begin{restatable}{theorem}{WPGroup}
  The set $\mathsf{G^*}(\tuple{H}{b}{\mathit{Vars}}, C)$ is a group under function composition.
\end{restatable}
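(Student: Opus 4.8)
The plan is to verify the group axioms directly for the set $\mathsf{G^*}(\tuple{H}{b}{\mathit{Vars}}, C)$ under function composition. Associativity is inherited for free from the associativity of function composition, and since the composition and inverse of bijections on $\Sigma$ are again bijections, the ``$g$ is a bijection from $\Sigma \to \Sigma$'' clause of the membership condition is automatically preserved by all the group operations. The only real work is to track the existentially quantified witness $h \in H$ through each operation, and the key tool for this is the two group-action axioms for $\overline{b}$: namely $b_{e_H}(\sigma) = \sigma$ and $b_h(b_{h'}(\sigma)) = b_{hh'}(\sigma)$.

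For the identity element, I would take $g = \mathrm{id}_\Sigma$ and exhibit $h = e_H$ as its witness: since $b_{e_H}(\sem{C}_\sigma) = \sem{C}_\sigma = \sem{C}_{\mathrm{id}(\sigma)}$ for every $\sigma$, we get $\mathrm{id}_\Sigma \in \mathsf{G^*}$. For closure, suppose $g_1, g_2 \in \mathsf{G^*}$ with witnesses $h_1$ and $h_2$. I claim $h_1 h_2$ witnesses $g_1 \circ g_2$, via the chain
\[
  \sem{C}_{(g_1 \circ g_2)(\sigma)} = \sem{C}_{g_1(g_2(\sigma))} = b_{h_1}(\sem{C}_{g_2(\sigma)}) = b_{h_1}(b_{h_2}(\sem{C}_\sigma)) = b_{h_1 h_2}(\sem{C}_\sigma),
\]
where the middle two equalities apply the defining property of $g_1$ (at the state $g_2(\sigma)$) and of $g_2$, and the last uses the group-action axiom. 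Hence $g_1 \circ g_2 \in \mathsf{G^*}$.

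For inverses, take $g \in \mathsf{G^*}$ with witness $h$; I would show $h^{-1}$ witnesses $g^{-1}$. Substituting $g^{-1}(\sigma)$ for $\sigma$ in the defining relation for $g$ gives $b_h(\sem{C}_{g^{-1}(\sigma)}) = \sem{C}_{g(g^{-1}(\sigma))} = \sem{C}_\sigma$; applying $b_{h^{-1}}$ to both sides and simplifying with $b_{h^{-1}}(b_h(\cdot)) = b_{e_H}(\cdot) = (\cdot)$ yields $\sem{C}_{g^{-1}(\sigma)} = b_{h^{-1}}(\sem{C}_\sigma)$ for all $\sigma$, so $g^{-1} \in \mathsf{G^*}$.

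I expect no genuine obstacle here: the argument is a routine axiom check. The one point that requires care is that the witness $h$ is only asserted to exist and is not chosen canonically, so the bookkeeping must fix a specific witness per element and recombine them using the group structure of $H$. Notably, faithfulness of $\tuple{H}{b}{\mathit{Vars}}$ on $\Sigma_C$ is \emph{not} needed for the bare group structure of $\mathsf{G^*}$; it becomes relevant only later, when one wants the assignment $g \mapsto h$ to be a well-defined homomorphism $\mathsf{G^*} \to H$.
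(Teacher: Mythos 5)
Your proposal is correct and follows essentially the same route as the paper: a direct verification of the four group axioms, with the identity of $\Sigma$ witnessed by $e_H$, closure witnessed by the product $h_1 h_2$ (the paper factors this into its Lemma on multiplicative closure), and inverses witnessed by $h^{-1}$ via the same substitution trick (the paper's inverse-closure lemma). Your explicit use of the group-action axiom $b_{h_1}(b_{h_2}(\cdot)) = b_{h_1 h_2}(\cdot)$ to produce a single witness, and your remark that faithfulness is not needed here, are both accurate and consistent with how the paper uses these facts.
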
  
We now define the action of $\mathsf{G^*}(\tuple{H}{b}{\mathit{Vars}}, C)$ based on the group $\mathsf{G^*}(\tuple{H}{b}{\mathit{Vars}}, C)$.
This will be the \emph{weakest faithful pre-condition}, denoted by $\mathsf{fwp(\tuple{H}{b}{\mathit{Vars}}, C)}$.
\begin{restatable}{definition}{WpTwoAction}
  \label{def: wp2-action}
The action $\mathsf{fwp}(\tuple{H}{b}{\mathit{Vars}}, C) \triangleq (\mathsf{G^*}(\tuple{H}{b}{\mathit{Vars}}, C), a_{\mathit{Vars}})$ is defined for any $g \in \mathsf{G^*}(\tuple{H}{b}{\mathit{Vars}}, C)$ as $a_g(\sigma) = g(\sigma)$.
\end{restatable}
The constructed group action is indeed the weakest faithful pre-condition.
\begin{restatable}{theorem}{GPWFaithfulPre}
  $\tuple{\mathsf{G^*}(\tuple{H}{b}{\mathit{Vars}}, C)}{a}{\mathit{Vars}}$ is the weakest faithful pre-condition.
\end{restatable}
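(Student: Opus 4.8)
The plan is to verify directly that the action $\mathsf{fwp}(\tuple{H}{b}{\mathit{Vars}}, C) = (\mathsf{G^*}(\tuple{H}{b}{\mathit{Vars}}, C), a_{\mathit{Vars}})$ from \Cref{def: wp2-action} meets the two clauses of \Cref{def: wp-post-faithful}: that the triple over this pre-condition is valid for some homomorphism $\varphi$, and that every other valid pre-condition factors through it. Writing $\mathsf{G^*}$ for $\mathsf{G^*}(\tuple{H}{b}{\mathit{Vars}}, C)$ throughout, the linchpin of the whole argument is a canonical homomorphism $\varphi : \mathsf{G^*} \to H$. By the defining property of $\mathsf{G^*}$, every $g \in \mathsf{G^*}$ comes equipped with some $h \in H$ satisfying $b_h(\sem{C}_{\sigma}) = \sem{C}_{g(\sigma)}$ for all $\sigma$, and I would set $\varphi(g) \triangleq h$. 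The first thing to check is that this is \emph{well-defined}: if $h$ and $h'$ both witness $g$, then $b_h$ and $b_{h'}$ agree on every state of the form $\sem{C}_{\sigma}$, i.e. on all of $\Sigma_C$, so faithfulness of $\tuple{H}{b}{\mathit{Vars}}$ on $\Sigma_C$ forces $h = h'$.

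Next I would show $\varphi$ is a homomorphism. For $g_1, g_2 \in \mathsf{G^*}$ with witnesses $h_1, h_2$, chaining the defining equation gives $\sem{C}_{(g_1 \circ g_2)(\sigma)} = b_{h_1}(\sem{C}_{g_2(\sigma)}) = b_{h_1}(b_{h_2}(\sem{C}_{\sigma})) = b_{h_1 h_2}(\sem{C}_{\sigma})$, so $h_1 h_2$ witnesses $g_1 \circ g_2$, and hence $\varphi(g_1 \circ g_2) = \varphi(g_1)\varphi(g_2)$ by well-definedness. Validity of $\triple{\mathsf{fwp}(\tuple{H}{b}{\mathit{Vars}}, C)}{C}{\tuple{H}{b}{\mathit{Vars}}}{\varphi}$ is then immediate: since the action $a$ is the tautological action $a_g(\sigma) = g(\sigma)$ defined on all of $\mathit{Vars}$, the condition of \Cref{def: valid} unfolds to exactly $b_{\varphi(g)}(\sem{C}_{\sigma}) = \sem{C}_{g(\sigma)}$, which holds by the very choice of $\varphi(g)$.

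For the universal property, suppose $\triple{\tuple{I}{c}{\mathit{Vars}}}{C}{\tuple{H}{b}{\mathit{Vars}}}{\tau}$ is valid. Spelling out \Cref{def: valid} (again with all actions on $\mathit{Vars}$) gives $b_{\tau(i)}(\sem{C}_{\sigma}) = \sem{C}_{c_i(\sigma)}$ for every $i$ and $\sigma$; since each $c_i$ is a bijection of $\Sigma$ (every element of a group action acts invertibly), this says precisely that $c_i \in \mathsf{G^*}$ with witness $\tau(i)$. I would therefore define $\eta : I \to \mathsf{G^*}$ by $\eta(i) \triangleq c_i$, which is a homomorphism because $c$ is an action and so $c_{i_1 i_2} = c_{i_1} \circ c_{i_2}$. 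The entailment $\tuple{I}{c}{\mathit{Vars}} \overset{\eta}{\rightarrow} \mathsf{fwp}(\tuple{H}{b}{\mathit{Vars}}, C)$ holds trivially, since $a_{\eta(i)}(\sigma) = \eta(i)(\sigma) = c_i(\sigma)$, so the two actions literally coincide along $\eta$. Finally, $\tau = \varphi \circ \eta$ follows again from well-definedness of $\varphi$: both $\varphi(\eta(i)) = \varphi(c_i)$ and $\tau(i)$ are witnesses for $c_i \in \mathsf{G^*}$, hence they are equal, matching the uniqueness guaranteed by \Cref{thm: faithful-unique}.

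The main obstacle is making $\varphi$ well-defined, and it is exactly here that the faithfulness hypothesis is indispensable: without faithfulness on $\Sigma_C$ a single $g$ could admit several distinct witnesses $h$, leaving no canonical image and breaking both the homomorphism check and the factorization $\tau = \varphi \circ \eta$. I expect the remaining steps to be routine bookkeeping, the only point of care being the composition order in $Sym(\Sigma)$, which must be tracked so that the homomorphism equations stay consistent with the convention $a_g(a_h(\sigma)) = a_{gh}(\sigma)$ used in \Cref{def: faithful} and throughout.
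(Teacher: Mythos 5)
Your proof is correct, and its first half coincides with the paper's: the paper packages your construction of $\varphi$ (the witness map, multiplicative by faithfulness of $\tuple{H}{b}{\mathit{Vars}}$ on $\Sigma_C$) into Lemma~\ref{lemma-pre-homo}, whose proof is exactly your chaining argument $b_{h_1 h_2}(\sem{C}_{\sigma}) = b_{h_1}(b_{h_2}(\sem{C}_{\sigma})) = \sem{C}_{g_1(g_2(\sigma))}$. The divergence is in the factorization step. The paper first proves that the constructed action is itself faithful (Lemma~\ref{lem: gs-faithful}) and then invokes a general lemma (Lemma~\ref{lem: faithful-has-homo-into-it}), which says that pointwise agreement of actions into a faithful target action extends to a homomorphism, to obtain $\eta$. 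You instead observe that validity of $\triple{\tuple{I}{c}{\mathit{Vars}}}{C}{\tuple{H}{b}{\mathit{Vars}}}{\tau}$ literally exhibits each bijection $c_i$ as an element of $\mathsf{G^*}(\tuple{H}{b}{\mathit{Vars}}, C)$ with witness $\tau(i)$, so $\eta(i) \triangleq c_i$ is well-typed and the homomorphism property is immediate from the action axiom $c_{i_1 i_2} = c_{i_1} \circ c_{i_2}$. This is more elementary and self-contained: it exploits the fact that the action of $\mathsf{G^*}$ is tautological, so neither faithfulness of the constructed pre-condition nor the general lemma is needed, whereas the paper's route buys a reusable lemma applicable to arbitrary faithful targets. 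You are also more explicit than the paper on two points it glosses over: well-definedness of $\varphi$ (without faithfulness on $\Sigma_C$ a single $g$ could have several witnesses $h$, and no canonical image would exist) and the verification that $\tau = \varphi \circ \eta$, which you derive from uniqueness of witnesses, consistent with Theorem~\ref{thm: faithful-unique}, rather than merely asserting it.
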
  

\paragraph*{Expressibility of Weakest Pre-Condition}
In general, the weakest faithful pre-condition group
$\mathsf{G^*}(\tuple{H}{b}{\mathit{Vars}}, C)$ may not be finitely generated.
For example, if we consider a constant assignment $x \coloneq 5$ with the
identity group as the post-condition, then for every bijection on program
states, the triple will hold. Therefore, $G^*(\tuple{E}{e}{\mathit{Vars}}, x
\coloneq 5)$ is the group $Sym(\Sigma)$, which is not finitely generated.

Although the full group might not be expressible in our syntax, any finite
subset of $G^*(\tuple{H}{b}{\mathit{Vars}}, C)$ generates a group action that
is a valid pre-condition. We formalize this observation as
follows:
\begin{definition}
  \label{def:grp-finite}
  Let $H$ be a finitely presented group with presentation
\(H = \langle h_1, \dots, h_n \mid R_H \rangle ,
\)
and let $G = \{g_1, \dots, g_m\}$ be a set of symbols, each $g_i \in G$ co-responding to a bijection
\(
g_i \in \mathsf{G^*}(\langle H,b,V_1\rangle, C)
\)
such that for some generator $h_i \in \{h_1,\dots,h_n\}$ of $H$,
\(
b_{h_i}(\sem{C}_{\sigma}) =\sem{C}_{\,g_i(\sigma)} ,
\)
for all states $\sigma$. We then define the group
\(
\hat{G}((H,b),C) \triangleq \langle G \mid - \rangle ,
\)
that is, the free group generated by the symbols $G$.
\end{definition}

We are now ready to define the action of $\hat{G}((H,b),C)$. 
\begin{definition}
  \label{def:grp-finite-action}
  We define the action of the free group $\hat{G}((H,b),C)$, denoted by
  $(\hat{G}((H,b),C), g^*_{\mathit{Vars}})$  as follows: For every word $g
  \triangleq g_1 \cdots g_n$, we define the action
  $g^*_g(\sigma) \triangleq g \triangleq g_1 \circ \cdots \circ g_n(\sigma)$.
\end{definition}
Indeed, this forms a valid pre-condition.
\begin{restatable}{theorem}{FiniteWP}
\label{thm: finite-wp}
  If ($H,b_{\mathit{Vars}})$ is faithful then $\triple{(\hat{G}((H,b),C), g^*_{\mathit{Vars}})}{C}{(H,b_{\mathit{Vars}})}{\varphi}$ is a valid triple, where $\varphi : \hat{G}((H,b),C) \rightarrow H$ is the unique homomorphism for which the triple can be sound.
\end{restatable}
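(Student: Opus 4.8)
The plan is to exhibit the homomorphism $\varphi$ explicitly, verify validity by an induction on word length, and then extract uniqueness from faithfulness. Because $\hat{G}((H,b),C) = \langle G \mid - \rangle$ is the \emph{free} group on the symbols $G = \{g_1, \dots, g_m\}$, it carries no relations, so by the universal property of free groups any assignment of the generators to elements of $H$ extends to a unique homomorphism --- this is the degenerate case of \Cref{lem:johnson1} in which $R_G = \emptyset$, so there is nothing to check. I would set $\varphi(g_i) = h_i$, where $h_i$ is the generator of $H$ associated with the bijection $g_i$ in \Cref{def:grp-finite}, i.e.\ $b_{h_i}(\sem{C}_\sigma) = \sem{C}_{g_i(\sigma)}$ for all $\sigma$. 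The same universal property shows that sending each $g_i$ to the corresponding bijection in $Sym(\Sigma)$ extends to the action $g^*$ of \Cref{def:grp-finite-action}, so $\tuple{\hat{G}((H,b),C)}{g^*}{\mathit{Vars}}$ is a genuine group action.

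Next I would prove validity. Since both assertions act on all of $\mathit{Vars}$, \Cref{def: valid} reduces to the single statement $b_{\varphi(g)}(\sem{C}_\sigma) = \sem{C}_{g^*_g(\sigma)}$ for every $g \in \hat{G}((H,b),C)$ and every $\sigma \in \Sigma$, which I establish by induction on the length of the reduced word $g$. The empty word is immediate. For a single generator $g = g_i$ the claim is exactly the defining equation of \Cref{def:grp-finite}; for an inverse generator $g = g_i^{-1}$ I apply $b_{h_i^{-1}}$ to that defining equation evaluated at the state $g_i^{-1}(\sigma)$ and use $g_i(g_i^{-1}(\sigma)) = \sigma$ to obtain $b_{h_i^{-1}}(\sem{C}_\sigma) = \sem{C}_{g_i^{-1}(\sigma)}$. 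For the inductive step, write $g = g_i^{\pm 1} w$; using the action law $b_{xy} = b_x \circ b_y$, the inductive hypothesis on $w$, and the base case applied at the intermediate state $g^*_w(\sigma)$, I get $b_{\varphi(g)}(\sem{C}_\sigma) = b_{\varphi(g_i^{\pm 1})}\bigl(\sem{C}_{g^*_w(\sigma)}\bigr) = \sem{C}_{g_i^{\pm 1}(g^*_w(\sigma))} = \sem{C}_{g^*_g(\sigma)}$, since $g^*_g = g_i^{\pm 1} \circ g^*_w$ by \Cref{def:grp-finite-action} and $\varphi(g) = \varphi(g_i^{\pm 1})\varphi(w)$. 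This closes the induction and shows the triple is valid.

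Finally, uniqueness follows from faithfulness. Since $\tuple{H}{b}{\mathit{Vars}}$ is faithful on $\Sigma_C$, any other homomorphism $\varphi'$ validating the triple satisfies $b_{\varphi(g)}(\sem{C}_\sigma) = b_{\varphi'(g)}(\sem{C}_\sigma)$ for all $\sigma$, so $b_{\varphi(g)}$ and $b_{\varphi'(g)}$ agree on every reachable state; faithfulness then forces $\varphi(g) = \varphi'(g)$. This is precisely the content of \Cref{thm: faithful-unique}, which I would invoke directly rather than reprove.

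I expect the only delicate point to be the bookkeeping of composition order: one must keep the conventions $b_{xy} = b_x \circ b_y$ and $g^*_{g_i w} = g_i \circ g^*_w$ aligned so that the base case can legitimately be applied at the shifted state $g^*_w(\sigma)$, and one must treat inverse generators as a separate base case (derived from the action axioms) rather than through relations, of which there are none. The rest is a routine free-group induction.
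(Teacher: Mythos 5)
Your proof is correct, but it obtains the homomorphism by a genuinely different route from the paper. The paper argues semantically: it extends the generator-level equations to arbitrary words via its closure lemma (\Cref{lem: mult-closure}), and then invokes \Cref{lemma-pre-homo}, which defines $\varphi(g)$ as \emph{the} element $h \in H$ whose action matches that of $g$ on outputs of $C$ --- a definition that is only well-posed, and only provably a homomorphism, because $(H,b_{\mathit{Vars}})$ is faithful; uniqueness falls out of the same lemma. You instead obtain $\varphi$ syntactically, for free, from the universal property of the free group $\hat{G}((H,b),C)$ (the $R_G = \emptyset$ case of \Cref{lem:johnson1}), and then verify validity by induction on reduced word length; that induction reproves, inline, essentially the content of \Cref{lem: mult-closure} and \Cref{lem: mult-closure2}, and your explicit inverse-generator base case supplies what the paper handles via \Cref{lem: inverse-closure} (which its own proof of this theorem leans on implicitly, citing only \Cref{lem: mult-closure}). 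What your arrangement buys is a cleaner separation of concerns: in your argument faithfulness plays no role in establishing that a valid triple exists, only in the final uniqueness claim (via \Cref{thm: faithful-unique}), whereas in the paper faithfulness is already needed just to define $\varphi$. What the paper's arrangement buys is economy in context: \Cref{lem: mult-closure} and \Cref{lemma-pre-homo} are already proved for the weakest-precondition development, so its proof collapses to two citations. One sentence worth adding to a polished write-up of your version: \Cref{def:grp-finite-action} defines $g^*$ on words, so you should note that cancelling $g_i g_i^{-1}$ does not change the induced bijection (automatic, since the letters denote actual bijections under composition), which is what licenses working with reduced words and writing $g^*_g = g_i^{\pm 1} \circ g^*_w$.
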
  

Thus, our construction of the weakest faithful pre-condition gives us a way to
construct finitely presented, valid pre-conditions, which we use in our
implementation in \Cref{sec: impl}.
\subsubsection{Strongest Post-Condition}
Next, we consider the strongest post-condition for an assignment statement. A
group action $\tuple{H}{b}{V}$ is stronger than $\tuple{G}{a}{V}$ if $\tuple{H}{b}{V} \overset{\varphi}{\rightarrow} \tuple{G}{a}{V}$, i.e., $\tuple{G}{a}{V}$ contains at least as many transformations as $\tuple{H}{b}{V}$.  The strongest post-condition is the smallest group action which is a valid post-condition.
\begin{definition}
  \label{def: sp-def}
  The \emph{strongest post-condition} with respect to a pre-condition $\tuple{G}{a}{V_1}$, a command $C$,  denoted by $\mathsf{sp}(\tuple{G}{a}{V_1}, C)$,  is a group action such that $\triple{\tuple{G}{a}{V_1}}{C}{(\mathsf{sp}(\tuple{G}{a}{V_1}, C))}{\varphi}$ and
  for any \emph{faithful} group action $(I,c_{V_2})$ if $\triple{\tuple{G}{a}{V_1}}{C}{\tuple{I}{c}{V_2}}{\eta}$ then $\mathsf{sp}(\tuple{G}{a}{V_1}, C) \overset{\tau}{\rightarrow} \tuple{I}{c}{V_2}$ where $\eta = \tau \circ \varphi$
\end{definition}
The situation for the strongest post-condition is different from the weakest pre-condition -- for a given pre-condition, a valid post-condition may not exist.
\begin{restatable}{theorem}{PostDoesNotExist}
  \label{thm: post-not-exist} Let  $v_1 \coloneq f(v_1, \vec{z})$  be an assignment statement, where $\vec{z}$ denotes variables $v_2 \dots, v_n$, which are the rest of the arguments to $f$.
  If $f$ is not injective, then there exists a group action $\tuple{G}{a}{V}$
  such that for any homomorphism $\varphi$, there is no group action $(H,b_{V})$, making the following triple valid:
  \[\triple{\tuple{G}{a}{V}}{v_1 \coloneq f(v_1, \vec{z})}{(H,b_{V})}{\varphi}.\]
\end{restatable}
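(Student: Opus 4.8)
The plan is to argue by contradiction, exploiting the fact that a non-injective assignment collapses two distinct inputs to the same output, while any valid post-condition would force the action to keep that collapse intact. Since $f$ is not injective in $v_1$, fix witnesses $u_1 \neq u_1'$ and a tail $\vec{z}_0$ with $f(u_1, \vec{z}_0) = f(u_1', \vec{z}_0)$. Let $\sigma$ and $\sigma'$ be two program states that agree on every variable except $v_1$, where $\sigma(v_1) = u_1$ and $\sigma'(v_1) = u_1'$, and whose arguments to $f$ are $\vec{z}_0$. Because $C \coloneq v_1 \coloneq f(v_1, \vec{z})$ writes only $v_1$, and both executions set $v_1$ to the common value $f(u_1,\vec z_0) = f(u_1', \vec z_0)$, we get $\sem{C}_{\sigma} = \sem{C}_{\sigma'}$. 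Now suppose some triple $\triple{\tuple{G}{a}{V}}{C}{\tuple{H}{b}{V}}{\varphi}$ were valid. Instantiating \Cref{def: valid} at $\sigma$ and at $\sigma'$ and using $\sem{C}_{\sigma} = \sem{C}_{\sigma'}$ gives, for every $g \in G$ and every $v \in V$,
\[
  \sem{C}_{\overline{a}_g(\sigma)}(v) = \overline{b}_{\varphi(g)}(\sem{C}_{\sigma})(v) = \overline{b}_{\varphi(g)}(\sem{C}_{\sigma'})(v) = \sem{C}_{\overline{a}_g(\sigma')}(v).
\]
Crucially this conclusion is independent of $\varphi$ and of $\tuple{H}{b}{V}$, so it suffices to build a single pre-condition action $\tuple{G}{a}{V}$ that \emph{separates} $\sigma$ and $\sigma'$ after running $C$, i.e.\ that violates this equality for some $g$ and some $v \in V$.

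The construction is a shear (transvection) action. Choose a variable $w \in V$ distinct from $v_1$ — either an auxiliary variable of $\mathit{Vars}$ or one of the co-arguments $v_2, \dots, v_k$ — and take $G = \mathbb{Z} = \langle g \mid - \rangle$ acting by $a_g \colon w \mapsto w + v_1$ while fixing every other variable, including $v_1$. This is a genuine group action: each $a_{g^n}$ is the bijection sending $w$ to $w + n\cdot v_1$ (bijective because the shift is by a fixed coordinate), and since $v_1$ is untouched these compose to $a_{g^{m+n}}$, so the map on the single generator respects the (empty) set of relations. The states $\sigma, \sigma'$ agree on $w$ but differ on $v_1$, so after applying $\overline{a}_g$ their $w$-values differ by $u_1 - u_1' \neq 0$; as $C$ does not overwrite $w$, we obtain $\sem{C}_{\overline{a}_g(\sigma)}(w) \neq \sem{C}_{\overline{a}_g(\sigma')}(w)$ with $w \in V$. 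This contradicts the forced equality above, so no valid post-condition exists.

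The step I expect to be the main obstacle is realizing that non-injectivity alone does \emph{not} separate the outputs directly — naively permuting $v_1$ can preserve the collapse — so the separation must be engineered by \emph{coupling} an auxiliary coordinate to $v_1$, which is exactly what the shear achieves uniformly regardless of how $f$ behaves. Two subtleties accompany this. First, one must confirm the shear is a legitimate group action by the generator-check criterion (routine here, but worth verifying that composition of shifts-by-$v_1$ is consistent precisely because $v_1$ is fixed by the action). Second, there is a genuinely degenerate boundary case: if $f$ has arity one and $v_1$ is the only program variable, no auxiliary $w$ is available, and one must instead transport $v_1$ by a transposition sending $u_1, u_1'$ to points with distinct $f$-images, which succeeds exactly when $f$ is nonconstant; when $f$ is constant on a single variable the assignment is trivial and does admit the trivial post-condition, so this pathology is outside the scope of the statement and I would flag it explicitly.
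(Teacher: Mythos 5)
Your proof is correct, and it follows the same overall skeleton as the paper's: fix a non-injectivity witness $u_1 \neq u_1'$, note that the two start states differing only in $v_1$ are collapsed by the assignment to a single output, and then exhibit a pre-condition action under which the two transformed inputs produce \emph{different} outputs in some variable of $V$, so that no single map $\overline{b}_{\varphi(g)}$ can send the common output to both --- exactly the paper's ``any $b_h$ can map to only one of those'' contradiction, which you phrase cleanly as a forced equality $\sem{C}_{\overline{a}_g(\sigma)}(v) = \sem{C}_{\overline{a}_g(\sigma')}(v)$ that holds independently of $\varphi$ and $(H,b_V)$. Where you genuinely differ is the witness action: the paper takes $G = Sym(\Sigma_V)$ with $g$ a cyclic rotation of values across $v_1,\dots,v_n$, so that $v_1$'s value lands in the slot of $v_2$, a variable the assignment never overwrites; you take $G = \mathbb{Z} = \langle g \mid - \rangle$ acting by the shear $w \mapsto w + v_1$ with $v_1$ fixed, which achieves the same coupling of $v_1$'s value into a surviving coordinate. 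Both work, and your witness has a concrete advantage: it is finitely presented and expressible in the paper's own assertion syntax, so the negative result is witnessed by an assertion the logic can actually write down, whereas $Sym(\Sigma_V)$ is not finitely generated. Both arguments equally presuppose a variable besides $v_1$ (the paper's proof needs $v_2$ to exist just as yours needs $w$), and your explicit flagging of the degenerate unary, constant-$f$ case --- where the claim genuinely does admit a trivial post-condition --- is a point of care the paper's proof omits.
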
  
This negative result shows that we generally cannot hope for a strongest
post-condition transformer. However, from the $\textsc{ASSGN}$ rule, we know
that when $f$ is injective a post-condition always exists. In this situation, we
can show that the predicate transformer $\mathsf{POST}((G,a_{V}), f)$ is indeed
the strongest post-condition in the sense of Definition \ref{def: sp-def}.

\begin{restatable}{theorem}{PostIsSp}
  \label{thm: post-is-sp}
  $\mathsf{POST}((G,a_V), f)$ is the strongest post-condition for injective assignment statements of the form $v_1 \coloneq f(v_1 \dots, v_n)$.
\end{restatable}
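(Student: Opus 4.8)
The plan is to verify directly that $\mathsf{POST}((G,a_V),f)$ satisfies the two requirements in \Cref{def: sp-def}: that it is itself a valid post-condition, and that it is entailed by every faithful valid post-condition in a way compatible with the homomorphisms. For the first requirement, note that the $\textsc{ASSGN}$ rule derives $\triple{(G,a_V)}{v_1 \coloneq f(v_1,\dots,v_n)}{\mathsf{POST}((G,a_V),f)}{\mathbf{eq}}$, and $\mathsf{POST}((G,a_V),f)$ is a genuine group action by \Cref{thm:post-is-grp-action}; hence soundness (\Cref{thm:soundness}) immediately gives that this triple is valid with witnessing homomorphism $\varphi = \mathbf{eq}$. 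So the candidate strongest post-condition is valid, and its associated homomorphism is the identity on $G$.

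For the second requirement, I would fix an arbitrary faithful group action $(I,c_{V_2})$ together with a homomorphism $\eta$ such that $\triple{(G,a_V)}{v_1 \coloneq f(v_1,\dots,v_n)}{(I,c_{V_2})}{\eta}$ is valid, and exhibit a homomorphism $\tau$ with $\mathsf{POST}((G,a_V),f) \overset{\tau}{\rightarrow} (I,c_{V_2})$ and $\eta = \tau \circ \varphi$. Since $\varphi = \mathbf{eq} = \mathrm{id}_G$, this factorization forces $\tau = \eta$, so there is no freedom: the only candidate is $\tau = \eta$ itself (and \Cref{thm: faithful-unique} confirms $\eta$ is the unique such homomorphism, since $(I,c_{V_2})$ is faithful). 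It therefore remains only to check the entailment condition of \Cref{def: implication} for this fixed $\tau = \eta$.

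The heart of the argument is the pointwise computation on reachable states. Writing $C$ for the assignment and $P_g$ for the action $\mathsf{POST}((G,a_V),f)_g$, I would use the defining diagram $P_g = C \circ \overline{a}_g \circ C^{-1}$ on the set $\Sigma_f$ of reachable states, which is well defined because injectivity of $f$ makes $C$ a bijection onto $\Sigma_f$ with inverse $C^{-1}$. Any reachable state has the form $\sem{C}_{\sigma}$ with $\sigma = C^{-1}(\sem{C}_{\sigma})$, so for each generator $g$ and each $v \in V_2$,
\[
P_g(\sem{C}_{\sigma})(v) = \sem{C}_{\overline{a}_g(\sigma)}(v) = \overline{c}_{\eta(g)}(\sem{C}_{\sigma})(v),
\]
where the first equality unfolds \Cref{def:post} and the second is exactly the validity of the triple for $(I,c_{V_2})$ from \Cref{def: valid}. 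Thus $P_g$ and $\overline{c}_{\eta(g)}$ agree on $V_2$ over all reachable states.

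The main obstacle is extending this agreement to the states outside $\Sigma_f$, where \Cref{def:post} makes $\mathsf{POST}$ act as the identity while $(I,c_{V_2})$ need not. Since the entailment in \Cref{def: implication} quantifies over all of $\Sigma$, this is the delicate point; I expect the resolution to be that the relevant comparison of post-conditions takes place on the reachable states $\Sigma_C = \Sigma_f$ (the only states that can actually arise as outputs of $C$), and that faithfulness of $(I,c_{V_2})$ on $\Sigma_C$ is what pins the action down canonically there. Once the non-reachable states are handled by this restriction, the pointwise computation above completes the entailment, establishing that $\mathsf{POST}((G,a_V),f)$ is the strongest post-condition.
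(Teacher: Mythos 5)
Your proposal follows essentially the same route as the paper's proof: both verify the two clauses of \Cref{def: sp-def} directly, take validity of $\triple{(G,a_V)}{v_1 \coloneq f(v_1,\dots,v_n)}{\mathsf{POST}((G,a_V),f)}{\mathbf{eq}}$ from the $\textsc{ASSGN}$ rule, and reduce the entailment to the pointwise agreement of $a^*_g$ (the $\mathsf{POST}$ action) with $\overline{c}_{\eta(g)}$ obtained by unfolding \Cref{def:post} against \Cref{def: valid}. The one structural difference is how the witnessing homomorphism $\tau$ is produced: you note that $\eta = \tau \circ \mathbf{eq}$ forces $\tau = \eta$ outright, whereas the paper constructs $\tau$ from the pointwise agreement using faithfulness via \Cref{lem: faithful-has-homo-into-it} and only afterwards records $\eta = \tau \circ \varphi$. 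The two are interchangeable; yours is marginally more economical.

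The ``delicate point'' you flag is real, and you should know the paper's proof does not resolve it either. Immediately after invoking validity, the paper asserts that for all $g \in G$ there exists $i \in I$ with $a^*_g(\sigma)(v) = \overline{c}_i(\sigma)(v)$ \emph{for all} $\sigma \in \Sigma$; but validity only yields this for $\sigma$ in the image of $\sem{C}$, i.e.\ on $\Sigma_f$, and off $\Sigma_f$ the claim can genuinely fail. For instance, take $v_1 \coloneq 2v_1$ with pre-condition $\mathbb{Z}$ acting by $v_1 \mapsto v_1 + g$: the faithful post-condition $I = \mathbb{Z}$ acting everywhere by $v_1 \mapsto v_1 + 2i$ is valid with $\eta = \mathbf{eq}$, yet $\mathsf{POST}$ is the identity on states with odd $v_1$ while $c_i$ is not, so no homomorphism $\tau$ can satisfy \Cref{def: implication} over all of $\Sigma$. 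So the theorem, read literally against \Cref{def: implication}, needs the entailment comparison (or the faithfulness hypothesis) restricted to the reachable states $\Sigma_C$ --- exactly the repair you anticipate. Your proposal is therefore as complete as the paper's own argument, and more honest about where its weak point lies.
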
  

\section{Case Studies}
\label{sec:examples}
In this section, we use our logic to verify the symmetry properties formally for a range of programs. Many of our examples are taken from \citet{10.1007/978-3-030-31784-3-6}, where these symmetry properties are used to verify safety properties of systems. We highlight the interesting parts of the proofs for four examples,
showing the details; more examples can be found in \iffull \Cref{app: Examples}. \else the full version of the paper. \fi 
\subsection{Translation of a Car System}
As a warm-up, we return to a simple car system from Section \ref{sec:overview}.
Recall the program in Figure \ref{fig: car-trans} where $x$ and $y$ represent the coordinates of the car, $v$ is the velocity, the variable $\phi$ is the steering angle, and $\theta$ is the angle between the car and the x-axis. At every time step, these variables are updated based on the control parameters $u$, $a$, and $L$. 

The program simulates the motion of the car by updating the car's state for $T$, time steps. The variable $dt$ refers to a small constant. 

\paragraph*{Symmetry Property}
Recall from Section \ref{sec:overview}, the program simulating the motion of the
car is equivariant under translation --- if the starting coordinates $(x,y)$ are
shifted by some constant $(c,c)$, the final coordinates $(x',y')$ will also be
shifted by $(c, c)$ to $(x'+c, y'+c)$. To prove this formally, we model
coordinate transformations as a particular
action of the group of integers.
Consider the action of the group of integers on the variables $x$ and $y$, where the generator acts by adding 1 to $x$ and $y$:
\[\tuple{\mathbb{Z} 
    }{a}{\{x,y\}} \triangleq\actiontemp{g}{-}{\{x,y\}}{g(x \rightarrow \alpha_1, y \rightarrow \alpha_2) = x \rightarrow \alpha_1 + 1, y \rightarrow \alpha_2 + 1 }\]
The lifting of this action is denoted by $\tuple{\mathbb{Z}
}{a}{\mathit{Vars}}$; this action modifies $x$ and $y$, and leaves all other
variables unchanged.  Then, we would like to prove that:
\[ 
  \triple{\tuple{\mathbb{Z} 
    }{a}{\mathit{Vars}}}{\textbf{for } {(t \coloneq 0; t < T; t \coloneq t +dt)} \textbf{ do } C}{\tuple{\mathbb{Z}}{a}{\mathit{Vars}}}{\mathbf{eq}}
\]
This triple states that if integer $c$ acts on the starting coordinates of the car $(x,y)$ by adding $c$ to $x$ and $y$, then the final coordinates are also shifted by $c$. 
\paragraph{Line 2}
We start with the first statement in the loop body on line 2: $c_2 \triangleq x \coloneq v \cdot \cos{(\theta)}\cdot dt + x$. Since this injective in the argument $x$, we can apply our $\textsc{ASSGN}$ rule to compute $\mathsf{POST((\mathbb{Z},a), c_2)}$. From the definition of $\mathsf{POST}$, this is 
\begin{align*}
     \mathsf{POST}(\tuple{\mathbb{Z}}{a}{\mathit{Vars}}, c_2) \triangleq \actiontemp{g}{-} {\mathit{Vars}} {g \cdot &(x \rightarrow \alpha_f, y \rightarrow \alpha_y, \dots \theta \rightarrow \alpha_{\theta}) =\\&
                       x \rightarrow c_2(a_g(c_2^{-1}((x \rightarrow \alpha_f, y \rightarrow \alpha_y, \dots \theta \rightarrow \alpha_{\theta})))), \\&  y \rightarrow \alpha_y + 1, 
                        \dots, \theta \rightarrow \alpha_{\theta}}
\end{align*}
This simplifies to
\begin{align*}
        \mathsf{POST}(\tuple{\mathbb{Z}}{a}{\mathit{Vars}}, c_2) \triangleq \actiontemp{g}{-} {\mathit{Vars}} {g \cdot &(x \rightarrow \alpha_f, y \rightarrow \alpha_y, \dots \theta \rightarrow \alpha_{\theta}) =\\&
                       x \rightarrow \alpha_f + 1,  y \rightarrow \alpha_y + 1, 
                        \dots, \theta \rightarrow \alpha_{\theta}},
\end{align*}
which is the same as $\tuple{\mathbb{Z}}{a}{\mathit{Vars}}$. Therefore, we can conclude:
$
    \triple{\tuple{\mathbb{Z}}{a}{\mathit{Vars}}}{ x \coloneq v \cdot \cos{(\theta)} \cdot dt + x}{\tuple{\mathbb{Z}}{a}{\mathit{Vars}}}{\mathbf{eq}}$.
\paragraph{Line 3}
We can use a similar computation on Line 3 to prove: \[
\triple{\tuple{\mathbb{Z} 
    }{a}{\mathit{Vars}}}{y \coloneq v \cdot \sin{(\theta)}\cdot dt + y}{\tuple{\mathbb{Z}}{a }{\mathit{Vars}}}{\mathbf{eq}}.
\]
\paragraph{Line 4}
Next, we focus on the statement $c_4 \triangleq v \coloneq a \cdot dt + v$. This can be rewritten as $v \coloneq f(a,dt,v)$.  This assignment statement is injective for $v$, i.e., $\forall v_1, v_2, a,dt. f(a,dt,v_1) = f(a,dt,v_2)$ implies $v_1 = v_2$. We can therefore define a function $\hat{f} : \mathbb{Z} \times \mathbb{Z} \times \mathbb{Z} \rightarrow \mathbb{Z}$ such that $\hat{f}(a,dt,f(a,dt,v)) = v$. Using this, we can compute the group action $\mathsf{POST}$:
\begin{align*}
  \mathsf{POST}(\tuple{\mathbb{Z}}{a}{\mathit{Vars}}, c_4) \triangleq \actiontemp{g}{-} {\mathit{Vars}} {g \cdot &(v \rightarrow \alpha_f, x \rightarrow \alpha_x, \dots \theta \rightarrow \alpha_{\theta}) =\\&
                       v \rightarrow f(\alpha_a, dt, \hat{f}(\alpha_a, dt, \alpha_f)), \\ &  x \rightarrow \alpha_x + 1, 
                        y \rightarrow \alpha_y +1, \dots, \theta \rightarrow \alpha_{\theta}}
\end{align*}
This simplifies to 
\begin{align*}
\actiontemp{g}{-} {\mathit{Vars}} {g \cdot &(v \rightarrow \alpha_f, x \rightarrow \alpha_x, \dots \theta \rightarrow \alpha_{\theta}) =\\&
                       v \rightarrow \alpha_f , x \rightarrow \alpha_x + 1, 
                         y \rightarrow \alpha_y +1, \dots, \theta \rightarrow \alpha_{\theta}} .
\end{align*}
For every integer $c$, $\mathsf{POST}(\tuple{\mathbb{Z}}{a }{\mathit{Vars}}, c_4)$ acts  like the identity on all variables except $x$ and $y$, which are translated by $c$ to $x+c$, and $y+c$. This is the same action as $\tuple{\mathbb{Z}}{a}{\mathit{Vars}}$. 
Therefore, $\mathsf{POST}(\tuple{\mathbb{Z}}{a }{\mathit{Vars}}, c_4) \overset{\mathbf{eq}}{\rightarrow}
\tuple{\mathbb{Z}}{a }{\mathit{Vars}}$. Now by the $\textsc{ASSGN}$ rule,
$\triple{\tuple{\mathbb{Z}}{a }{\mathit{Vars}}}{c_4}{\mathsf{POST}(\tuple{\mathbb{Z}}{a}{\mathit{Vars}}, c_4)}{\mathbf{eq}}$ is valid, and by $\textsc{CONS-1}$, $
    \triple{\tuple{\mathbb{Z}}{a}{\mathit{Vars}}}{c_4}{\mathsf{\tuple{\mathbb{Z}}{a}{\mathit{Vars}}}}{\mathbf{eq}}$
is also valid. 
The proof for the rest of the loop body is similar since the statements are
injective; we present details in
\iffull \Cref{app: Examples}. \else the full version of the paper. \fi

If we let $C$ denote lines 2-6 in Figure \ref{fig: car-trans}, 
by the $\textsc{SEQ}$ rule we can prove
$\triple{\tuple{\mathbb{Z} }{a }{\mathit{Vars}}}{C}{\tuple{\mathbb{Z}}{a}{\mathit{Vars}}}{\mathbf{eq}}$. 
Now since $\mathit{Vars}(C) \subset \mathit{Vars}$, by the $\textsc{FOR}$ rule, we have the
desired triple:
\[ 
  \triple{\tuple{\mathbb{Z} 
    }{a}{\mathit{Vars}}}{\textbf{for } {(t \coloneq 0; t < T; t \coloneq t +dt)} \textbf{ do } C}{\tuple{\mathbb{Z}}{a}{\mathit{Vars}}}{\mathbf{eq}}
\]

\subsection{Car in a Straight Line}
\label{ex:dihedral}
We return to a car system again. This time, we imagine a car moving in a
straight line without the ability to turn. In this program, presented in
\Cref{fig: car-straight}, the angle from the x-axis $\theta$ remains constant;
this variable has type $\mathbb{Z}_{360}$ ranging over $0$ to $360$.
We consider a different symmetry property this time.
Imagine a car that starts at initial coordinates $(x,y)$, and makes an
angle $\theta$ with the x-axis. Intuitively, if we rotate the angle $\theta$ by
$90^{\circ}$, and the car moves in a straight line, at the end of its path, the
car's position will be transformed by a $90^{\circ}$ rotation. Similarly, if we
reflect the car along an axis, and then the car moves in a straight line, the
original car's final position and the reflected car's final position will be
related by the same reflection.

To prove these symmetry properties formally, we need to find an input
transformation that captures the transformations we described.
The \emph{Dihedral Group} $D_4$ captures the rotation by $90^\circ$, and reflection. Recall the presentation of this group: $\langle r,s \mid r^4 = e, s^2 = e, rs = sr^{-1} \rangle$. The generator $r$ corresponds to rotation, and $s$ to reflection. Therefore, we would like the action of $r$ to be a rotation, and the action of $s$ to be a reflection.  Thus, we define a group action:
\begin{align*}
  \tuple{D_4}{a}{\{x,y,\theta\}}
  \triangleq (&\langle r, s \mid r^4 = e, s^2 = e, rs = sr^{-1} \rangle, 
                           \langle {\{x,y,\theta\}},\\& r\cdot(x \rightarrow \alpha_x, y \rightarrow \alpha_y, \theta \rightarrow  \alpha_\theta)
                           = (x \rightarrow -\alpha_y, y \rightarrow \alpha_x,
                           \theta \rightarrow  (\alpha_\theta + 90)\ \text{mod }{360}) \\
                           & s\cdot(x \rightarrow \alpha_x, y \rightarrow \alpha_y, \theta \rightarrow \alpha_\theta)
                           = x \rightarrow \alpha_x, y \rightarrow -\alpha_y, \theta \rightarrow -\alpha_\theta)\rangle).
\end{align*}
Here, $r$ acts by adding $90^{\circ}$ to the angle $\theta$, and also changing the $x$ and $y$ co-ordinates accordingly. $s$ acts by reflecting the starting point across the $x$-axis.
It is routine to verify that this is a group action. Once again, we want the transformation on the input to be preserved, so we would like the homomorphism to be the identity map: $\mathbf{eq}$ ($\varphi(g) = g$).
Let $C$ denote the loop body from Lines 4-8 in Figure \ref{fig: car-straight}. We seek to prove the triple 
\[
\triple{\tuple{D_4}{a}{\mathit{Vars}}}{t \coloneq 0; b \coloneq t < T; \textbf{while } {b } \textbf{ do }C}{\tuple{D_4}{a}{\mathit{Vars}}}{\mathbf{eq}}.
\]
\iflong We present the proof that the loop body is preserved in the appendix. For now, we focus mainly on the application of the \textsc{WHILE} rule. 
\else 
We present the proof that the loop body is preserved in the full paper. For now, we focus mainly on the application of the \textsc{WHILE} rule. 
\fi

The loop body, denoted by $C$, preserves $\tuple{D_4}{a}{\{x,y,\theta\}}$. We can prove this by applying $\textsc{ASSGN}$ to each statement, and then the 
sequencing rule and lift rule, which allows us to conclude \[\triple{\tuple{D_4}{\overline{a}}{\mathit{Vars}}}{C}{\tuple{D_4}{\overline{a}}{\mathit{Vars}}}{\mathbf{eq}}\]
To finish the proof, we need to apply the $\textsc{WHILE}$ rule.
We observe the loop body does not modify any variables outside of $\mathit{Vars}$.  Additionally, since the group action $\tuple{D_4}{\overline{a}}{\mathit{Vars}}$ does not change $t$ and $T$, we can prove $\tuple{D_4}{\overline{a}}{\mathit{Vars}} \overset{\mathbf{e*}}{\rightarrow} (E,e_{\{b\}})$. Therefore, we can apply the $\textsc{WHILE}$ rule to conclude:
\[\triple{\tuple{D_4}{a}{\mathit{Vars}}}{\textbf{while } {b} \textbf{ do } C}{\tuple{D_4}{a}{\mathit{Vars}}}{\mathbf{eq}}. \]
\subsection{Voting}
\begin{figure}
\begin{minipage}[c]{0.4\linewidth}
    \begin{tabular}{c}
    \begin{lstlisting}
1. $t \coloneq 0$
2. $b \coloneq t < T$
3. $\mathbf{while}(b)${
4. $\quad \quad x \coloneq v \cdot \cos{(\theta)}\cdot dt + x;$
5. $\quad \quad y \coloneq v \cdot \sin{(\theta)}\cdot dt + y;$
6. $\quad \quad v \coloneq a \cdot dt + v;$
7. $\quad \quad t \coloneq t + dt$
8. $\quad \quad b \coloneq  t < T$
9. }
     \end{lstlisting}
    \end{tabular}
\caption{A car moving in a straight line}
\label{fig: car-straight}
\end{minipage}%
\begin{minipage}[c]{0.5\linewidth}
  \centering
     \begin{tabular}{c}
    \begin{lstlisting}
1. $count_1 \coloneq v_1 == 0\; ? \; count_1 + 1 : count_1$; 
2. $count_1 \coloneq v_2 == 0 \;? \; count_1 + 1 : count_1$;
3. $count_2 \coloneq v_1 == 1\; ? \; count_2 + 1: count_2 $; 
4. $count_2 \coloneq v_2 == 1\; ? \; count_2 + 1: count_2 $; 
5. $b \coloneq count_1 > count_2$;
6. $\mathbf{if}~b$
7. $\quad winner \coloneq 0$;
8. $\mathbf{else}$
9. $\quad winner \coloneq 1$;
     \end{lstlisting}
    \end{tabular}\caption{A voting system with two voters}
    \label{fig:voting}
\end{minipage}
\end{figure}
We now look at an application from a different domain. We verify a symmetry property of a voting system as considered by \citet{beckert2016automated}. They prove anonymity (among other properties) using a bounded model checker for a bounded number of agents.
Here, we consider a simplified voting system with two candidates and two voters
modeled as a program in \Cref{fig:voting} (later in \Cref{sec: impl}, we will
consider larger versions of this program for more voters).
The voting system counts the votes of both the voters and the candidate with more votes is declared the winner. The variable $v_1$ represents Voter-1's vote, and $v_2$ represents Voter-2's vote.
The constant $0$ symbolizes a vote for Candidate-1, and $1$ symbolizes a vote for Candidate-2.
The votes are counted in lines 1-4. The variable $count_1$ is the number of
votes for candidate 1, and $count_2$ is the number of votes for candidate 2.
Line 5 compares the two counts, lines 6-9 set the winner to be the candidate
with the higher vote count.

Ideally, we would like the voting system to satisfy \emph{anonymity} i.e., the identity of the voter should not affect the outcome. In other words, if we permute the votes, the outcome of the election should be unchanged.
We model this situation by permuting votes by a group action of $S_2$:
\[
  \actiontemp{x}{x^2 = e}{\{v_1, v_2\}}{x \cdot (v_1 \rightarrow \alpha_1, v_2 \rightarrow \alpha_2) = (v_1 \rightarrow \alpha_2, v_2 \rightarrow \alpha_1)}
\]
The group element $x$ acts by swapping the votes. Letting $C$ denote the entire program, we would like to prove the triple
\[
\triple{\tuple{S_2}{a}{\{v_1,v_2\}}}{C}{\tuple{E}{e}{\{winner\}}}{\mathbf{e*}}
\]
which says that permuting $v_1$ and $v_2$ preserves the winner.

Intuitively, the number of votes for Candidate-1, and Candidate-2 should remain
unchanged after the permutation. Therefore, we will split the program into two parts. We will first prove that after Lines 1 and 2, the value of $count_1$ is invariant, and then show that after Lines 3 and 4, $count_2$ is invariant. We show the details of this proof in
\iffull \Cref{app: Examples}. \else the full version of the paper. \fi
For now, we focus on highlighting the application of $\textsc{IF}$ rule. If we denote the first four lines as $C_r$, using the regular assignment rule and sequencing, we prove 
\[ \triple{(S_2,a_{\mathit{Vars}})}{C_r}{(S_2,a_{\mathit{Vars}})}{\mathbf{eq}}.\]
Continuing with the rest of the program,
on Line 5, we assign the test to variable $b$. 
We can conclude
\( \triple{(S_2,a_{\mathit{Vars}})}{{b} \coloneq count_1 > count_2;}{(S_2,a_{\mathit{Vars}})}{\mathbf{eq}}\). Next, we move on to the $\mathbf{if}$ statement. 
Using $\textsc{CONST}$, $\textsc{LIFT}$ and $\textsc{ID}$  we can conclude
\(\triple{(S_2,a_{\mathit{Vars}})}{{winner} \coloneq 0}{(S_2,a_{\mathit{Vars}})}{\mathbf{eq}}\)
and similarly, 
\(\triple{(S_2,a_{\mathit{Vars}})}{{winner} \coloneq 1}{(S_2,a_{\mathit{Vars}})}{\mathbf{eq}}\).
Now, since $(S_2,a_{\mathit{Vars}})$ leaves $count_1$ and $count_2$ unchanged, we can prove the entailment 
\( (S_2,a_{\mathit{Vars}}) \overset{\mathbf{e*}}{\rightarrow} (E,e_{\{b\}})\).
Then by the $\textsc{IF}$ rule, 
\[
\triple{(S_2,a_{\mathit{Vars}})}{\textbf{if } b \textbf{ then } {winner} \coloneq 0; \textbf{else }  {winner} \coloneq 1; }{(S_2,a_{\mathit{Vars}})}{\mathbf{eq}}
\]
Finally, we can apply $\textsc{SEQ}$ to conclude
$\triple{(S_2,a_{\mathit{Vars}})}{C}{(S_2,a_{\mathit{Vars}})}{\mathbf{eq}}$.
Also, $(S_2,a_{\mathit{Vars}}) \overset{\mathbf{e*}}{\rightarrow}
(E,e_{\{winner\}})$. Then by $\textsc{CONS-1}$, we conclude the desired triple:
\( \triple{(S_2,a_{\mathit{Vars}})}{C}{(E,e_{\{winner\}})}{\mathbf{e*}}.
\)

\subsection{AAC Flow}
Next, we consider a system describing a three-dimensional incompressible
velocity field called the AAC flow, an example of a fluid flow that can have
chaotic trajectories. The program in \Cref{fig: AAC} simulates this system for $T$ time steps.
The symmetry properties exhibited by this system are described by
\citet{McLachlan_Quispel_2002}. We describe a variant called the ABC flow in the full paper.
\subsubsection*{Symmetry Property}
This system is equivariant to the action of $S_2$ when $S_2$ acts by flipping the sign of $x$, subtracting $y$ from $\pi$ and subtracting $\pi$ from $z$. Formally define the action of $S_2$ as: 
\[ \actiontemp{g}{g^2 = e}{\{x, y, z\}}{g \cdot (x \rightarrow \alpha_1, y \rightarrow \alpha_2, z \rightarrow \alpha_3) = (x \rightarrow -\alpha_1, y \rightarrow \pi -\alpha_2, z \rightarrow z - \pi)}\]
If we let $C$ denote the program in \Cref{fig: AAC}, then  our desired triple is: 
\(
\triple{(S_2,\overline{a})}{C}{(S_2,\overline{a})}{\mathbf{eq}}
\). 
Lines 2, 3, and 4 are injective, so we can compute $\mathsf{POST}$. Using $\textsc{ASSGN}, \textsc{SEQ}$, and the $\textsc{FOR}$ rule we can prove the desired triple.

\subsubsection*{Incorrect Version of AAC}
\begin{figure}
  \begin{center}
    \begin{tabular}{c}
    \begin{lstlisting}
1. $\while(t \coloneq 0; t < T; t \coloneq t + dt)${
2. $\quad \quad x \coloneq (A \cdot \sin{(z)} + C \cdot \cos{(y)})\cdot dt + x;$
3. $\quad \quad y \coloneq  (A \cdot \sin{(x)} + A \cdot \cos{(z)})\cdot dt + y;$
4. $\quad \quad z \coloneq  (C \cdot \sin{(y)} + A \cdot \cos{(x)})\cdot dt + z;$
5. }
     \end{lstlisting}
    \end{tabular}
  \end{center}
    \vspace{-2.5mm}
  \caption{A program simulating the AAC flow}
  \label{fig: AAC}
\end{figure}
While verifying this property, our tool (described in \Cref{sec: impl}) found an
error in the AAC flow described in \citet[Example 28]{McLachlan_Quispel_2002}.
In particular, on Line 4, they use \( z \coloneq  (C \cdot \sin{(y)} + A \cdot
\cos{(\mathbf{z})})\cdot dt + z\), instead of \(z \coloneq  (C \cdot \sin{(y)} +
A \cdot \cos{(\textbf{x})})\cdot dt + z \).  The claimed symmetry property fails
to hold for their system.  While we believe this error was likely a
misprint---\citet{McLachlan_Quispel_2002} analyze the correct version in the
rest of their work---this example highlights the subtlety of reasoning about
symmetry properties and the utility of our method for analyzing symmetry
properties.

\section{Implementation and Evaluation}
\label{sec: impl}
We have developed a prototype tool called $\mathsf{SymVerif}$ for proving
symmetry properties of programs. $\mathsf{SymVerif}$ consists of a verifier and
a synthesizer. Our implementation assumes that any provided group action is
faithful.  
\subsection{Verifying the Assignments}
For non-injective assignments, applying the $\textsc{SEM-ASSGN}$ rule requires checking semantic validity of a triple. 
To ease this task, in
certain cases, we can encode this semantic condition as an equivalent logical
formula. We will use these encodings to verify the application of the
$\textsc{SEM-ASSGN}$ rule for all the examples in Section \ref{sec:examples}
using Z3~\cite{z3}.  

Concretely, we aim to check if a given triple \(\triple{(G,a)}{v \coloneq
exp}{(H,b)}{\varphi}\) is valid. We split this task into two parts. First, we
check if for every group element $g \in G$---of which there might be infinitely
many---there exists a corresponding group action of $h \in H$ on the output.
When $H$ is finite, we can encode this check as a first order formula:
\begin{restatable}{theorem}{FormulaFake}
  \label{def: formula-fake}
  Let $G = \langle g_1, \dots, g_n \mid r_1, \dots, r_j \rangle$ be a finitely
  generated group with $n$ generators, and action $(G,a_{\mathit{Vars}})$. Let $H$ be a
  finite group with $k$ elements and faithful action $(H,b_{\mathit{Vars}})$
    and
  $v \coloneq exp$ be an assignment command.
  Then there exists a first-order formula $F \llparenthesis \triple{(G,a_{\mathit{Vars}})}{v \coloneq exp}{(H,b_{\mathit{Vars}})}{} \rrparenthesis$ which is satisfiable if, and only if, there exists a homomorphism $\tau: G \rightarrow H$
  \[
    \forall g\in G,\;
    \forall \sigma \in \Sigma,\; b_{\tau(g)(\sigma)}(\sem{v \coloneq exp}_{\sigma})
    = \sem{v\coloneq exp}_{a_{g}(\sigma)}
  \]
\end{restatable}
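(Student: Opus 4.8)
The plan is to exploit two kinds of finiteness. The group $H$ is finite, so a homomorphism $\tau : G \to H$ is determined by its images on the $n$ generators and is therefore one of finitely many candidates; and the quantifier over $\sigma \in \Sigma$ ranges over tuples of integers, one per program variable, which is exactly the kind of domain a first-order formula over the operations in $Ops$ can speak about. The formula $F$ will existentially select the images $\tau(g_1), \dots, \tau(g_n) \in H$ (a finite choice, encodable either as selector variables over a $k$-element domain or as a finite disjunction), assert that this selection extends to a homomorphism, and assert the commutation condition for each generator, universally quantified over the integer contents of the program variables.

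The conceptual heart --- and the step I expect to be the main obstacle --- is reducing the quantifier ``$\forall g \in G$'' (where $G$ may be infinite) to a condition on the finitely many generators. First I would fix a candidate $\tau$, write $f = \sem{v \coloneq exp}$, and read the target condition (interpreting the statement's $b_{\tau(g)(\sigma)}$ via \Cref{def: valid} with $V_2 = \mathit{Vars}$) as the functional identity $\overline{b}_{\tau(g)} \circ f = f \circ \overline{a}_g$ on all of $\mathit{Vars}$. The set $\{ g \in G \mid \overline{b}_{\tau(g)} \circ f = f \circ \overline{a}_g \}$ is a subgroup: it contains $e$ since $\overline{b}_e$ and $\overline{a}_e$ are identities; it is closed under products, because $\tau$ is a homomorphism and $b$ is a group action, so instantiating the identity for $g'$ at $\sigma$ and the identity for $g$ at the state $\overline{a}_{g'}(\sigma)$ chains to give the identity for $gg'$; and it is closed under inverses, because $\overline{a}_g$ and $\overline{b}_h$ are bijections, so substituting $\sigma \mapsto \overline{a}_{g^{-1}}(\sigma)$ and applying $\overline{b}_{\tau(g)}^{-1} = \overline{b}_{\tau(g^{-1})}$ recovers the identity for $g^{-1}$. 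Since $G$ is generated by $S_G$, this subgroup equals $G$ exactly when it contains every generator, so the infinite quantifier collapses to a finite conjunction over $g_1, \dots, g_n$. It is crucial here that the two actions share the domain $\mathit{Vars}$, so the equalities are of full states and can be chained; the argument would break if the post-condition constrained only a proper subset.

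With this reduction in hand, the remaining steps are routine. By \Cref{lem:johnson1}, a map $S_G \to H$ extends to a homomorphism iff it sends every relation $r_j$ to $e_H$; since $H$ is finite this is a decidable check for each candidate selection, contributing a quantifier-free conjunct $\mathrm{IsHom}(\tau)$. For each generator $g_i$, the action $\overline{a}_{g_i}$ is given directly by the group-action syntax, and for each candidate image $h \in H$ the action $\overline{b}_h$ can be precomputed symbolically by composing $H$'s generator actions along a word representing $h$ (finitely many $h$); both are expressions in the grammar $\mathcal{E}$ over $Ops$. The semantics $\sem{v \coloneq exp}_\sigma$ merely rebinds $v$ to the evaluation of $exp$ and fixes the rest, so substituting these into $\overline{b}_{\tau(g_i)}(\sem{v \coloneq exp}_\sigma) = \sem{v \coloneq exp}_{\overline{a}_{g_i}(\sigma)}$ yields a first-order formula $\mathrm{Comm}(\tau, g_i, \vec\alpha)$ whose free variables $\vec\alpha$ are the integer contents of the program variables.

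Finally I would assemble
\[
  F \;\triangleq\; \bigvee_{\tau : S_G \to H} \Big( \mathrm{IsHom}(\tau) \wedge \forall \vec\alpha.\ \bigwedge_{i=1}^{n} \mathrm{Comm}(\tau, g_i, \vec\alpha) \Big),
\]
a finite disjunction of first-order formulas, hence itself first-order. Soundness and completeness are then immediate: $F$ is satisfiable iff some candidate $\tau$ is simultaneously a homomorphism and commutes with $\sem{v \coloneq exp}$ on every generator, which by the subgroup argument is equivalent to the existence of a homomorphism $\tau : G \to H$ witnessing the displayed validity condition for all $g \in G$.
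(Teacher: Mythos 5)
Your proof is correct, but it takes a genuinely different route from the paper's. The paper's encoding (its appendix on verifying assignments) is a \emph{per-generator} formula of shape $\forall \vec\alpha.\ \bigwedge_{i=1}^{n} \bigvee_{j=1}^{k} \mathrm{Comm}(g_i, h_j, \vec\alpha)$, with one disjunct for each of the $k$ elements of $H$; it never encodes the homomorphism constraint inside the formula at all. Instead, the homomorphism is reconstructed \emph{after} satisfiability: because $(H,b_{\mathit{Vars}})$ is faithful, the witness $h$ for each $g$ is unique, and the map $g \mapsto h$ is then shown to respect multiplication (this is exactly where the faithfulness hypothesis does its work in the paper). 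The lift from generators to all of $G$ is done there by induction on word length, composing witnesses via a closure lemma. You instead guess the entire homomorphism up front---a finite disjunction over the $k^n$ candidate maps $S_G \to H$---check the relations with Lemma~\ref{lem:johnson1}, and lift to all of $G$ with your subgroup argument (the set of $g$ commuting with $\sem{v \coloneq exp}$ relative to a fixed $\tau$ is a subgroup containing the generators), which is a cleaner, more structural replacement for the paper's word induction. Your route buys three things: the formula's quantifier structure matches the statement's $\exists\tau\,\forall g\,\forall\sigma$ exactly, so both directions of the iff are immediate; you avoid the quantifier exchange that the paper's proof performs when it passes from ``for every state some $h_j$ works'' (the literal reading of its formula, where $\forall\vec\alpha$ sits outside the disjunction over $h_j$) to ``some $h_j$ works for every state''; and the faithfulness hypothesis becomes unnecessary for the equivalence itself. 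What it costs is size: your $F$ has $k^n$ disjuncts versus the paper's $n \cdot k$, a real concern for the paper since this encoding is handed to an SMT solver in their implementation.
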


We construct this formula, prove it correct, and extend it to check entailment in
\iffull \Cref{app:encoding}. \else the full version of the paper. \fi
This the formula checks whether for all $g \in G$, there exists $\tau(g) \in H$ satisfying
the target symmetry property, but we will still need to check that this mapping
is equivalent to the given homomorphism $\varphi$. To accomplish this task,
recall that our tool assumes that the provided group actions are faithful. Thus
by \Cref{thm: faithful-unique}, there is at most one homomorphism validating the
triple. Using the validity of the formula from \cref{def: formula-fake}, we can
construct the homomorphism $\tau$.
Finally, we check if $\varphi$ is equivalent to $\tau$ to validate the triple.
In general, this can be undecidable, but in practice, our tool can handle this
efficiently.  If $H$ is an infinite group, our formula encoding in \Cref{def:
formula-fake} must make use of the provided homomorphism $\varphi$. We show
further details of this case in
\iffull \Cref{app: Examples}. \else the full version of the paper. \fi

\subsection{Prototype Implementation}
Our prototype implementation consists of two components: a verifier and a synthesizer. 
\paragraph*{Verifier} 
Given a program, a pre-group action, and a post-group action, our verifier
checks whether the triple is valid. It handles injective assignments using
symbolic computation and non-injective assignments through additional
annotations and SMT solving. Our verifier does not handle loops and general
if-statements, but it is able to handle limited branching via ternary
expressions. 
\paragraph*{Required Annotations}
In addition to the desired pre- and post-condition for the program, for non-injective assignments, a pre-and post-condition annotation is required for applying the $\textsc{SEM-ASSGN}$ rule. If the post-condition group is infinite, the annotation must provide a homomorphism.

Our verifier is implemented in Python, leveraging the SymPy library to compute
$\mathsf{POST}$ for injective assignments. To verify annotations for
non-injective assignments we encode the problem using the formula from
\Cref{def: formula-fake}, which is solved using the Z3 SMT solver~\cite{z3}.
Entailment is checked similarly.

\paragraph*{Synthesizer} Manually finding valid pre- and post-conditions for non-injective assignments can be tedious and error-prone. To improve automation, we also implement a synthesizer that generates a valid pre-condition for a given post-condition and assignment statement.

Our pre-condition synthesizer is based on our construction of the weakest
pre-condition, particularly the group action in \Cref{def:grp-finite-action}.
Recall that to define the group action $\hat{G}((H,b),C)_{\mathit{Vars}}$, we
have to find a bijection $g_i$ for each generator of $h _j$ of $H$ such that $b_{h_j}(\sem{C}_{\sigma}) = \sem{C}_{a_{g_i}(\sigma)}$. We encode the search for such bijections as a syntax-guided synthesis (SyGuS) problem. Specifically, we define a grammar that is heuristically generated based on the post-condition, and assignment statement. The grammar represents expressions that define group actions in the pre-condition. 
The grammar is generated as expressions over all variables and function symbols in the assignment statement and the group action, as well as basic arithmetic operations. For instance, if the post-condition and assignment were a simple addition to the 
variable $x$, the generated sketch might look like $(x \mid {??}) (+ \mid -) ({??}
\mid x )$, where the hole $??$ could be any constant and $|$ represents choice.
So this grammar is generating either $x$ or a constant, plus-or-minus $x$ or a constant. 

We use SKETCH \cite{solar2013program} to synthesize bijections over this grammar
satisfying the constraints.  Since SKETCH only verifies correctness for bounded
values, we verify the soundness of the synthesized bijection for all values of the given type using the encoding in \Cref{def: formula-fake}.
\subsection{Evaluation}
Our empirical evaluation focuses on three key aspects: performance, scalability, and the ability to synthesize pre-conditions for applying the $\textsc{SEM-ASSGN}$ rule.  
\subsubsection{Performance of the Verifier}
\begin{table}[ht]
  \centering
  \begin{tabular}{lcc}
    \toprule
    \textbf{Program} & \textbf{Time (s)} & \textbf{No. of Assignment Statements}  \\ 
    \midrule
    Car Translation                  & 0.096  & 5               \\
    Car Translation 2                & 0.217      &5          \\ 
    Car with Dihedral Group D4       & 0.266         & 3     \\  
    Car with Dihedral Group D6       & 0.493        & 3    \\ 
    Lorenz Attractor System          & 0.115          & 3        \\  
    Gravitational Attractor          & 0.920          & 5          \\   
    AAC Flow                         & 0.496  & 3 \\
    Voting System 2 Voters           & 0.901           & 5           \\   
    Voting System 20 Voters          & 3.274    & 42 \\   
    \bottomrule
  \end{tabular}
  \caption{Verified Programs with Time Taken, and Size of the Program}
  \label{fig: verifier-benchmarks}
  \vspace{-2mm}
\end{table}
\Cref{fig: verifier-benchmarks} presents the performance results of our verifier on a diverse set of example programs. These programs, chosen from various domains, range from simple geometric transformations to more complex dynamical systems and voting protocols, including a system with 20 voters. 
Only the Gravitational attractor program requires annotations with pre- and post-conditions for non-injective assignments. All other programs only required the desired pre- and post-conditions for the entire program. The programs and the assertions are defined formally in \Cref{sec:examples} and the Appendix.
\Cref{fig: verifier-benchmarks} shows that the time taken to verify most programs is mostly under one second, except the voting system with 20 voters that has approximately $2 \times 10^{18}$ group elements in the pre-condition.
\subsubsection{Ability to Synthesize Pre-Conditions}
\begin{table}[ht]
  \centering
  \begin{tabular}{lccc}
    \toprule
    \textbf{Assignment Statement}          & \textbf{Sketching Time (s)} & \textbf{Verification Time (s)} & \textbf{Is Injective?} \\
    \midrule
    car translation assgn $x$              & 4.991                   & 0.45                       & Yes                    \\
    car translation assgn $y$              & 4.992                   & 0.34                       & Yes                    \\
    car translation assgn 3     & 0.872                   & 0.30                       & Yes                    \\
    car translation assgn 4     & 0.706                   & 0.31                       & Yes                    \\
    car translation  assgn 5     & 0.733                   & 0.23                       & Yes             \\
    car in straight line assgn $x$ with D4 & 1.896                      & 1.02                         & Yes                    \\
    car in straight line assgn $y$ with D4 & 2.061                      & 1.17                         & Yes                    \\
    lorenz assgn $x$                       & 1.702                  & 1.221                      & Yes                    \\
    lorenz assgn $y$                       & -                   & -                      & Yes                    \\
    lorenz assgn $z$                       & 0.589                   & 0.42                       & Yes                    \\
    gravity assgn $F$                      & 2.513                   & 0.83                       & No                     \\
    gravity assgn $v_1$                    & 0.836               & 0.145                      & Yes                    \\
    gravity assgn $v_2$                    & -                 & -                 & Yes                    \\
    gravity assgn $x_1$                    & 0.665               & 0.21                      & Yes                    \\
    gravity assgn $x_2$                    & 0.711                  & 0.18                      & Yes                    \\
    
    AAC assgn $z$  & 9.668 & 0.572 & Yes \\
  
    \bottomrule
  \end{tabular}
  \caption{Sketching and Verification Times}
  \vspace{-7mm}
  \label{fig: synth}
\end{table}
To assess the synthesizer, we measured both the performance and the utility of the synthesized pre-conditions. Since the synthesizer is not guaranteed to produce the weakest pre-condition, we evaluated whether the pre-conditions generated were strong enough to prove the desired properties in the example programs. We focus on individual assignment statements from the programs in Table \ref{fig: verifier-benchmarks}. For each assignment statement, we provided the synthesizer with post-conditions from the verifier and set a timeout limit of 2 minutes.  

The results are shown in \Cref{fig: synth}, with sketching and verification times reported. Whenever the synthesizer successfully generated pre-conditions, they were adequate to prove the desired properties of the program. When the assignment is injective, the ground truth weakest pre-condition is known. In all such cases, when the tool did not time out, it was able to find the \emph{weakest pre-condition}. For assignments involving more complex operations-such as piecewise and trigonometric functions—the tool timed out. Overall, the results demonstrate the synthesizer's ability to generate useful pre-conditions, especially when expressions do not involve complex functions. 

\paragraph*{Verification Times} We use the encoding in \Cref{def: formula-fake}
to verify the sketches returned by $\mathsf{SKETCH}$.  The verification time for
assignments is relatively fast---usually taking less than a second---thus demonstrating the efficiency of the encoding \Cref{def: formula-fake}. 
This suggests that our encoding is efficient and can verify a variety of assignment statements.
\subsubsection{Verifier Scalability}
\begin{table}[ht]
  \centering
  \begin{tabular}{lccc}
    \toprule
    \textbf{Program} & \textbf{Time (s)} & \textbf{Size of Pre-Group}  & \textbf{Size of Post-Group}  \\ 
    \midrule
    Car with Dihedral Group D4     & 0.166  & 8 & 8 \\
    Car with Dihedral Group D8     & 0.294 &  16 & 16 \\
    Car with Dihedral Group D32    & 0.307 & 64 & 64 \\ 
    Car with Dihedral Group D512   & 0.419  & 1024 & 1024 \\
    Car with Dihedral Group D1024  &  0.524 & 2048 & 2048 \\
    \bottomrule
  \end{tabular}
  \caption{Verified Programs and Times}
  \label{fig: scale-benchmarks}
\end{table}
We also perform experiments to evaluate the scalability of our verifier on
examples varying the sizes of the groups, keeping the program constant. Since
the verifier computes $\mathsf{POST}$ only for the generators of the
pre-condition, we expect the runtime to scale with the number of generators,
which can be substantially smaller than the total number of group elements. On
the other hand, the complexity of the entailment check depends on both the
number of elements in the post-condition group and the number of generators in
the pre-condition group.

Our experiments show that the verifier scales effectively:  we verify a triple with 2048 elements in both the pre-condition and post-condition groups, demonstrating the ability to handle large groups. These results highlight the ability of our tool to handle larger group sizes efficiently.

\section{Related Work}
\label{sec:rw}
\paragraph*{Symmetry Properties}
\citet{gotlieb2003exploiting} consider a symmetry property very similar to
Definition \ref{def: valid}. Rather than trying to prove the property formally,
their approach uses software testing with a finite set of inputs. In contrast,
our approach formally proves symmetry properties.

\paragraph*{Type Systems}
\citet{atkey2014parametricity} develop a type system to reason about conserved physical quantities via relational parametricity. While this allows reasoning about semantic symmetries, the scope of this work is limited to invariance of physical quantities. Our program logic can handle arbitrary assertions about symmetries, and our imperative language is general purpose.

\paragraph*{Nominal Sets and Nominal Logic}
Nominal sets \cite{nominalsets} provide an elegant mathematical framework for
reasoning about names and bindings, via permutations. Nominal Logic
\cite{pitts2003nominallogic} is a variant of first-order logic that uses nominal
sets to build a proof for reasoning about the syntax of formal languages.  Our
work considers arbitrary group actions acting over the infinite space of program
states rather than variable names, and we focus on program verification.  It
would be interesting to see if there is some more formal connection between our
system and nominal logic.

\paragraph{Hyperproperties and Relational Hoare Logics} The symmetry property we consider can be seen as families of hyperproperties \cite{clarkson2010hyperproperties}, where each group element induces a bijection between program states. Previous work has addressed proving hyperproperties using program logics \cite{benton2004simple, sousa2016cartesian}.

We compare our approach with Relational Hoare Logic (RHL) \citep{benton2004simple}. While RHL can encode each group element as a relational triple and prove them individually, our approach enables proving these properties in a single proof. We will illustrate this point with case study \ref{ex:dihedral}.
We consider the dihedral group $(D_4)$, which has a finite number of elements. Let
$(D_4, a)$ be the group action of the dihedral group, as defined previously.
We aim to prove
the triple $\triple{(D_4, a)}{C}{(D_4, a)}{\mathbf{eq}}$ from Example 2. 
In Relational Hoare Logic, we could encode this triple as: \(\forall g \in D_4, \exists h \in D_4\)
\[
(x_2 =  a_g(x_1), y_2 = a_g( y_1), \theta_2 = a_g(\theta_1))\{C\} 
(x_2 =  a_{\mathbf{eq}(h)}(x_1), y_2 = a_{\mathbf{eq}(h)}( y_1), \theta_2 = a_{\mathbf{eq}(h)}(\theta_1))
\]
This says for every group element, $g$ if we transform the program state by
$a_g$, then the program state without the group acting on it (denoted by the
subscript 1), after executing $C$ is related to the program state with the group
acting on it (denoted by the subscript 2) after executing $C$, by the action of
$a_{\mathbf{eq}(h)}$.
We see three main challenges in verifying symmetry
properties in RHL:
\begin{enumerate}
\item The quantifiers on $g$ are outside the triple and will need
meta-theoretic reasoning to prove. Further, the actions of each group element
can be very different, so there may not be common proof patterns that can be
reused when proving the triple in RHL.
\item Since RHL rules do not preserve group actions, we must verify that each assertion corresponds to a group action. For example, the assignment rule might produce conditions that do not align with a group action, requiring separate checks.
\item RHL rules do not construct a homomorphism, which is required for our target property. Encoding this constraint would require additional meta-theoretic reasoning over all functions from $G$ to $H$. 
\end{enumerate}
Our logic overcomes these challenges by leveraging the group-theoretic structure of our assertions.
\paragraph*{Automated Reasoning}
Previous work has looked at automating Relational Hoare Logic proofs. For example,
\textsc{DESCARTES} \cite{sousa2016cartesian} can be used to verify $k$-safety hyper-properties automatically. However, it cannot be used directly to verify the type of symmetry property we consider. 
There is also a long line of work in symmetry
reduction and its applications to automated reasoning in model checking \cite{clarke1998symmetry, wahl2010replication} and SMT solving \cite{deharbe2011exploiting, areces2013symt}.
In these settings,  the symmetries are often limited to finite structures like
transition systems or control flow graphs.
It is not clear how to apply these approaches to prove our target
properties, which involve richer program states and more expressive programs.
\section{Conclusion}
\label{sec:conc}

We have developed a Hoare logic for reasoning about symmetry properties of
imperative programs. Our logic uses a novel syntax of group actions to
specify the pre- and post-conditions and features a variety of proof rules for
manipulating programs and constructions on group actions. We view this as the
first work in a potentially longer line of research on program verification
methods for symmetry properties. Accordingly, we see a variety of interesting
future directions:

\paragraph*{Theoretical questions.}
Our investigation of weakest pre- and strongest post-conditions shows how to
construct these group actions for single assignment commands, but many questions
remain open. In particular, we do not know if the weakest pre-condition operation
can be extended to a transformer for sequential composition. Addressing these
questions could help shed light on the completeness (or lack of completeness) of
our approach.

\paragraph*{Supporting richer languages.}
Finally, we have considered a simple imperative language for this investigation.
There are many possible directions to extending the expressivity. For instance,
it might be interesting to consider symmetries for effectful (e.g., randomized)
programs. In another direction, it could also be interesting to develop a
higher-order, functional language for symmetries.

\section*{Acknowledgements}
We thank the reviewers for their constructive suggestions. This work benefited
from discussions with Ohad Kammar, and feedback from the Cornell PLDG Seminar
and the UToPiA Seminar at UT Austin.

\section*{Data Availability Statement}
Our tool $\mathsf{SymVerif}$ can be found on Zenodo~\cite{mehta202516921665}.

\bibliographystyle{ACM-Reference-Format}
\bibliography{header,refs}
\iflong
\appendix
\section{Appendix for Section~\ref{sec:assertions}}
In this section of the appendix, we collect the omitted materials from Section~\ref{sec:assertions}. First, we state some important lemmas that we will use in this section. Next, we provide details on the group action construction and missing proofs.
Group actions on a set $X$ can be viewed as homomorphisms into the symmetric
group.
\begin{restatable}{lemma}{ActionHomomorph}
  \label{lem: action-homo}
  Every group action $\tuple{G}{a}{V}$ corresponds to a homomorphism from $G$ into $Sym(\Sigma_V)$.
\end{restatable}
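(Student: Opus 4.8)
The plan is to directly exhibit the homomorphism. Given a group action $\tuple{G}{a}{V}$ with underlying function $a : G \times \Sigma_V \to \Sigma_V$, I would define a candidate map $\rho : G \to Sym(\Sigma_V)$ by sending each $g \in G$ to the function $a_g(\cdot) : \Sigma_V \to \Sigma_V$, where $a_g(\sigma) = a(g, \sigma)$. The whole argument then reduces to two checks: that each $\rho(g)$ actually lands in $Sym(\Sigma_V)$ (i.e., $a_g$ is a bijection), and that $\rho$ respects the group structure.

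For the first check, I would show that $a_g$ is a bijection by producing an explicit inverse, namely $a_{g^{-1}}$, using the inverse element guaranteed by \Cref{def: group}. Applying the two action axioms, for any $\sigma \in \Sigma_V$ we have $a_{g^{-1}}(a_g(\sigma)) = a_{g^{-1} g}(\sigma) = a_e(\sigma) = \sigma$, and symmetrically $a_g(a_{g^{-1}}(\sigma)) = \sigma$. Hence $a_g$ has a two-sided inverse and is therefore a bijection, so $\rho(g) = a_g \in Sym(\Sigma_V)$.

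For the homomorphism property, I would verify $\rho(gh) = \rho(g) \circ \rho(h)$ and $\rho(e_G) = \mathrm{id}_{\Sigma_V}$. The second equality is immediate from axiom (1): $a_e(\sigma) = \sigma$ for all $\sigma$, so $\rho(e_G)$ is the identity bijection, which is the identity element of $Sym(\Sigma_V)$. The first equality is exactly axiom (2): for all $\sigma$, $\bigl(a_g \circ a_h\bigr)(\sigma) = a_g(a_h(\sigma)) = a_{gh}(\sigma)$, so $a_g \circ a_h = a_{gh}$, which says $\rho(g) \circ \rho(h) = \rho(gh)$. Since composition $\circ$ is the group operation of $Sym(\Sigma_V)$, this establishes that $\rho$ is a homomorphism.

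I do not expect any serious obstacle here: this is the standard correspondence between group actions and permutation representations, and every step follows directly from the two action axioms together with the existence of inverses in $G$. The only point requiring any care — and the closest thing to a ``main step'' — is confirming that $a_g$ is genuinely a bijection rather than merely a function, since membership in $Sym(\Sigma_V)$ demands this; the explicit inverse $a_{g^{-1}}$ handles it cleanly. (I note that the converse direction, recovering an action from a homomorphism, is equally routine and could be stated to make the correspondence an equivalence, but the lemma as phrased only asks for the forward map.)
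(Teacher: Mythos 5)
Your proof is correct, and it is the standard argument: the paper itself states this lemma without proof, treating the correspondence between actions and permutation representations as a known group-theoretic fact, so there is no paper proof to diverge from. Your write-up — defining $\rho(g) = a_g$, establishing bijectivity via the two-sided inverse $a_{g^{-1}}$ from the action axioms, and reading off $\rho(gh) = \rho(g) \circ \rho(h)$ and $\rho(e_G) = \mathrm{id}_{\Sigma_V}$ directly from axioms (2) and (1) — supplies exactly the argument the paper implicitly relies on.
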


To define a map from a group given as a group presentation, it suffices to
define the map from each generator and check that the relations are preserved.
To prove this, we will use the following lemma:
\begin{restatable}{lemma}{HomoIdentity}
      \label{lemma-homo-respects-identity}
  \cite{johnson1997presentations}
  Given a group with a presentation
  $G = \langle S_G \mid R_G \rangle$, where each relation in $R_G$ is of the form $r_j = e_G$.
  Let $H$ be another group and $f: S_G \rightarrow H$ a function.
  Then $f$ extends to a homomorphism $F: G \rightarrow H$.
  if and only if each $f(r_j) = e_H$.
\end{restatable}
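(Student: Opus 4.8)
The plan is to prove this through the standard quotient construction of a finitely presented group. Recall that $G = \langle S_G \mid R_G \rangle$ is, by definition, the quotient $\mathcal{F}(S_G)/N$, where $\mathcal{F}(S_G)$ is the free group on the generating symbols $S_G$ and $N$ is the normal closure of the set of relators $\{ r_j \}$ arising from the relations $r_j = e_G \in R_G$. Let $\pi : \mathcal{F}(S_G) \to G$ denote the canonical quotient homomorphism. Since every generator equation in our syntax can be rearranged into the form $r_j = e_G$ by moving one side across (replacing $u = v$ with $uv^{-1} = e_G$), assuming the relations already have this shape is without loss of generality. Throughout, for a word $w \in (S_G \cup S_G^{-1})^*$ I write $f(w)$ for its evaluation in $H$, obtained by substituting $f(s)$ for each generator $s$ and $f(s)^{-1}$ for each inverse symbol and multiplying in $H$; this is precisely the image of $w$ under the unique free extension of $f$.

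For the forward direction, I would suppose $F : G \to H$ is a homomorphism extending $f$, meaning $F(\pi(s)) = f(s)$ for every $s \in S_G$. Then $F \circ \pi : \mathcal{F}(S_G) \to H$ is a homomorphism out of the free group that agrees with $f$ on generators, so by the universal property of $\mathcal{F}(S_G)$ it coincides with the free extension of $f$; in particular $(F \circ \pi)(r_j) = f(r_j)$. But each relator satisfies $\pi(r_j) = e_G$, so $f(r_j) = F(\pi(r_j)) = F(e_G) = e_H$, as required.

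For the converse, suppose $f(r_j) = e_H$ for every relator. First I would invoke the universal property of the free group to extend $f$ uniquely to a homomorphism $\tilde f : \mathcal{F}(S_G) \to H$ with $\tilde f(s) = f(s)$. The key observation is that each relator lies in $\ker \tilde f$, since $\tilde f(r_j) = f(r_j) = e_H$. Because $\ker \tilde f$ is a normal subgroup of $\mathcal{F}(S_G)$ containing all the relators, it must contain their normal closure $N$, which by definition is the smallest such normal subgroup. Hence $\tilde f$ factors through the quotient $\mathcal{F}(S_G)/N = G$, yielding a homomorphism $F : G \to H$ with $F \circ \pi = \tilde f$, and therefore $F(\pi(s)) = f(s)$ for every generator. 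This $F$ is the desired extension of $f$.

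The main obstacle --- really the only nonroutine step --- is the normal-closure argument in the converse: establishing that containment of the relators in $\ker \tilde f$ forces containment of the entire normal closure $N$, which is what allows $\tilde f$ to descend to $G$. This rests on the fact that the kernel of a homomorphism is always normal, together with the minimality of the normal closure, both standard facts from group theory (see \cite{johnson1997presentations, rotman2012introduction}). The remaining care is purely notational: keeping the distinction between the formal word $r_j$ and its evaluations $\pi(r_j) \in G$ and $f(r_j) \in H$ straight throughout.
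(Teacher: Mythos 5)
Your proof is correct. Note that the paper itself gives no proof of this lemma---it is stated with a citation to \citet{johnson1997presentations} and treated as a known result---so there is nothing internal to compare against; your argument is precisely the standard one (sometimes attributed to von Dyck) that the cited reference contains: the forward direction via uniqueness of the free extension on $\mathcal{F}(S_G)$, and the converse via the observation that $\ker \tilde f$ is a normal subgroup containing the relators, hence contains their normal closure $N$, so $\tilde f$ descends to $G = \mathcal{F}(S_G)/N$. Both directions are handled correctly, including the careful distinction between a formal word $r_j$ and its evaluations in $G$ and $H$, so your write-up faithfully supplies the proof the paper delegates to the literature.
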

\FiniteGrpThm*
 \begin{proof}
    The statement follows directly from Lemma \ref{lem: action-homo} and Lemma \ref{lemma-homo-respects-identity}.
  \end{proof}
\subsection{Syntax of Group Actions}
Recall the definition of the direct product group action in our syntax: 
\DirProdAction*
The condition for when $(G \times H, a \times b_{V_1 \cup V_2})$ is a valid action is characterized by the following lemma:
\LemDirProdCond*
 \begin{proof}
    ($\implies$) We know $(G \times H, a \times b_{V_1 \cup V_2})$ is a group action. Therefore, $(a \times b)_{(g_1,h_1)}i_{g_2,h_2)}(\sigma) = (a \times b)_{(g_1g_2, h_1h_2)}(\sigma)$. By the definition of $(a \times b)$, this means $\overline{a}_{g_1}(\overline{b}_{h_1}(\overline{a}_{g_2}(\overline{b}_{h_2}(\sigma)))) = \overline{a}_{g_1}(\overline{a}_{g_2}(\overline{b}_{h_1}(\overline{b}_{h_2}(\sigma))))$, which implies the result.
    \\ ($\impliedby$) We know that $\overline{a}_g(\overline{b}_h(\sigma)) = \overline{b}_h(\overline{a}_g(\sigma))$.  Now, $(a \times b)_{(g_1,h_1)}i_{g_2,h_2)}(\sigma) = \overline{a}_{g_1}(\overline{b}_{h_1}(\overline{a}_{g_2}(\overline{b}_{h_2}(\sigma))))$. By commutativity, this is equal to $\overline{a}_{g_1}(\overline{a}_{g_2}(\overline{b}_{h_1}(\overline{b}_{h_2}(\sigma))))$, which means $(a \times b)_{(g_1,h_1)}i_{g_2,h_2)}(\sigma) = (a \times b)_{(g_1g_2, h_1h_2)}(\sigma)$.
  \end{proof}
  \subsection{Free Product Construction}
\label{app: fp-cons}
The free product is an operation that combines two groups $G$ and $H$ into a
larger group denoted by $G * H$, where each element is a word of the form
$g_1h_1g_2 \dots h_n$, where the first and last elements can be from either $G$
or $H$. Every word is \emph{reduced}, which means every instance of the identity
element of $G$ or $H$ is removed, and every consecutive pair $g_1g_2$ or
$h_1h_2$ is replaced by its product.
The group operation of the group $G*H$ is concatenation followed by reduction.
Another useful characterization of the free product is via a so-called universal property:
If $f_1$ and $f_2$ are any functions into a group $I$, $j_1$ and $j_2$ are
injective homomorphisms, then if $G*H$ is a free product, there exists a unique homomorphism $\varphi$ such that the following diagram commutes.
\[\begin{tikzcd}
    && {G*H} \\
    \\
    G && I && H
    \arrow["\varphi"{description}, from=1-3, to=3-3]
    \arrow["{j_2}", from=3-1, to=1-3]
    \arrow["{f_1}", from=3-1, to=3-3]
    \arrow["{j_1}"', from=3-5, to=1-3]
    \arrow["{f_2}"', from=3-5, to=3-3]
  \end{tikzcd}\]
Suppose we have two groups $G$ and $H$ presented as $
  G = \langle S_G \mid R_G \rangle$ and $H =\langle S_H \mid R_H\rangle$. We assume that the symbols in $S_G$ and $S_H$ are disjoint. The set of generators of the free product $G * H$ is the union of $S_G$ and $S_H$, and the relations consist of $R_G$ and $ R_H$.  In our syntax, we can express the free product as:
\[
  G * H  = \langle S_G, S_H \mid R_G, R_H \rangle
\]
\subsection{Free Product Group Action}
\label{app:fp-action}
Consider a free product group $I = G*H$. Suppose $\tuple{G}{a}{V_1}$ is an action of $G$ on $\Sigma_{V_1}$, and $(H,b_{V_2})$ is a action of $H$ on $\Sigma_{V_2}$. Then for every word $g_1h_2 \dots g_nh_n$, its action is defined by $a_{g_1} \circ b_{h_2} \cdots a_{g_n} \circ b_{h_n}$. We first show how to express the group action of the free product in our syntax.
Let $V_1 = \{v_1, \dots, v_k\}$, $V_2 = \{v'_1, \dots, v'_n\}$ and let $a_g$ map each integer $\alpha_i$
to $\alpha'_i$, and $b_h$ map each integer $\beta_j$ to $\beta_j'$.
The action of the free product can be expressed in our syntax as follows:
\begin{align*}
  \actiontemp{S_G, H_G}{R_S, R_H}{V_1 \cup V_2}{ &                                                                                                                                  \\ \forall g \in S_G , &(g)\cdot(v_1 \rightarrow \alpha_1, \dots, v_k \rightarrow \alpha_k, v'_1 \rightarrow \beta_1, \dots, v'_n \rightarrow \beta_n) = \\
                                                 & v_1 \rightarrow \alpha_1', \dots, v_k \rightarrow \alpha'_k, v'_1 \rightarrow \beta_1, \dots, v'_n \rightarrow \beta_n           \\
  \forall h \in S_H ,                            & (h)\cdot(v_1 \rightarrow \alpha_1, \dots, v_k \rightarrow \alpha_k, v'_1 \rightarrow \beta_1, \dots, v'_n \rightarrow \beta_n) = \\
                                                 & v_1 \rightarrow \alpha_1, \dots, v_k \rightarrow \alpha_k, v'_1 \rightarrow \beta'_1, \dots, v'_n \rightarrow \beta'_n}
\end{align*}
Now, we show that this is a valid group action.
\begin{restatable}{lemma}{FP-Actions}
  \label{lem: fp-action}
  Let $I = G*H$, and let $\tuple{G}{a}{V_1}$ be the action of $G$ on $\Sigma_{V_1}$, and $(H,b_{V_2})$ be the action of $H$ on $\Sigma_{V_2}$. Then for every word $g_1h_2 \dots g_nh_n$, its action on $\Sigma_{V_1 \cup V_2}$ is defined by $a_{g_1} \circ b_{h_2} \cdots a_{g_n} \circ b_{h_n}$.
\end{restatable}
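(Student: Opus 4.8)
The plan is to invoke the criterion that a map on generators respecting every relation extends to a group action (via Lemma~\ref{lem: action-homo}, which identifies group actions with homomorphisms into $Sym(\Sigma_{V_1 \cup V_2})$), and then to read off the action on a general reduced word from the homomorphism property. First I would recall that the free product is presented as $I = \langle S_G, S_H \mid R_G, R_H \rangle$, where the only relations are those inherited from $G$ and from $H$ separately; crucially, unlike the direct product, there are \emph{no} cross-relations between $S_G$ and $S_H$. I define the candidate map on generators by sending each $g \in S_G$ to the lifted bijection $\overline{a}_g$ of $\Sigma_{V_1 \cup V_2}$ (which fixes $V_2 \setminus V_1$) and each $h \in S_H$ to $\overline{b}_h$ (which fixes $V_1 \setminus V_2$). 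Each of these is a bijection of $\Sigma_{V_1 \cup V_2}$, since liftings of group actions are again group actions and hence act by bijections.

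Next I would verify that every relation is respected. Because no relation mixes generators of the two groups, it suffices to treat $R_G$ and $R_H$ independently. For a relation $r = e \in R_G$, only $S_G$-letters occur, and on those letters the candidate map coincides exactly with the lifting of $\tuple{G}{a}{V_1}$ to $V_1 \cup V_2$; since that lifting is a group action, the composite bijection corresponding to $r$ is the identity on $\Sigma_{V_1 \cup V_2}$, so the relation holds. The argument for $R_H$ is symmetric. By the generator-relation criterion for group actions, the map therefore extends to a genuine action of $I$ on $\Sigma_{V_1 \cup V_2}$.

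Finally, for a reduced word $w = g_1 h_2 \cdots g_n h_n$, the homomorphism property of a group action gives that its action is the composition of the actions of its letters, namely $\overline{a}_{g_1} \circ \overline{b}_{h_2} \circ \cdots \circ \overline{a}_{g_n} \circ \overline{b}_{h_n}$, which is the claimed formula (under the convention that $a$ and $b$ denote their liftings to $V_1 \cup V_2$). The main obstacle is really just confirming the independence of the two relation families: checking $R_G$ and $R_H$ in isolation genuinely suffices precisely because the free product imposes no commutation or other cross-relations, so that no compatibility condition analogous to Lemma~\ref{lem:dir-prod-cond} is needed here. This absence of a side condition is exactly what makes the free product construction simpler than the direct product.
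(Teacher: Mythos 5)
Your proof is correct, but it takes a genuinely different route from the paper's. You work with the explicit presentation $G*H = \langle S_G, S_H \mid R_G, R_H\rangle$: you send each generator to its lifted bijection of $\Sigma_{V_1\cup V_2}$, observe that every relation lies entirely within $R_G$ or entirely within $R_H$, check each family against the corresponding lifted action, and then invoke the generator--relation criterion (the paper's theorem that a map on generators respecting all relations defines a group action) to extend to all of $G*H$. The paper instead argues abstractly: it lifts both actions to $\Sigma_{V_1\cup V_2}$, converts them into homomorphisms $f_1 : G \rightarrow Sym(\Sigma_{V_1\cup V_2})$ and $f_2 : H \rightarrow Sym(\Sigma_{V_1\cup V_2})$ via Lemma~\ref{lem: action-homo}, and applies the universal property of the free product to obtain a homomorphism $\varphi : G*H \rightarrow Sym(\Sigma_{V_1\cup V_2})$, which by the same lemma is the desired action; the formula for a word follows because $\varphi$ extends $f_1$ and $f_2$ multiplicatively. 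The trade-offs: your argument is more concrete, meshes well with the paper's finite syntax, and makes explicit why no compatibility side condition (in contrast to Lemma~\ref{lem:dir-prod-cond} for direct products) is needed---there are simply no cross-relations to check. However, it silently relies on the fact that $\langle S_G, S_H \mid R_G, R_H\rangle$ really is a presentation of the free product, which is essentially the content of the universal property the paper invokes directly; the paper's route also works without any finite presentation and gives uniqueness of the extension for free. One small point in your favor: the lifting to $V_1\cup V_2$ fixes the variables in $V_2 \setminus V_1$, which you state correctly, whereas the paper's proof contains a typo claiming it fixes $V_1 \setminus V_2$.
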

  \begin{proof}
    We first note that the action  $\tuple{G}{a}{V_1}$ can be converted to an action  $\tuple{G}{a}{V_1 \cup V_2}$, where $a$ fixes every variable in $V_1 \setminus V_2$.  We can similarly convert $(H,b_{V_2})$ to an action $(H,b_{V_1 \cup V_2})$. By Lemma \ref{lem: action-homo}, we have a homomorphism $f_1: G \rightarrow Sym(\Sigma_{V_1 \cup V_2})$ and $f_2: H \rightarrow Sym(\Sigma_{V_1 \cup V_2}) $. Therefore by the definition of free product, we have a homomorphism $\varphi: G*H \rightarrow Sym(\Sigma_{V_1 \cup V_2})$, which defined a group action of $G*H$. Finally, this homomorphism must extend in a way such that $\varphi(g_1h_2 \dots g_nh_n) = f_{1_{g_1}} \circ f_{2_{h_2}} \cdots f_{1_{g_n}} \circ f_{2_{h_n}}$
  \end{proof}
We will use the notation $\tuple{G * H}{a * b}{V_1 \cup V_2}$ to denote the action of the free product.

\subsection{Syntax of Homomorphisms}
In this section, we prove that the maps defined on generators in our syntax are indeed homomorphisms.
\DirProdHomo*
\begin{proof}
We first show that the group operation is preserved.
Let \( g_1, g_2 \in G \). Then since \( \varphi \) and \( \psi \) are homomorphisms we have:
   \[
   \varphi(g_1g_2) = \varphi(g_1)\,\varphi(g_2) \quad \text{and} \quad \psi(g_1g_2) = \psi(g_1)\,\psi(g_2).
   \]
   Hence,
   \[
   (\varphi\times \psi)(g_1g_2) = \varphi(g_1g_2)\,\psi(g_1g_2) = \varphi(g_1)\,\varphi(g_2)\,\psi(g_1)\,\psi(g_2).
   \]
   Because in the direct product \( H\times I \) the generators coming from \( S_H \) and \( S_I \) commute (i.e. letters from \( S_H \) commute with letters from \( S_I \)), we can rearrange the word:
   \[
   \varphi(g_1)\,\varphi(g_2)\,\psi(g_1)\,\psi(g_2) = \varphi(g_1)\,\psi(g_1)\,\varphi(g_2)\,\psi(g_2).
   \]
   That is,
   \[
   (\varphi\times \psi)(g_1g_2) = (\varphi\times \psi)(g_1)\,(\varphi\times \psi)(g_2).
   \]
  Next, we verify the identity is preserved.  
   Let \( e_G \) denote the identity in \( G \). Since homomorphisms send the identity to the identity, we have:
   \[
   \varphi(e_G) = e_H \quad \text{and} \quad \psi(e_G) = e_I.
   \]
   In our syntax, the identity in \( H \times I \) is represented by the word that corresponds to \( e_H \) (the empty word over \( S_H \)) concatenated with \( e_I \) (the empty word over \( S_I \)), which is just the empty word in the alphabet \( S_H\cup S_I \). Hence,
   \[
   (\varphi\times \psi)(e_G) = \varphi(e_G)\,\psi(e_G) = e_H\,e_I = e_{H\times I}.
   \]

Since both the operation and the identity are preserved, the map \( \varphi\times \psi \) is indeed a group homomorphism.
\end{proof}
\subsubsection{Free Product}
\label{app: fp-hom}
    We now extend our syntax of homomorphism to the free product. Suppose we have two homomorphisms $\varphi: G \rightarrow I$, and $\psi: H \rightarrow J$. Then we define $\varphi * \psi: G * H \rightarrow J * I$ as follows: 
\[
\varphi * \psi \triangleq \Big\langle \forall s\in S_G\cup S_H,\, \varphi * \psi(s) = 
\begin{cases}
\phi(s) & \text{if } s\in S_G,\\[1mm]
\psi(s) & \text{if } s\in S_H,
\end{cases}
\Big\rangle.
\]
The next lemma shows that this is a homomorphism. 
\begin{restatable}{lemma}{FPHomo}
    
    \(\varphi * \psi\) extends uniquely to a homomorphism.
\end{restatable}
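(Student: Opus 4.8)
The plan is to reduce the claim to the universal property of the free product, exactly as in the proof of \Cref{lem: fp-action}. Recall that $G * H$ has presentation $\langle S_G, S_H \mid R_G, R_H \rangle$, and its universal property states that any pair of homomorphisms out of $G$ and $H$ into a common group $K$ induces a \emph{unique} homomorphism $G * H \to K$ restricting to the two given maps. Here the target is $K = J * I$, so the first step is to manufacture the two required homomorphisms into $K$ by post-composing $\varphi$ and $\psi$ with the canonical inclusions $\iota_I : I \to J * I$ and $\iota_J : J \to J * I$. Since these inclusions are injective homomorphisms (a standard property of the free product), the composites $\iota_I \circ \varphi : G \to J * I$ and $\iota_J \circ \psi : H \to J * I$ are again homomorphisms.

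With these in hand, the universal property of $G * H$ yields a unique homomorphism $\Phi : G * H \to J * I$ satisfying $\Phi \circ j_G = \iota_I \circ \varphi$ and $\Phi \circ j_H = \iota_J \circ \psi$, where $j_G, j_H$ are the canonical injections of the factors into $G * H$. The remaining step is to check that $\Phi$ agrees with the syntactic definition of $\varphi * \psi$ on generators: for $s \in S_G$ we have $\Phi(s) = \iota_I(\varphi(s)) = \varphi(s)$, viewing $\varphi(s)$ as a word over $S_I \subseteq S_J \cup S_I$, and symmetrically $\Phi(s) = \psi(s)$ for $s \in S_H$. This is precisely the map specified by $\varphi * \psi$, so $\Phi$ is the desired extension, and uniqueness is inherited directly from the uniqueness clause of the universal property.

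An alternative, more computational route avoids the universal property and instead appeals to \Cref{lem:johnson1}: since $G * H = \langle S_G, S_H \mid R_G, R_H \rangle$, it suffices to verify that the generator map sends every relator to the identity of $J * I$. For a relator $r \in R_G$, its image is obtained by applying $\varphi$ letterwise, which equals $\varphi(r) = e_I = e_{J*I}$ because $\varphi$ already respects $R_G$; the case $r \in R_H$ is symmetric using $\psi$. I do not expect either route to present a genuine obstacle, since the lemma is essentially the defining property of the free product. The only place requiring care is the bookkeeping around the disjoint alphabets: one must confirm that $\varphi(s)$ and $\psi(s)$ land in the correct sub-alphabets $S_I$ and $S_J$ of the combined generating set, so that the two families of relators are discharged independently and the inclusions genuinely are injective. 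This disjointness is exactly what makes the generator map well defined and keeps the relator check clean.
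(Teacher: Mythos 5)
Your primary argument is exactly the paper's proof: invoke the universal property of the free product $G*H$ with target $K = J*I$ obtained by composing $\varphi$ and $\psi$ with the canonical inclusions of $I$ and $J$, which the paper states in one line; your version merely adds the routine check that the induced map agrees with the syntactic definition of $\varphi * \psi$ on generators. Your alternative route via the relator check of Lemma~\ref{lem:johnson1} is also sound (uniqueness being automatic since generators determine a homomorphism), but it is not needed given the first argument.
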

\begin{proof}
       The free product \(G * H\) has the universal property that any pair of homomorphisms \(\phi: G \to K\) and \(\psi: H \to K\) (where \(K\) is any group) extends uniquely to a homomorphism \(G * H \to K\). In our case, taking \(K = I * J\) and using the inclusions \(I, J \subset I*J\) we get that \(\varphi * \psi\) is a homomorphism. 
\end{proof}

\subsection{Composition of Homomorphisms}
\label{app: compose}
Given two homomorphisms $\varphi: G \rightarrow H$, and $\psi: H \rightarrow I$, we want to define their composition, denoted by $\psi \circ \varphi : G \rightarrow I$ as a map that first applies $\varphi$ and then applies $\psi$. By definition of \(\varphi\), for every generator \(g \in S_G\), \(\varphi(g) = h_1 \cdots h_k, h_i \in S_H\). We can then apply \(\psi\) to each generator \(h_i\). Thus, we define
$\psi \circ \varphi: G \rightarrow I$ in our syntax as follows: 
\[
 (\psi \circ \varphi) \triangleq \Big\langle\forall g 
 \in S_G, (\psi \circ \varphi)(g) = \psi(h_1) \cdots \psi(h_k) \text{ where } \varphi(g) = h_1\cdots h_k \; h_i \in S_H \Big\rangle
\]
As expected, this extends to a unique homomorphism.
\begin{restatable}{lemma}{CompHomo}
    \label{lem: compose-homomorphism}
    \((\psi \circ \varphi) \) extends uniquely to a homomorphism.
\end{restatable}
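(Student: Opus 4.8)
The plan is to show that the syntactic map $(\psi \circ \varphi)$ defined on the generators $S_G$ coincides with the ordinary set-theoretic composite of the homomorphisms $\psi$ and $\varphi$, and then to appeal to the standard fact that a composite of homomorphisms is again a homomorphism. First I would recall that $\varphi: G \to H$ and $\psi: H \to I$ are already established to be homomorphisms. Their composite $\psi \circ \varphi: G \to I$ is therefore a homomorphism by the usual two-line argument: it preserves products, since $\psi(\varphi(g_1 g_2)) = \psi(\varphi(g_1)\varphi(g_2)) = \psi(\varphi(g_1))\,\psi(\varphi(g_2))$, and it preserves the identity, since $\psi(\varphi(e_G)) = \psi(e_H) = e_I$.

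Next I would check that this genuine composite restricts on each generator to exactly the word prescribed by our syntax. By the definition of $\varphi$ in the syntax, for $g \in S_G$ we have $\varphi(g) = h_1 \cdots h_k$ with each $h_j \in S_H$. Applying $\psi$ and using that $\psi$ preserves products yields
\[
(\psi \circ \varphi)(g) = \psi(\varphi(g)) = \psi(h_1 \cdots h_k) = \psi(h_1) \cdots \psi(h_k),
\]
which is precisely the word our syntactic definition assigns to $g$. Hence $\psi \circ \varphi$ is a homomorphism whose values on $S_G$ agree with the syntactic map. Uniqueness is then immediate: any two homomorphisms out of $G = \langle S_G \mid R_G \rangle$ that agree on the generating set $S_G$ agree everywhere, since every element of $G$ is a product of generators and their inverses. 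Thus $\psi \circ \varphi$ is the unique homomorphic extension of the syntactic assignment.

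As an alternative route that stays closer to the paper's other homomorphism proofs, one could bypass the composition fact and instead invoke \Cref{lem:johnson1} directly: viewing the syntactic assignment as a function $f : S_G \to I$, it suffices to verify $f(r_j) = e_I$ for each relator $r_j \in R_G$. Because $f$ agrees with $\psi \circ \varphi$ on generators, evaluating the word $r_j$ gives $f(r_j) = \psi(\varphi(r_j))$; since $r_j = e_G$ holds in $G$ and $\varphi$ is a homomorphism we get $\varphi(r_j) = e_H$, whence $f(r_j) = \psi(e_H) = e_I$. The only genuinely delicate step in either approach is the middle one, where the word-level syntactic substitution must be shown to coincide with function composition; this hinges entirely on $\psi$ being a homomorphism, so that $\psi$ distributes over the concatenation $h_1 \cdots h_k$ representing $\varphi(g)$. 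Everything else is routine bookkeeping.
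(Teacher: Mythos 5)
Your proposal is correct, and it in fact contains two valid proofs. Your primary route --- establishing that the set-theoretic composite $\psi \circ \varphi$ is a homomorphism by the standard two-line argument, checking that it agrees on each $g \in S_G$ with the syntactic word $\psi(h_1)\cdots\psi(h_k)$, and concluding uniqueness because a homomorphism out of $G = \langle S_G \mid R_G\rangle$ is determined by its values on the generating set --- is genuinely different from the paper's proof, which never invokes the set-theoretic composite at all. The paper instead applies \Cref{lem:johnson1} twice: once (in the ``only if'' direction, to the homomorphism $\varphi$) to get $\varphi(r_j) = e_H$ for each relator $r_j \in R_G$, and then (in the ``if'' direction) to conclude that the generator assignment extends to a homomorphism, since $\psi(\varphi(r_j)) = \psi(e_H) = e_I$. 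That is precisely your ``alternative route,'' so you have also re-derived the paper's argument. As for what each buys: your primary approach is more self-contained and makes uniqueness explicit (the paper leaves uniqueness implicit in the ``extends to a homomorphism'' phrasing of \Cref{lem:johnson1}), whereas the relator-checking route matches the paper's uniform pattern of validating syntactic maps via \Cref{lem:johnson1} and sidesteps the one delicate step you correctly flag --- showing that word-level substitution coincides with function composition.
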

\begin{proof}
    Consider any relation in $R_G$. By Lemma \ref{lem:johnson1}, $\varphi(r_j) = e_H$, and since $\psi$ is a homomorphism, $\psi(e_H) = e_I$ then by Lemma \ref{lem:johnson1}  \((\psi \circ \varphi) \) is a homomorphism. 
\end{proof}
\subsection{Entailment of Group Actions}
\ImpProp*
  \begin{proof}
    The proofs are routine.  We only prove (1) and (4) here.  \\
    (1)  Consider any element $(g,h) \in G \times H$. By the definition of $\tuple{G \times H}{a \times b}{V_1\uplus V_2} $, all the variables in $V_1$ are acted on only by $\overline{a}_g$. Similarly, for $\tuple{G}{a}{V_1}$ the variables in \(V_1\) are acted on by \(\overline{a}_g\). The map $\mathbf{proj_1}: G \times H \rightarrow G$ such that $\mathbf{proj_1}((g,h)) = g$ is a homomorphism.
    Therefore
    \[
      \forall (g,h) \in G \times H, \forall \sigma, \forall v\in V_1, a\times \overline{b_{(g,h)}}(\sigma)(v) = \overline{a_{\mathbf{proj_1}(g,h)}}(\sigma)(v)
    \]
    Therefore,  $\tuple{G \times H}{a \times b}{V_1\uplus V_2}  \overset{ \mathbf{proj_1}}{\rightarrow} \tuple{G}{a}{V}$. \\
    (4) Consider any element $g \in G$. By the definition of lifting for every variable $v \in V_1$, $a_g(\sigma)(v) = \overline{a}_g(\sigma)(v)$. Then,
    \[
      \forall g \in G, \forall \sigma, \forall v\in V_1, \overline{a}(\sigma)(v) = \overline{a}_{\mathbf{eq
    }(g)}(\sigma)(v)
    \]
    Therefore, $\tuple{G}{a}{\mathit{Vars}}\overset{\mathbf{eq}}{\rightarrow} \tuple{G}{a}{V_1}$.
  \end{proof}
  \section{Symmetry Logic}
    \subsection{Programming Language Syntax}    
    \label{app:syntax}
\begin{figure}[t]
  \begin{alignat*}{2}
    \mathcal{EXP}  & \ni e &&\coloneqq x \in \mathit{Vars} \mid n \in \mathbb{Z} \mid f(e_1, \dots, e_n), f \in Ops                                                                        \\
    \mathcal{BEXP} & \ni b &&\coloneqq true \mid false \mid e_1 = e_2 \mid b_1 \land b_2 \mid\lnot b_1 \mid e_1 \leq e_2 \mid e_1 = e_2                                           \\
    \mathcal{COM}  & \ni C &&\coloneqq \textbf{skip} \mid x \coloneqq e  \mid C_1; C_2 \mid \textbf{for } {(t \coloneq 0; t < \alpha_1; t \coloneq t + \alpha_2)} \textbf{ do } C \mid \\& && \textbf{if }B\;\textbf{then}\;C_1\;\textbf{else}\;C_2 \mid \textbf{while } B \textbf{ do } C
  \end{alignat*}
  \caption{Core Programming Language Syntax}
  \label{fig:lang-syntax}
\end{figure}
The full programming language is presented in Figure \ref{fig:lang-syntax}.

\subsection{Programming Language Semantics}
\label{app:sem}
    We define three semantics functions $\mathcal{A}\sem{\_}: \mathcal{EXP} \times \Sigma \rightarrow \mathbb{Z}$,  $\mathcal{B}\sem{\_}: \mathcal{BEXP} \times \Sigma \rightarrow Bool$ and $\mathcal{C}\sem{\_}: \mathcal{COM} \times \Sigma \rightarrow \Sigma$, which define the denotational semantics of our programming language in the style of \citet{winskel1993formal}.  
The semantics assume that for every function symbol in the set $Ops$, there is a corresponding mathematical operation.
  We define these functions using structural induction.
  \begin{align*}
    \mathcal{A}\sem{n}_{\sigma}                   & = n         \\
    \mathcal{A}\sem{x}_{\sigma}                   & = \sigma(x) \\
    \mathcal{A}\sem{f(e_1 , \dots,  e_n}_{\sigma} & =
    f(\mathcal{A}\sem{e_1}_{\sigma} , \dots,  \mathcal{A}\sem{e_n}_{\sigma})
  \end{align*}
  For Boolean Expressions:
  \begin{align*}
    \mathcal{B}\sem{true}_{\sigma}          & = true                   \\
    \mathcal{B}\sem{false}_{\sigma}         & = false                  \\
    \mathcal{B}\sem{e_1 = e_2}_{\sigma}     & =
    \mathcal{A}\sem{e_1}_{\sigma}  = \mathcal{A}\sem{e_2}_{\sigma}     \\
    \mathcal{B}\sem{e_1 \leq e_2}_{\sigma}  & =
    \mathcal{A}\sem{e_1}_{\sigma}  \leq \mathcal{A}\sem{e_2}_{\sigma}  \\
    \mathcal{B}\sem{b_1 \land b_2}_{\sigma} & =
    \mathcal{B}\sem{e_1}_{\sigma}  \land \mathcal{B}\sem{e_2}_{\sigma} \\
    \mathcal{B}\sem{\lnot b}_{\sigma}       & =
    \lnot \mathcal{B}\sem{b}_{\sigma}
  \end{align*}
  For program commands:
  \begin{align*}
    \mathcal{C}\sem{\textbf{skip}}_{\sigma}                                                                    & = \sigma                                                                                                           \\
    \mathcal{C}\sem{x \coloneq e}_{\sigma}                                                                     & = \sigma[x \mapsto \mathcal{A}\sem{e}_{\sigma}]                                                                    \\
    \mathcal{C}\sem{c_0; c_1}                                                                                  & = \mathcal{C}\sem{c_1} \circ \mathcal{C}\sem{c_0}                                                                  \\
    \mathcal{C}\sem{\textbf{if } b \textbf{ then } C_1 \textbf{ else } C_1}_{\sigma} &= 
    \begin{cases}
        \sem{C_1}_{\sigma} & \mathcal{B}\sem{b}_{\sigma} = true \\ 
        \sem{C_2}_{\sigma} & \mathcal{B}\sem{b}_{\sigma} = false \\ 
    \end{cases} \\
    \mathcal{C}\sem{\textbf{for }(t \coloneq 0; t < \alpha; t \coloneq t + dt) \textbf{ do } C}_{\sigma} & = \underbrace{\mathcal{C}\sem{C; \dots ;C}_{\sigma}}_\text{n steps till $\mathcal{B}\sem{t < T}_{\sigma} = false$}
    \\
  \mathcal{C}\sem{\textbf{while } b \textbf{ do } C}
  &=
  \operatorname{lfp}(\Phi)
  \end{align*}
  where $\Phi$ is defined as
  \[
  \Phi(  \mathcal{C}\sem{\textbf{while } b \textbf{ do } C})_\sigma \;=\;
  \begin{cases}
       \mathcal{C}\sem{\textbf{while } b \textbf{ do } C}_{\bigl(\mathcal{C}\sem{C}(\sigma)\bigr)}, & \text{if } \mathcal{B}\sem{b}(\sigma) = \text{true},\\[6pt]
     \sigma, & \text{if } \mathcal{B}\sem{b}(\sigma) = \text{false}.
  \end{cases}
\]
Since we assume loops always terminate, we can safely assume the lfp exists.
\subsection*{The $\mathsf{POST}$ Transformer}
Recall the definition of \(\mathsf{POST}\).
\POSTDef*
    The following theorem shows that the map produced by the $\mathsf{POST}$ transformer is a group action.
 \PostGpActionThm*
  \begin{proof}
    Let $\tuple{G}{a^*}{\mathit{Vars}}$ denote the action $\mathsf{POST}((G,a_V),f)$.  Let the $\Sigma_f$ denote the set of program states where $v_1$ is in the range of $f$. Clearly, $\Sigma_f \subseteq \Sigma$. We begin by noting that for any $\sigma \in \Sigma_f$,  $\mathsf{POST}((G,a_V),f)(\sigma) \in \Sigma_f$. Therefore, we can consider two cases. Let $\sigma$ be any program state in $\Sigma$:
    \begin{enumerate}
      \item $\sigma \in \Sigma_f$: We need to verify that the group action axioms hold:
            \begin{enumerate}
              \item Compatibility Axiom:
                    \begin{align*}
                      a^*_h(a^*_g(\sigma)) & = a^*_h(v_1 \rightarrow f(\overline{a}_g(C^{-1}(\sigma))(v_1), \dots, \overline{a}_g(C^{-1}(\sigma))(v_k))                          \\& \dots, v_n \rightarrow \overline{a}_g(C^{-1}(\sigma))(v_n), \dots, v_m \rightarrow \alpha_m) \\
                                           & =v_1 \rightarrow f(\overline{a}_h(\overline{a}_g(C^{-1}(\sigma)))(v_1), \dots, \overline{a}_h(\overline{a}_g(C^{-1}(\sigma))(v_k))) \\& \dots, v_n \rightarrow \overline{a}_h(\overline{a}_g(C^{-1}(\sigma)))(v_n), \dots, v_m \rightarrow \alpha_m\\
                                           & = v_1 \rightarrow f(\overline{a}_{hg}(C^{-1}(\sigma))(v_1), \dots, \overline{a}_{hg}(C^{-1}(\sigma))(v_k))                          \\& \dots, v_n \rightarrow \overline{a}_{hg}(C^{-1}(\sigma))(v_n), \dots, v_m \rightarrow \alpha_m) \\
                                           & = a^*_{hg}(\sigma)
                    \end{align*}
              \item Identity Axiom:
                    \begin{align*}
                      a^*_e(\sigma) & = v_1 \rightarrow f(\overline{a}_e(C^{-1}(\sigma))(v_1), \dots, \overline{a}_e(C^{-1}(\sigma))(v_k)) \\& \dots, v_n \rightarrow \overline{a}_e(C^{-1}(\sigma))(v_n), \dots, v_m \rightarrow \alpha_m \\
                                    & = \sigma.
                    \end{align*}
            \end{enumerate}
      \item $\sigma \notin \Sigma_f$: Compatibility:
\[              a^*_h(a^*_g(\sigma))  = a^*_h(\sigma)     = \sigma         = a^*_{hg}(\sigma)
           \]
            Identity:
            \(
              a^*_e(\sigma) = \sigma.
           \)
    \end{enumerate}
  \end{proof}
\subsection{Omitted Structural Rules }
\label{app:rules}
We present the omitted free product rule for our logic. Intuitively, the $\textsc{FREE-PROD}$ rule `glues' together pre- and post-condition groups to create a new valid triple.
\begin{mathpar}
 \inferrule*[left={$\textsc{FREE-PROD}$},right={}]{\triple{(G_1,a_{\mathit{Vars}})}{C} {(H_1,b_{\mathit{Vars}})}{\varphi} \quad \triple{(G_2,c_{\mathit{Vars}})}{C} {(H_2,d_{\mathit{Vars}})}{\eta}}{(\triple{G_1 * G_2, a * c_{\mathit{Vars}})}{C}{ (H_1 * H_2, b *d_{\mathit{Vars}})}{\varphi* \eta}} 
\end{mathpar}
\subsection{Soundness Of Rules}

We show that the rules presented in Figures
\ref{fig:rules-stmt} and \ref{fig:rules-struct} are sound.
  Before we prove the soundness theorem, we prove an important lemma:
  \begin{restatable}{lemma}{ComposeMultiple}
    \label{lem: compose-multiple}
    Let $(H,b_{\mathit{Vars}})$, $(H_2,k_{\mathit{Vars}})$, $(G_1,a_{\mathit{Vars}})$, and $(G_2, i_{\mathit{Vars}})$ be group actions. If $\forall g_1 \in G_1$
    $\forall \sigma \in \Sigma$, $b_{h_1}(\sem{C}_{\sigma}) = C(a_{g_1}(\sigma))$ and $\forall g_2 \in G_2$, $k_{h_2}(\sem{C}_{\sigma}) = C(i_{g_2}(\sigma))$, then
    $b_{h_1}(k_{h_2}(\sem{C}_{\sigma})) = C(a_{g_1}(i_{g_2}(\sigma)))$
  \end{restatable}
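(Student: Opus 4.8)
The plan is to prove the identity by a direct two-step rewrite, chaining the two hypotheses in order. The essential observation is that both symmetry hypotheses are universally quantified over \emph{all} program states in $\Sigma$, so each one may be instantiated not only at the given starting state but also at any transformed state. No group-theoretic machinery is required beyond recognizing that $C(\tau)$ and $\sem{C}_\tau$ denote the same denotational object.

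Concretely, I would start from the left-hand side $b_{h_1}(k_{h_2}(\sem{C}_{\sigma}))$ and first apply the second hypothesis at the state $\sigma$, rewriting the inner term $k_{h_2}(\sem{C}_{\sigma})$ as $\sem{C}_{i_{g_2}(\sigma)}$; this yields $b_{h_1}(k_{h_2}(\sem{C}_{\sigma})) = b_{h_1}(\sem{C}_{i_{g_2}(\sigma)})$. Then I would apply the first hypothesis, but instantiated at the transformed state $i_{g_2}(\sigma) \in \Sigma$ rather than at $\sigma$, giving $b_{h_1}(\sem{C}_{i_{g_2}(\sigma)}) = \sem{C}_{a_{g_1}(i_{g_2}(\sigma))}$. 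Composing the two rewrites produces exactly $b_{h_1}(k_{h_2}(\sem{C}_{\sigma})) = \sem{C}_{a_{g_1}(i_{g_2}(\sigma))}$, which is the claim.

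The only point that requires care — the closest thing to an obstacle — is the instantiation of the first hypothesis at the \emph{transformed} state $i_{g_2}(\sigma)$ rather than at $\sigma$. This step is sound precisely because the hypothesis holds for every $\sigma \in \Sigma$, and it is what lets the argument avoid assuming any commutation or compatibility between the actions $a$ and $i$ (or between $b$ and $k$). This independence is exactly why the lemma is well-suited for gluing together two separately established symmetry properties, as needed in the soundness argument for the $\textsc{FREE-PROD}$ rule.
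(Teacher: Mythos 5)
Your proposal is correct and follows essentially the same route as the paper's proof: rewrite the inner term using the second hypothesis at $\sigma$, then apply the first hypothesis at the transformed state $i_{g_2}(\sigma)$ (the paper just names it $\sigma'$), and chain the equalities. The key point you flag — that the universal quantification over $\Sigma$ licenses instantiating the first hypothesis at the transformed state — is exactly the substitution the paper's proof performs.
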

  \begin{proof}
  First, we observe that:
\[        b_{h_1}(k_{h_2}(\sem{C}_{\sigma}))      =  b_{h_1}(\sem{C}_{(i_{g_2}}(\sigma)))\]
Now if we let  $i_{g_2}(\sigma) = \sigma'$, then:
    \(
        b_{h_1}(\sem{C}_{(i_{g_2}(\sigma)}))   =  b_{h_1}(\sem{C}_{\sigma'})       = \sem{C}_{a_{g_1}(\sigma')}           =  \sem{C}_{a_{g_1}(i_{g_2}(\sigma))}    
  \), as desired
  \end{proof}

\Soundness*

  \begin{proof}
    We proceed by induction on the structure of the derivation. \\    \textbf{Base Cases: } \\

    \fbox{\textbf{SKIP :}} By the semantics of $\mathbf{skip}$,
    for any program state $\sigma$, $\sem{\mathbf{skip}}_{\sigma} = \sigma$, and $
      \sem{\mathbf{skip}}_{a_g(\sigma)}= a_g(\sigma)$.  Therefore, \[\forall g \in G, \forall \sigma \in \Sigma, \forall v \in V,   a_{\mathbf{eq}(g)}(\sem{\mathbf{skip}}_\sigma)(v) = \sem{\mathbf{skip}}_{a_g(\sigma)}(v)\] This implies
    $\triple{\tuple{G}{a}{V}}{\mathbf{skip}}{\tuple{G}{a}{V}}{\mathbf{eq}}$.  \\

    \fbox{\textbf{ASSGN :}} We consider assignment statements of the form $v_1 \coloneq f(v_1, \dots, v_k)$.
    We begin by noting that it suffices to consider the program state in $\Sigma_f$, which is the set of states reachable after the assignment command. Let $g$ be any group element in $G$.
    Consider $\mathsf{POST}(\tuple{G}{a}{V_1}, f)_g(\sem{C}_\sigma)(v_1)$. By the definition of $\mathsf{POST}$ and lifting,
    \begin{align*}
      \overline{\mathsf{POST}(\tuple{G}{a}{V_1}, f)}_g(\sem{C}_\sigma)(v_1) & = f(\overline{a}_g(C^{-1}(\sem{C}_{\sigma}))(v_1), \dots, \overline{a}_g(C^{-1}(\sem{C}_{\sigma}))(v_k))(v_1) \\
                                                                            & = f(\overline{a}_g(\sigma)(v_1), \dots, \overline{a}_g(\sigma)(v_k))(v_1)
    \end{align*}
    At the same time,
    \begin{align*}
      \sem{v_1 \coloneq f(v_1, \dots, v_k)}_{\overline{a}_g(\sigma)}(v_1) & = f(\overline{a}_g(\sigma)(v_1), \dots, \overline{a}_g(\sigma)(v_k))(v_1)
    \end{align*}
    Therefore,
    \[
      \overline{ \mathsf{POST}(\tuple{G}{a}{V_1}, f)_g}(v_1) =  \sem{v_1 \coloneq f(v_1, \dots, v_k)}_{\overline{a}_g(\sigma)}(v_1)
    \]
    Now, consider any $v_i \in \mathit{Vars} \setminus \{v_1\}$. If $v_i \in V_1$
    \begin{align*}
      \overline{\mathsf{POST}(\tuple{G}{a}{V_1}, f)}_g(\sem{C}_\sigma)(v_i) & = \overline{a}_g(C^{-1}(\sem{C}_{\sigma}))(v_i)                        \\
                                                                            & = \overline{a}_g(\sigma)(v_i)                                          \\
                                                                            & =  \sem{v_1 \coloneq f(v_1, \dots, v_k)}_{\overline{a}_g(\sigma)}(v_i)
    \end{align*}
    If $v_i \notin V_1$
    \begin{align*}
      \overline{\mathsf{POST}(\tuple{G}{a}{V_1}, f)}_g(\sem{C}_\sigma)(v_i) & = \sigma(v_i)                                                                                                                \\
                                                                            & = \overline{a}_g(\sigma)(v_i)                                          &  & \text{(The lifting leaves } v_i \text{ unchanged)} \\
                                                                            & =  \sem{v_1 \coloneq f(v_1, \dots, v_k)}_{\overline{a}_g(\sigma)}(v_i)
    \end{align*}
    Therefore, $\forall g \in G, \forall \sigma \in \Sigma$,
    \[\forall v_i \in V, \overline{\mathsf{POST}(\tuple{G}{a}{V_1}, f)}_g(v_i) =  \sem{v_1 \coloneq f(v_1, \dots, v_k)}_{\overline{a}_g(\sigma)}(v_i)\]
   Thus, the triple 
    \[\triple{(G, a_{V_1})}{ v_1 \coloneq f(v_1, \dots , v_k)} {\mathsf{POST}(\tuple{G}{a}{V_1}, f)}{\mathbf{eq}}\] holds.  \\

    \fbox{\textbf{CONST :}} By the side condition, for all variables $v_i \in V$, $ \forall \sigma \in \Sigma, \sem{C}_{\sigma}(v_i) = \sigma(v_i)$. Therefore, for any group element $g \in G$, and variable $v_i \in V$, $\overline{a}(\sem{C}_{\sigma})(v_i) = \overline{a}(\sigma)(v_i)$. Meanwhile, $\sem{C}_{\overline{a}_g(\sigma)}(v_i) =  \overline{a}(\sigma)(v_i)$. Therefore,
    \[
      \forall g \in G, \forall \sigma \in \Sigma,
      \forall v_i \in V, \overline{a}_g(\sem{C}_{\sigma})(v_i) = \sem{C}_{\overline{a}_g(\sigma)}(v_i)
    \]
    As before, the map $\varphi(g) = g$ is a homomorphism. Therefore, the triple $\triple{(G, a_{V})}{ C }{(G, a_{V})}{\mathbf{eq}}$ is valid. \\

    \fbox{\textbf{SEM-ASSGN: }} This rule is sound since the premise is the definition of validity.
    \\

    \fbox{\textbf{ID: }} Consider any variable in $\mathit{Vars}\setminus(V \cup \mathit{Vars}(C))$. This variable will be unchanged by the group action, and the command, and therefore it remains unchanged no matter what $(G, a_{V})$ is. Thus, the rule is sound.
    \\
    \textbf{Inductive cases :}\\

    \fbox{\textbf{SEQ :} }
    We need to show that there exists a homomorphism $\alpha: G \rightarrow I$ such that
    \[\forall g \in G, \forall \sigma \in \Sigma,
      \forall v \in V_2 \;\overline{c}_{\alpha(g)}(\sem{C_1; C_2}_{\sigma})(v) = \sem{C_1; C_2}_{\overline{a}_{g}(\sigma)}(v)\]
    By the induction hypothesis, $\vDash (G, a_{V_1}) \;C_1\;(H, b_{\mathit{Vars}})$ and $ \vDash (H, b_{\mathit{Vars}})\; C_2\;(I, c_{V_2})$.
    Therefore there exists a homomorphism $\varphi$ such that,
    \[
      \forall g \in G, \forall \sigma \in \Sigma\; \forall v \in \mathit{Vars},
      b_{\varphi(g)}(\sem{C_1}_{\sigma})(v) = \sem{C_1}_{\overline{a}_g(\sigma)}(v)
    \]
    Since all the variables in the domain of the program state are related, this is equivalent to saying
    \[
      \forall g \in G, \forall \sigma \in \Sigma\;
      b_{\varphi(g)}(\sem{C_1}_{\sigma}) = \sem{C_1}_{\overline{a}_g(\sigma)}
    \]
    Also by the induction hypothesis, there exists a homomorphism $\eta$ such that,
    \[
      \forall h \in H, \forall \sigma \in \Sigma\; \forall v \in V_2,
      i_{\eta(h)}(\sem{C_2}_{\sigma})(v) = \sem{C_2}_{b_h(\sigma)}(v)
    \]
    By the second premise,
    \begin{equation}
      \forall h \in H, \forall \sigma \in \Sigma\; \forall v \in V_2,
      i_{\eta(h)}(\sem{C_2}_{\sem{C_1}_\sigma})(v) = \sem{C_2}_{b_h(\sem{C_1}_\sigma)}(v)
    \end{equation}

    Now, consider any $g \in G$. By the first premise, for any program states $\sigma \in \Sigma$,
    \[
      \sem{C_2}_{\overline{a}_g(\sem{C_1}_\sigma)} = \sem{C_2}_{b_{\varphi(g)}(\sem{C_1}_\sigma)}
    \]
    Therefore, by (2)
    \[
      \forall g \in G, \forall \sigma \in \Sigma\; \forall v \in V_2,
      i_{\eta(\varphi(g))}(\sem{C_2}_{\sem{C_1}_\sigma})(v) = \sem{C_2}_{\overline{a}_g(\sem{C_1}_\sigma)}(v)
    \]
    Now, if we let $\alpha$ be the composition of $\eta \circ \varphi$, then $\alpha$ is also a homomorphism. Therefore,
    \[\forall g \in G, \forall \sigma \in \Sigma,
      \forall v \in V_2\;\overline{c}_{\alpha(g)}(\sem{C_1; C_2}_{\sigma})(v) = \sem{C_1; C_2}_{\overline{a}_{g}(\sigma)}(v).\]

    \fbox{\textbf{IF :}}
    By the induction hypothesis, $(G, a_{V_1}) \overset{\mathbf{e*}}{\rightarrow}
    (E, e_{\{b\}})$. This means that for any group element $g \in G$, and program state
    $\sigma \in \Sigma$, $\sem{x}_{\sigma} = \sem{x}_{a_g(\sigma)}$. 
    We consider two cases:  \begin{enumerate}
        \item If  $\sem{x}_{\sigma} = \sem{x}_{a_g(\sigma)} = true$: \\
        $\sem{\mathbf{if }\;x\;\mathbf{then}\;C_1\;\mathbf{else}\;C_2}_{\sigma} = \sem{C_1}_{\sigma}$, and  $\sem{\mathbf{if }\;x\;\mathbf{then}\;C_1\;\mathbf{else}\;C_2}_{a_g(\sigma)} = \sem{C_1}_{a_g(\sigma)}$. Then by the induction hypothesis, $\sem{\mathbf{if }\;x\;\mathbf{then}\;C_1\;\mathbf{else}\;C_2}_{a_g(\sigma)} = b_{\phi(g)}(\sem{\mathbf{if }\;x\;\mathbf{then}\;C_1\;\mathbf{else}\;C_2}_{\sigma})$.
        \item  If  $\sem{x}_{\sigma} = \sem{x}_{a_g(\sigma)} = false$: \\
        $\sem{\mathbf{if }\;x\;\mathbf{then}\;C_1\;\mathbf{else}\;C_2}_{\sigma} = \sem{C_2}_{\sigma}$, and  $\sem{\mathbf{if }\;x\;\mathbf{then}\;C_1\;\mathbf{else}\;C_2}_{a_g(\sigma)} = \sem{C_2}_{a_g(\sigma)}$. Then by the induction hypothesis,  $\sem{\mathbf{if }\;x\;\mathbf{then}\;C_1\;\mathbf{else}\;C_2}_{a_g(\sigma)} = b_{\phi(g)}(\sem{\mathbf{if }\;x\;\mathbf{then}\;C_1\;\mathbf{else}\;C_2}_{\sigma})$.
    \end{enumerate}
which gives us the desired result. 

\fbox{\textbf{WHILE :}}
By the I.H, we know that $\vDash \triple{\tuple{G}{a}{V}}{C}{\tuple{G}{a}{V}}{\varphi}$. 
Now, since the variables modified by the command $C$ are all in $V$, every variable in $\mathit{Vars} \setminus V$ is unchanged. Further, $\tuple{G}{\overline{a}}{\mathit{Vars}}$ does not change any of the variables in $\mathit{Vars} \setminus V$ either. Therefore, the triple $\triple{\tuple{G}{\overline{a}}{\mathit{Vars}}}{C}{\tuple{G}{\overline{a}}{\mathit{Vars}}}{\varphi}$ and the triple $\triple{\tuple{G}{a}{V}}{C}{\tuple{G}{\overline{a}}{\mathit{Vars}}}{\varphi}$ are both sound. 
Next,  we know $(G, a_{V}) \overset{\mathbf{e*}}{\rightarrow}
    (E, e_{\{x\}})$. This means that for any group element $g \in G$, and program state
    $\sigma \in \Sigma$, $\sem{x}_{\sigma} = \sem{x}_{a_g(\sigma)}$. Since $x \in V$ the guard is unchanged.
By assumption, the loop terminates. Let it terminate after $n$ iterations.
 We note that $\mathbf{eq}^n = \mathbf{eq}$. 
 We know that for any $n$ iterations, by sequencing, 
 \[
 a_{\mathbf{eq^n(g)}(\sigma)}(\sem{C_1;\dots C_1}_{\sigma}) = \sem{C_1;\dots C_1}_{a_g(\sigma)}
 \]
 which implies $\triple{(G,a_{V})}{\textbf{while } B \textbf{ do } C}{(G,a_V)}{\mathbf{eq}}$
 
    \fbox{\textbf{FOR :}}
    Since the variables $t$, $\alpha_1$, and $\alpha_2$ are unchanged, the loop guard is unchanged by the group action. We also know that $\vDash \triple{\tuple{G}{a}{V}}{C}{\tuple{G}{a}{V}}{\varphi}$. This means that there $\exists \phi: G \rightarrow H$ such that for all $g \in G, \forall \sigma \in \Sigma \; \overline{a}_{\phi(g)}(\sem{C}_{\sigma}) = \sem{C}_{\overline{a}_g(\sigma)}$.
    Since the variables modified by the command $C$ are all in $V$, every variable in $\mathit{Vars} \setminus V$ is unchanged. Further, $\tuple{G}{\overline{a}}{\mathit{Vars}}$ does not change any of the variables in $\mathit{Vars} \setminus V$ either. Therefore, the triple $\triple{\tuple{G}{\overline{a}}{\mathit{Vars}}}{C}{\tuple{G}{\overline{a}}{\mathit{Vars}}}{\varphi}$ and the triple $\triple{\tuple{G}{a}{V}}{C}{\tuple{G}{\overline{a}}{\mathit{Vars}}}{\varphi}$ are both sound. By assumption, the loop terminates. Let it terminate after $n$ iterations. Since $\alpha_1$ is a constant, and $t = t+ \alpha_2$ remains unchanged, the loop will iterate for $n$ iteration for all program states.
    Since the loop terminates in a finite number of steps, by the repeated application of the $\textsc{SEQ}$ rule,
    $\triple{\tuple{G}{a}{V}}{C;. \dots, C}{\tuple{G}{\overline{a}}{\mathit{Vars}}}{\varphi^n}$ is sound.  By $\textsc{CONS-1}$, $\triple{\tuple{G}{a}{V}}{C;\dots, C}{\tuple{G}{a}{V}}{\varphi^n}$ is sound. By the semantics of $\textsc{FOR}$,  \((G, a_{V})\; \textbf{for } {(t \coloneq 0; t < \alpha_1; t \coloneq t + \alpha_2)}_{\varphi^n} \textbf{ do } C \;(G, a_{V})\) is sound.

    \fbox{\textbf{LIFT :}} This rule is sound trivially by the definition of validity.

    \fbox{\textbf{DIR-PROD :}}   By the induction hypothesis and inversion, we know $\vDash \triple{(G, a_{V_1})}{C_1}{(H, b_{V_2})}{\varphi}$ and
    $\vDash (G, a_{V_1})\;C_{2\alpha}\;(I, c_{V_3})$.
    Therefore,   \[\forall \sigma \in \Sigma, \forall v \in V_2, \overline{b}_{\varphi(g)}(\sem{C}_{\sigma})(v)= \sem{C}_{\overline{a}_g(\sigma)}(v)\] and
    \[\sigma \in \Sigma, \forall v \in V_3,  \overline{c}_{\alpha(g)}(\sem{C}_{\sigma})(v)= \sem{C}_{\overline{a}_g(\sigma)}(v)\]
    Consider any $v \in V_2 \cup V_3$, and the group element $(h, i)$ in the group $(H \times I)$.
    Since $V_2$ and $V_3$ are disjoint, $v$ is either in $V_2$ or $V_3$.
    By the definition of the group action $(H \times I, b \times c)$, for any program state in $\Sigma$, the program state can be split
    into three disjoint parts. The first part has domain $V_1$, and $b_h$ acts on this.
    The second part has domain $V_2$, and $c_i$ acts on this. The third is the rest of $\sigma$, which is mapped to itself.
    Now, by the first premise if $v$ is in $V_2$,
    $\overline{b}_{\varphi(g)}(\sem{C}_{\sigma})(v)= \sem{C}_{\overline{a}_g(\sigma)}(v)$, and by the second premise,
    if $v$ is in $V_3$,
    $\overline{c}_{\alpha(g)}(\sem{C}_{\sigma})(v)= \sem{C}_{\overline{a}_g(\sigma)}(v)$.
    Since the action of $(H \times I, b \times c)$ is defined such that if
    if $v$ is in $V_2$,  $ (b \times c)_{(h,i)}(\sigma) = b_h(\sigma)$ and similarly for if $V \in V_3$. Additionally, we the map $\eta: G \rightarrow H \times I$ such that $\eta(g) = (\varphi(g), \alpha(g))$ is a homomorphism. Further, $\eta = \varphi 
    \times \alpha$.
    Therefore,
    \[\forall g \in G, \exists \in H \times I \text{ such that } \forall \sigma \in \Sigma \forall v \in V_2 \cup V_3, \overline{(b \times c)}_{\alpha(g)}(\sem{C}_{\sigma})(v)= \sem{C}_{\overline{a}_g(\sigma)}(v)
    \]

    \fbox{\textbf{FREE-PROD : }}
    Consider any element of the group $G * I$. Suppose it is of the form $\alpha = g_1i_2\dots i_n$. Then the action of  $a * c_\alpha(\sigma) = a_{g_1}\cdot c_{i_1} \cdots i_n (\sigma) $.
    Let $h_1d_1, \dots, d_n$ be the word of group elements such that
    $b_{h_1}(\sem{C}_{\sigma}) = \sem{C}_{a_{g_{1}}(\sigma)}$ and $j_{d_1}(\sem{C}_{\sigma}) = \sem{C}_{c_{i_{1}}(\sigma)}$, and so on. Then by Lemma \ref{lem: compose-multiple}, for any $v \in V$,
    $b * d_{h_1d_1, \dots, d_n}(\sem{C}_\sigma)(v)= \sem{C}_{a*c_{\alpha}(\sigma)}(v)$. A symmetric argument holds for an element $\alpha' = i_1g_1\dots g_n$. Finally, we have a homomorphism from $G_1 \rightarrow H_1 * H_2$ by composing the homomorphism from $G_1$ to $H_1$ with the injection of $H_1$ into $H_2$. This gives a homomorphism from $G_1$ into $H_1 * H_2$, and similarly a homomorphism from $G_2 \rightarrow H_2$, which is compatible with the original homomorphism, i.e, the homomorphism is exactly $\varphi * \eta$. Therefore, the triple is sound.

    \fbox{\textbf{CONSEQUENCE 1: }}
    By the definition of $\rightarrow$, there exists a homomorphism $\varphi: H \rightarrow I$ such that for all $h \in H, \forall \sigma \in \Sigma, \forall v \in V_2, \overline{b}_h(\sigma)(v) = i_{\varphi(h)}(\sigma)(v)$. By the induction hypothesis, there exists a homomorphism $\alpha: G \rightarrow H$ such that for all $g \in G, \forall \sigma \in \Sigma, \forall v \in V_1, \overline{b_{\alpha(g)}}(\sigma)(v) = \sem{C}_{\overline{a}_g(\sigma)}(v)$. Since $V_1 \subseteq V_2$,
    $\forall g \in G, \forall \sigma \in \Sigma, \forall v \in V_2, \overline{i_{\varphi(\alpha(g))}}(\sigma)(v) = \sem{C}_{\overline{a}_g(\sigma)}(v)$. Therefore, the rule is sound. \\

    \fbox{\textbf{CONSEQUENCE 2: }}
    By the definition of $\rightarrow$, there exists a homomorphism $\varphi: I \rightarrow G$ such that for all $g \in G, \forall \sigma \in \Sigma, \forall v \in V_1, \overline{a}_g(\sigma)(v) = \overline{c}_{\varphi(g)}(\sigma)(v)$. By the definition of lifting and since both the actions are on the same set, for all $g \in G, \forall \sigma \in \Sigma, \overline{a}_h(\sigma) = \overline{c}_{\varphi(g)}(\sigma)$

    By the induction hypothesis, there exists a homomorphism $\alpha: G \rightarrow H$ such that for all $g \in G, \forall \sigma \in \Sigma, \forall v \in V_2, \overline{b_{\alpha(g)}}(\sigma)(v) = \sem{C}_{\overline{a}_g(\sigma)}(v)$. Therefore,
    $\forall g \in G, \forall \sigma \in \Sigma, \forall v \in V_2, \overline{b_{(\alpha(\varphi(i)))}}(\sigma)(v) = \sem{C}_{\overline{c}_i(\sigma)}(v)$. Therefore, the rule is sound.

  \end{proof}
    \section{Weakest Pre-Condition}
    \label{app: wp}
  
Recall that there are two variants of the weakest pre-condition we consider. We first present the weakest pre-condition, and then prove its existence, before moving on to the weakest faithful pre-condition. The weakest pre-condition is the weakest pre-condition among all faithful group actions, but the weakest faithful pre-condition is the weakest pre-condition among all pre-conditions, provided the post-condition is faithful. We first show that a faithful post-condition implies that the homomorphism \(\varphi\) is unique.
\FaithfulUnique*
  \begin{proof}
  Let $\varphi: G \rightarrow H$ be a map such that $\varphi(g) = h$ and $b_h(\sem{C}_{\sigma}) = \sem{C}_{a_g(\sigma)}$. 
    The uniqueness of \(\varphi\) follows from the fact that $(H,b_{Vars})$ is faithful, so every $h \in H$ acts differently. Therefore, for every group element \(g \in G\), there is exactly one \(h \in H\) such that \(\varphi(g) = h\) would make the triple sound. 
  \end{proof}

Now, we formally define the weakest pre-condition.
\begin{definition}
  \label{def: wp-pre-faithful}
  The weakest pre-condition for a post-condition, $\tuple{H}{b}{\mathit{Vars}}$ and a  command $C$, is a group action, denoted by $\mathsf{wp}(\tuple{H}{b}{\mathit{Vars}}, C)$, such that if
  $\triple{(\mathsf{wp}(\tuple{H}{b}{\mathit{Vars}}, C))}{C}{\tuple{H}{b}{\mathit{Vars}}}{\varphi}$ is valid, and
  for any faithful group action $\tuple{I}{c}{\mathit{Vars}}$, if $\triple{\tuple{I}{c}{\mathit{Vars}}}{C}{\tuple{H}{b}{\mathit{Vars}}}{\tau}$ is valid then there exists a homomorphims \(\eta\) such that $\tuple{I}{c}{\mathit{Vars}} \overset{\eta}{\rightarrow} \mathsf{wp}(\tuple{H}{b}{\mathit{Vars}}, C)$ and $\tau = \varphi \circ \eta$.
\end{definition}

  At a high level, to construct the weakest pre-condition, we will take all the valid faithful pre-conditions and put them together in a group.
  To formalize this construction, we will use a group construction called the \emph{free product}.
  We begin with some basic group-theoretic definitions that generalize the definition of the free product we saw in Section \ref{sec:assertions}.
  \begin{definition}
    \label{def: free-product}
    \cite{rotman2012introduction} Let $\{A_i : i \in I\}$ be a family of groups. A \emph{free product} of the $A_i$ is a group $P$ and a family of homomorphisms $j_i: A_i \rightarrow P$ such that, for every group $G$ and every family of homomorphism $f_i: A_i \rightarrow G$, there exists a unique homomorphism $\varphi: P \rightarrow G$ with $\varphi j_i = f_i$ for all $i$.
    \[\begin{tikzcd}[/tikz/ampersand replacement=\&]
        \& P \\
        {A_i} \& G
        \arrow["\varphi", from=1-2, to=2-2]
        \arrow["{j_i}", from=2-1, to=1-2]
        \arrow["{f_i}"', from=2-1, to=2-2]
      \end{tikzcd}\]
  \end{definition}
  Without loss of generality, for the rest of this section, we assume that each $A^{\#}_i = A_i \setminus \{e\}$ are pairwise disjoint. A free product is unique up to isomorphism:
  \begin{restatable} {theorem}{FPUnique}
    \label{thm: fp-unique}
    \cite{rotman2012introduction}
    Let $\{A_i : i \in I\}$ be a family of groups. If $P$ and $Q$ are each a free product of $A_i$, then $P \cong Q$.
  \end{restatable}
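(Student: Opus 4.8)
The plan is to give the standard universal-property uniqueness argument: whenever an object is characterized by a universal mapping property, any two objects satisfying that property are canonically isomorphic, and the isomorphism is forced by the uniqueness clause of the property itself. Here the relevant property is exactly Definition~\ref{def: free-product}, so the whole proof is abstract and uses nothing about the internal structure of the groups $A_i$.

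Concretely, suppose $P$ with homomorphisms $j_i : A_i \rightarrow P$ and $Q$ with homomorphisms $k_i : A_i \rightarrow Q$ are both free products of the family $\{A_i : i \in I\}$. First I would apply the universal property of $P$ to the family $k_i : A_i \rightarrow Q$ (taking $G = Q$ and $f_i = k_i$ in Definition~\ref{def: free-product}); this yields a unique homomorphism $\varphi : P \rightarrow Q$ with $\varphi \circ j_i = k_i$ for all $i$. Symmetrically, applying the universal property of $Q$ to the family $j_i : A_i \rightarrow P$ gives a unique homomorphism $\psi : Q \rightarrow P$ with $\psi \circ k_i = j_i$ for all $i$.

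The crux of the argument is to show these are mutually inverse. Consider the composite $\psi \circ \varphi : P \rightarrow P$. For each $i$ we compute $(\psi \circ \varphi) \circ j_i = \psi \circ (\varphi \circ j_i) = \psi \circ k_i = j_i$. Thus $\psi \circ \varphi$ is a homomorphism $P \rightarrow P$ that commutes with all the $j_i$. But the identity map $\mathrm{id}_P$ trivially has the same property. Now I invoke the \emph{uniqueness} clause of the universal property of $P$, applied with $G = P$ and $f_i = j_i$: there is exactly one homomorphism $P \rightarrow P$ commuting with the $j_i$, so $\psi \circ \varphi = \mathrm{id}_P$. The identical argument with the roles of $P$ and $Q$ exchanged gives $\varphi \circ \psi = \mathrm{id}_Q$. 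Hence $\varphi$ is an isomorphism with inverse $\psi$, and $P \cong Q$.

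I expect no genuine obstacle here — the only subtle point, and the step worth stating carefully, is that the uniqueness half of the universal property must be used \emph{with the target group taken to be the free product itself} (first $P$, then $Q$) and with the structure maps $j_i$ (resp.\ $k_i$) as the chosen family; it is tempting to overlook that the identity map is the comparison against which $\psi \circ \varphi$ is shown to agree. Everything else is routine composition of homomorphisms.
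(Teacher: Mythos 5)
Your proof is correct: this is the standard universal-property uniqueness argument, and both directions (constructing $\varphi$ and $\psi$ from the two universal properties, then using the uniqueness clause against $\mathrm{id}_P$ and $\mathrm{id}_Q$) are carried out properly. The paper itself gives no proof of this theorem, it simply cites \cite{rotman2012introduction}, and your argument is precisely the standard one found in that reference, so there is nothing to reconcile.
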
  
  Because of Theorem \ref{thm: fp-unique}, we can speak of \emph{the} free product.
  Every element of a free product group is equivalent to a finite sequence made by concatenating elements from each of the $A_i$, and multiplying adjacent elements from the same group.  In other words, every element is of the form $g_1g_2\dots g_n$, where each $g_k$ is an element of $A_i$. More formally:
  \begin{restatable}{lemma}{NormalFP}
    \label{lem:normal-fp}
    Every nontrivial element of a free product of $\{A_i : i \in I\}$ can be
    written uniquely in the form
    $x_1x_2\cdots x_n$
    where each $x_j$ is a nontrivial element of some $A_i$
    and consecutive terms lie in
    different groups.
  \end{restatable}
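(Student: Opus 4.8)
The plan is to prove existence and uniqueness separately, with essentially all of the content lying in uniqueness, which I would establish via the standard permutation-representation argument (the van der Waerden trick).

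For existence, I would argue by a reduction procedure. Using the universal property (\Cref{def: free-product}) applied to the inclusions, every element of the free product $P$ is a product of images $j_i(a)$ for various $a \in A_i$. Given any such product, I would repeatedly apply two simplifications: delete any factor equal to the identity, and replace any two adjacent factors lying in the same group $A_i$ by their product computed inside $A_i$. Each step strictly decreases the number of factors, so the process terminates in a word $x_1 \cdots x_n$ with every $x_j \in A_i^{\#}$ for some $i$ and with consecutive terms in distinct groups, giving existence.

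For uniqueness, I would let $W$ denote the set of all such reduced sequences (including the empty sequence $\varepsilon$). For each index $i$ and each $g \in A_i$, I would define a map $g^* : W \to W$ on a reduced word $w = x_1 \cdots x_n$ by cases: if $g = e$ then $g^*(w) = w$; if $w = \varepsilon$ or $x_1 \notin A_i$ then $g^*(w) = g\, x_1 \cdots x_n$ (prepend); if $x_1 \in A_i$ and $g x_1 \neq e$ then $g^*(w) = (g x_1)\, x_2 \cdots x_n$ (merge); and if $x_1 \in A_i$ and $g x_1 = e$ then $g^*(w) = x_2 \cdots x_n$ (cancel). I would then verify that $g \mapsto g^*$ is a homomorphism $A_i \to Sym(W)$, i.e. $e^* = \mathrm{id}$ and $(gh)^* = g^* \circ h^*$, from which bijectivity of each $g^*$ follows immediately since $(g^{-1})^*$ inverts it. By the universal property of the free product, these homomorphisms assemble into a single homomorphism $\Phi : P \to Sym(W)$ with $\Phi \circ j_i = (g \mapsto g^*)$. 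The key computation is that for a reduced word $x_1 \cdots x_n$ with $x_j \in A_{i_j}$, applying $\Phi(j_{i_1}(x_1) \cdots j_{i_n}(x_n)) = x_1^* \circ \cdots \circ x_n^*$ to $\varepsilon$ returns exactly the sequence $x_1 \cdots x_n \in W$: evaluating right-to-left, $x_n^*(\varepsilon) = x_n$, and each subsequent $x_j^*$ merely prepends $x_j$, since consecutive terms lie in different groups so no merge or cancellation can occur. Hence distinct reduced words represent elements of $P$ that act differently on $\varepsilon$, and are therefore distinct in $P$, yielding uniqueness.

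I expect the main obstacle to be the verification that $(gh)^* = g^* \circ h^*$ for $g, h \in A_i$, which requires a somewhat tedious case analysis on the first letter of $w$ (whether it lies in $A_i$, and whether the relevant products collapse to the identity). Everything else — existence, bijectivity, and the final evaluation at $\varepsilon$ — is routine once this homomorphism property is secured.
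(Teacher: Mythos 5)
Your proof is correct, but there is nothing in the paper to compare it against: the paper states this lemma as a standard fact (the Normal Form Theorem for free products) and defers the proof to a group-theory textbook \cite{rotman2012introduction}. What you have written out is exactly the canonical textbook argument --- existence by repeatedly deleting identity factors and merging adjacent factors from the same group, uniqueness by van der Waerden's permutation-representation trick --- so you have filled in the citation rather than found a genuinely different route, and the case analysis you flag for $(gh)^* = g^* \circ h^*$ does go through. One step you gloss over deserves a line, since the paper defines the free product purely by the universal property (Definition~\ref{def: free-product}): the claim that every element of $P$ is a product of elements $j_i(a)$. To see it, let $P' \leq P$ be the subgroup generated by $\bigcup_i j_i(A_i)$, let $\iota : P' \to P$ be the inclusion, and let $j_i' : A_i \to P'$ be the corestrictions of the $j_i$; the universal property yields $\varphi : P \to P'$ with $\varphi \circ j_i = j_i'$, and then $\iota \circ \varphi$ and $\mathrm{id}_P$ both extend the family $\{j_i\}$, so the uniqueness clause forces $\iota \circ \varphi = \mathrm{id}_P$ and hence $P' = P$. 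With that added, your argument is complete, and as a byproduct your evaluation-at-$\varepsilon$ computation applied to length-one words also proves the paper's Lemma~\ref{lem:fp-inj} that each $j_i$ is injective.
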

  Since the free product is unique, it suffices for us to consider elements of the free product to be words of the form described in Lemma \ref{lem:normal-fp}.
  We remark here that Definition \ref{def: free-product} is a generalization of the definition presented earlier in Section \ref{sec:assertions} to an infinite number of groups.
  \iflong
    We now state a series of important lemmas. We refer the reader to a standard group theory textbook like \cite{rotman2012introduction} for a proof.
    \begin{lemma}
      \label{lem:fp-exists}
      Given a family $\{A_i : i \in I\}$ of groups, a free product exists.
    \end{lemma}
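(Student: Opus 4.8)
The plan is to construct the free product explicitly and then verify the universal property of Definition~\ref{def: free-product}. The one genuine difficulty in any direct construction is proving associativity of multiplication on words, so I would sidestep it by realizing the free product inside a symmetric group, where associativity is automatic (the van der Waerden trick).

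First I would let $W$ denote the set of \emph{reduced words}: finite sequences $(x_1, \dots, x_n)$ with $n \geq 0$ such that each $x_j$ lies in some $A_{i_j} \setminus \{e\}$ and $i_j \neq i_{j+1}$ for every $j$, with the empty sequence playing the role of the identity. For each index $i$ and each $a \in A_i$, I would define a map $\lambda_i(a) : W \to W$ that prepends $a$ and reduces: on a word $w = (x_1, \dots, x_n)$ it returns $w$ itself if $a = e$; it returns $(a, x_1, \dots, x_n)$ if $a \neq e$ and either $w$ is empty or $x_1 \notin A_i$; it returns $(a x_1, x_2, \dots, x_n)$ if $x_1 \in A_i$ and $a x_1 \neq e$; and it returns $(x_2, \dots, x_n)$ if $x_1 \in A_i$ and $a x_1 = e$. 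The pairwise-disjointness of the $A_i^{\#}$ assumed earlier guarantees these cases are exhaustive and mutually exclusive.

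Next I would check that each $\lambda_i(a)$ is a bijection with inverse $\lambda_i(a^{-1})$, and that $\lambda_i : A_i \to Sym(W)$ is a homomorphism. I would then take $P$ to be the subgroup of $Sym(W)$ generated by $\bigcup_{i} \lambda_i(A_i)$, with $j_i \triangleq \lambda_i$; associativity is inherited from $Sym(W)$ for free. The crucial observation is that evaluation at the empty word gives a bijection $P \to W$: applying a product $\lambda_{i_1}(a_1) \cdots \lambda_{i_n}(a_n)$ (with each $a_k$ nontrivial and consecutive indices distinct) to the empty word yields exactly the reduced word $(a_1, \dots, a_n)$. This simultaneously shows that each $\lambda_i$ is injective and that every element of $P$ corresponds to a unique reduced word, which is precisely the normal form of Lemma~\ref{lem:normal-fp}.

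Finally I would verify the universal property. Given a group $G$ and homomorphisms $f_i : A_i \to G$, I would define $\varphi : P \to G$ on a reduced word $(x_1, \dots, x_n)$ with $x_j \in A_{i_j}$ by $\varphi(x_1 \cdots x_n) = f_{i_1}(x_1) \cdots f_{i_n}(x_n)$, sending the empty word to $e_G$. Well-definedness is immediate from uniqueness of the normal form, and $\varphi \circ j_i = f_i$ holds by construction; that $\varphi$ is a homomorphism reduces to checking that concatenation-with-reduction of two reduced words is mirrored by the corresponding product in $G$, using that each $f_i$ respects multiplication and inverses. Uniqueness follows because any homomorphism satisfying $\varphi \circ j_i = f_i$ is forced to agree with this formula on every reduced word, and the reduced words generate $P$. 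The main obstacle throughout is the bookkeeping needed to show the permutation action is well defined and that evaluation at the empty word is a bijection; once that is in place, associativity and the universal property both follow cleanly.
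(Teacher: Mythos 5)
Your construction is correct, but there is nothing in the paper to compare it against: the paper states this lemma without proof, deferring to a standard group-theory textbook (it appears in the appendix's ``series of important lemmas'' alongside Lemma~\ref{lem:fp-inj}, and the normal-form Lemma~\ref{lem:normal-fp} is likewise left to the literature). What you give is precisely the standard textbook existence proof, the van der Waerden trick: realize the free product as the subgroup $P \leq Sym(W)$ generated by the prepend-and-reduce permutations $\lambda_i(a)$ on the set $W$ of reduced words, and use evaluation at the empty word to identify elements of $P$ with reduced words. Two small points of order in your outline. First, establish the identity $\lambda_i(a)\circ\lambda_i(b) = \lambda_i(ab)$ as maps $W \to W$ \emph{before} asserting that each $\lambda_i(a)$ is a bijection; bijectivity then falls out of $\lambda_i(a^{-1})\circ\lambda_i(a) = \lambda_i(e) = \mathrm{id}_W$ rather than being a separate check. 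Second, for the homomorphism property of $\varphi$ in the universal property, the clean argument is an induction on the length of the second factor's normal form, splitting on how much cancellation occurs at the junction; this is routine but is where the real work sits. A pleasant feature of your route, worth making explicit, is that it discharges the paper's other textbook citations in one stroke: injectivity of the canonical maps $j_i = \lambda_i$ is exactly Lemma~\ref{lem:fp-inj}, and the bijection between $P$ and $W$ is exactly the normal form of Lemma~\ref{lem:normal-fp}. The main alternative construction---presenting each $A_i$ and forming the presentation with the disjoint union of generators and the union of relations, verifying the universal property via Lemma~\ref{lem:johnson1}---yields existence more cheaply and meshes better with the paper's syntax of finitely presented groups, but it still requires your permutation argument (or an equivalent) to recover injectivity of the $j_i$ and the normal form, so your single construction is arguably the better fit for what the paper actually uses.
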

    \begin{lemma}
      \label{lem:fp-inj}
      If $P$ is a free product of $\{A_i : i \in I\}$, then the homomorphisms $j_i$ are injections.
    \end{lemma}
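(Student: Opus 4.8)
The plan is to prove injectivity of each $j_i$ by exhibiting an explicit left inverse, built directly from the universal property stated in Definition~\ref{def: free-product}. Fix an arbitrary index $i_0 \in I$; I will show that $j_{i_0}$ is injective, and since $i_0$ is arbitrary this settles the claim for all the $j_i$ at once.

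The key step is to invoke the universal property with a carefully chosen target group and family of maps. I take the target group to be $A_{i_0}$ itself and define a family of homomorphisms $f_i : A_i \to A_{i_0}$ by setting $f_{i_0} = \mathrm{id}_{A_{i_0}}$ and, for every $i \neq i_0$, letting $f_i$ be the trivial homomorphism sending all of $A_i$ to $e_{A_{i_0}}$. Each $f_i$ is visibly a homomorphism, so this is a legitimate family into $A_{i_0}$. By the universal property of the free product, there is a (unique) homomorphism $\varphi : P \to A_{i_0}$ satisfying $\varphi \circ j_i = f_i$ for every $i$; in particular $\varphi \circ j_{i_0} = f_{i_0} = \mathrm{id}_{A_{i_0}}$.

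With this left inverse in hand, injectivity is immediate: if $j_{i_0}(a) = j_{i_0}(b)$ for $a, b \in A_{i_0}$, then applying $\varphi$ to both sides gives $a = \varphi(j_{i_0}(a)) = \varphi(j_{i_0}(b)) = b$. Hence $j_{i_0}$ is injective, and since $i_0$ was arbitrary, every $j_i$ is an injection.

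I do not anticipate a serious obstacle here; the whole content lies in recognizing that the ``identity-on-one-factor, trivial-elsewhere'' family of maps forces a retraction $\varphi$ of $P$ onto the chosen factor $A_{i_0}$. The only points worth checking are that the trivial maps $f_i$ for $i \neq i_0$ are genuinely homomorphisms (they are, being constant at the identity) and that the universal property may be applied to this particular target group, which it can, since the property quantifies over \emph{all} groups $G$ and \emph{all} compatible families $f_i$.
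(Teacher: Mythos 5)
Your proof is correct. Note that the paper does not actually prove this lemma at all: it is stated alongside Lemmas~\ref{lem:fp-exists} and \ref{lem: faithful-inj} with a pointer to a standard group theory text \cite{rotman2012introduction}, so there is no in-paper argument to compare against. What you give is precisely the standard textbook retraction argument, and it goes through verbatim with the paper's Definition~\ref{def: free-product}: instantiating the universal property with target $G = A_{i_0}$, $f_{i_0} = \mathrm{id}_{A_{i_0}}$, and the trivial homomorphisms on the other factors yields $\varphi : P \to A_{i_0}$ with $\varphi \circ j_{i_0} = \mathrm{id}_{A_{i_0}}$, which forces $j_{i_0}$ to be injective. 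The two checks you flag (that constant-identity maps are homomorphisms, and that the universal property quantifies over all groups $G$) are exactly the right ones, and both hold, so your argument is complete and self-contained.
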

    \begin{lemma}
      \label{lem: faithful-inj}
      Every faithful action corresponds to an \emph{injective homomorphism} from $G$ into the symmetric group over $\Sigma_V$, denoted $Sym(\Sigma_V)$.
    \end{lemma}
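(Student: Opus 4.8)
The plan is to build directly on \Cref{lem: action-homo}, which already establishes that any group action $(G, a_V)$ induces a homomorphism $\Phi \colon G \to Sym(\Sigma_V)$ sending each group element $g$ to the permutation $a_g$. The entire content of this lemma is therefore to verify that the additional hypothesis of faithfulness upgrades $\Phi$ to an \emph{injective} homomorphism. First I would recall this $\Phi$ and observe that injectivity of a group homomorphism is equivalent to the triviality of its kernel, so it suffices to show that the only element mapping to the identity permutation $\mathrm{id}_{\Sigma_V}$ is $e_G$.

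Next I would compute this kernel explicitly. An element $g$ lies in $\ker \Phi$ exactly when $\Phi(g) = a_g = \mathrm{id}_{\Sigma_V}$, that is, when $a_g(\sigma) = \sigma$ for every $\sigma \in \Sigma_V$. But this is precisely the antecedent of \Cref{def: faithful}, so faithfulness yields $g = e_G$. Hence $\ker \Phi = \{e_G\}$ and $\Phi$ is injective. Equivalently, one could argue directly using the second formulation of faithfulness: if $\Phi(g) = \Phi(h)$ then $a_g$ and $a_h$ agree on every state, so $g = h$ because no two group elements act identically on all of $\Sigma_V$. The two routes are interchangeable, and their equivalence is the one-line remark in \Cref{def: faithful}, which follows from the group-action identity $a_g \circ a_h^{-1} = a_{g h^{-1}}$.

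There is no real obstacle here: the statement is essentially a repackaging of the definition of faithfulness through the action-to-homomorphism correspondence. The only point requiring a moment's care is confirming that the two phrasings of faithfulness in \Cref{def: faithful} coincide, which is immediate from the group-action axioms, and checking that the identity element of $Sym(\Sigma_V)$ is correctly matched with the condition that $a_g$ fixes every state.
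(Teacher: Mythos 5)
Your proof is correct: the paper itself does not prove this lemma but states it as a standard group-theoretic fact with a citation to \citet{rotman2012introduction}, and your argument---passing through the homomorphism $g \mapsto a_g$ from \Cref{lem: action-homo}, computing its kernel, and invoking \Cref{def: faithful} to conclude the kernel is trivial---is exactly the standard textbook argument being deferred to. Your side remark that the two phrasings of faithfulness coincide via $a_g \circ a_{h^{-1}} = a_{gh^{-1}}$ is also accurate and covers the only point where care is needed.
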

  \fi
  We now define a group whose action will be the weakest pre-condition.
  \begin{definition}
    \label{def: wp-group}
    Let $\{G_i \mid i \in I\}$ be a family of subgroups of $Sym(\Sigma)$ with group action $\tuple{G_i}{a^i}{\mathit{Vars}}$ such that for every group action $\tuple{G_i}{a^i}{\mathit{Vars}}$, the triple $\triple{\tuple{G_i}{a^i}{\mathit{Vars}}}{C}{\tuple{H}{b}{\mathit{Vars}}}{\varphi^i}$ is valid for some homomorphism \(\varphi\). Then define $W_{(H,b),C}^*$ to be the free product of $\{G_i : i \in I\}$.
  \end{definition}
  Next, we define the action of $W_{(H,b),C}^*$
  \begin{definition}
    \label{def: fp-action}
    Let $x_{g_l,i}$ denote the $i^{th}$ letter of a word belonging to a group $G_l$, and let $a^l_{x_{g_l,i}}$ denote the action of this element on the set $\Sigma$.  For every word $x_{g_i,1}x_{g_j,2} \dots x_{g_k,n}$ in $W_{(H,b),C}^*$ its action is defined by $a^*_{x_{g_i,1}x_{g_j,2} \dots x_{g_k,n}} \triangleq a^i_{x_{g_i,1}} \circ a^j_{x_{g_j,2}} \circ \dots \circ a^k_{x_{g_k,n}}$. We denote the group action by $\mathsf{wp}(\tuple{H}{b}{\mathit{Vars}}, C) \triangleq \tuple{W_{(H,b),C}^*}{a^*}{\mathit{Vars}}$.
  \end{definition}
  The next lemma shows that $\mathsf{wp}(\tuple{H}{b}{\mathit{Vars}}, C)$ is a valid group action.
  \begin{restatable}{lemma}{FpActionDef}
    \label{lem: fp-action-def}
    $\mathsf{wp}(\tuple{H}{b}{\mathit{Vars}}, C)$ is a group action.
  \end{restatable}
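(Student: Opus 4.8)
The plan is to prove that $a^*$ is a group action by exhibiting it as the homomorphism into $Sym(\Sigma)$ that the free product's universal property delivers, which simultaneously settles both well-definedness and the group-action axioms. First I would observe that each action $\tuple{G_i}{a^i}{\mathit{Vars}}$ corresponds, by \Cref{lem: action-homo}, to a homomorphism $f_i : G_i \to Sym(\Sigma)$ given by $f_i(g) = a^i_g$ (recall $\Sigma_{\mathit{Vars}} = \Sigma$). Since $W^*_{(H,b),C}$ is the free product of the family $\{G_i : i \in I\}$ (\Cref{def: wp-group}), the universal property in \Cref{def: free-product} yields a unique homomorphism $\varphi : W^*_{(H,b),C} \to Sym(\Sigma)$ with $\varphi \circ j_i = f_i$ for every $i$, where $j_i$ is the canonical inclusion (injective, by \Cref{lem:fp-inj}) of $G_i$ into the free product.

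Next I would check that this $\varphi$ coincides with the map $a^*$ of \Cref{def: fp-action}. By \Cref{lem:normal-fp}, every nontrivial element of $W^*_{(H,b),C}$ has a unique normal form $x_{g_i,1} x_{g_j,2} \cdots x_{g_k,n}$ with consecutive letters in distinct factors. Because $\varphi$ is a homomorphism restricting to $f_l = a^l$ on each factor $G_l$, we have
\[
  \varphi(x_{g_i,1} \cdots x_{g_k,n}) = \varphi(x_{g_i,1}) \circ \cdots \circ \varphi(x_{g_k,n}) = a^i_{x_{g_i,1}} \circ \cdots \circ a^k_{x_{g_k,n}},
\]
which is exactly $a^*_{x_{g_i,1} \cdots x_{g_k,n}}$, while both maps send the identity (the empty word) to $\mathrm{id}_\Sigma$. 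Hence $\varphi = a^*$ on all of $W^*_{(H,b),C}$. Since $\varphi$ is a homomorphism into $Sym(\Sigma)$ it satisfies $\varphi(gh) = \varphi(g) \circ \varphi(h)$ and $\varphi(e) = \mathrm{id}_\Sigma$, which are precisely the compatibility and identity axioms for $a^*$; therefore $a^*$ is a group action on $\Sigma$, i.e.\ $\mathsf{wp}(\tuple{H}{b}{\mathit{Vars}}, C)$ is a valid group action.

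I expect the main subtlety to be well-definedness and the homomorphism property of $a^*$ under concatenation: verifying the compatibility axiom directly would require tracking word reductions by hand, since concatenating two reduced words can force adjacent letters from the same factor to be multiplied and identities to cancel. Routing the argument through the universal property avoids this bookkeeping entirely, because $\varphi$ respects the group operation by construction and the normal-form lemma guarantees that the explicit composition formula for $a^*$ agrees with $\varphi$ on representatives.
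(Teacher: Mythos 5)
Your proposal is correct and follows essentially the same route as the paper's own proof: both obtain homomorphisms $G_i \to Sym(\Sigma)$ from the factor actions (the paper via the subgroup inclusions, you via \Cref{lem: action-homo}), invoke the universal property of the free product to get a unique homomorphism $\varphi : W^*_{(H,b),C} \to Sym(\Sigma)$, identify $\varphi$ with $a^*$ on normal-form words using \Cref{lem:normal-fp}, and conclude via the correspondence between homomorphisms into $Sym(\Sigma)$ and group actions. The only difference is presentational: you make the step ``$\varphi = a^*$'' and its role in avoiding word-reduction bookkeeping explicit, which the paper leaves implicit.
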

  \iflong
    \begin{proof}
     Without loss of generality, we can consider words in their normal form, as described in Lemma \ref{lem:normal-fp}. Let $W_{(H,b),C}^*$ denote the free product group such that $\mathsf{wp}(\tuple{H}{b}{\mathit{Vars}}, C) \triangleq \tuple{W_{(H,b),C}^*}{a^*}{\mathit{Vars}}$. Recall that $W_{(H,b),C}^*$ is a product of groups $\{ G_i \mid i \in I \}$. Each group $G_i$ has an inclusion homomorphism into $Sym(\Sigma)$, since each $G_i$ is a subgroup of $Sym(\Sigma)$.
      By definition \ref{def: free-product}, the free product extends these homomorphisms uniquely from $W_{(H,b),C}^*$ to $Sym(\Sigma)$. Let $\varphi_i : G_i \rightarrow Sym(\Sigma)$ denote the homomorphism of the $i^{th}$ group. Then the homomorphism $\varphi$ from $W_{(H,b),C}^*$ to $Sym(\Sigma)$ extends such that $\varphi(x_{g_i,1}x_{g_j,2} \dots x_{g_k,n}) = \varphi^i(x_{g_i,1}) \circ \varphi^j(x_{g_j,2}) \dots \varphi^k(x_{g_k,n})$, which defines a group action on $Sym(\Sigma_V)$.  The action of any word $x_{g_i,1}x_{g_j,2} \dots x_{g_k,n}$ is defined by the action of each element, or more formally it is equal to $a^i_{x_{g_i,1}} \circ a^j_{x_{g_j,2}} \circ \dots \circ a^k_{x_{g_k,n}}$. Since $\varphi$ is a homomorphism from $W_{(H,b),C}^*$ to $Sym(\Sigma)$, by Lemma \ref{lem: action-homo}, it defines a valid group action.
    \end{proof}
  \fi
  \iflong
    The next few lemmas will help show that the group action $\mathsf{wp}(\tuple{H}{b}{\mathit{Vars}}, C)$ is a valid pre-condition.
    \begin{restatable}{lemma}{MultClosure}
      \label{lem: mult-closure}
      Let $(H,b_{\mathit{Vars}})$ be a group action, and let $h_1$ and $h_2$ be any two elements in $H$ and let $g_1$ and $g_2$ be bijections on $\Sigma$. If $\;\forall \sigma \in \Sigma$, $b_{h_1}(\sem{C}_{\sigma}) = \sem{C}_{g_1(\sigma)}$ and $b_{h_2}(\sem{C}_{\sigma}) = \sem{C}_{g_2(\sigma)}$, then
      $b_{h_1}(b_{h_2}(\sem{C}_{\sigma})) = \sem{C}_{{g_1}({g_2}(\sigma))}$.
    \end{restatable}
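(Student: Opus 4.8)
The plan is to prove this by direct substitution, peeling off the two group elements one at a time and invoking each hypothesis in turn. The argument closely mirrors the earlier composition lemma (Lemma~\ref{lem: compose-multiple}), and is in fact slightly simpler, since here $g_1$ and $g_2$ are arbitrary bijections on $\Sigma$ rather than actions of auxiliary groups, so no second action needs to be tracked.

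First I would fix an arbitrary state $\sigma \in \Sigma$ and rewrite the \emph{inner} application using the second hypothesis, giving $b_{h_1}(b_{h_2}(\sem{C}_\sigma)) = b_{h_1}(\sem{C}_{g_2(\sigma)})$. The key observation is that the first hypothesis is universally quantified over all states, so I may instantiate it at the transformed state $\sigma' \triangleq g_2(\sigma)$ rather than at $\sigma$ itself. Applying it there yields $b_{h_1}(\sem{C}_{\sigma'}) = \sem{C}_{g_1(\sigma')} = \sem{C}_{g_1(g_2(\sigma))}$, which is exactly the right-hand side we want. Chaining these two rewrites closes the proof for the fixed $\sigma$, and since $\sigma$ was arbitrary the equality holds for all $\sigma \in \Sigma$.

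There is essentially no genuine obstacle here: the one point that needs care is that the hypothesis governing $h_1$ must be applied at the \emph{shifted} state $g_2(\sigma)$, not at $\sigma$, so I must keep the universal quantifier over states in scope when I use it. Because both hypotheses are stated for every $\sigma \in \Sigma$, this instantiation is immediate. Notably, this particular equality uses only the two given relations and the fact that $\sem{C}_{(\cdot)}$ can be evaluated at any state; it does not require $g_1$ or $g_2$ to be bijections, nor any further structure of $H$ — those assumptions are what make the result useful downstream in the free-product weakest–pre-condition construction, where we need the composite $g_1 \circ g_2$ to again be a bijection witnessing membership in $\mathsf{G^*}$.
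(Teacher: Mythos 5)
Your proof is correct and matches the paper's own argument exactly: both rewrite the inner term via the hypothesis on $h_2$, then instantiate the universally quantified hypothesis on $h_1$ at the shifted state $\sigma' = g_2(\sigma)$ to conclude. Your side observation that bijectivity of $g_1$ and $g_2$ is never used here (only downstream, for closure of $\mathsf{G^*}$) is also accurate.
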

    \begin{proof}
      Consider $ b_{h_1}(b_{h_2}(\sem{C}_{\sigma}))$. Since $\;\forall \sigma \in \Sigma, b_{h_2}(\sem{C}_{\sigma}) = \sem{C}_{g_2(\sigma)}$, this can be re-written as $ b_{h_1}(b_{h_2}(\sem{C}_{\sigma})) =  b_{h_1}(\sem{C}_{{g_2}(\sigma)})$. Now, let ${g_2}(\sigma) = \sigma'$. Then, $b_{h_1}(\sem{C}_{{g_2}(\sigma))} = b_{h_1}(\sem{C}_{\sigma'})$. Since $\;\forall \sigma \in \Sigma$, $b_{h_1}(\sem{C}_{\sigma}) = \sem{C}_{g_1(\sigma)}$,  $\sem{C}_{g_1(\sigma')} =  \sem{C}_{{g_1}({g_2}(\sigma))}$.
    \end{proof}
    \begin{restatable}{lemma}{MultClosure2}
      \label{lem: mult-closure2}
      Let $(H,b_{\mathit{Vars}})$ be a group action, and let $h_1, \dots, h_n$ be elements in $H$ and let $g_1, \dots, g_n$ be bijections on $\Sigma$. If
      $\forall \sigma \in \Sigma_V$, $b_{h_i}(\sem{C}_{\sigma}) = \sem{C}_{g_i(\sigma)}$ for all $i$, then
      $b_{h_1} \circ b_{h_2} \circ \cdots \circ b_{h_n}(\sem{C}_{\sigma}) = \sem{C}_{g_1 \circ \dots \circ g_n(\sigma)}$.
    \end{restatable}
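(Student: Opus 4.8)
The plan is to prove the statement by a straightforward induction on $n$, using Lemma~\ref{lem: mult-closure} as the engine for the inductive step. The base case $n = 1$ is immediate: the hypothesis states exactly that $b_{h_1}(\sem{C}_\sigma) = \sem{C}_{g_1(\sigma)}$ for all $\sigma$, which is the claim for a single term.

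For the inductive step, suppose the claim holds for sequences of length $n-1$. The key observation is that, because $(H, b_{\mathit{Vars}})$ is a group action, the composite $b_{h_2} \circ \cdots \circ b_{h_n}$ equals $b_{h'}$ for the single group element $h' \triangleq h_2 \cdots h_n \in H$. Setting $g' \triangleq g_2 \circ \cdots \circ g_n$, the inductive hypothesis yields $b_{h'}(\sem{C}_\sigma) = \sem{C}_{g'(\sigma)}$ for every $\sigma \in \Sigma$. This collapses the length-$n$ composite into a composite of just two maps, $b_{h_1} \circ b_{h'}$, so we land exactly in the situation of Lemma~\ref{lem: mult-closure}, instantiated with the two group elements $h_1, h'$ and the two bijections $g_1, g'$.

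Applying Lemma~\ref{lem: mult-closure} then gives $b_{h_1}(b_{h'}(\sem{C}_\sigma)) = \sem{C}_{g_1(g'(\sigma))}$, and unfolding the definitions of $h'$ and $g'$ rewrites the left-hand side back into $b_{h_1} \circ \cdots \circ b_{h_n}(\sem{C}_\sigma)$ and the right-hand side into $\sem{C}_{g_1 \circ \cdots \circ g_n(\sigma)}$, closing the induction. I do not expect a genuine obstacle here; the one point requiring care is that the inductive hypothesis must be invoked in its universally quantified form (for all $\sigma$), since the premises of Lemma~\ref{lem: mult-closure} must hold at every state and not merely at the state we started from. Using the group-action identity $b_{h_2} \circ \cdots \circ b_{h_n} = b_{h_2 \cdots h_n}$ to fold the tail of the composite into a single action is precisely what makes the two-element lemma directly applicable.
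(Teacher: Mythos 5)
Your proof is correct and takes essentially the same route as the paper's: both argue by induction, use the group-action axiom to collapse the composite of actions into a single $b_{h}$, and invoke Lemma~\ref{lem: mult-closure} for the two-element step. The only (cosmetic) difference is that you fold the tail $h_2\cdots h_n$ and peel off the first element, whereas the paper folds the prefix $h_1\cdots h_{i-1}$ and peels off the last.
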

    \begin{proof}
      We proceed by induction on $i$. For the base case $i = 1$, this is true by assumption. Next, we consider the inductive case.
    We know that  $b_{h_1} \circ b_{h_2}, \dots \circ b_{h_{i-1}}(\sem{C}_{\sigma}) = \sem{C}_{(g_1 \circ \dots \circ g_{i-1}(\sigma))}$. Since the composition of bijections is a bijection, $g \triangleq g_1 \circ \dots \circ g_{i-1}$ is a bijection. By the group action axioms, this simplifies to $b_{h_1h_2 \dots h_{i-1}}(\sem{C}_{\sigma}) = \sem{C}_{(g(\sigma))}$ Now, we need to show
      $b_{h_1h_2 \dots h_{i-1}}((b_{h_i}(\sem{C}_{\sigma})) = \sem{C}_{(g(g_i(\sigma))}$. This follows from Lemma \ref{lem: mult-closure}
    \end{proof}
  \fi
  The next theorem proves that $\mathsf{wp}(\tuple{H}{b}{\mathit{Vars}}, C)$ is a valid pre-condition.
  \begin{restatable}{theorem}{WPSound}
    \label{thm: wp-sound}
   Let $\varphi : W_{(H,b),C}^* \rightarrow H$ be the unique homomorphism from the free group $W_{(H,b),C}^*$ into $H$. Then triple $\triple{\mathsf{wp}(\tuple{H}{b}{\mathit{Vars}}, C)}{C}{(H,b_{\mathit{Vars}})}{\varphi}$ is valid.
  \end{restatable}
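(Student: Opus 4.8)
The plan is to build the homomorphism $\varphi$ from the universal property of the free product, and then verify the validity condition of Definition~\ref{def: valid} element-by-element using the normal form of words in $W_{(H,b),C}^*$. Recall that $W_{(H,b),C}^*$ is the free product of the family $\{G_i : i \in I\}$ from Definition~\ref{def: wp-group}, where each $G_i$ is a subgroup of $Sym(\Sigma)$ whose action $\tuple{G_i}{a^i}{\mathit{Vars}}$ makes $\triple{\tuple{G_i}{a^i}{\mathit{Vars}}}{C}{\tuple{H}{b}{\mathit{Vars}}}{\varphi^i}$ valid for a homomorphism $\varphi^i : G_i \to H$ (unique, by Theorem~\ref{thm: faithful-unique}, since $(H,b_{\mathit{Vars}})$ is faithful on $\Sigma_C$). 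Applying the universal property (Definition~\ref{def: free-product}) to the family $\{\varphi^i\}$ yields a unique homomorphism $\varphi : W_{(H,b),C}^* \to H$ with $\varphi \circ j_i = \varphi^i$ for all $i$; this is the $\varphi$ named in the statement.

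Next I would fix an arbitrary $g \in W_{(H,b),C}^*$. If $g = e$ the claim is immediate, since $\varphi(e) = e_H$ and $a^*_e$ is the identity. Otherwise, by Lemma~\ref{lem:normal-fp} I write $g$ in normal form $g = x_1 x_2 \cdots x_n$, where each $x_j$ is a nontrivial element of some factor $G_{i_j}$. By the definition of the free-product action (Definition~\ref{def: fp-action}), $a^*_g = a^{i_1}_{x_1} \circ a^{i_2}_{x_2} \circ \cdots \circ a^{i_n}_{x_n}$. Because $\varphi$ is a homomorphism restricting to $\varphi^{i_j}$ on each factor, $\varphi(g) = \varphi^{i_1}(x_1)\,\varphi^{i_2}(x_2)\cdots\varphi^{i_n}(x_n)$, and hence by the group-action axioms for $(H, b_{\mathit{Vars}})$ we obtain $b_{\varphi(g)} = b_{\varphi^{i_1}(x_1)} \circ \cdots \circ b_{\varphi^{i_n}(x_n)}$.

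The heart of the argument is then to telescope the per-letter equations. For each letter $x_j$, validity of the triple for $G_{i_j}$ gives $b_{\varphi^{i_j}(x_j)}(\sem{C}_{\sigma}) = \sem{C}_{a^{i_j}_{x_j}(\sigma)}$ for all $\sigma \in \Sigma$ (in full-state form, since here $V = \mathit{Vars}$). Setting $h_j \coloneqq \varphi^{i_j}(x_j) \in H$ and $g_j \coloneqq a^{i_j}_{x_j}$ --- which is a bijection on $\Sigma$ because $G_{i_j} \le Sym(\Sigma)$ and group actions act by bijections (Lemma~\ref{lem: action-homo}) --- Lemma~\ref{lem: mult-closure2} yields $b_{h_1} \circ \cdots \circ b_{h_n}(\sem{C}_{\sigma}) = \sem{C}_{g_1 \circ \cdots \circ g_n(\sigma)}$, which is exactly $b_{\varphi(g)}(\sem{C}_{\sigma}) = \sem{C}_{a^*_g(\sigma)}$. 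As this holds for every $g$ and every $\sigma$ (and thus for every $v \in \mathit{Vars}$, the two sides being equal states), the validity condition of Definition~\ref{def: valid} is satisfied.

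I expect the main obstacle to be bookkeeping rather than conceptual: correctly tracking which factor $G_{i_j}$ each normal-form letter inhabits, and confirming the hypotheses of Lemma~\ref{lem: mult-closure2} --- in particular that each $a^{i_j}_{x_j}$ is a genuine bijection on all of $\Sigma$ and that the per-letter equation holds for \emph{every} state $\sigma$, not merely the reachable ones. The latter is guaranteed because the validity of each $G_i$-triple quantifies over all $\sigma \in \Sigma$. Once the normal-form decomposition and Lemma~\ref{lem: mult-closure2} are in place, the composition telescopes cleanly and the result follows.
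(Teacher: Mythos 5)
Your proposal is correct and takes essentially the same route as the paper's proof: both decompose an arbitrary element of $W_{(H,b),C}^*$ into its normal form (Lemma~\ref{lem:normal-fp}), read off its action as the composition $a^{i_1}_{x_1} \circ \cdots \circ a^{i_n}_{x_n}$ from Definition~\ref{def: fp-action}, telescope the per-letter validity equations using Lemma~\ref{lem: mult-closure2}, and obtain $\varphi$ from the universal property of the free product. The only difference is ordering---you construct $\varphi$ up front and then verify the telescoped equation, while the paper first assembles $h = h_1 h_2 \cdots h_n$ and only afterwards exhibits $\varphi$ mapping the word onto $h$---which is cosmetic.
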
  
  \iflong
    \begin{proof}
      Let $W_{(H,b),C}^*$ be the group such that $\mathsf{wp}(\tuple{H}{b}{\mathit{Vars}}, C) \triangleq \tuple{W_{(H,b),C}^*}{a^*}{\mathit{Vars}}$.
      Consider any word  $x_{g_i,1}x_{g_j,2} \dots x_{g_k,n}$ in $W_{(H,b),C}^*$. Without loss of generality, we consider normalized words of the form in Lemma \ref{lem:normal-fp}. By Definition \ref{def: fp-action}, the action of this word is  $a^i_{x_{g_i,1}} \circ a^j_{x_{g_j,2}} \circ \dots \circ a^k_{x_{g_k,n}}$, where $a^i_{x_{g_i,k}}$ denotes the action of the element $x_{g_i,k} \in G_i$.
      Now, by the definition of the group $W_{(H,b),C}^*$, for any letter $x_{g_i,l}$ in this word, there exists an $h_l \in H$ such that $\forall \sigma \in \Sigma$, \[b_{h_l}(\sem{C}_{\sigma}) = \sem{C}_{a^i_{x_{g_i,l}}(\sigma)}\]
     Further, every $a^i_{x_{g_i,l}}$ induces a bijection from $\Sigma \rightarrow \Sigma$.
      Therefore, by Lemma \ref{lem: mult-closure2}, there exists $h = h_1h_2\cdots h_n$ such that
      \[
        b_{h_1h_2\cdots h_k}(\sem{C}_{\sigma}) = \sem{C}_{a^i_{x_{g_i,1}} \cdots a^k_{x_{g_k,n}}(\sigma)}
      \]
      Recall from definition \ref{def: wp-pre-faithful}, the weakest pre-condition group is the free product of only valid pre-conditions. Therefore, each group $G_i$ such that $W_{(H,b),C}^*$ is a free product of $\{G_i \mid i \in I\}$, has a homomorphism $\varphi_i : G_i \rightarrow H$, such that $\varphi_i(x_{g_i,k}) = h_k$. Then by the definition of the free product, we get a unique homomorphism $\varphi: W_{(H,b),C}^* \rightarrow H$ such that  $\varphi(x_{g_i,1}x_{g_j,2} \dots x_{g_k,n}) = \varphi_i(x_{g_i,1})\cdot  \varphi_j(x_{g_j,2}) \dots \varphi_k( x_{g_k,n}) = h_1 \cdot h_2 \cdots h_n = h$. Therefore, the triple $\triple{\mathsf{wp}(\tuple{H}{b}{\mathit{Vars}}, C)}{C}{(H,b_{\mathit{Vars}})}{\varphi}$ is valid.
    \end{proof}
  \fi
  Finally, we are ready to prove our main theorem. This theorem states that $\mathsf{wp}(\tuple{H}{b}{\mathit{Vars}}, C)$ is the weakest pre-condition.
  \begin{theorem}
    The group action $\mathsf{wp}(\tuple{H}{b}{\mathit{Vars}}, C)$ is the weakest pre-condition.
  \end{theorem}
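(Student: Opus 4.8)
The plan is to verify the two requirements of \Cref{def: wp-pre-faithful}. The first — that $\triple{\mathsf{wp}(\tuple{H}{b}{\mathit{Vars}}, C)}{C}{(H,b_{\mathit{Vars}})}{\varphi}$ is valid — is precisely \Cref{thm: wp-sound}, so no new argument is needed there; all of the work lies in the universal (``weakest'') clause.

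For the universal clause, I would fix a faithful group action $\tuple{I}{c}{\mathit{Vars}}$ together with a homomorphism $\tau$ for which $\triple{\tuple{I}{c}{\mathit{Vars}}}{C}{\tuple{H}{b}{\mathit{Vars}}}{\tau}$ is valid. By \Cref{lem: faithful-inj}, faithfulness lets me identify $I$ with its image under the injective homomorphism $c : I \to Sym(\Sigma)$, so that $\tuple{I}{c}{\mathit{Vars}}$ is a subgroup of $Sym(\Sigma)$ acting by bona fide bijections of program state. Since the triple for $I$ is valid, this subgroup is one of the groups $G_{i_0}$ in the family $\{G_i : i \in I\}$ that defines the free product $W_{(H,b),C}^*$ in \Cref{def: wp-group}. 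By \Cref{lem:fp-inj}, the canonical map $j_{i_0} : I \to W_{(H,b),C}^*$ is injective, and I would set $\eta \triangleq j_{i_0}$.

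Next I would establish the entailment $\tuple{I}{c}{\mathit{Vars}} \overset{\eta}{\rightarrow} \mathsf{wp}(\tuple{H}{b}{\mathit{Vars}}, C)$. Here both actions live on $\mathit{Vars}$, so \Cref{def: implication} reduces to two checks: that $\eta$ is a homomorphism, which holds because it is the inclusion of a factor, and that $a^*_{\eta(g)}$ and $c_g$ agree on every $\sigma$ and every $g \in I$. The latter is immediate from \Cref{def: fp-action}: on the one-letter word $\eta(g)$, the free-product action $a^*$ restricts to the action of the factor $I$, which is exactly $c_g$. To finish, I would obtain $\tau = \varphi \circ \eta$ from the universal property of the free product (\Cref{def: free-product}): the witness homomorphisms $\varphi_i : G_i \to H$ are precisely the data assembled into the unique $\varphi : W_{(H,b),C}^* \to H$ with $\varphi \circ j_i = \varphi_i$, so taking $\tau$ itself as the witness attached to the factor $I$ gives $\varphi \circ \eta = \varphi \circ j_{i_0} = \tau$.

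The step I expect to be most delicate is this last matching of homomorphisms. For $\tau = \varphi \circ \eta$ to hold, the witness recorded for the factor $G_{i_0}$ when building $\varphi$ must coincide with the given $\tau$; but if $H$ is not faithful, a single subgroup of $Sym(\Sigma)$ can validate its triple via several distinct homomorphisms into $H$. I would handle this by indexing the family in \Cref{def: wp-group} by pairs $(\tuple{G_i}{a^i}{\mathit{Vars}}, \varphi^i)$ of a valid faithful pre-condition and a chosen witness homomorphism, so that every valid pair $(\tuple{I}{c}{\mathit{Vars}}, \tau)$ appears literally as a factor with its own injection $j_{i_0}$; the uniqueness clause of the universal property then yields $\tau = \varphi \circ \eta$ without further choices.
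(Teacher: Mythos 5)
Your proposal follows the same route as the paper's own proof: use \Cref{lem: faithful-inj} to realize the given faithful pre-condition as a subgroup of $Sym(\Sigma)$, observe that validity of its triple places this subgroup among the factors $\{G_i\}$ of the free product $W_{(H,b),C}^*$ from \Cref{def: wp-group}, take $\eta$ to be the factor injection supplied by \Cref{lem:fp-inj}, and read off the entailment from the definition of the free-product action (\Cref{def: fp-action}); the first clause of \Cref{def: wp-pre-faithful} is likewise discharged by \Cref{thm: wp-sound} in both arguments. Where you go beyond the paper is your final paragraph: the paper's proof stops once the entailment $\tuple{G}{a}{\mathit{Vars}} \overset{\tau}{\rightarrow} \mathsf{wp}(\tuple{H}{b}{\mathit{Vars}}, C)$ is established and never verifies the composition condition $\tau = \varphi \circ \eta$ that \Cref{def: wp-pre-faithful} also demands. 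Your observation that this step is genuinely delicate --- since $H$ is not assumed faithful in this variant, a single subgroup of $Sym(\Sigma)$ may validate its triple through several distinct homomorphisms into $H$, while \Cref{def: wp-group} records only one witness per factor --- identifies a real gap in the paper's argument, and your repair (indexing the free-product family by pairs consisting of a valid faithful pre-condition together with its chosen witness homomorphism, so that the uniqueness clause of the universal property forces $\varphi \circ j_{i_0} = \tau$) is exactly the kind of modification needed to close it. In short, your proof matches the paper's strategy and is more careful than the paper precisely at the point where care is required.
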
  
  \iflong
    \begin{proof}
      Let $W_{(H,b),C}^*$ be a group, and $a^*_{\mathit{Vars}}$ be its action such that $\mathsf{wp}(\tuple{H}{b}{\mathit{Vars}}, C) \triangleq \tuple{W_{(H,b),C}^*}{a^*}{\mathit{Vars}}$.
      From Theorem \ref{thm: wp-sound}, we know $\vDash \triple{\mathsf{wp}(\tuple{H}{b}{\mathit{Vars}}, C)}{C}{(H,b_{\mathit{Vars}})}{\varphi}$, where $\varphi$ is the unique homomorphism from the free group $W_{(H,b),C}^*$ into $H$. Let $\tuple{G}{a}{\mathit{Vars}}$ be any faithful group action such that $\triple{\tuple{G}{a}{\mathit{Vars}}}{C}{\tuple{H}{b}{\mathit{Vars}}}{\psi}$ is valid. We need to prove that \[\tuple{G}{a}{\mathit{Vars}} \overset{\tau}{\rightarrow} \mathsf{wp}(\tuple{H}{b}{\mathit{Vars}}, C)\]
      for some $\tau$ that we shall construct.
      From Lemma \ref{lem: faithful-inj}, there is an injective homomorphism $\eta : G \rightarrow Sym(\Sigma)$. This implies $G$ is isomorphic to a subgroup of $Sym(\Sigma)$. We denote this subgroup by $G_k$, and its action that is induced by the homomorphism $\eta$ by $(G_k, a_{k_{\mathit{Vars}}})$. Now, if $\triple{\tuple{G}{a}{\mathit{Vars}}}{C}{\tuple{H}{b}{\mathit{Vars}}}{\psi}$ is sound, then $\triple{(G_k,a_{k_{\mathit{Vars}}})}{C}{\tuple{H}{b}{\mathit{Vars}}}{\eta \circ \psi}$ is also sound. Therefore, by definition of the weakest pre-condition group, the group $G_k$ must be in the set $\{G_i \mid i
        \in I\}$ since there is a faithful action of $G_k$ which is a sound pre-condition. Since $W_{(H,b),C}^*$ is a free product of $G_i$, by Lemma \ref{lem:fp-inj} we have an injective homomorphism $j_k: G_k \rightarrow W_{(H,b),C}^*$. Now, for every group element $g \in G_k$, for all program states $\sigma \in \Sigma$, $a_{k,g}(\sigma) = a^*_{j_k(x)}(\sigma)$. 
      Therefore,
      $\tuple{G_k}{a}{\mathit{Vars}} \overset{j_k}{\rightarrow} \mathsf{wp}(\tuple{H}{b}{\mathit{Vars}}, C)$. By composing $j_i \circ \eta$, we get a homomorphism, $\tau = j_i \circ \eta$,from $(G, a_{\mathit{Vars}}) \overset{\tau}{\rightarrow} W_{(H,b),C}^*$ such that for all program states $\sigma \in \Sigma$ $a_g(\sigma) = a^*_{j_i(\varphi(g))}(\sigma)$. Therefore,
 \[\tuple{G}{a}{\mathit{Vars}} \overset{\tau}{\rightarrow} \mathsf{wp}(\tuple{H}{b}{\mathit{Vars}}, C)\]
    \end{proof}
  \fi
    \subsection{Weakest Faithful Pre-Condition}
  Recall the definition of the set 
  \( \mathsf{G^*}(\tuple{H}{b}{\mathit{Vars}}, C) \):  
\GStar*
Note that this set is not empty.
In particular, let $g : \Sigma \rightarrow \Sigma $ be the identity bijection. Then, if  $h$ is also the identity, $b_{h}(\sem{C}_{\sigma}) = \sem{C}_{g_1(\sigma)}$. Therefore, $\mathsf{G^*}(\tuple{H}{b}{\mathit{Vars}}, C)$ is not empty.
  Before we prove $\mathsf{G^*}(\tuple{H}{b}{\mathit{Vars}}, C)$ is a group and define its action, we prove a lemma states that $\mathsf{G^*}(\tuple{H}{b}{\mathit{Vars}}, C)$ is closed under inverse.
  \begin{restatable}{lemma}{InverseClosure}
    \label{lem: inverse-closure}
    Let $g$ be a bijection, and $(H,b_{\mathit{Vars}})$ be a group action such that $\exists h \in H$,
    $\forall \sigma \in \Sigma_V$, $b_{h}(\sem{C}_{\sigma}) = \sem{C}_{g(\sigma)}$. Then,
    $b_{h^{-1}}(\sem{C}_{\sigma}) = \sem{C}_{g^{-1}(\sigma)}$
  \end{restatable}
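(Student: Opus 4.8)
The plan is to exploit the universal quantification over program states in the hypothesis together with the two group-action axioms for $b$. First I would instantiate the hypothesis at the state $g^{-1}(\sigma)$ rather than $\sigma$. Since $g$ is assumed to be a bijection, $g^{-1}$ is well-defined and $g^{-1}(\sigma)$ ranges over all of $\Sigma$ as $\sigma$ does, so this instantiation is legitimate. This yields
\[
  b_h(\sem{C}_{g^{-1}(\sigma)}) = \sem{C}_{g(g^{-1}(\sigma))} = \sem{C}_{\sigma},
\]
where the second equality uses that $g \circ g^{-1}$ is the identity on $\Sigma$.

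Next I would apply the action $b_{h^{-1}}$ to both sides of this equation. On the right-hand side this produces $b_{h^{-1}}(\sem{C}_\sigma)$, which is exactly the term appearing in the goal. On the left-hand side I would simplify $b_{h^{-1}}(b_h(\sem{C}_{g^{-1}(\sigma)}))$ using the compatibility axiom $b_{h^{-1}}(b_h(x)) = b_{h^{-1}h}(x)$ together with $h^{-1}h = e_H$ and the identity axiom $b_{e_H}(x) = x$. The left-hand side thus collapses to $\sem{C}_{g^{-1}(\sigma)}$, and combining the two computations gives
\[
  \sem{C}_{g^{-1}(\sigma)} = b_{h^{-1}}(\sem{C}_\sigma),
\]
which is the desired conclusion.

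There is essentially no hard obstacle here: the argument is a short symbolic manipulation, and the only points requiring care are (i) checking that instantiating the hypothesis at $g^{-1}(\sigma)$ is valid, which follows because $g$ is a bijection, and (ii) invoking the group-action axioms for $b$ in the correct order, namely $b_{h^{-1}} \circ b_h = b_{e_H} = \mathrm{id}$. This lemma is the inverse-closure counterpart to the multiplicative-closure lemmas established earlier, and I expect it to feed directly into the proof that $\mathsf{G^*}(\tuple{H}{b}{\mathit{Vars}}, C)$ is closed under taking inverses, and hence forms a group under composition.
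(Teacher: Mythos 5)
Your proposal is correct and follows essentially the same argument as the paper's proof: instantiate the hypothesis at $g^{-1}(\sigma)$, apply $b_{h^{-1}}$ to both sides, and use the group-action axioms (which the paper summarizes as ``taking the inverse on both sides'') together with $g(g^{-1}(\sigma)) = \sigma$. Your write-up just makes the appeal to $b_{h^{-1}} \circ b_h = b_{e_H} = \mathrm{id}$ more explicit than the paper does.
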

  \begin{proof}
    Consider any $\sigma \in \Sigma_V$ and let $\sigma' = g^{-1}(\sigma)$.
    We know that
    $b_{h}(\sem{C}_{\sigma'}) = \sem{C}_{g(\sigma')}$. Therefore, by taking the inverse on both sides, $\sem{C}_{\sigma'}  = b_{h^{-1}}(\sem{C}_{g(\sigma')})$. This implies, $\sem{C}_{g^{-1}(\sigma)} = b_{h^{-1}}(\sem{C}_{\sigma})$
  \end{proof}
Next, we show that $\mathsf{G^*}(\tuple{H}{b}{\mathit{Vars}}, C)$ is a group.
\WPGroup*
  \begin{proof}
    We will show four properties.
    \begin{enumerate}
      \item \textit{Closure: } Let $g_3 = g_1 \circ g_2$. We know that $\forall \sigma \in \Sigma$, $b_{h_1}(\sem{C}_{\sigma})= \sem{C}_{g_1(\sigma)}$ and $b_{h_2}(\sem{C}_{\sigma}) = \sem{C}_{{g_2}(\sigma)}$. Then by Lemma \ref{lem: mult-closure},
            $b_{h_1}(b_{h_2}(\sem{C}_{\sigma})) = \sem{C}_{{g_1}({g_2}(\sigma))}$, therefore $g_3 \in \mathsf{G^*}(\tuple{H}{b}{\mathit{Vars}}, C)$.
      \item \textit{Associativity: } Function composition is associative.
      \item \textit{Identity: } The identity function $g(\sigma) = \sigma$ is in $\mathsf{G^*}(\tuple{H}{b}{\mathit{Vars}}, C)$.
      \item \textit{Inverse: } By Lemma \ref{lem: inverse-closure}, $g^{-1}$ is also in $\mathsf{G^*}(\tuple{H}{b}{\mathit{Vars}}, C)$.
    \end{enumerate}
    Therefore, $\mathsf{G^*}(\tuple{H}{b}{\mathit{Vars}}, C)$ is a group.
  \end{proof}
We now recall the weakest faithful pre-condition $\mathsf{fwp(\tuple{H}{b}{\mathit{Vars}}, C)}$ based on the group $\mathsf{G^*}(\tuple{H}{b}{\mathit{Vars}}, C)$.
\WpTwoAction*

All that is left to prove is that $\mathsf{fwp}(\tuple{H}{b}{\mathit{Vars}}, C)$  is the
weakest faithful pre-condition.
We first show that $\mathsf{fwp}(\tuple{H}{b}{\mathit{Vars}}, C)$ is faithful. This will
help in proving that $\mathsf{fwp}(\tuple{H}{b}{\mathit{Vars}}, C)$ is a valid
pre-condition, and the weakest faithful pre-condition.
\begin{restatable}{lemma}{GSFaithful}
    \label{lem: gs-faithful}
  The action $\mathsf{fwp}(\tuple{H}{b}{\mathit{Vars}}, C)$ is faithful.
\end{restatable}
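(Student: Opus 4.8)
The plan is to prove faithfulness directly from the definition, exploiting the fact that $\mathsf{fwp}(\tuple{H}{b}{\mathit{Vars}}, C)$ is the \emph{tautological} action of a concrete subgroup of $Sym(\Sigma)$. Recall from \Cref{def: faithful} that the action $(\mathsf{G^*}(\tuple{H}{b}{\mathit{Vars}}, C), a_{\mathit{Vars}})$ is faithful exactly when the condition $a_g(\sigma) = \sigma$ for all $\sigma \in \Sigma$ forces $g$ to be the identity element of $\mathsf{G^*}(\tuple{H}{b}{\mathit{Vars}}, C)$.

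First I would fix an arbitrary $g \in \mathsf{G^*}(\tuple{H}{b}{\mathit{Vars}}, C)$ and assume $a_g(\sigma) = \sigma$ for every $\sigma \in \Sigma$. By \Cref{def: wp2-action}, the action is given by $a_g(\sigma) = g(\sigma)$, so this assumption says precisely that $g(\sigma) = \sigma$ for all $\sigma$; in other words, the bijection $g$ is the identity function on $\Sigma$. I would then invoke the earlier result that $\mathsf{G^*}(\tuple{H}{b}{\mathit{Vars}}, C)$ is a group under function composition: since the operation is composition, its identity element is exactly the identity bijection on $\Sigma$. Hence $g$ equals the identity of $\mathsf{G^*}(\tuple{H}{b}{\mathit{Vars}}, C)$, which is what faithfulness demands. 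Equivalently, in the contrapositive formulation of \Cref{def: faithful}, any two distinct bijections in the group must disagree at some state, so they never induce the same action on all of $\Sigma$.

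There is essentially no hard step here; the real content of the lemma is the observation that the elements of $\mathsf{G^*}(\tuple{H}{b}{\mathit{Vars}}, C)$ are themselves bijections of $\Sigma$ and that $a$ simply applies them, so the construction builds in faithfulness by design. The only point to check carefully is that the identity element of $\mathsf{G^*}(\tuple{H}{b}{\mathit{Vars}}, C)$ under composition coincides with the identity bijection on $\Sigma$, which is immediate from the group structure.
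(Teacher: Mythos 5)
Your proposal is correct and takes essentially the same approach as the paper: both rest on the observation that elements of $\mathsf{G^*}(\tuple{H}{b}{\mathit{Vars}}, C)$ are themselves bijections of $\Sigma$ and the action is just function application, so faithfulness is built in by construction. The paper phrases this as a contradiction (two distinct bijections $g_1 \neq g_2$ agreeing on every $\sigma$ is impossible), while you argue directly via the trivial-kernel formulation; these are the two equivalent formulations of faithfulness given in the paper's own definition, so the difference is purely cosmetic.
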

\iflong
  \begin{proof}
    Let $\mathsf{G^*}(\tuple{H}{b}{\mathit{Vars}}, C)$ be the group, and let $a_{\mathit{Vars}}$ be its action such that $\mathsf{fwp}(\tuple{H}{b}{\mathit{Vars}}, C) \triangleq (\mathsf{G^*}(\tuple{H}{b}{\mathit{Vars}}, C), a_{\mathit{Vars}})$.
    Suppose $(\mathsf{G^*}(\tuple{H}{b}{\mathit{Vars}}, C), a_{\mathit{Vars}})$ is not faithful. Then there exists two bijections $g_1$, $g_2 \in (\mathsf{G^*}(\tuple{H'}{b'}{V_1}, C))$ such that $\forall \sigma$, $g_1(\sigma) = g_2(\sigma)$, and $g_1 \neq g_2$, which is a contradiction.
  \end{proof}
\fi
Now that we have established that $\mathsf{fwp}(\tuple{H}{b}{\mathit{Vars}}, C)$ is faithful, we prove a series of lemmas that will help prove that  $\mathsf{fwp}(\tuple{H}{b}{\mathit{Vars}}, C)$ is the weakest pre-condition.. The first lemma proves that if two actions have the same transformations and one is faithful, then there is a homomorphism into the faithful group that respects the group action.
\begin{restatable}{lemma}{FaithfulHomomorphism}
  \label{lem: faithful-has-homo-into-it}
  Given two group actions $(G,a_{V_1})$ and $(H, b_{V_2})$, $V_2 \subseteq V_1$,
  where $(H, b_{V_2})$ is faithful, if $\eta: G \rightarrow H$ is a map such that
  for all group elements $g \in G$,  for all program states $\sigma$, for all
  variables  $v_i \in V_2$, we have $\overline{a}_g(\sigma)(v_i) = \overline{b}_{\phi(g)}(\sigma)(v_i)$. Then, $\eta$ is a homomorphism.
\end{restatable}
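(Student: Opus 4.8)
The plan is to verify the two defining conditions of a homomorphism — compatibility with the group operation and preservation of the identity — by exploiting the group-action axioms for both $a$ and $b$ together with the faithfulness of $(H, b_{V_2})$. The guiding idea is that the hypothesis lets us translate every statement about the action of $g$ (through $\overline{a}$) into the corresponding statement about the action of $\eta(g)$ (through $\overline{b}$), but only on the coordinates in $V_2$; faithfulness then promotes an equality of transformations on $\Sigma_{V_2}$ into an equality of group elements in $H$.

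First I would fix $g_1, g_2 \in G$ and aim to show $\eta(g_1 g_2) = \eta(g_1)\eta(g_2)$. For an arbitrary $\sigma \in \Sigma$ and $v_i \in V_2$, I would chain the following equalities. By the hypothesis at $g_1 g_2$, $\overline{b}_{\eta(g_1 g_2)}(\sigma)(v_i) = \overline{a}_{g_1 g_2}(\sigma)(v_i)$; by the compatibility axiom for the action $a$ this equals $\overline{a}_{g_1}(\overline{a}_{g_2}(\sigma))(v_i)$; applying the hypothesis at $g_1$ with the state $\overline{a}_{g_2}(\sigma)$ rewrites this as $\overline{b}_{\eta(g_1)}(\overline{a}_{g_2}(\sigma))(v_i)$. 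The crucial step is to replace $\overline{a}_{g_2}(\sigma)$ by $\overline{b}_{\eta(g_2)}(\sigma)$ inside the argument: these two states agree on every variable of $V_2$ by the hypothesis at $g_2$, and because $\overline{b}$ is the lifting of an action on $\Sigma_{V_2}$, the value $\overline{b}_{\eta(g_1)}(\tau)(v_i)$ for $v_i \in V_2$ depends only on $\tau$ restricted to $V_2$. Hence the expression equals $\overline{b}_{\eta(g_1)}(\overline{b}_{\eta(g_2)}(\sigma))(v_i)$, which by the compatibility axiom for $b$ is $\overline{b}_{\eta(g_1)\eta(g_2)}(\sigma)(v_i)$.

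Combining the chain shows that $\overline{b}_{\eta(g_1 g_2)}$ and $\overline{b}_{\eta(g_1)\eta(g_2)}$ agree on the $V_2$-coordinates of every $\sigma \in \Sigma$, i.e. $b_{\eta(g_1 g_2)}$ and $b_{\eta(g_1)\eta(g_2)}$ induce the same transformation on all of $\Sigma_{V_2}$. Faithfulness of $(H, b_{V_2})$ then forces $\eta(g_1 g_2) = \eta(g_1)\eta(g_2)$. For the identity, I would argue analogously: the identity axiom for $a$ gives $\overline{b}_{\eta(e_G)}(\sigma)(v_i) = \overline{a}_{e_G}(\sigma)(v_i) = \sigma(v_i) = \overline{b}_{e_H}(\sigma)(v_i)$ for all $\sigma$ and $v_i \in V_2$, so faithfulness yields $\eta(e_G) = e_H$ (alternatively, identity preservation follows formally from operation preservation by cancellation).

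I expect the main obstacle to be the substitution step inside the argument of $\overline{b}_{\eta(g_1)}$. It is tempting to treat $\overline{a}$ and $\overline{b}$ interchangeably, but they act on different variable sets ($V_2 \subseteq V_1$), and the hypothesis only guarantees their agreement on $V_2$. The argument therefore hinges on the precise behaviour of the lifting — that evaluating $\overline{b}_h$ on a $V_2$-variable never inspects coordinates outside $V_2$ — which must be stated carefully to justify replacing $\overline{a}_{g_2}(\sigma)$ with $\overline{b}_{\eta(g_2)}(\sigma)$ without disturbing the final $V_2$-value.
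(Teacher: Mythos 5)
Your proof is correct, but it takes a genuinely different route from the paper's. You verify the homomorphism law for $\eta$ directly and element-wise: the chain $\overline{b}_{\eta(g_1g_2)}(\sigma)(v_i) = \overline{a}_{g_1g_2}(\sigma)(v_i) = \overline{a}_{g_1}(\overline{a}_{g_2}(\sigma))(v_i) = \overline{b}_{\eta(g_1)}(\overline{a}_{g_2}(\sigma))(v_i) = \overline{b}_{\eta(g_1)}(\overline{b}_{\eta(g_2)}(\sigma))(v_i) = \overline{b}_{\eta(g_1)\eta(g_2)}(\sigma)(v_i)$, where the substitution step is justified by the locality of the lifting (the $V_2$-coordinates of $\overline{b}_h(\tau)$ depend only on $\tau$ restricted to $V_2$), and faithfulness then promotes agreement of transformations on $\Sigma_{V_2}$ to equality in $H$. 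The paper instead argues through presentations: it views both actions as homomorphisms $\varphi: G \to Sym(\Sigma)$ and $\psi: H \to Sym(\Sigma)$ (its Lemma on actions-as-homomorphisms), uses faithfulness to get injectivity of $\psi$, defines a map $f$ only on the generators $S_G$, checks that each relation $r_j$ of $G$ satisfies $f(r_j) = e_H$ (because $\varphi(r_j)$ is the identity permutation and $\psi$ is injective), and then invokes the extension lemma for group presentations to obtain a homomorphism that it identifies with $\eta$. Your approach buys two things: it does not require $G$ to be finitely presented (it works for an arbitrary group, since you never mention generators or relations), and it establishes multiplicativity of the given map $\eta$ itself, whereas the paper's final step --- that the extension of $f$ coincides with $\eta$ on all of $G$, not just on generators --- is left implicit and in fact needs an argument close to your substitution/composition chain. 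What the paper's route buys is consistency with its presentation-based machinery: reducing the check to finitely many relations is exactly the pattern its other results (and its implementation) rely on. Your worry about the substitution step is well placed but fully resolved by your own justification; that locality property of the lifting is precisely what makes the step sound.
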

  \begin{proof}
    By Lemma \ref{lem: action-homo}, the actions $\overline{a}$, and $\overline{b}$ induce a homomorphism $\varphi : G
      \rightarrow Sym(\Sigma)$ and $\psi : H \rightarrow Sym(\Sigma)$. If $(H, b_{V_2})$ is faithful, then $(H, \overline{b}_{\mathit{Vars}})$ is also faithful. Then, by Lemma \ref{lem: faithful-inj}, $\psi$ is injective.  Now, let $f: S_G \rightarrow H$ be a map such that $f(g_i) = h_i$ such that $\forall \sigma \in \Sigma\;\forall v \in V_2$, $\varphi_{g_i}(\sigma)(v) = \psi_{h_i}(\sigma)(v)$. In the case multiple such $h_i$ exist, we can pick anyone.
    Consider some $r_j = g_1 \cdot g_2, \cdots, g_m$.
    We need to show that $f(r_j) = e_H$.  $f(r_j) = h_j$, where $h_j = h_1, \cdots, h_t$. Since $\varphi$ is a homomorphism,
    $$\varphi(r_j) = e_{Sym}$$ This means
    \[\forall \sigma \in \Sigma\;\forall v \in V_2, \varphi_{r_j}(\sigma)(v) = \psi_{h_j}(\sigma)(v)\]
    which means $\psi_{h_j} = e_{Sym}$. Since $\psi$ is injective, $h_j = e_h$, which means $f(r_j) = e_H$. By Lemma \ref{lemma-homo-respects-identity}, $f$ extends to a homomorphism, and by construction, this homomorphism is $\eta$.
  \end{proof}

The next lemma shows that $\mathsf{G^*}(\tuple{H}{b}{\mathit{Vars}}, C)$ has a homomorphism into a  post-condition group that has a faithful group action. This will help prove that $\mathsf{G^*}(\tuple{H}{b}{\mathit{Vars}}, C)$ is a valid pre-condition.
\begin{restatable}{lemma}{LemmaPreHomo}
  \label{lemma-pre-homo}
  Let $C$ be any command, and $\tuple{\mathsf{G^*}(\tuple{H}{b}{\mathit{Vars}}, C)}{a}{\mathit{Vars}}$ be a group action defined in~\Cref{def: wp2-action}.  Suppose $\tuple{H}{b}{\mathit{Vars}}$ is a faithful action on the set $\Sigma_C$. Let $\varphi: G^* \rightarrow H$ be a map such that $\varphi(g) = h$ and $b_h(\sem{C}_{\sigma}) = \sem{C}_{a_g(\sigma)}$. Then, $\varphi$ is a homomorphism.
\end{restatable}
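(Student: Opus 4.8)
The plan is to verify the two defining conditions of a homomorphism for $\varphi$ directly, leaning on Lemma~\ref{lem: mult-closure} to combine witnesses and on the faithfulness hypothesis to pin each witness down uniquely. First I would record the essential consequence of faithfulness: because the reachable set satisfies $\Sigma_C = \{\,\sem{C}_{\sigma} \mid \sigma \in \Sigma\,\}$, any two elements $h, h' \in H$ with $b_h(\sem{C}_{\sigma}) = b_{h'}(\sem{C}_{\sigma})$ for all $\sigma$ must agree on every state of $\Sigma_C$, and so faithfulness of $\tuple{H}{b}{\mathit{Vars}}$ on $\Sigma_C$ forces $h = h'$. This simultaneously shows that the witness $h$ appearing in the definition of $\varphi(g)$ is unique (so $\varphi$ is genuinely well defined) and supplies the cancellation principle I will invoke below.

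For multiplicativity I would take $g_1, g_2 \in \mathsf{G^*}(\tuple{H}{b}{\mathit{Vars}}, C)$ and set $h_1 = \varphi(g_1)$, $h_2 = \varphi(g_2)$, so that $b_{h_1}(\sem{C}_{\sigma}) = \sem{C}_{g_1(\sigma)}$ and $b_{h_2}(\sem{C}_{\sigma}) = \sem{C}_{g_2(\sigma)}$ hold for all $\sigma$ by construction. Applying Lemma~\ref{lem: mult-closure} then yields $b_{h_1}(b_{h_2}(\sem{C}_{\sigma})) = \sem{C}_{g_1(g_2(\sigma))}$. I rewrite each side using the ambient structure: the group-action axiom gives $b_{h_1}(b_{h_2}(\tau)) = b_{h_1 h_2}(\tau)$, and since the operation on $\mathsf{G^*}(\tuple{H}{b}{\mathit{Vars}}, C)$ is function composition we have $g_1(g_2(\sigma)) = (g_1 g_2)(\sigma)$. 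Hence $b_{h_1 h_2}(\sem{C}_{\sigma}) = \sem{C}_{(g_1 g_2)(\sigma)}$, which exhibits $h_1 h_2$ as a valid witness for $g_1 g_2$. By the uniqueness established in the first step, $\varphi(g_1 g_2) = h_1 h_2 = \varphi(g_1)\,\varphi(g_2)$.

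Preservation of the identity is immediate: the identity element of $\mathsf{G^*}(\tuple{H}{b}{\mathit{Vars}}, C)$ is the identity bijection on $\Sigma$, and since $b_{e_H}(\sem{C}_{\sigma}) = \sem{C}_{\sigma}$ for all $\sigma$, uniqueness gives $\varphi$ of the identity bijection equal to $e_H$. I expect the only delicate point to be the uniqueness argument underpinning both steps: one must be careful that faithfulness is assumed only on the reachable set $\Sigma_C$, not on all of $\Sigma$, so equalities of witnesses have to be derived on states of the form $\sem{C}_{\sigma}$, whose totality is precisely $\Sigma_C$. I would also note that Lemma~\ref{lem: faithful-has-homo-into-it} does not apply off the shelf here, since its hypothesis relates the two actions directly on states rather than through the command $C$; routing the argument through Lemma~\ref{lem: mult-closure} is what accounts for that mediation. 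Everything else is routine bookkeeping with the group-action axioms.
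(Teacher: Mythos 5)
Your proof is correct and follows essentially the same route as the paper's: both arguments derive $b_{h_1 h_2}(\sem{C}_{\sigma}) = \sem{C}_{(g_1 g_2)(\sigma)}$ by combining the composition lemma with the group-action axiom, then invoke faithfulness to conclude $\varphi(g_1 g_2) = \varphi(g_1)\varphi(g_2)$, and handle the identity by noting that the identity bijection leaves states fixed. Your explicit well-definedness step and your care that faithfulness is only assumed on $\Sigma_C$ (so witness equalities must be checked on states of the form $\sem{C}_{\sigma}$) are welcome refinements of points the paper's proof leaves implicit, but they do not change the underlying argument.
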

  \begin{proof}
    We need to prove that $\varphi(g_1 \circ g_2) = \varphi(g_1) \cdot \varphi(g_2)$. Let $\varphi(g_1) \cdot \varphi(g_2) = h_1 \cdot h_2$.
    By the group action axioms, we know that \[b_{h_1 \cdot h_2}(\sem{C}_{\sigma}) = b_{h_1}(b_{h_2}(\sem{C}_\sigma))\] Since the group action is on $\mathit{Vars}$, by Lemma \ref{lem: compose-multiple}, $b_{h_1}(b_{h_2}(\sem{C}_{\sigma})) = \sem{C}_{a_{g_1}(a_{g_2}(\sigma))}$. 
    
    Now, let $\varphi(g_1 \circ g_2) = h_3$. By the definition of $\varphi$,  $b_{h_3}(\sem{C}_{\sigma}) = \sem{C}_{a_{g_1}(a_{g_2}(\sigma))}$. Since $(H,b)$ is a faithful action, any two elements with the same group action must be equal.
    This implies that \[h_3 = h_1 \cdot h_2\] Therefore, \[\varphi(g_1 \circ g_2) = \varphi(g_1) \cdot \varphi(g_2)\]
    We still need to show $\varphi(e_G) = e_H$. The element $e_G$ is the identity bijection. If $e_G$ acts on a program state, everything remains unchanged, so the action of $e_H$ relates the two output states. Therefore, $\varphi(e_g) = e_h$.
  \end{proof}
Finally, we can show that we have indeed constructed the weakest faithful
pre-condition.
\GPWFaithfulPre*
  \begin{proof}
    Let $\mathsf{fwp}(\tuple{H}{b}{\mathit{Vars}}, C) \triangleq (\mathsf{G^*}(\tuple{H'}{b'}{V}, C), a_{\mathit{Vars}})$.  We note that for every $g \in G^*(\tuple{H}{b}{\mathit{Vars}}, C)$, there exists an $h$ such that $\forall \sigma \in \Sigma$, $b_h(\sem{C}_{\sigma}) = \sem{C}_{a_g(\sigma)}$, and by Lemma \ref{lemma-pre-homo}, there is exists a homomorphism $\varphi$ such that $\varphi(g) = h$. Therefore, $\triple{\tuple{\mathsf{G^*}(\tuple{H}{b}{\mathit{Vars}}, C)}{a}{\mathit{Vars}}}{C}{\tuple{H}{b}{\mathit{Vars}}}{\varphi}$ is valid.

    Now consider any $(I,c_{\mathit{Vars}})$ such that $ \vDash \triple{(I,c_{\mathit{Vars}})}{C}{(H,b_{\mathit{Vars}})}{\tau}$. By definition of $G^*(\tuple{H}{b}{\mathit{Vars}}, C)$, for every $i \in I$, there exists $g \in G^*(\tuple{H}{b}{\mathit{Vars}}, C)$, such that for all program states $\sigma \in \Sigma$, $c_i({\sigma}) = a_g({\sigma})$.

    Finally, by Lemma \ref{lem: gs-faithful}, $\tuple{\mathsf{G^*}(\tuple{H}{b}{\mathit{Vars}}, C)}{a}{\mathit{Vars}}$ is faithful and by Lemma \ref{lem: faithful-has-homo-into-it}, there exists a homomorphism from $\eta: I \rightarrow G^*(\tuple{H}{b}{\mathit{Vars}}, C)$ such that for every $i \in I$, for all program states $\sigma \in \Sigma$, $c_i({\sigma}) = a_{\eta(i)}({\sigma})$. Therefore, $(I,c_{\mathit{Vars}}) \overset{\eta}{\rightarrow} \tuple{\mathsf{G^*}(\tuple{H}{b}{\mathit{Vars}}, C)}{a}{\mathit{Vars}}$, and $\tau = \varphi \circ \eta$ which means $\tuple{\mathsf{G^*}(\tuple{H}{b}{\mathit{Vars}}, C)}{a}{\mathit{Vars}}$ is the weakest pre-condition.
  \end{proof}
\subsubsection*{Expressibility of the Weakest Pre-Condition}
Next, we detail the missing proofs regarding the expressibility of the weakest pre-condition.
The first lemma proves that this is a valid group action.
\begin{restatable}
    {lemma}{ExpressGroup}
  $(\hat{G}((H,b),C), g^*_{\mathit{Vars}})$ is a group action.
\end{restatable}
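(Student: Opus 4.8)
The plan is to exploit the correspondence between group actions and homomorphisms into the symmetric group, together with the fact that $\hat{G}((H,b),C)$ is by construction a \emph{free} group, and so carries no relations. Recall from \Cref{lem: action-homo} that specifying a group action of a group $K$ on $\Sigma$ amounts to specifying a homomorphism $K \to Sym(\Sigma)$; I will produce exactly such a homomorphism and check that its associated action is $g^*$.

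First I would record that each generator symbol $g_i$ of $\hat{G}((H,b),C) = \langle G \mid - \rangle$ is, by \Cref{def:grp-finite}, a genuine bijection $g_i \in \mathsf{G^*}(\tuple{H}{b}{\mathit{Vars}}, C) \subseteq Sym(\Sigma)$. Hence the assignment $f : G \to Sym(\Sigma)$ sending each symbol to the bijection it names is a well-defined function from the generating set into $Sym(\Sigma)$. Because $\hat{G}((H,b),C)$ has an empty relation set, the hypothesis of \Cref{lemma-homo-respects-identity} (that $f(r_j) = e$ for every relator $r_j$) holds vacuously, so $f$ extends to a homomorphism $F : \hat{G}((H,b),C) \to Sym(\Sigma)$; equivalently, this is the universal property of the free group. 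Reading \Cref{lem: action-homo} in the other direction, $F$ determines a group action of $\hat{G}((H,b),C)$ on $\Sigma$.

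It then remains to identify this action with $g^*$ from \Cref{def:grp-finite-action}. A homomorphism out of a free group sends a reduced word $g_{i_1}\cdots g_{i_k}$ to the composite $F(g_{i_1}) \circ \cdots \circ F(g_{i_k}) = g_{i_1} \circ \cdots \circ g_{i_k}$ in $Sym(\Sigma)$, and evaluating this bijection at a state $\sigma$ yields precisely $g_{i_1}\circ\cdots\circ g_{i_k}(\sigma)$, which is the definition of $g^*_g(\sigma)$. Thus $g^* = F$, and $g^*$ is a group action.

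The one point requiring care — and the step I would flag as the main (mild) obstacle — is well-definedness: \Cref{def:grp-finite-action} specifies the action on word representatives, so I must ensure that two words representing the same element of the free group induce the same bijection. Routing through \Cref{lemma-homo-respects-identity} (equivalently, the universal property) settles this, since reducing a word by cancelling a pair $g_i g_i^{-1}$ uses only that each $g_i$ is a bijection with $g_i \circ g_i^{-1} = \mathrm{id}_\Sigma$, which leaves the composite bijection unchanged. Once well-definedness is in hand, the identity and compatibility axioms follow immediately from $F$ being a homomorphism, so no further computation is needed.
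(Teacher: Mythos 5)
Your proof is correct and takes essentially the same route as the paper's: both define the map sending each generator symbol to the bijection it names in $Sym(\Sigma)$ and invoke the universal property of the free group (equivalently, the relation-check of \Cref{lemma-homo-respects-identity} holding vacuously) to extend it to a homomorphism $\hat{G}((H,b),C) \to Sym(\Sigma)$, which by \Cref{lem: action-homo} is a group action. Your additional care about identifying the resulting action with $g^*$ and about well-definedness on word representatives is a welcome elaboration of steps the paper leaves implicit, but it is not a different argument.
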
\begin{proof}
  This follows from the universal property of the free group. In particular, if we let $G$ be the generating set of $\hat{G}((H,b),C)$, we can define a function $f: G \rightarrow Sym(\Sigma)$ as the inclusion map. Then by
  the universal property, we have a homomorphism $\varphi: \hat{G}((H,b),C) \rightarrow Sym(\Sigma)$.
\end{proof}
Next, we show that indeed, this constructed group action is a valid pre-condition.
\FiniteWP*
\begin{proof}
  We know that each for each the generators of $\hat{G}((H,b),C)$ there exists and $h \in H$ such that $b_{h_i}(\sem{C}_{\sigma}) = \sem{C}_{g^*_g(\sigma)}$. By Lemma \ref{lem: mult-closure}, for any word,
  there must exist an $h_i$ such that $b_{h_i}(\sem{C}_{\sigma}) = \sem{C}_{g^*_g(\sigma)}$. We now need to show the existence of the homomorphism. This follows from Lemma \ref{lemma-pre-homo}, and uniqueness also follows from this. 
\end{proof}
\subsubsection{Strongest Post-Condition}
\PostDoesNotExist*
  \begin{proof}
    To prove this statement, we will exhibit a group $G$ with some group element $g$, a start state $\sigma_{start}$, such that there exists a variable $v \in V$ for which for any group action $\tuple{H}{b}{V}$ for every $h \in H$ the following is true:
    \[ b_h(\sem{v_1 \coloneq f(v_1, \vec{z})}_{\sigma_{start}})(v) \neq \sem{v_1 \coloneq f(v_1, \vec{z})}_{a_g(\sigma_{start})}(v)\]

    Let $G$ be the symmetric group $Sym({\Sigma_V})$. This is the group of all bijections from $\Sigma_V \rightarrow \Sigma_V$, and every element can be viewed as a permutation from one state to another.  Let $V$ be the set $\{v_1, \dots, v_n\}$
    Let $a$ be the group action of the symmetric group $Sym(\Sigma_V)$ such that for any group element, $g'$, $a_{g'}$ acts on a given $\sigma$ as $g'(\sigma)$. Let $g$ be an element of $S_{\Sigma_V}$ that
    is a bijection defined as $g(\{v_1 \mapsto z_1, v_2 \mapsto z_2,\dots ,v_n \mapsto z_n \}) = \{v_1 \mapsto z_n, v_2 \mapsto z_1,\dots ,v_n \mapsto z_{n-1} \} $. In other words, $g$ is the bijection that cyclically rotates a given $\sigma$.

    Since $f$ is not injective, we have that there exists $x_1 \neq y_1, \Vec{z}$ such that $f(x_1, \Vec{z}) = f(y_1, \Vec{z})$.
    Let $\sigma_{start} = \{v_1 \mapsto x_1, v_2 \mapsto z_1, \dots , v_n \mapsto z_n\}$ and
    let $\sigma_{start}' = \{v_1 \mapsto y_1, v_2 \mapsto z_1, \dots , v_n \mapsto z_n\}$.

    We now claim that for any group action $(H,b_V)$ no action $b_h$ can satisfy both,
    \begin{equation}
      b_h(\sem{v_1 \coloneq f(v_1, \vec{z})}_{\sigma_{start}})(v_2) = \sem{v_1 \coloneq f(v_1, \vec{z})}_{a_g(\sigma_{start})}(v_2)
    \end{equation}
    and
    \begin{equation}
      b_h(\sem{v_1 \coloneq f(v_1, \vec{z})}_{\sigma_{start}'})(v_2) = \sem{v_1 \coloneq f(v_1, \vec{z})}_{a_g(\sigma_{start}')}(v_2)
    \end{equation}
    In the following, the subscript $g$ denotes the value of a variable after $a_g$ has acted on the program state.
    If any group $H$ with group action $b$ exists, then, for $b_h$ to satisfy (2), it must satisfy the following:
    \begin{align*}
      b_h(\{v_1 \mapsto f(x_1, \Vec{z}), v_2 \mapsto z_2, \dots , v_n \mapsto z_n\})(v_2) & = \{v_1 \mapsto f(x_{1g}, \Vec{z}_g),  v_2 \mapsto z_{2g},\dots ,, v_n \mapsto z_{ng}\}(v_2) \\
                                                                                          & =  \{v_1 \mapsto f(z_n, \Vec{z}_g),  v_2 \mapsto x_1,\dots ,, v_n \mapsto z_{n-1}\}(v_2)     \\
                                                                                          & = x_1
    \end{align*}
    At the same time, it must also satisfy
    \begin{align*}
      b_h(\{ v_1 \mapsto f(y_1, \Vec{z}), v_2 \mapsto z_2,\dots , v_n \mapsto z_n\})(v_2) & = \{v_1 \mapsto f(y_{1g}, \Vec{z}_g),  v_2 \mapsto z_{2g}, \dots , v_{n} \mapsto z_{2ng}\}(v_2) \\
                                                                                          & = \{v_1 \mapsto f(z_n, \Vec{z}_g),  v_2 \mapsto y_1, \dots , v_{n} \mapsto z_{n-1}\}(v_2)
      \\ &= y_1
    \end{align*}
    Both the left-hand sides are the same since $f$ is not injective, but under $a_g$ as defined, the two right-hand sides are different. So, any $b_h$ can map to only of those. If it maps to the first one, then there exists no $b_h$ for which (3) is true, i.e., for $\sigma_{start}'$
    \begin{equation*}
      b_h(\sem{v_1 \coloneq f(v_1, \vec{z})}_{\sigma_{start}'})(v_2) \neq \sem{v_1 \coloneq f(v_1, \vec{z})}_{a_g(\sigma_{start}')}(v_2)
    \end{equation*} Otherwise, we get a symmetric statement for $\sigma_{start}$. Either way, we have shown that there exists a group $G$ with group action $a_g$ such that there is no group action that can be a valid post-condition for the set $V$. Since this also holds for the lifting of $Sym(\Sigma_V, a_V)$, we get the theorem.
  \end{proof}
\PostIsSp*
  \begin{proof}
    Let $(G,a^*_{\mathit{Vars}})$ denote the action of $\mathsf{POST}((G,a_V), f)$.
    Let $\tuple{I}{c}{V_1}$ be a group action such that $\vDash \triple{\tuple{G}{a}{V}}{v_1 \coloneq f(v_1, \dots, v_n)}{\tuple{I}{c}{V_1}}{\eta}$. Then
    \[
      \forall g \in G, \exists i \in I\; \forall \sigma\; \forall v \in V_1 \;a^*_g(\sigma)(v) = \overline{c_i}(\sigma)(v)
    \]
    Since $(I,c_{V_1})$ is faithful, by Lemma \ref{lem: faithful-has-homo-into-it}, there exists a homomorphism $\tau$ such that
    \[
      \forall g \in G, \exists i \in I\; \forall \sigma \;\forall v \in V_1 a^*(\sigma)(v) = \overline{c_{\tau(g)}}(\sigma)(v)
    \]
    Since $\mathsf{POST}((G,a_V), f)$ acts on $\mathit{Vars}$, $V_1 \subseteq \mathit{Vars}$. Therefore,
    $(\mathsf{POST}((G,a_V), f)) \overset{\tau}{\rightarrow} \tuple{I}{C}{V_1}$.  Since
    $\mathsf{POST}((G,a_V), f)$ is also a sound pre-condition with homomorphism $\varphi$ such that $\eta = \tau \circ \varphi$, we have shown it is the strongest post-condition.
  \end{proof}
\section{Examples}
\subsection{Translation of a Car}
We continue with the proof of the translation of a car example.
Lines 5 and 6 are injective, and a similar computation can be carried out to conclude:
\[\triple{\tuple{\mathbb{Z}}{a}{\mathit{Vars}}}{\phi \coloneq u \cdot dt + \phi}{\tuple{\mathbb{Z}}{a}{\mathit{Vars}}}{\mathbf{eq}}\]
and 
\[\triple{\tuple{\mathbb{Z}}{a}{\mathit{Vars}}}{\theta \coloneq \frac{v}{L}\tan{\theta} \cdot dt + \theta}{\tuple{\mathbb{Z}}{a}{\mathit{Vars}}}{\mathbf{eq}}\]
Therefore, using \textsc{SEQ}, we can prove that each statement preserves the group action $\tuple{\mathbb{Z}}{a}{\mathit{Vars}}$. 
\subsection{Lorenz Attractor System}
    \label{app: Examples}
The Lorenz Attractor system is a simplified set of differential equations that
model fluid convections used widely in weather prediction models. The
quantities that characterize the fluid are: $x$ which is the rate of convective
motion, $y$ which is the temperature difference between the ascending and descending currents, and
$z$ which represents the distortion of the vertical temperature profile.
The equations governing $x,y$, and $z$ are:
\begin{equation*}
  \dot x = -px +py, \dot y = -xz + rx -y , \dot z = xy - bz
\end{equation*}
where $p,r$, and $b$ are parameters that depend on conditions like the fluid, the heat input, the size of the container, etc, but they are assumed constant throughout. The notation $\dot x$ means the change of $x$ with respect to time. Each equation describes the rate of change of the quantities $x,y,z$.  As in the car example, we model this system as a program that runs for $T$ time steps, where $T$ is any constant, as shown in Figure \ref{fig: lorentz}. The variable $dt$ represents a tiny constant. In this example, we enrich our programming model with more types. In particular, we assume the variables $p,r,b, dt$, and $T$ are of type $\mathbb{R}_1$, which means they are real numbers between 0 and 1. As a result, the program store is extended to map to these types.

\begin{figure}
  \begin{lstlisting}[frame=tlrb, mathescape=true]
1.  $\while$(t $\coloneq$ 0; t < T; t $\coloneq$ t +  dt){
2.    x $\coloneq$ (-px + py)$\cdot$dt+ x;
3.    y $\coloneq$  (-xz + rx - y)$\cdot$dt + y;
4.    z $\coloneq$  (xy - bz)$\cdot$dt + z;
6.  }
\end{lstlisting}
  \caption{A program representing a Lorenz Attractor System}
  \label{fig: lorentz}
\end{figure}
The Lorenz Attractor System is equivariant to swapping signs of $x$ and $y$. This means that the output states are related by the same action as the input, and the homomorphism from the pre-condition group to the post-condition group is the identity map.
\citet{10.1007/978-3-030-31784-3-6} exploit this equivariance to verify a safety property of a hybrid system based on the Lorenz Attractor. We seek to show this formally. Once again, we start by defining the group action that models our transformation.
Flipping the signs of $x$ and $y$ can be expressed as the following action of
the group $S_2$:
\[ \actiontemp{g}{g^2 = e}{\{x, y\}}{g \cdot (x \rightarrow \alpha_1, y \rightarrow \alpha_2) = (x \rightarrow -\alpha_1, y \rightarrow -\alpha_2)}\]
This is a group action since for any $\sigma \in \Sigma_{\{x,y\}}$, we have $g^2(\sigma) = \sigma$.
The triple we would like to prove is:
\[\triple{\tuple{S_2}{a}{\mathit{Vars}}}{\textbf{for } {(t \coloneq 0; t < T; t \coloneq t + dt)} \textbf{ do } C}{\tuple{S_2}{a}{\mathit{Vars}}}{\varphi}\]
where $\varphi(g) = g$.
We note that every assignment statement in this program is injective.  
Therefore,
we can apply the $\textsc{ASSGN}$ rule at each step, and apply $\textsc{CONS-1}$ to conclude the desired triple.
\paragraph{Line 2}
We begin by re-writing the first statement in the loop body $x \coloneq (-px + py)\cdot dt+ x$ (Line 2) as $x \coloneq f(x,y,p,dt)$. The function $f$ is injective with respect to $x$, i.e., $\forall x_1, x_2, y \in \mathbb{Z}, \forall dt, p\in \mathbb{R}_1$, $f(x_1,y,p, dt) = f(x_2,y,p, dt)$ implies $ x_1 = x_2$.  Therefore, we can invert this assignment statement. We define
  \[
    \hat{f}(k, p,y,dt) = 
    \begin{cases}
    \begin{aligned}
       & \frac{py\cdot dt - k}{p\cdot dt - 1} \quad &p\cdot dt \neq 1 \\
       & \bot   \quad  &p\cdot dt = 1
           \end{aligned}
    \end{cases}
  \]
  By assumption, $p$ and $t$ are always less than 1. Therefore, $pt \neq 1$.
  We can verify that
  $\hat{f}(f(x,y,p,dt), y, p, dt) = x$. Next, we compute $\mathsf{POST}((S_2,a_{x,y}), f) $
  \begin{align*}
    \mathsf{POST}((S_2,a_{x,y}), f) \triangleq \actiontemp{g}{g^2= 1} {\mathit{Vars}} { & g \cdot (x \rightarrow \alpha_f, y \rightarrow \alpha_y, \dots, z \rightarrow \alpha_z) =
    \\
                                                                               & x \rightarrow f(-\hat{f}(\alpha_f, \alpha_y, \alpha_p, dt ),-\alpha_y,\alpha_p, dt) ,     \\ &  y \rightarrow -\alpha_y,
      \dots, z \rightarrow \alpha_z}
  \end{align*}

$\mathsf{POST}((S_2,a_V), f)_{\mathit{Vars}})$ simplifies to:
\begin{align*}
  \actiontemp{g}{g^2= 1} {\mathit{Vars}} {g \cdot ( & x \rightarrow \alpha_f, y \rightarrow \alpha_y, \dots, z \rightarrow \alpha_z) =
  \\
                                           & x \rightarrow -\alpha_f, \rightarrow -\alpha_y,  \dots,  z\rightarrow \alpha_z}
\end{align*}
By the $\textsc{ASSGN}$ rule, we have $$ \triple{\tuple{S_2}{a}{\{x,y\}}}{x
  \coloneq (-px + py) \cdot dt+ x}{(\mathsf{POST}((S_2,a_V), f)_{\mathit{Vars}}}{\varphi}$$.
We observe from the simplified version that $\mathsf{POST}((S_2, a_{x,y}), f)$
acts by changing the sign of $x$ and $y$ and leaves every other variable
unchanged, which is the same action as $(S_2, a_{\mathit{Vars}})$.
Therefore, $\mathsf{POST}((S_2,a_V), f)_{\mathit{Vars}} \rightarrow (S_2, a_{\mathit{Vars}})$. By
$\textsc{CONS-1}$, we have
\[\triple{\tuple{S_2}{a}{\{x,y\}}}{x \coloneq (-px + py) \cdot dt+ x}{(S_2, a_{\mathit{Vars}})}{\varphi}\]
\paragraph{Lines 3 and 4}
A similar computation can be carried out for Lines 3 and 4 in Figure \ref{fig: lorentz}, each of which is injective, and equivariant to $(S_2, a_{\{x,y\}})$, i.e.,
\[ \triple{\tuple{S_2}{a}{\{x,y\}}}{c_3}{(S_2, a_{\mathit{Vars}})}{\varphi} \text{ and } \triple{\tuple{S_2}{a}{\{x,y\}}}{c_4}{(S_2, a_{\mathit{Vars}})}{\varphi}\]
Let $C$ denote the loop body. By the $\textsc{SEQ}$ and $\textsc{LIFT}$ rules,
we have $\triple{(S_2, a_{\mathit{Vars}})}{C}{(S_2, a_{\mathit{Vars}})}{\varphi}$.
Therefore, the loop body (lines 2-4), denoted by $C$ preserves the action $(S_2,
  a_{\mathit{Vars}})$ and the variables modified by $C$ are always a subset of $\mathit{Vars}$.
Using the $\textsc{FOR}$ rule  we can conclude
\[ \triple{\tuple{S_2}{a}{\mathit{Vars}}}{\textbf{for } {(t \coloneq 0; t < T; t
      \coloneq t + dt)} \textbf{ do } C}{\tuple{S_2}{a}{\mathit{Vars}}}{\varphi}  \]
which is the desired triple.

\subsection{Car in a Straight Line}
We prove that the loop body is preserved for the program in Figure \ref{fig: car-straight}, showing the steps missing in the proof. 

\paragraph{Lines 1 and 2} Since none of the variables in these two lines is changed, it is straightforward to see that the $\textsc{SEM-ASSGN}$ and $\textsc{SEQ}$ rule yields:
\[
\triple{\tuple{D_4}{a}{\mathit{Vars}}}{c_1; c_2}{\tuple{D_4}{a}{\mathit{Vars}}}{\mathbf{eq}}
\]
\paragraph{Line 4}
We start with the first line of the loop body $c_4 \triangleq x \coloneq v \cdot \cos{\theta}\cdot dt + x;$. This assignment statement is injective, so we can use the $\textsc{ASSGN}$ rule.
The group action $\mathsf{POST}$, is equal to the following action:
\begin{align*}
  \tuple{D_4}{b}{\{x,y,\theta\}} \triangleq \actiontemp{r,s}{ & r^4 = e, s^2 = e, rs = sr^{-1}}{\{x,y,\theta\}}{
  r\cdot(x \rightarrow \alpha_x, y \rightarrow \alpha_y, \theta \rightarrow  \alpha_\theta) =                                                                 \\&
  \{x \rightarrow -\alpha_v \sin({\alpha_\theta})\cdot dt - \alpha_y, y \rightarrow \alpha_x - \cos{\theta}, \theta \rightarrow \alpha_\theta + 90 \mod 360\} \\
                                                              & s\cdot(x \rightarrow \alpha_x, y \rightarrow \alpha_y, \theta \rightarrow \alpha_\theta) =
    x \rightarrow \alpha_x, y \rightarrow -\alpha_y, \theta \rightarrow -\alpha_\theta) }
\end{align*}
Therefore, we can conclude,
\[\triple{\tuple{D_4}{a}{\{x,y,\theta\}}}{c_4}{\tuple{D_4}{b}{\mathit{Vars}}}{\mathbf{eq}}\] is valid.

Next, we turn to the Line 5 of the program, $c_5 \triangleq y \coloneq v \sin{(\theta)}\cdot dt + y$.
By a similar line of reasoning, and then using the $\textsc{ASSGN}$ rule, we can compute $\mathsf{POST}$. Using $\textsc{CONS-2}$, we derive $\triple{\tuple{D_4}{b}{\mathit{Vars}}}{c_5}{\tuple{D_4}{a}{\mathit{Vars}}}{\mathbf{eq}}$.

Lines 6 and 7 are also injective assignments. Therefore, by applying $\textsc{ASSGN}$ and $\textsc{CONS-1}$, and then composing the triples with the sequencing rules gives us: 
\[ \triple{\tuple{D_4}{a}{\{x,y,\theta\}}}{v
    \coloneq a \cdot dt + v; t \coloneq t + dt}{\tuple{D_4}{a}{\mathit{Vars}}}{\mathbf{eq}}\]
Finally, we use $\textsc{SEM-ASSGN}$ for Line 8, to conclude: 
\[ \triple{\tuple{D_4}{a}{\{x,y,\theta\}}}{c_8}{\tuple{D_4}{a}{\mathit{Vars}}}{\mathbf{eq}}\]

\subsection{Voting Example}
We resume the proof from the main body of the paper. Intuitively, the number of votes for Candidate-1, and Candidate-2 should remain unchanged after the permutation. Therefore, we will split the program into 2 parts. We will first prove that after Lines 1 and 2, the value of $count_1$ is invariant, and then show that after Lines 3 and 4, $count_2$ is invariant.
    We will use $\varphi : S_2 \rightarrow S_2$to denote the homomorphism $\varphi(g) = g$.
We start with the assignment on Line 1, $c_1 \triangleq count_1$ $\coloneq v_1 == 0 \;?
  \;count_1$ + 1 : $count_1$.
This is an injective assignment with respect to $count_1$.
Therefore, we can compute $\mathsf{POST}((S_2,a_{\{v_1, v_2\}}), f)$. To start, we define $\hat{f}: \mathbb{Z} \times \mathbb{Z} \rightarrow \mathbb{Z}$ that inverts the assignment.
\[
  \hat{f}(\alpha_{count_1}, \alpha_{v_1}) =
  \begin{cases}
    \alpha_{count_1} - 1 & \alpha_{v_1} = 0    \\
    \alpha_{count_1}     & \alpha_{v_1} \neq 0 \\
  \end{cases}
\]
We can easily check that $\hat{f}(f( \alpha_{count_1}, \alpha_{v_1}), \alpha_{v_1}) =  \alpha_{count_1}$. Since the action $\tuple{S_2}{a}{\{x,y\}}$, swaps the values of $v_1$ and $v_2$, we need to compute
\[
  f(\hat{f}(f( \alpha_{count_1}, \alpha_{v_1}), \alpha_{v_1}), \alpha_{v_2})
\]

Combining both the cases in $f$ and $\hat{f}$, we get a  piece-wise function $b: \mathbb{Z} \times \mathbb{Z} \times \mathbb{Z} \rightarrow \mathbb{Z}$ such that
\[
  b(\alpha_{count_1}, \alpha_{v_1}, \alpha_{v_2}) = f(\hat{f}(f( \alpha_{count_1}, \alpha_{v_1}), \alpha_{v_1}), \alpha_{v_2}) \triangleq
  \begin{cases}
    \alpha_{count_1}    & \alpha_{v_1} = 0 \land \alpha_{v_2} = 0       \\
    \alpha_{count_1} -1 & \alpha_{v_1} = 0 \land \alpha_{v_2} \neq 0    \\
    \alpha_{count_1}    & \alpha_{v_1} \neq 0 \land \alpha_{v_2} \neq 0 \\
    \alpha_{count_1}+ 1 & \alpha_{v_1} \neq 0 \land \alpha_{v_2} = 0
  \end{cases}
\]
With this, we have all the ingredients to compute $\mathsf{POST}((S_2,a_{\{v_1, v_2\}}), f)$ that acts on $count_1$ by the function $b$, permutes $v_1$ and $v_2$, and leaves $count_2$ unchanged
\begin{align*}
  (S_2, b_{\mathit{Vars}}) \triangleq & \mathsf{POST}((S_2,a_{\{v_1, v_2\}}), f) \triangleq
  \actiontemp{x}{x^2 = e}{\mathit{Vars}}{                                                                              \\&x \cdot ({count}_1 \rightarrow
    \alpha_{count_1}, {count}_2 \rightarrow
  \alpha_{count_2}, v_1 \rightarrow \alpha_1, v_2 \rightarrow \alpha_2, winner \rightarrow \alpha_{winner}) = \\& ({count}_1\rightarrow
    b(\alpha_{count_1}, \alpha_1, \alpha_2) , {count}_2 \rightarrow
    \alpha_{count_2}, v_1 \rightarrow \alpha_2, v_2 \rightarrow \alpha_1, winner \rightarrow \alpha_{winner})}
\end{align*}
By the $\textsc{ASSGN}$ rule,
$\triple{\tuple{S_2}{a}{\{v_1,v_2\}}}{c_1}{(\mathsf{POST}((S_2,a_{\{v_1, v_2\}}), f)_{\mathit{Vars}})}{\varphi}$ is valid. As guaranteed by Theorem \ref{thm:post-is-grp-action}, $(S_2, b_{\mathit{Vars}})$ is a valid group action.

We move on to the next line of the program denoted by $c_2 \triangleq
  \text{count}_1 \coloneq f_2({count}_1, v_2)$. This line is also an injective
assignment. We can compute $\mathsf{POST}((S_2, b_{\mathit{Vars}}), f_2)$. We elide the computation here, but the action simplifies to $(S_2,a_{\mathit{Vars}})$. Therefore, using $\textsc{SEQ}$, we conclude the triple \[\triple{(S_2,a_{\mathit{Vars}})}{c_1;c_2}{(S_2,a_{\mathit{Vars}})}{\varphi}\]
This means that after permuting $v_1$ and $v_2$, $count_1$ remains unchanged.

We need to prove a similar triple for Lines 3 and 4. The proofs of these are analogous to Lines 1-2, since they are the same expression with variables renamed. Therefore, by applying the same reasoning techniques, we can conclude
\[ \triple{(S_2,a_{\mathit{Vars}})}{c_3;c_4}{(S_2,a_{\mathit{Vars}})}{\varphi}\]
    \subsection{Gravitational Simulation}
\begingroup
\setlength{\intextsep}{0pt}
\setlength{\columnsep}{0pt}
\begin{wrapfigure}{r}{0.54\textwidth}
  \vspace{-7.5mm}
  \begin{center}
    \begin{tabular}{c}
    \begin{lstlisting}
1. $\while$(t $\coloneq$ 0; t < T; t $\coloneq$ t + dt){
2. $\quad \quad F_1 \coloneq \frac{(G \cdot m_1 \cdot m_2)(x_1 - x_2)}{(|x_1 - x_2|^3)};$
3. $\quad \quad v_1 \coloneq v_1 + F_1/m_1 \cdot dt;$
4. $\quad \quad v_2 \coloneq v_2 - F_1/m_2 \cdot dt;$
5. $\quad \quad x_1 \coloneq x_1 + v_1 \cdot dt;$
6. $\quad \quad x_2 \coloneq x_2 + v_2 \cdot dt;$
7. }
     \end{lstlisting}
    \end{tabular}
  \end{center}
  \vspace{-6.0mm}
\caption{Model of a Simple Gravitational Simulation} \label{fig:gravity}
\end{wrapfigure}
Objects experience attractive gravitational forces due to their mass which leads to acceleration. This can lead to complicated trajectories, which can be hard to compute analytically. Here we consider a simplified simulation of this model, where we have two objects at initial positions $x_1$, and $x_2$ at rest. 

This system of two objects is modeled in Figure \ref{fig:gravity}. On Line 2, we compute the gravitational forces of attraction on Object 1. The gravitational force depends on masses $m_1$, $m_2$, the absolute value of the distance between objects, and the gravitational constant, $G$.
The rest of the program uses the force to update the velocity and then the position of both objects. To update the velocity, we first compute acceleration using the equation $a = F/m$. On line 4, when we update the velocity of Object 2, we have to be careful to negate the force, $F_1$, since it is the force acting on Object 1.
In a nutshell, this program simulates motion due to the gravitational force of attraction in one dimension (along the x-axis). 

\paragraph*{Symmetry Property. } A property that this system exhibits is that if we swap the initial positions, masses and velocities of Object 1 and Object 2, then their final positions are also swapped. We can express swapping initial positions and masses as the action of the group $S_2$ like so: 
\begin{align*}
    (S_2, a_{\{x_1, m_1, x_2, m_2,v_1, v_2\}}) \triangleq&
\actiontemp{s}{s^2 = 1}{\{x_1, m_1, x_2, m_2, v_1,v_2\}}{\\&s \cdot (x_1 \rightarrow \alpha_1, m_1 \rightarrow \beta_1, x_2 \rightarrow \alpha_2, m_2 \rightarrow \beta_2, v_1 \rightarrow \gamma_1, v_2 \rightarrow \gamma_2) \\&= (x_1 \rightarrow \alpha_2, m_1 \rightarrow \beta_2, x_2 \rightarrow \alpha_1, m_2 \rightarrow \beta_1, v_1 \rightarrow \gamma_2, v_2 \rightarrow \gamma_1)}
\end{align*}
The group $S_2$ acts on the program state by swapping the values of $x_1$ and $x_2$,  $m_1$ and $m_2$ and $v_1$ and $v_2$. Unfortunately, this precondition is not strong enough to establish a loop invariant. Instead, we consider the following action: 
\begin{align*}
    (S_2, a_{\{x_1, m_1, x_2, m_2,v_1, v_2, F_1\}}) \triangleq&
 \actiontemp{s}{s^2 =1}{\{x_1, m_1, x_2, m_2, v_1,v_2,F_1\}}{\\&s \cdot (x_1 \rightarrow \alpha_1, m_1 \rightarrow \beta_1, x_2 \rightarrow \alpha_2, m_2 \rightarrow \beta_2, v_1 \rightarrow \gamma_1, v_2 \rightarrow \gamma_2, F_1 \rightarrow \delta_1) \\&= (x_1 \rightarrow \alpha_2, m_1 \rightarrow \beta_2, x_2 \rightarrow \alpha_1, m_2 \rightarrow \beta_1, v_1 \rightarrow \gamma_2, v_2 \rightarrow \gamma_1, F_1 \rightarrow -\delta_1)}
\end{align*}
This action is virtually the same as before, with the key difference being that we also flip the sign of $F_1$.
If we let $C$ denote the program in Figure \ref{fig:gravity}, then the desired triple is:
\[
\triple{(S_2, a_{\{x_1, m_1, x_2, m_2,v_1, v_2, F_1\}})}{C}{(S_2, a_{\{x_1, m_1, x_2, m_2,v_1, v_2, F_1\}})}{\mathbf{eq}}
\]
\subsubsection*{Line 2}Line 2 computes the force $F_1$ that acts on object 1. We let $c_2$ denote this statement. Unfortunately, this is not an injective assignment statement, so we have to resort to using the $\textsc{SEM-ASSGN}$ rule. By examining the assignment, we can see that swapping $x_1$, and $x_2$ flips the sign of $F$. This is exactly the group action $(S_2, a_{\{x_1, m_1, x_2, m_2,v_1, v_2, F_1\}})$. Using $\textsc{SEM-ASSGN}$, and $\textsc{LIFT}$, we can conclude  \[\triple{(S_2, \overline{a}_{\mathit{Vars}})}{c_2}{(S_2, a_{\{x_1, m_1, x_2, m_2,v_1, v_2, F_1\}}) }{\mathbf{eq}}\]
Eventually, we would like our post-condition to be about the set $\mathit{Vars}$, so we can apply the sequencing rule. In principle, we could use the $\textsc{SEM-ASSGN}$ rule to derive this too. However, since the semantic assignment rule involves checking semantic side conditions, rather than reasoning about the entire program state, we will use the structural rules to reason about the rest of the program state. 

We define the action $(E,e_{\mathit{Vars}})$ that acts like the identity on every variable. By $\textsc{ID}$,
\[\triple{\tuple{ {S_2}}{\overline{a}}{\mathit{Vars}}}{c_2}{\tuple{E}{e}{\mathit{Vars} \setminus \{x_1, m_1, x_2, m_2,v_1, v_2, F_1\}}}{\mathbf{e*}}\]
Now by the $\textsc{DIR-PROD}$ rule:
\begin{mathpar}
    \inferrule*[left={}, right={}]{\triple{\tuple{ {S_2}}{\overline{a}}{\mathit{Vars}}}{c_2}{\tuple{ {S_2}}{a}{\{x_1, m_1, x_2, m_2,v_1, v_2, F_1\}}}{\mathbf{eq}}\;\;\;\triple{\tuple{{S_2}}{\overline{a}}{\mathit{Vars}}}{ c_2}{\tuple{E}{e}{\mathit{Vars} \setminus \{\{x_1, m_1, x_2, m_2,v_1, v_2, F_1\}\}}}{\mathbf{e*}}}{\triple{\tuple{{S_2} 
    }{\overline{a}}{\mathit{Vars}}}{c_2}{\tuple{ {S_2} \times E
    }{a \times e}{\mathit{Vars}}}{\mathbf{eq} \times \mathbf{e*}}}
\end{mathpar}
Since $\tuple{{S_2} \times E
    }{a \times e}{\mathit{Vars}} \overset{\mathbf{eq}}{\rightarrow} \tuple{{S_2}}{\overline{a}}{\mathit{Vars}}$, by $\textsc{CONS-1}$ we can conclude, $\triple{\tuple{{S_2} 
    }{a}{\mathit{Vars}}}{c_2}{\tuple{{S_2}}{\overline{a}}{\mathit{Vars}}}{\mathbf{eq} \circ \mathbf{eq}}$. Since, $\mathbf{eq} \circ \mathbf{eq} = \mathbf{eq}$, so we get $\triple{\tuple{{S_2} 
    }{a}{\mathit{Vars}}}{c_2}{\tuple{{S_2}}{\overline{a}}{\mathit{Vars}}}{\mathbf{eq}}$.

\subsubsection*{Lines 3 and 4}
Lines 3 and 4 update the values of $v_1$ and $v_2$, respectively. It is intuitive that after updating both velocities, their values remain swapped. However, we would like to find an intermediate assertion after $c_3$. Since lines 3 and 4 have injective assignments, we can use the $\mathsf{ASSGN}$ rule, and compute $\mathsf{POST((S_2,a),c_3)}$. We compute $\mathsf{POST((S_2,a),c_3)}$ as follows. We only show the action on the set $x_1, m_1, x_2, m_2, v_1, v_2,F$, since $ \mathsf{POST((S_2,a),c_3)} $ acts like the identity for all other variables. 
\begin{align*}
    \mathsf{POST((S_2,a),c_3)} &\triangleq \actiontemp{s}{s^2=1}{x_1, m_1, x_2, m_2, v_1, v_2,F}{
    \\&s \cdot (x_1 \rightarrow \alpha_1, m_1 \rightarrow \beta_1, x_2 \rightarrow \alpha_2, m_2 \rightarrow \beta_2, v_1 \rightarrow \gamma_1, v_2 \rightarrow \gamma_2, F_1 \rightarrow \delta_1) \\&= (x_1 \rightarrow \alpha_2, m_1 \rightarrow \beta_2, x_2 \rightarrow \alpha_1, m_2 \rightarrow \beta_1,\\& v_1 \rightarrow -dt\cdot \frac{\delta_1}{\beta_2} + \gamma_2,
    v_2 \rightarrow -dt\cdot \frac{\delta_1}{\beta_1} + \gamma_1, F_1 \rightarrow -\delta_1)
    }
\end{align*}
This action swaps $x_1$ and $x_2$, and $m_1$ and $m_2$. For $v_1$ and $v_2$, it inverts the assignment, swaps the values, and computes the assignment again.  

Using $\mathsf{ASSGN}$, and $\textsc{CONS-1}$ we can also prove $\triple{\mathsf{POST((S_2,a),c_3)}}{c_4}{\tuple{{S_2}}{\overline{a}}{\mathit{Vars}}}{\mathbf{eq}}$. Using $\textsc{SEQ}$, we get $\triple{\tuple{{S_2}}{\overline{a}}{\mathit{Vars}}}{c_3;c_4}{\tuple{{S_2}}{\overline{a}}{\mathit{Vars}}}{\mathbf{eq}}$. 

\subsubsection*{Lines 5 and 6} Lines 5 and 6 are analogous to Lines 3 and 4. Both the assignments are injective, so we can use the $\mathsf{ASSGN}$ rule. After computing $\mathsf{POST}$, and using $\textsc{CONS}$, we get $\triple{\tuple{{S_2}}{\overline{a}}{\mathit{Vars}}}{c_5;c_6}{\tuple{{S_2}}{\overline{a}}{\mathit{Vars}}}{\mathbf{eq}}$. 

If we let $C_l$ denote the loop body, then we have proved, 
\[\triple{\tuple{{S_2}}{\overline{a}}{\mathit{Vars}}}{C_l}{\tuple{{S_2}}{\overline{a}}{\mathit{Vars}}}{\mathbf{eq}}\]
Using the $\textsc{FOR}$ rule gives us the desired triple. 

\subsection{Car Translation 2}
We return to the motion of a simple car system from Figure \ref{fig: car-trans}. This time we consider another symmetry property, where if the $x$ and $y$ coordinates are transformed by different amounts, the motion of the car is still equivariant to this translation--if the starting coordinates $(x,y)$ are shifted by some constants $(c_1,c_2)$,
the final coordinates $(x',y')$ will also be shifted by $(c_1,c_2)$ to $(x' + c_1, y' + c_2)$. Once again, the homomorphism is $\mathbf{eq}(g) = g$. 
We model this using the direct product construction from our logic. Formally, this can be expressed in our syntax as follows: 
\begin{align*}
   (\mathbb{Z} \times \mathbb{Z}_{\{x,y\}}) \triangleq \actiontemp{r,s}{rs = sr }{x, y}{\;\;&r \cdot (x \rightarrow \alpha_1, y \rightarrow \alpha_2) =x \rightarrow \alpha_1 + 1, y \rightarrow \alpha_2 \\&
   s \cdot  (x \rightarrow \alpha_1, y \rightarrow \alpha_2) =x \rightarrow \alpha_1, y \rightarrow \alpha_2 + 1}
\end{align*}
The generator $r$ acts by adding 1 to $x$'s value, and the generator $s$ acts by incrementing the value of $y$ by 1. This construction can be derived from the actions $(\mathbb{Z}, a_{\{x\}})$ and $(\mathbb{Z}, a_{\{y\}})$ using Definition \ref{def: dir-prod-action}, and by Lemma \ref{lem:dir-prod-cond}, this is a group action. If we let $C$ denote the program from Figure \ref{fig: car-trans}, then the desired triple is 
\[
 \triple{(\mathbb{Z} \times \mathbb{Z}_{\{x,y\}})}{C}{  (\mathbb{Z} \times \mathbb{Z}_{\{x,y\}})}{\mathbf{eq}}
\]
The proof of this triple is similar to the earlier car translation example. 
    \subsection{ABC Flows}
\label{app: abc}
We now consider a turbulent flow called the ABC Flow taken from \citet{McLachlan_Quispel_2002}. Figure \ref{fig: ABC} describes this system. This system preserves multiple different actions of the group $S_2$. We describe two such symmetries here. 
Consider the action:
\[ \actiontemp{g}{g^2 = e}{\{x, y, z\}}{g \cdot (x \rightarrow \alpha_1, y \rightarrow \alpha_2, z \rightarrow \alpha_3) = (x \rightarrow \alpha_1, y \rightarrow \pi -\alpha_2, z \rightarrow -z)}\]
If we let $C$ denote the program in \Cref{fig: AAC}, then  our desired triple is: 
\(
\triple{(S_2,\overline{a})}{C}{(S_2,\overline{a})}{\mathbf{eq}}
\). 
Similarly, we can define another group action: 
\[ \actiontemp{g}{g^2 = e}{\{x, y, z\}}{g \cdot (x \rightarrow \alpha_1, y \rightarrow \alpha_2, z \rightarrow \alpha_3) = (x \rightarrow -\alpha_1, y \rightarrow \alpha_2, z \rightarrow \pi - z)}\]
If we denote this as \((S_2,{b})\), then we would like to prove the triple \(
\triple{(S_2,\overline{b})}{C}{(S_2,\overline{b})}{\mathbf{eq}}
\).

The proofs for each of these triples follows the same pattern. Since each statement is injective, we can use the $\textsc{ASSGN}$ rule followed by sequencing and the $\textsc{FOR}$ to conclude the triple.
\section{Verifying Assignments - Encoding}
    \label{app:encoding}
    Theorem \ref{def: formula-fake} says there exists a  first-order formula $F \llparenthesis \triple{(G,a_{\mathit{Vars}})}{v \coloneq exp}{(H,b_{\mathit{Vars}})}{\;} \rrparenthesis$ which is satisfiable if, and only if the triple $\triple{(G,a_{\mathit{Vars}})}{v \coloneq exp}{(H,b_{\mathit{Vars}})}{\;}$ is valid. In the following section, we construct such a formula and show its soundness. 
The following lemma shows that it suffices to check a triple only for the generators in the pre-condition. This enables a finite formula even for infinite groups. 
\begingroup
\setlength{\intextsep}{0pt}
\setlength{\columnsep}{0pt}
\begin{wrapfigure}{r}{0.54\textwidth}
  \begin{center}
    \begin{tabular}{c}
    \begin{lstlisting}
1. $\while(t \coloneq 0; t < T; t \coloneq t + dt)${
2. $\quad \quad x \coloneq (A \cdot \sin{(z)} + C \cdot \cos{(y)})\cdot dt + x;$
3. $\quad \quad y \coloneq  (B \cdot \sin{(x)} + A \cdot \cos{(z)})\cdot dt + y;$
4. $\quad \quad z \coloneq  (C \cdot \sin{(y)} + B \cdot \cos{(x)})\cdot dt + z;$
5. }
     \end{lstlisting}
    \end{tabular}
  \end{center}
    \vspace{-2.5mm}
  \caption{A program simulating the ABC flow}
  \label{fig: ABC}
\end{wrapfigure}

\begin{lemma}
  \label{lemma: finite-pre-helper}
  Let the group $G$ be finitely presented as $G = \langle g_1, \dots, g_n \mid r_i, \dots, r_j \rangle$ with group action $(G,a_{\mathit{Vars}})$. Suppose $(H,b_{\mathit{Vars}})$ is a group action, and $C$ is a program statement.
  If for all $1 \leq i \leq n$, there exists $h \in H$ such that for all
  $\sigma \in \Sigma$ we have $b_h(\sem{C}_{\sigma}) = C(a_{g_i}(\sigma))$
  then for all $g \in G$, there exists $h \in H$ such that for all $\sigma \in
    \Sigma$, we have $b_h(\sem{C}_{\sigma}) = C(a_{g}(\sigma))$.
\end{lemma}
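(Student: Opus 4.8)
The plan is to exploit the group structure directly: since $G$ is generated by $g_1, \dots, g_n$, it suffices to show that the collection of group elements satisfying the desired property forms a subgroup of $G$ that contains every generator. To this end I would define $K \subseteq G$ to be the set of all $g \in G$ for which there exists $h \in H$ with $b_h(\sem{C}_\sigma) = \sem{C}_{a_g(\sigma)}$ for every $\sigma \in \Sigma$. The hypothesis of the lemma says precisely that each generator $g_i$ lies in $K$, so the goal reduces to proving that $K = G$.

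First I would check that $K$ is a subgroup. The identity $e_G$ lies in $K$ because $a_{e_G}$ is the identity map on $\Sigma$ and the witness $h = e_H$ works. For closure under the group operation, suppose $g, g' \in K$ with witnesses $h, h' \in H$. Recalling that $a$ is a group action, so that $a_{gg'} = a_g \circ a_{g'}$ and each $a_g$ is a bijection on $\Sigma$, I would apply Lemma \ref{lem: mult-closure} to the two bijections $a_g$ and $a_{g'}$ to obtain $b_h(b_{h'}(\sem{C}_\sigma)) = \sem{C}_{a_g(a_{g'}(\sigma))}$. Since $b$ is also a group action, the left-hand side equals $b_{hh'}(\sem{C}_\sigma)$ and the right-hand side equals $\sem{C}_{a_{gg'}(\sigma)}$; hence $gg' \in K$ with witness $hh'$. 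For closure under inverses, suppose $g \in K$ with witness $h$. Using that $(a_g)^{-1} = a_{g^{-1}}$, Lemma \ref{lem: inverse-closure} gives $b_{h^{-1}}(\sem{C}_\sigma) = \sem{C}_{a_{g^{-1}}(\sigma)}$, so $g^{-1} \in K$ with witness $h^{-1}$.

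Having established that $K$ is a subgroup of $G$ containing every generator $g_1, \dots, g_n$, I would conclude by the standard fact that a subgroup containing a generating set of $G$ must equal $G$. Therefore $K = G$, which is exactly the claim that every $g \in G$ admits a witness $h \in H$ with $b_h(\sem{C}_\sigma) = \sem{C}_{a_g(\sigma)}$ for all $\sigma \in \Sigma$.

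I do not anticipate any serious obstacle: the two prior closure lemmas (Lemma \ref{lem: mult-closure} and Lemma \ref{lem: inverse-closure}) do almost all of the work, and the only points requiring care are bookkeeping---matching the compositional identities $a_{gg'} = a_g \circ a_{g'}$ and $b_{hh'} = b_h \circ b_{h'}$ coming from the group-action axioms, and recalling that each $a_g$ is a bijection so that the hypotheses of those lemmas are genuinely met. The one mild conceptual step is recognizing that the property should be packaged as membership in a subgroup rather than proved by an explicit induction on word length, though such an induction would also succeed via repeated application of Lemma \ref{lem: mult-closure}.
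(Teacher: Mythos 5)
Your proof is correct, and it takes a slightly different route than the paper's. The paper proves the lemma by induction on word length: it writes an arbitrary element as a product $g' = g_1 \cdot g_2 \cdots g_k$ of generators and applies Lemma \ref{lem: mult-closure} at each step to compose witnesses. You instead package the same closure facts as a subgroup argument: the set $K$ of elements admitting a witness contains the identity, is closed under products (Lemma \ref{lem: mult-closure}) and under inverses (Lemma \ref{lem: inverse-closure}), hence is a subgroup containing the generating set, hence all of $G$. The substance is the same---Lemma \ref{lem: mult-closure} does the real work in both---but your packaging buys something concrete: elements of a group generated by $g_1, \dots, g_n$ are words in the generators \emph{and their inverses}, and the paper's induction only treats positive words $g_1 g_2 \cdots g_k$, silently omitting the case of letters $g_i^{-1}$. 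Your explicit appeal to Lemma \ref{lem: inverse-closure} for closure under inverses fills exactly this gap, so your argument is in fact the more complete of the two; the paper's induction could be repaired the same way by adding an inverse case to the inductive step.
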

\iflong
  \begin{proof}
    We know that for all $1 \leq i \leq n$, $$\text{ there exists } h \in H, \forall \sigma \in \Sigma\; b_h(\sem{C}_{\sigma}) = C(a_{g_1}(\sigma))$$ We will show that
    for all  $g \in G$, $$\text{ there exists } h \in H  \;\forall \sigma \in \Sigma\; b_h(\sem{C}_{\sigma}) = C(a_{g}(\sigma))$$ We proceed by induction.  \\
    \textit{Base Case : n = 1} This is true by assumption.  \\
    \textit{Inductive Step : } Let $g' = g_1 \cdot g_2 \cdot \dots \cdot g_k$. By the inductive hypothesis, there exists $h'$ such that
    $b_{h'}(\sem{C}_{\sigma}) = \sem{C}_{(a_{g_2 \cdot \dots \cdot g_k}}(\sigma))$. Also, there exists $h_1$ such that
    $b_{h_1}(\sem{C}_{\sigma}) =\sem{C}_{(a_{g_1}(\sigma))}$. Then by Lemma \ref{lem: mult-closure}, $$\forall \sigma \in \Sigma, b_{ h'\cdot h_1}(\sem{C}_{\sigma}) = \sem{C}_{(a_{g'}(\sigma))}$$ Therefore, for all
    $g \in G$, $$\text{ there exists } h \in H, \forall \sigma \in \Sigma, b_h(\sem{C}_{\sigma}) = \sem{C}_{(a_{g_1}(\sigma))}$$
  \end{proof}
\fi
\Cref{lemma: finite-pre-helper} also enables checking implications for finitely presented groups in the pre-condition.
\begin{corollary}
\label{cor: entail}
 Let the group $G$ be finitely presented as $G = \langle g_1, \dots, g_n \mid r_i, \dots, r_j \rangle$ with faithful group action $(G,a_{\mathit{Vars}})$. Suppose $(H,b_{\mathit{Vars}})$ is a group action, and $C$ is a program statement.
  If for all $1 \leq i \leq n$, there exists $h \in H$ such that for all
  $\sigma \in \Sigma$ we have, $b_h({\sigma}) = a_{g_i}(\sigma)$
  then $(G,a_{\mathit{Vars}}) \overset{\varphi}{\rightarrow} (H,b_{\mathit{Vars}})$ where $\varphi$ is the unique homomorphism for which the entailment is valid. 
\end{corollary}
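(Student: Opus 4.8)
The plan is to obtain this corollary as an immediate specialization of Lemma \ref{lemma: finite-pre-helper}, followed by the homomorphism-construction machinery already developed for faithful actions. The first observation is that neither the hypothesis nor the conclusion of the corollary actually mentions the program $C$: both speak only about the actions $a$ and $b$. So I would instantiate Lemma \ref{lemma: finite-pre-helper} at the identity command $\textbf{skip}$, for which $\sem{\textbf{skip}}_{\sigma} = \sigma$. With this choice the lemma's premise $b_h(\sem{C}_{\sigma}) = \sem{C}_{a_{g_i}(\sigma)}$ reduces to exactly the corollary's premise $b_h(\sigma) = a_{g_i}(\sigma)$, so the lemma yields the desired strengthening: for every $g \in G$ there exists some $h \in H$ with $\overline{b}_h(\sigma) = \overline{a}_g(\sigma)$ for all $\sigma \in \Sigma$.

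Next I would upgrade this ``pointwise existence of a witness'' to an honest homomorphism. Faithfulness of the post-condition action $(H, b_{\mathit{Vars}})$ (the hypothesis actually used by Lemma \ref{lem: faithful-has-homo-into-it}) guarantees that, for each $g$, the witness $h$ above is unique; hence setting $\varphi(g) \triangleq h$ gives a well-defined function $\varphi : G \to H$. By construction $\overline{a}_g(\sigma)(v) = \overline{b}_{\varphi(g)}(\sigma)(v)$ for every $g \in G$, every $\sigma$, and every $v \in \mathit{Vars}$, which is precisely the hypothesis of Lemma \ref{lem: faithful-has-homo-into-it}; that lemma then certifies that $\varphi$ is a homomorphism. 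Since $\mathit{Vars} \subseteq \mathit{Vars}$ holds trivially, Definition \ref{def: implication} is satisfied and we conclude $(G, a_{\mathit{Vars}}) \overset{\varphi}{\rightarrow} (H, b_{\mathit{Vars}})$.

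For uniqueness I would reproduce the argument of Theorem \ref{thm: faithful-unique}: if $\varphi'$ is any homomorphism witnessing the entailment, then for each $g$ both $b_{\varphi(g)}$ and $b_{\varphi'(g)}$ agree with $a_g$ on every state, hence with one another, so faithfulness forces $\varphi'(g) = \varphi(g)$; thus $\varphi' = \varphi$.

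I expect essentially all the real content to live in the first step, the passage from generators to arbitrary elements of $G$, and this is already packaged inside Lemma \ref{lemma: finite-pre-helper} (which itself rests on the multiplicative-closure Lemma \ref{lem: mult-closure} and the compatibility axiom of group actions). The main point to get right is therefore the bookkeeping that justifies instantiating Lemma \ref{lemma: finite-pre-helper} at $\textbf{skip}$ and checking that its $\sem{\cdot}$-phrased conclusion genuinely specializes to the action-only statement needed here; once that is checked, the homomorphism and uniqueness steps are routine appeals to Lemma \ref{lem: faithful-has-homo-into-it} and the argument of Theorem \ref{thm: faithful-unique}. (I note in passing that the well-definedness of $\varphi$ as a map \emph{into} $H$ requires the \emph{post-condition} action to be faithful, not the pre-condition action, so the faithfulness hypothesis should be read as applying to $(H,b_{\mathit{Vars}})$.)
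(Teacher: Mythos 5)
Your proposal is correct and follows essentially the same route as the paper: the paper's proof likewise instantiates Lemma~\ref{lemma: finite-pre-helper} at $\mathbf{skip}$ and then applies Lemma~\ref{lemma-pre-homo} with $C$ taken to be $\mathbf{skip}$, which packages exactly the well-definedness-plus-homomorphism step you carry out via Lemma~\ref{lem: faithful-has-homo-into-it} together with the uniqueness argument in the style of Theorem~\ref{thm: faithful-unique}. Your closing remark is also on target: the faithfulness actually used (both in your argument and in the paper's appeal to Lemma~\ref{lemma-pre-homo}) is that of the codomain action $(H,b_{\mathit{Vars}})$, so the faithfulness hypothesis as written in the corollary is attached to the wrong action.
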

\iflong
\begin{proof}
    The statement follows by instantiating \Cref{lemma: finite-pre-helper} with the $\textsc{SKIP}$ statement and applying \Cref{lemma-pre-homo} with $C$ as $\textsc{SKIP}$ gives the corollary. 
\end{proof}
\fi
We exploit Lemma \ref{lemma: finite-pre-helper} to define a simple translation that allows checking triples with assignment statements. 
Each equation in the presentation of the group action is of the form:
\[
  g \cdot (v_1 \rightarrow \alpha_1, \dots, v_k \rightarrow \alpha_k) = (v_1 \rightarrow e_1, \dots, v_k \rightarrow e_k)
\]
where each $\alpha_i$ is a logical variable ranging over all integers, representing $\sigma(v_i)$, and $g$ maps $\sigma$ to a program state where
variable $v_i$ contains $e_i$, where the symbolic expression $e_i$ may depend on
the logical variables $\alpha$. We introduce notation to express this more compactly.
Let $\Sigma'$ be the set of all maps from $\mathit{Vars}$ to symbolic expressions.
Let $\sigma': \mathit{Vars} \rightarrow \mathcal{SEXP}$ be a map from variables to symbolic
expressions. Then, $\sigma'(v_i) = \alpha_i$, and $a_g(\sigma')(v_i) = e_i$.
Since integers are symbolic expressions, $\Sigma \subset \Sigma'$.
We define a function $\Gamma$ that takes a program expression, and a map from variables to symbolic expressions, and returns a symbolic expression.
We assume that program expressions and symbolic expressions have the same function symbols as determined by the set $Ops$. The function
$\Gamma: \mathcal{EXP} \times \Sigma' \rightarrow \mathcal{SEXP}$ is defined as:
\begin{align*}
  \Gamma(n)_{\sigma'}                  & = n                                    \\
  \Gamma(v)_{\sigma'}                  & = \sigma'(v)                           \\
  \Gamma(f(v_1, \dots, v_n))_{\sigma'} & =   f(\Gamma(e_1), \dots, \Gamma(e_n))
\end{align*}
We are now ready to define a formula that captures the semantic condition of the $\textsc{SEM-ASSGN}$ rule.
\begin{definition}
  \label{def: formula-f}
  Let $G = \langle g_1, \dots, g_n \mid r_1, \dots, r_j \rangle$ be a finitely
  generated group with $n$ generators, and action $(G,a_{\mathit{Vars}})$. Let $H$ be a
  finite group with $k$ elements and action $(H,b_{\mathit{Vars}})$, and $\sigma' \in
    \Sigma$ by a map such that $\sigma'(v_i) = \alpha_i$ for every variable. Let
  $v \coloneq exp$ be an assignment command.
  Then the formula $F \llparenthesis \triple{(G,a_{\mathit{Vars}})}{v \coloneq exp}{(H,b_{\mathit{Vars}}}{} \rrparenthesis$ is:
  \begin{multline*}
    \forall \alpha_1, \dots, \alpha_i, \bigwedge\limits_{i=1}^n \left(
    \bigvee\limits_{j}^k \left (\Gamma(exp)_{a_{g_i}(\sigma')} = b_{h_j} (
    \sigma'[v \mapsto \Gamma(exp)_{\sigma'}])(v) \right. \right. \\
    \left. \left.  \land \bigwedge_{v_i \in \mathit{Vars} \land v' \neq v} a_{g_i}(\sigma')(v') = b_{h_j}(\sigma'[v \mapsto \Gamma(exp)_{\sigma'}])(v') \right) \right)
  \end{multline*}
\end{definition}
The formula  checks that for every generator of the group $G$, there exists an element  $h
  \in H$, such that the action of $h$ on the symbolic state after an assignment is the same as the symbolic state after an assignment that has been acted on by $g \in G$. 
  
  We now work towards proving that if $F$ is satisfiable, then the semantic
  condition of $\textsc{SEM-ASSGN}$ is valid.
  First, the expression produced by $\Gamma$ is a logical representation of the
  program expression.
  \begin{lemma}
    \label{lem: conversion}
    For any $\sigma \in \Sigma$, we have $\Gamma(exp)_{\sigma} = \sem{e}_{\sigma}$.
  \end{lemma}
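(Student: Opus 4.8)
The plan is to prove the identity $\Gamma(exp)_{\sigma} = \sem{e}_{\sigma}$ by structural induction on the expression $e \in \mathcal{EXP}$, exploiting the inclusion $\Sigma \subset \Sigma'$: a state $\sigma \in \Sigma$ assigns an \emph{integer} to each variable, and every integer is a ground symbolic expression that is interpreted as itself. The key observation is that $\Gamma$ and the arithmetic semantics $\mathcal{A}\sem{\_}$ are defined by the very same clauses, and that they share the interpretation of each function symbol $f \in Ops$ (the semantics assume every symbol in $Ops$ has a fixed corresponding mathematical operation, and $\Gamma$ is built over the same symbols). Consequently, on a state whose variables carry integer values, $\Gamma$ performs exactly the computation carried out by $\mathcal{A}\sem{\_}$.

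Concretely, I would first dispatch the two base cases. For a constant $n \in \mathbb{Z}$, both sides unfold to $n$ directly from the defining clauses $\Gamma(n)_{\sigma} = n$ and $\mathcal{A}\sem{n}_{\sigma} = n$. For a variable $v \in \mathit{Vars}$, we have $\Gamma(v)_{\sigma} = \sigma(v)$ and $\mathcal{A}\sem{v}_{\sigma} = \sigma(v)$; since $\sigma \in \Sigma$, the value $\sigma(v)$ is an integer, so the two coincide. For the inductive step $e = f(e_1, \dots, e_n)$, the defining clauses give $\Gamma(f(e_1, \dots, e_n))_{\sigma} = f(\Gamma(e_1)_{\sigma}, \dots, \Gamma(e_n)_{\sigma})$ and $\mathcal{A}\sem{f(e_1, \dots, e_n)}_{\sigma} = f(\mathcal{A}\sem{e_1}_{\sigma}, \dots, \mathcal{A}\sem{e_n}_{\sigma})$; applying the induction hypothesis to each argument $e_i$ and using that $f$ denotes the same operation in both interpretations closes the case.

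The only real subtlety, and the step I would be most careful about, is the identification between a ground symbolic expression and the integer it denotes. When $\sigma$ ranges over $\Sigma$ rather than $\Sigma'$, the term $\Gamma(exp)_{\sigma}$ contains no free logical variables, so it is a closed expression over $Ops$ applied to integers; the equality asserted by the lemma should be read as equality after interpreting these operations, which is precisely what $\mathcal{A}\sem{\_}$ computes. I would make this reading explicit at the outset (noting $\Sigma \subset \Sigma'$ and that integers are interpreted as themselves), after which the induction is entirely routine and requires no case analysis beyond the three clauses above.
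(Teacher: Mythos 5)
Your proposal is correct and follows essentially the same route as the paper's own proof: structural induction on the expression with the same three cases (constant, variable, function application), relying on the fact that $\Gamma$ and $\mathcal{A}\sem{\_}$ share defining clauses and interpret each $f \in Ops$ identically. Your explicit treatment of the identification between ground symbolic expressions and the integers they denote is a welcome clarification that the paper's terser proof leaves implicit, but it does not change the argument.
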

  \begin{proof}
    We proceed by induction on the syntax of $e$
    \begin{enumerate}
      \item $e = n$: In this case, we have $\Gamma(n)_{\sigma'} = n$ and $\sem{n}_{\sigma} = n$
      \item $e = v_i$: we have $\Gamma(v_i)_{\sigma} = \sigma(v_i) = z \in
              \mathbb{Z}$, and $\sem{v \coloneq v_i})_{\sigma}(v) =  \sigma(v_i) = x_i$.
    \end{enumerate}
    Inductive Case: 
    $e = f(e_1, \dots, e_2):$ $\sem{f(e_1, \dots, e_n)}_{\sigma} =
              f(\sem{e_1}_{\sigma}, \dots, \sem{e_n}_{\sigma})$. By the
            induction hypothesis, we get the theorem.
  \end{proof}
  The next few lemmas prove that each inner equality in $F$ ensures that every variable $v$ is related by the group action on the output states.
  \begin{lemma}
    \label{lemma: assgn-formula-1}
    Let $\sigma'$ be a map from variables to expressions that represents the state before an assignment statement, i.e., $\sigma'(v_i) = \alpha_i$. If  \[\forall \alpha_1, \alpha_2, \dots, \alpha_n \in \mathbb{Z}, \Gamma(exp)_{a_g(\sigma')} = b_h(\sigma'[v \mapsto \Gamma(exp)_{\sigma'}])(v)\] then
    $$\forall \sigma \in \Sigma, b_h(\sem{v \coloneq exp})_{\sigma}(v_i)= \sem{v\coloneq exp}_{a_g(\sigma)}(v)$$
  \end{lemma}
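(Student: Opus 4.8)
The plan is to reduce the semantic statement to the symbolic hypothesis by \emph{instantiating} the universally quantified logical variables $\alpha_1, \dots, \alpha_n$ at the concrete values read off from an arbitrary program state. Concretely, I would fix any $\sigma \in \Sigma$ and take the valuation $\alpha_i := \sigma(v_i)$ for each variable $v_i$; since program states map variables into $\mathbb{Z}$, this is a legal instance of the hypothesis. The goal is then to show that, under this valuation, the two sides of the symbolic equation in the hypothesis specialize exactly to the two sides of the desired conclusion (with the sides swapped, which is harmless for an equality).

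First I would unfold the right-hand side of the hypothesis. By the semantics of assignment, $\sem{v \coloneq exp}_\sigma = \sigma[v \mapsto \sem{exp}_\sigma]$, and by \Cref{lem: conversion} we have $\Gamma(exp)_\sigma = \sem{exp}_\sigma$. Hence instantiating $\sigma'$ at $\alpha_i = \sigma(v_i)$ turns the symbolic state $\sigma'[v \mapsto \Gamma(exp)_{\sigma'}]$ into the concrete post-state $\sigma[v \mapsto \sem{exp}_\sigma] = \sem{v \coloneq exp}_\sigma$; applying $b_h$ and reading off the assigned variable $v$ yields $b_h(\sem{v \coloneq exp}_\sigma)(v)$, which is precisely the left-hand side of the conclusion.

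Next I would treat the left-hand side of the hypothesis, $\Gamma(exp)_{a_g(\sigma')}$. Here the one subtlety is that this term nests two layers of symbolic substitution: the symbolic action $a_g$ first rewrites each $\alpha_i$ into an expression $e_i$, and then $\Gamma$ substitutes those into $exp$. I would establish a small compositionality observation—provable by a routine structural induction on $exp$ mirroring the clauses defining $\Gamma$—that instantiating a symbolic state and then evaluating $\Gamma(exp)$ produces the same integer as evaluating $\Gamma(exp)$ on the already-instantiated state. Combined with the compatibility of the symbolic action with the concrete action (evaluating $a_g(\sigma')$ at $\alpha_i = \sigma(v_i)$ yields exactly the concrete state $a_g(\sigma)$), this identifies the instantiated left-hand side with $\Gamma(exp)_{a_g(\sigma)}$, which by \Cref{lem: conversion} equals $\sem{exp}_{a_g(\sigma)} = \sem{v \coloneq exp}_{a_g(\sigma)}(v)$.

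With both specializations in hand, the symbolic equation of the hypothesis becomes $\sem{v \coloneq exp}_{a_g(\sigma)}(v) = b_h(\sem{v \coloneq exp}_\sigma)(v)$ for the chosen $\sigma$; since $\sigma$ was arbitrary, this is the claimed universally quantified conclusion. I expect the only genuine work to be the \textbf{compositionality step} for the nested substitution in $\Gamma(exp)_{a_g(\sigma')}$—justifying that symbolic substitution commutes with evaluation so that the two-layer substitution collapses correctly—while everything else follows directly from \Cref{lem: conversion} and the definition of the assignment semantics.
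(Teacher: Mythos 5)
Your proposal is correct and follows essentially the same route as the paper's proof: instantiate the universally quantified logical variables at $\alpha_i = \sigma(v_i)$ for an arbitrary state $\sigma$, and use \Cref{lem: conversion} together with the assignment semantics to identify the two instantiated sides of the symbolic hypothesis with the two sides of the semantic conclusion. The only difference is presentational — the paper names the two symbolic expressions $e_1, e_2$ and expresses the conclusion's sides as their evaluations $\sem{e_1}_\sigma, \sem{e_2}_\sigma$, leaving the substitution-commutes-with-evaluation step implicit, whereas you flag it explicitly as the one piece requiring a routine structural induction.
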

  \begin{proof}
    $\Gamma(v_i)_{a_g(\sigma')} = e_1$, and $b_h(\sigma'[v \mapsto \Gamma(exp)_{\sigma'}])(v) = b_h(\sigma'[v \mapsto \Gamma(exp)])(v) = e_2$, and by the premise, $e_1 = e_2$.
    At the same time, for any $\sigma$, $b_h(\sem{v \coloneq exp})_{\sigma}(v) =  b_h(v \rightarrow \sem{exp}_\sigma)(v)$. By Lemma \ref{lem: conversion}, $b_h(v \rightarrow \Gamma({exp})_\sigma)(v) = \sem{e_2}_\sigma$, and $\sem{v\coloneq exp}_{a_g(\sigma)}(v) = \sem{e_1}_{\sigma}$, which must be equal to $\sem{e_2}_{\sigma}$.
  \end{proof}
  \begin{lemma}
    \label{lemma: assgn-formula-2}
    Let $\sigma'$ be the map from variables to expressions that represents the state before an assignment statement. If  $$a_g(\sigma')(v_i) = b_h(\sigma'[v \mapsto \Gamma(exp)_{\sigma'}])(v)$$ then
    $$\forall \sigma \in \Sigma, b_h(\sem{v \coloneq exp})_{\sigma}(v_i)= \sem{v\coloneq exp}_{a_g(\sigma)}(v_i)$$ where $v \neq v_i$
  \end{lemma}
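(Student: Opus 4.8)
The plan is to reduce the claimed semantic equality, for each fixed concrete state $\sigma \in \Sigma$, to an instance of the symbolic hypothesis, exactly as in the proof of Lemma \ref{lemma: assgn-formula-1}, but now using that $v_i \neq v$ so that the command $v \coloneq exp$ leaves $v_i$ untouched. First I would fix an arbitrary $\sigma \in \Sigma$ and specialize the logical variables by setting $\alpha_j = \sigma(v_j)$ for every variable $v_j$, so that the symbolic state $\sigma'$ becomes the concrete state $\sigma$ under this substitution.

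The key bridge is that evaluating a symbolic action at these concrete values reproduces the genuine action: substituting $\alpha_j = \sigma(v_j)$ turns $a_g(\sigma')(v_i)$ into $a_g(\sigma)(v_i)$, and, invoking Lemma \ref{lem: conversion} to rewrite $\Gamma(exp)_{\sigma'}$ as $\sem{exp}_{\sigma}$, it turns $\sigma'[v \mapsto \Gamma(exp)_{\sigma'}]$ into $\sigma[v \mapsto \sem{exp}_{\sigma}] = \sem{v \coloneq exp}_{\sigma}$. Hence the hypothesis $a_g(\sigma')(v_i) = b_h(\sigma'[v \mapsto \Gamma(exp)_{\sigma'}])(v_i)$, which holds for all integer choices of the $\alpha_j$, specializes to $a_g(\sigma)(v_i) = b_h(\sem{v \coloneq exp}_{\sigma})(v_i)$.

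To finish, I would use that the assignment only modifies $v$: since $v_i \neq v$, the denotational semantics gives $\sem{v \coloneq exp}_{a_g(\sigma)}(v_i) = a_g(\sigma)(v_i)$. Chaining this with the specialized hypothesis yields $b_h(\sem{v \coloneq exp}_{\sigma})(v_i) = a_g(\sigma)(v_i) = \sem{v \coloneq exp}_{a_g(\sigma)}(v_i)$, which is the desired conclusion; and since $\sigma$ was arbitrary the statement holds for all $\sigma \in \Sigma$.

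The main thing to get right is the substitution step: one must argue carefully that the symbolic expression denoting $a_g$ (read off the group-action presentation) evaluates, under $\alpha_j \mapsto \sigma(v_j)$, to the actual action $a_g(\sigma)$ — i.e. that the symbolic representation is sound with respect to the semantics — since everything else is the routine observation that $v_i$ is fixed by the assignment. A minor point of hygiene is that the hypothesis as displayed writes $(v)$ rather than $(v_i)$ on its right-hand side; the intended reading (and the one that makes the equation type-check) projects onto the unchanged variable $v_i$.
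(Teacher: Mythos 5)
Your proof is correct and takes essentially the same route as the paper's: fix a concrete state $\sigma$, instantiate the symbolic hypothesis via $\alpha_j \mapsto \sigma(v_j)$, use Lemma~\ref{lem: conversion} to identify $\Gamma(exp)_{\sigma}$ with $\sem{exp}_{\sigma}$ so that $\sigma'[v \mapsto \Gamma(exp)_{\sigma'}]$ becomes $\sem{v \coloneq exp}_{\sigma}$, and use $v_i \neq v$ to conclude $\sem{v \coloneq exp}_{a_g(\sigma)}(v_i) = a_g(\sigma)(v_i)$. Your ``hygiene'' remark is also right: the hypothesis should project onto $v_i$ on both sides, exactly as in the corresponding conjunct of Definition~\ref{def: formula-f}.
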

  \begin{proof}
    Consider any $\sigma \in \Sigma$. Suppose that
    ${a_g(\sigma)}(v_i) = \alpha'$. By assumption, $b_h(\sigma[v \mapsto \Gamma(exp)_{\sigma'}])(v_i) = \alpha'$.
    Now, $b_h(\sem{v \coloneq exp})_{\sigma}(v_i) = b_h(\sigma[v \mapsto \sem{exp}_{\sigma}])(v_i)$. By Lemma \ref{lem: conversion}, $b_h(\sigma[v \mapsto \Gamma(exp)_{\sigma}])(v_i) = \alpha'$.
    At the same time, $\sem{v\coloneq exp}_{a_g(\sigma)}(v_i) = \sem{exp}_{a_g(\sigma)}(v_i)$, and by Lemma \ref{lem: conversion}, this is equal to $\Gamma(exp)_{a_g(\sigma)}(v_i)$. Therefore, $$ b_h(\sem{v \coloneq exp})_{\sigma}(v_i)= \sem{v\coloneq exp}_{a_g(\sigma)}(v_i)$$
  \end{proof}
  We are now ready to prove our main theorem.

\begin{restatable}{theorem}{TripleFormSound}
  Let $G = \langle g_1, \dots, g_n \mid r_1, \dots, r_j \rangle$ be a finitely generated group with $n$ generators, and action $(G,a_{\mathit{Vars}})$. Let $H$ be a finite group with $k$ elements and action $(H,b_{\mathit{Vars}})$, and $\sigma' \in \Sigma$ by a map such that $\sigma'(v_i) = \alpha_i$ for every variable.
  \label{thm:triple-formula-sound}
  If the formula $F \llparenthesis \triple{(G,a_{\mathit{Vars}})}{v \coloneq exp}{(H,b_{\mathit{Vars}}}{}\rrparenthesis$ is satisfiable then,  \[\forall g\in G
    \;\exists h \in H,
    \forall \sigma \in \Sigma\; b_{h_j}(\sem{v \coloneq exp})_{\sigma}= \sem{v\coloneq exp}_{a_{g_i}(\sigma)}
  \]
\end{restatable}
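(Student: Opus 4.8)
The plan is to peel the conclusion apart variable-by-variable using the two assignment lemmas, establish it first for each generator, and then invoke the generator-to-group extension lemma. First I would observe that $F$ is essentially a closed formula: its only genuinely free symbols are the universally quantified logical variables $\alpha_1,\dots,\alpha_n$, whereas the $g_i$ range over the fixed finite generating set of $G$ and the $h_j$ over the fixed finite carrier of $H$. Hence ``satisfiable'' here means $F$ holds. Reading off its shape $\forall \vec\alpha.\,\bigwedge_i \bigvee_j \mathrm{cond}_{ij}(\vec\alpha)$, I would fix a generator $g_i$ and extract a single element $h_j \in H$ for which, for every valuation of $\vec\alpha$, both the equality on the assigned variable $v$ and the family of equalities on every $v' \neq v$ hold simultaneously.

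Given such an $h_j$, the rest is mechanical. I would apply Lemma~\ref{lemma: assgn-formula-1} to the $v$-conjunct of $\mathrm{cond}_{ij}$ to obtain $b_{h_j}(\sem{v \coloneq exp}_\sigma)(v) = \sem{v \coloneq exp}_{a_{g_i}(\sigma)}(v)$ for all $\sigma \in \Sigma$, and Lemma~\ref{lemma: assgn-formula-2} to each remaining conjunct to get the analogous equality on every $v' \neq v$. Since these conjuncts cover all of $\mathit{Vars}$, they combine into $b_{h_j}(\sem{v \coloneq exp}_\sigma) = \sem{v \coloneq exp}_{a_{g_i}(\sigma)}$ as full program states, for every $\sigma$. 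This is exactly the semantic condition for the single generator $g_i$. Both lemmas rely on Lemma~\ref{lem: conversion} to bridge $\Gamma$ and the denotational semantics $\sem{\cdot}$, so those reductions come essentially for free.

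With the condition established for every generator $g_i$, I would close the argument by applying Lemma~\ref{lemma: finite-pre-helper}, which promotes ``for each generator there is a compatible $h$'' to ``for every $g \in G$ there is a compatible $h$,'' using closure of the transformations under composition (Lemma~\ref{lem: mult-closure}). This yields precisely $\forall g \in G\ \exists h \in H\ \forall \sigma.\ b_{h}(\sem{v\coloneq exp}_\sigma) = \sem{v\coloneq exp}_{a_{g}(\sigma)}$, as required.

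The main obstacle is the witness-extraction step: the disjunction $\bigvee_j$ sits \emph{inside} the universal quantifier over $\vec\alpha$, so a priori the satisfying index $j$ could vary with $\vec\alpha$, whereas both assignment lemmas demand a single $h_j$ valid for all $\vec\alpha$. I would discharge this by exploiting that $H$ is finite while the $\alpha_i$ range over an infinite domain: for a fixed generator the valuations are covered by finitely many conditions $\mathrm{cond}_{ij}$, each of which is the solution set of a system of term-equalities; a finite family of such ``thin'' sets cannot cover the whole infinite valuation space unless one of them already is the whole space, forcing some single $j$ to satisfy its condition uniformly in $\vec\alpha$. Faithfulness of $(H, b_{\mathit{Vars}})$ additionally guarantees this witness is the unique intended one. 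I expect formalizing this uniformity to be the only delicate point; every remaining step is a direct appeal to a previously proved lemma.
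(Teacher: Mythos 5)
Your proposal's skeleton is exactly the paper's proof: fix a generator, extract a witness $h_j$, apply Lemma~\ref{lemma: assgn-formula-1} to the assigned variable and Lemma~\ref{lemma: assgn-formula-2} to every other variable (both resting on Lemma~\ref{lem: conversion}), and then pass from generators to all of $G$ via Lemma~\ref{lemma: finite-pre-helper}. You have also correctly isolated the one delicate point: in $F$ the disjunction over $h_j$ sits \emph{inside} the universal quantifier over $\vec{\alpha}$, so satisfiability only yields a witness that may vary with $\vec{\alpha}$, while both assignment lemmas require a single $h_j$ uniform in $\vec{\alpha}$. The paper's own proof performs this $\forall\exists\Rightarrow\exists\forall$ exchange silently (``for every $g_n$, the inner disjunct must be true for some $h_k$; now assume that the inner disjunct is true''), so on this point you are being more careful than the paper is.

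The genuine gap is that your repair of this step does not hold in the paper's setting. The covering argument---finitely many ``thin'' solution sets of term equalities cannot exhaust an infinite valuation space unless one of them already is the whole space---is sound when $Ops$ consists of polynomial operations, since a finite union of proper algebraic subsets of $\mathbb{Z}^n$ is proper. But $Ops$ is an arbitrary fixed signature, and the paper's own case studies define group actions with ternary conditionals and absolute values (the voting and gravitation examples), under which a single equality can carve out a fat set such as $\{\alpha : |\alpha| \le 10\}$; finitely many fat sets can cover $\mathbb{Z}$ with none equal to all of it. Concretely, take $C$ to be $x \coloneq x$, let $G = S_2$ act by $a_g(\alpha) = -\alpha$, and let $H$ be the Klein four-group acting by the commuting involutions $b_{h_1}(\alpha) = (|\alpha| \le 10 \;?\; -\alpha : \alpha)$ and $b_{h_2}(\alpha) = (|\alpha| \ge 5 \;?\; -\alpha : \alpha)$: every $\alpha$ is negated by some element of $H$, so $F$ is satisfiable, yet no single element of $H$ negates everywhere ($h_1$ fails for $|\alpha| > 10$, $h_2$ for $0 < |\alpha| < 5$, and $h_1h_2$ for $5 \le |\alpha| \le 10$). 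So the uniformization you need is not provable in this generality---it fails---and note that this action of $H$ is even faithful, so your closing appeal to faithfulness (which is anyway a hypothesis of the corollary, not of this theorem) cannot rescue it; faithfulness gives uniqueness of a uniform witness, not its existence. To be clear, this defect is inherited from the paper rather than introduced by you: the theorem needs either the $\exists\forall$ reading of $F$ or a restriction on $Ops$ (e.g., to polynomial operations, where your thin-set argument does close the gap).
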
  
  \begin{proof}
    We first observe that the domain of $\Sigma$ is $\mathit{Vars}$ and since the $\alpha$'s range over all integers, the premise holds for all $\sigma \in \Sigma$. If the above expression is true, then for every $g_n$, the inner disjunct must be true for some $h_k$. Now assume that the inner disjunct is true. Then,
    $ \Gamma(exp)_{a_{g_i}(\sigma')} = b_{h_j}(\sigma'[v \mapsto \Gamma(exp)_{\sigma'}])(v)$ holds, by Lemma \ref{lemma: assgn-formula-1}, we can conclude that    $$\forall \sigma \in \Sigma\; b_{h_j}(\sem{v \coloneq exp})_{\sigma}(v)= \sem{v\coloneq exp}_{a_{g_i}(\sigma)}(v)$$ and we can apply  Lemma \ref{lemma: assgn-formula-2}, to conclude that for $v' \in \mathit{Vars} \setminus \{v\}$, $b_{h_j}(\sem{v \coloneq exp})_{\sigma}(v')= \sem{v\coloneq exp}_{a_{g_i}(\sigma)}(v')$. Therefore, for every $g$ in the set of generators of $G$,
    \[\exists h \in H, \text{such that } \forall \sigma \in \Sigma, \forall v' \in \mathit{Vars}, b_h(\sem{v \coloneq exp})_{\sigma}(v') = \sem{ v\coloneq exp}_{{a_g(\sigma)}}(v') \]
    By Lemma \ref{lemma: finite-pre-helper},
    for every element $g$ in $G$,
    \[\exists h \in H, \text{such that } \forall \sigma \in \Sigma, b_h(\sem{v \coloneq exp})_{\sigma} = \sem{ v\coloneq exp}_{a_g(\sigma)} \]
  \end{proof}
We demonstrate this encoding for a simple example:
\begin{example}
  Consider the triple $\triple{\tuple{S_2}{a}{\{x,y\}}}{max \coloneq x > y \;? \;x : y}{(S_2, \overline{a}_{\mathit{Vars}})}{\mathbf{eq}}$, which is saying that if $x$ and $y$ are permuted, after this assignment statement, $x$ and $y$ remain permuted, and $max$ is unchanged. We show the translation only for the variable $max$.
  The group  $S_2$ has one generator, call it $g$, and $E$ has only one element, $e$, so we need to check
  \begin{align*}
    \forall \alpha_x, \alpha_y, \Gamma(x > y ? x : y)_{a_{g_1}(\sigma')} = b_e(\sigma'[max \rightarrow \Gamma(x > y ? x : y)_{\sigma'}])(max)
  \end{align*}
  which is equivalent to $\forall \alpha_x, \alpha_y, (\alpha_y > \alpha_x
    ?\alpha_y : \alpha_x) = (\alpha_x > \alpha_y ? \alpha_x : \alpha_y)$
  Since this formula is satisfiable, we can establish $\forall g \in S_2, \exists e \in E, \forall \sigma \in \Sigma, e(\sem{C}_{\sigma})(max) = \sem{C}_{a_g(\sigma)}(max)$.
\end{example}
To finish showing that the encoding is satisfiable. we establish that when the post-condition is faithful, the triple $ \triple{(G,a_{\mathit{Vars}})}{v \coloneq exp}{(H,b_{\mathit{Vars}})}{\varphi}$ is valid. 
\begin{corollary} 
  Let $G = \langle g_1, \dots, g_n \mid r_1, \dots, r_j \rangle$ be a finitely generated group with $n$ generators, and action $(G,a_{\mathit{Vars}})$. Let $H$ be a finite group with $k$ elements and action $(H,b_{\mathit{Vars}})$.
  \label{thm:triple-formula-corollary}
  If $(H,b_{\mathit{Vars}})$ is a faithful group action on the reachable states from the assignment command, and $F \llparenthesis \triple{(G,a_{\mathit{Vars}})}{v \coloneq exp}{(H,b_{\mathit{Vars}})}{\;} \rrparenthesis$ is satisfiable,
  then $\vDash \triple{(G,a_{\mathit{Vars}})}{v \coloneq exp }{(H,b_{\mathit{Vars}})}{\varphi}$ where $\varphi$ is the unique homomorphism that makes the triple sound. 
\end{corollary}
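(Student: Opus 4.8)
The plan is to assemble the corollary from three ingredients already in place: the semantic content of formula satisfiability (Theorem~\ref{thm:triple-formula-sound}), the observation that a faithful post-condition forces a unique witness for each group element, and the homomorphism argument of Lemma~\ref{lemma-pre-homo}. Throughout, write $C$ for the assignment $v \coloneq exp$.

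First I would invoke Theorem~\ref{thm:triple-formula-sound}: since $F$ is satisfiable, for every $g \in G$ there exists some $h \in H$ with $b_{h}(\sem{C}_{\sigma}) = \sem{C}_{a_{g}(\sigma)}$ for all $\sigma \in \Sigma$. Note that although the formula only quantifies over the generators of $G$, Theorem~\ref{thm:triple-formula-sound} (via Lemma~\ref{lemma: finite-pre-helper}) already lifts this to all of $G$. Next I would use faithfulness on the reachable states $\Sigma_C$ to pin down the witness: if $h$ and $h'$ both satisfy this equation for a fixed $g$, then $b_{h}(\tau) = b_{h'}(\tau)$ for every reachable state $\tau = \sem{C}_{\sigma}$, so faithfulness of $\tuple{H}{b}{\mathit{Vars}}$ on $\Sigma_C$ gives $h = h'$. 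This makes $\varphi(g) \triangleq h$ well defined, and by construction the semantic condition of Definition~\ref{def: valid} holds for $\varphi$.

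The one step needing care is showing $\varphi$ is a homomorphism. Here I would reuse the reasoning of Lemma~\ref{lemma-pre-homo}: for $g_1, g_2 \in G$ the group-action axioms give $b_{\varphi(g_1)\varphi(g_2)}(\sem{C}_{\sigma}) = b_{\varphi(g_1)}(b_{\varphi(g_2)}(\sem{C}_{\sigma}))$, and applying the defining equation of $\varphi$ twice (Lemma~\ref{lem: compose-multiple}) rewrites the right-hand side to $\sem{C}_{a_{g_1}(a_{g_2}(\sigma))} = \sem{C}_{a_{g_1 g_2}(\sigma)} = b_{\varphi(g_1 g_2)}(\sem{C}_{\sigma})$. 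Thus $\varphi(g_1)\varphi(g_2)$ and $\varphi(g_1 g_2)$ act identically on all reachable states, and faithfulness on $\Sigma_C$ forces $\varphi(g_1 g_2) = \varphi(g_1)\varphi(g_2)$; the identity is preserved since $a_{e_G}$ fixes every state, so $\varphi(e_G) = e_H$ by uniqueness of the witness. Although Lemma~\ref{lemma-pre-homo} is stated for the bijection group $\mathsf{G^*}$, its proof uses only the group-action axioms, Lemma~\ref{lem: compose-multiple}, and faithfulness on $\Sigma_C$, so it applies verbatim with $G$ in place of $\mathsf{G^*}$. Together with the semantic condition, this establishes $\vDash \triple{(G, a_{\mathit{Vars}})}{C}{(H, b_{\mathit{Vars}})}{\varphi}$.

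Finally, uniqueness of $\varphi$ follows exactly as in Theorem~\ref{thm: faithful-unique}: any homomorphism $\psi$ validating the triple must satisfy $b_{\psi(g)}(\sem{C}_{\sigma}) = \sem{C}_{a_{g}(\sigma)}$ for all $\sigma$, so faithfulness on $\Sigma_C$ forces $\psi(g) = \varphi(g)$ for every $g$. The main obstacle, such as it is, is bookkeeping: one must be consistent that every appeal to faithfulness is to $\Sigma_C$ (the reachable states) rather than to all of $\Sigma$, since $\Sigma_C$-faithfulness is the hypothesis we are given and is precisely what the well-definedness, homomorphism, and uniqueness steps each require.
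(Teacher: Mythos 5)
Your proposal is correct and takes essentially the same route as the paper's own proof: invoke Theorem~\ref{thm:triple-formula-sound} to obtain, for each $g \in G$, a witness $h \in H$; define $\varphi(g) = h$; and establish the homomorphism property by combining the group-action axioms, Lemma~\ref{lem: compose-multiple}, and faithfulness on the reachable states---which is precisely the argument of Lemma~\ref{lemma-pre-homo} that the paper inlines. Your extra steps (explicit well-definedness of the witness and uniqueness of $\varphi$ via the reasoning of Theorem~\ref{thm: faithful-unique}, with care that faithfulness is invoked only on $\Sigma_C$) merely make explicit what the paper's terser proof leaves implicit.
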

  \begin{proof}
    By Theorem \ref{thm:triple-formula-sound} for every element $g$ in $G$,
    \[\exists h \in H, \text{such that } \forall \sigma \in \Sigma, b_h(\sem{v \coloneq exp})_{\sigma} = \sem{ v\coloneq exp}_{a_g(\sigma)} \]
    Define $\varphi: G \rightarrow H$ such that $\varphi(g) = h$ and $b_h(\sem{v \coloneq exp})_{\sigma} = \sem{ v\coloneq exp}_{a_g(\sigma)}$. \\
    We need to prove that $\varphi(g_1 \circ g_2) = \varphi(g_1) \cdot \varphi(g_2)$. Let $\varphi(g_1) \cdot \varphi(g_2) = h_1 \cdot h_2$.
    By the group action axioms, we know that \[b_{h_1 \cdot h_2}(\sem{C}_{\sigma}) = b_{h_1}(b_{h_2}(\sem{C}_\sigma))\] Since the group action is on $\mathit{Vars}$, by Lemma \ref{lem: compose-multiple}, $b_{h_1}(b_{h_2}(\sem{C}_{\sigma})) = \sem{C}_{a_{g_1}(a_{g_2}(\sigma))}$. \\
    Now, let $\varphi(g_1 \circ g_2) = h_3$. By the definition of $\varphi$,  $b_{h_3}(\sem{C}_{\sigma}) = \sem{C}_{a_{g_1}(a_{g_2}(\sigma))}$. Since $(H,b)$ is a faithful action, any two elements with the same group action must be equal.
    This implies that \[h_3 = h_1 \cdot h_2\] Therefore, \[\varphi(g_1 \circ g_2) = \varphi(g_1) \cdot \varphi(g_2)\]
    The identity case is trivial.
    Therefore, $\vDash \triple{(G,a_{\mathit{Vars}})}{v \coloneq exp}{(H,b_{\mathit{Vars}})}{\varphi}$.
  \end{proof}

This corollary proves \Cref{def: formula-fake}.

\paragraph*{Towards finitely presented post-conditions}
A drawback of our encoding is that the post-condition group must be finite. If
we have a homomorphism in mind---for example, when proving that a group action
is preserved, the homomorphism of interest is often the identity
homomorphism---we can use this homomorphism as part of our encoding.  Given a
homomorphism $\varphi: G \rightarrow H$, the encoding from Definition \ref{def:
  formula-f} can be simplified to:
\begin{multline*}
  \forall \alpha_1, \dots, \alpha_i, \bigwedge\limits_{i=1}^n \left( (\Gamma(exp)_{a_{g_i}(\sigma')} = b_{\varphi(g_i)} ( \sigma'[v \mapsto \Gamma(exp)_{\sigma'}])(v) \land  \right. \\ \left. \bigwedge_{v_i \in \mathit{Vars} \land v' \neq v} a_{g_i}(\sigma')(v') = b_{\varphi(g_i)}(\sigma'[v \mapsto \Gamma(exp)_{\sigma'}])(v') \right)
\end{multline*}
This encoding no longer requires enumerating over the elements of $H$, and therefore, can handle finitely generated groups in the post-condition.
\fi
\end{document}
\endinput